\let\origcedilla\c
\def\Ab{{\mathbb{A}}}
\def\Cb{{\mathbb{C}}}
\def\Db{{\mathbb{D}}}
\def\Eb{{\mathbb{E}}}
\def\Nb{{\mathbb{N}}}
\def\Rb{{\mathbb{R}}}
\def\Tb{{\mathbb{T}}}
\def\Zb{{\mathbb{Z}}}
\def\Ac{{\mathcal{A}}}
\def\Cc{{\mathcal{C}}}
\def\Dc{{\mathcal{D}}}
\def\Ec{{\mathcal{E}}}
\def\Fc{{\mathcal{F}}}
\def\Hc{{\mathcal{H}}}
\def\Dbf{{\mathbf{D}}}
\def\Xbf{{\mathbf{X}}}
\def\bfr{{\mathfrak{b}}}
\def\gfr{{\mathfrak{g}}}
\def\sfr{{\mathfrak{s}}}
\def\tfr{{\mathfrak{t}}}
\def\cbf{{\mathbf{c}}}
\def\dbf{{\mathbf{d}}}
\def\kbf{{\mathbf{k}}}
\def\mbf{{\mathbf{m}}}
\def\nbf{{\mathbf{n}}}
\def\vbf{{\mathbf{v}}}
\def\xbf{{\mathbf{x}}}
\def\zbf{{\mathbf{z}}}
\def\alphabf{{\boldsymbol{\alpha}}}
\def\deltabf{{\boldsymbol{\delta}}}
\def\zetabf{{\boldsymbol{\zeta}}}
\def\etabf{{\boldsymbol{\eta}}}
\def\thetabf{{\boldsymbol{\theta}}}
\def\sigmabf{{\boldsymbol{\sigma}}}
\def\varsigmabf{{\boldsymbol{\varsigma}}}
\def\varphibf{{\boldsymbol{\varphi}}}
\def\d{{\mathrm{d}}} 
\def\c{{\mathrm{c}}}
\def\i{{\mathrm{i}}}
\def\id{{\mathrm{id}}}
\def\im{\operatorname{im}} 
\def\supp{\operatorname{supp}}
\def\loc{\mathrm{loc}}
\def\XXint#1#2#3{{\setbox0=\hbox{$#1{#2#3}{\int}$ }
	\vcenter{\hbox{$#2#3$ }}\kern-.6\wd0}}
\def\vol{\operatorname{vol}}
\def\law{\stackrel{\mathrm{law}}{=}}
\def\la{\langle} 
\def\ra{\rangle}
\def\fr{\forall\,}
\def\pa{\partial}
\newcommand{\wt}[1]{\widetilde{#1}}
\newcommand{\ol}[1]{\overline{#1}}
\newtheorem{thm}{Theorem}
\numberwithin{thm}{section}
\crefname{thm}{Theorem}{Theorems}
\newtheorem{prop}[thm]{Proposition}
\crefname{prop}{Proposition}{Propositions}
\newtheorem{cor}[thm]{Corollary}
\crefname{cor}{Corollary}{Corollaries}
\newtheorem{lem}[thm]{Lemma}
\crefname{lem}{Lemma}{Lemmas}
\theoremstyle{definition}
\newtheorem{defn}[thm]{Definition}
\crefname{defn}{Definition}{Definitions}
\newtheorem*{notation}{Notation}
\crefname{eg}{Example}{Examples}
\newtheorem*{eg*}{Example}
\theoremstyle{remark}
\crefname{rem}{Remark}{Remarks}
\newtheorem*{rem*}{Remark}
\def\N{{\mathrm{N}}}
\def\D{{\mathrm{D}}}
\def\M{{\mathrm{M}}}
\def\MN{{\mathrm{M_N}}}
\def\MD{{\mathrm{M_D}}}
\def\paN{{\wt{\pa}_\N}}
\def\paD{{\wt{\pa}_\D}}
\def\g{\gfr}
\def\b{\bfr}
\def\bN{\b_\N}
\def\bD{\b_\D}
\def\bM{\b_\M}
\def\bMD{\b_\MD}
\def\btilde{\tilde{\b}}
\def\s{\sfr}
\def\t{\tfr}
\def\h{{\mathrm{h}}}
\def\L{{\mathrm{L}}}
\def\reg{{\mathrm{reg}}}
\def\const{{\mathrm{const}}}
\def\Fr{{\mathrm{Fr}}}
\def\paS{\zetabf^*\paD\Sigma}
\def\bt{{\wt{\varphibf}}}
\def\pc{c^\mathrm{P}}
\def\TR{\Tb_R}
\def\aux{u}
\newcommand{\blue}[1]{\textcolor{black}{#1}}
\title{Boundary Compactified Imaginary Liouville Theory}
\author{Yang Xiao and Yuxiao Xie}
\date{}
\begin{document}
\maketitle

\begin{abstract}
\blue{We generalize the construction of Compactified Imaginary Liouville Theory (CILT), a non-unitary logarithmic Conformal Field Theory (CFT) defined on closed surfaces, to surfaces with boundary. Starting from a compactified Gaussian Free Field (GFF) with Neumann boundary condition, we perturb it by adding in curvature terms and exponential potentials on both the bulk and the boundary. In physics, this theory is conjectured to describe the scaling limit of loop models such as the Potts and $O(n)$ models. To define it mathematically, the curvature terms require a detailed analysis of the topology, and the potential terms are defined using the imaginary Guassian Multiplicative Chaos (GMC).  We prove that the resulting probabilistic path integral satisfies the axioms of CFT, including Segal’s gluing axioms. This work provides the foundation for future studies of boundary CILT and will also help with the understanding of CILT.}
\end{abstract}

\setcounter{tocdepth}{1}
\tableofcontents

\bigskip

\section{Introduction}\label{introduction}

As a special case of Quantum Field Theories (QFTs), Conformal Field Theory (CFT) is a general physical framework that is expected to characterize the small-scale behavior of QFTs. It also plays a central role in understanding critical phenomena in statistical physics, and serves as the building block in string theory, among other applications. Thanks to the rich conformal symmetry in dimension $2$, the groundbreaking work of Belavin--Polyakov--Zamolodchikov \cite{BPZ} successfully solved a class of CFTs (known as the minimal models) under the conformal bootstrap hypothesis. With the help of the Virasoro algebra and the Ward identities, they were able to express the correlation functions as sums of certain special functions (known as the conformal blocks) multiplied by the structure constants. Their work has since greatly inspired mathematicians and led to the application of CFT to modular forms, representation theories of infinite-dimensional Lie algebras and vertex algebras, monstrous moonshine, geometric Langlands theory, and knot theory, to mention a few. On the other hand, several rigorous formulations of CFT have been proposed in the mathematical community: the vertex algebra approach by Kac \cite{Kac98} grew out of the representation-theoretical structure of CFTs; Frenkel and Ben-Zvi \cite{FBZ04} formalized CFTs in the context of algebraic geometry, and this was further expanded by Beilinson and Drinfeld \cite{BD04}; the lecture notes \cite{Ga99} emphasized the probabilistic approach based on the Feynman path integral; Segal \cite{Segal04} proposed an axiomatic formulation inspired by the path integral approach and designed to capture the geometric nature of the conformal bootstrap.

Yet it remains challenging to construct a nontrivial CFT satisfying any given set of axioms. \textcolor{black}{It was not until recently that a mathematical construction of Liouville CFT was brought within reach by the probabilistic approach \cite{LQGH,Pol2d}.} This then led to great success in the mathematical study of this theory, opening the door to the first proof of the DOZZ formula for the structure constant \cite{kupiainen2020integrability}, Segal's axioms \cite{Segal}, the conformal bootstrap \cite{Bootstrap}, and many more results to come.

\textcolor{black}{In addition to the standard Liouville theory, the physics community is also looking for an imaginary version of Liouville CFT that can be used to describe the scaling limit of loop models, a major class of models in statistical physics (see \cite{Jacobsen2009} for instance; for progress in the physics literature, we refer to the review \cite{Ribault:2024rvk} and references therein)}. More recently, a new CFT has been constructed and is expected to serve as a candidate: the \textbf{Compactified Imaginary Liouville Theory} \cite{CILT}, or \textbf{CILT} for short. Specifically, on a closed Riemann surface $\Sigma$ with conformal metric $g$, the \textit{imaginary} Liouville action is defined by
\[S_\L(\phi)=
\frac{1}{4\pi}\int_{\Sigma}|d\phi|_g^2\,\d v_g
+\frac{\i Q}{4\pi}\int_\Sigma K_g\phi\,\d v_g
+\mu\int_\Sigma e^{\i \beta\phi}\,\d v_g,\]
where $v_g$ is the volume, $K_g$ is the scalar curvature, and the parameters are $\mu\in\mathbb C$, $\beta\in\Rb$, $Q=\beta/2-2/\beta$. Here the Liouville field $\phi$ is a map from the surface $\Sigma$ to the circle $\Rb/2\pi R\Zb$ with radius $R>0$. Formally, the path integral of a test functional $F(\phi)$ is
\[\int F(\phi)\,e^{-S_\L(\phi)}\,\D\phi,\]
where $\D\phi$ is the formal ``Lebesgue'' measure on the infinite-dimensional space of maps from $\Sigma$ to $\mathbb R/2\pi R\mathbb Z$. \textcolor{black}{When $\mu=Q=0$, this reduces to the compactified Gaussian Free Field (GFF) (we refer to \cite[Lecture 1.4]{Ga99} and \cite[Subsections 6.3.5 and 10.4.1]{Duplantier:2014daa} for the physics viewpoint, and also \cite[Subsection 2.1.3]{dubedat2015dimers} for a mathematical description)}. \textcolor{black}{To date, it remains a challenge to give a mathematical construction of the \textit{non-compactified} imaginary Liouville theory, which would get rid of the rationality conditions on the parameters.
A possible approach is proposed in \cite{Usciati:2025cdn} with simulations on the DOZZ formula and rigorous heuristics when $\Sigma$ is a circle. Let us also mention another rigorous construction of the timelike Liouville theory \cite{Chatterjee:2025yzo} which produces the DOZZ formula and studies the semiclassical limits under neutrality conditions.} 

In contrast to the real Liouville CFT, the action of CILT is complex-valued, and the field is also compactified. These differences lead to completely new algebraic, geometric, and probabilistic phenomena in CILT. The CFT structure of CILT is also drastically distinct from the real theory. Specifically, the Hamiltonian (the infinitesimal generator of dilations) acting on the canonical Hilbert space is not self-adjoint, which means that CILT is a \textbf{non-unitary} CFT. One also finds that its spectrum is countably infinite (thus \textbf{non-rational}), and that the chiral and anti-chiral parts have different representations (thus \textbf{non-diagonal}). Moreover, the Hamiltonian is \textit{not} diagonalizable, and it is possible to construct explicit Jordan blocks. This suggests that the correlation functions of CILT may have logarithmic singularities, indicating that CILT is a \textbf{logarithmic CFT}, an \blue{active topic of research} in physics \cite{logcft_Feigin:2006iv,logcft_Fjelstad:2002ei,logcft_Gurarie:1993xq,logcft_Kytola:2009ax,logcft_Mathieu:2007pe,logcft_nivesvivat2021logarithmic,logcft_Pearce:2006sz,read2007associative,logcft_Ruelle:2013eda,logcft_Santachiara:2013gna,Cardy:2013logcft} and mathematics \cite{math_logcft_Bakalov:2021bao,math_logcft_kytola2009sle,math_logcft_Liu:2024ybm}. \textcolor{black}{We stress that results on the Hamiltonian and its spectrum have been obtained and will appear in a future work. The Jordan cell structure and the Virasoro representation are works in progress. The logarithmic behavior is observed at least when the correlation function involves a Jordan partner (see also \cite{Cardy:2013logcft,logcft_Santachiara:2013gna}).}

As demonstrated in \cite{BPZ}, under the assumption that the scaling limit of the discrete statistical physics model corresponds to a CFT, they can predict critical exponents of the model with the help of Virasoro algebra. \textcolor{black}{Thus physicists have been working on the classification of CFTs, and also discovering the CFT aspects in the scaling limits of all kinds of discrete models \cite{BPZ,Cardy:1989da}.} Loop models, for example, have been tackled with the so-called Coulomb gas formalism (see \cite{Cardy:1989da,diFrancesco:1987ses,nienhuis1987coulomb} for instance). Roughly speaking, it maps a loop configuration to a height function by, for example, assigning orientations to each loop, and the height function is generally believed to be some variant of a GFF. In addition, the mapping between the loop and the height function eventually requires the field to be compactified. The Coulomb gas representation also introduces a background charge $Q$ (see the curvature term in the action of CILT), which shifts the central charge 1 (the one for GFF) to $1-6Q^2$. Finally, as argued in \cite{cilt_exp_Jacobsen:1998lbo,cilt_exp_Kondev:1997dy,cilt_exp_kondev1996operator} (and the review \cite{cilt_exp_review_rushkin2007critical}, also some criticism in \cite{cilt_exp_critic_gorbenko2018walking}), an imaginary Liouville potential should be also included in the action, hence physicists reach the path integral used for their studies of loop models. 

In this paper, following and extending the work of \cite{CILT}, we construct the \textbf{Boundary Compactified Imaginary Liouville Theory} (\textbf{BCILT}) and prove Segal's axioms for it. As stressed in \cite{BCFT}, the study of CFTs with boundary is important in understanding the CFT structure and for describing critical phenomena in the presence of a conformal boundary. In the case of the real Liouville theory, the boundary CFT has been constructed \cite{Huang:2018ncv,Remy:2017tws,wu2022liouville}, its structure constants are computed \cite{Remy:2017phd,Remy:2020suk,FZZ,allbdystruc}, and the conformal bootstrap program is initiated in \cite{BLCFT}. This has been shown to help in the study of the theory without boundary. For example, \blue{using the conformal bootstrap for the boundary Liouville theory,} \cite{ghosal2022analyticity} proved the connection between the boundary three-point function and the fusion kernel as proposed in \cite{Ponsot:1999uf}. Similar results have been obtained in physics for rational CFTs \cite{Runkel:1998he,behrend2000boundary,fuchs2005tft}. Therefore, we expect that the same can be achieved for CILT, and our construction of BCILT lays the foundation for further studies in this direction.

Let $\Sigma$ be a Riemann surface with boundary $\pa\Sigma$ and $g$ a conformal metric on $\Sigma$. We shall construct the CFT obtained by quantizing the imaginary Liouville action
\begin{equation}\label{eq_intro_action}
    S_\L(\phi)=
\frac{1}{4\pi}\int_{\Sigma}|d\phi|_g^2\,\d v_g
+\frac{\i Q}{4\pi}\int_\Sigma K_g\phi\,\d v_g+\frac{\i Q}{2\pi}\int_{\pa\Sigma}k_g\phi\,\d\ell_g
+\mu\int_\Sigma e^{\i \beta\phi}\,\d v_g+\int_{\pa\Sigma}\mu_\pa\,e^{\i\frac{\beta}{2}\phi}\,\d\ell_g,
\end{equation}
where $Q,\beta\in\mathbb R$, $\mu\in\mathbb C$, $\mu_\partial:\pa\Sigma\to\Cb$ is a piecewise constant function. Here $K_g$, $v_g$, $k_g$, $\ell_g$ denote the scalar curvature, volume, geodesic curvature, and length with respect to $g$, and the Liouville field $\phi:\Sigma\to\mathbb R/ 2\pi R\mathbb Z$ is a map from the surface to the circle with radius $R>0$. For a test functional $F(\phi)$, we would like to construct the path integral
\[\int F(\phi)\,e^{-S_\L(\phi)}\,\D\phi,\]
where $\D\phi$ is the formal ``Lebesgue'' measure on the space of maps $\phi$ from $\Sigma$ to $\mathbb R/2\pi R\mathbb Z$ with Neumann boundary condition at $\pa\Sigma$. When $\mu=0$, $\mu_\partial=0$, this becomes the Coulomb gas formalism with boundary. Otherwise we fix $Q$ by
\[Q=\frac\beta2-\frac2\beta.\]
The well-definedness of the Liouville action requires that $\beta R, QR\in2\Zb$. Because of the boundary, these conditions are slightly stronger than in CILT (where one requires only that $\beta R,QR\in\mathbb Z$). Nevertheless, they can be satisfied by choosing $R$ appropriately, and we are still in the rational regime ($\frac1\beta\mathbb Z\cap\frac1Q\mathbb Z\not=\{0\}$) studied in \cite{CILT}.

\subsection{Sketch of the construction}
The construction of the path integral begins with the observation that the first term $\int_{\Sigma}|d\phi|_g^2\,\d v_g$ in the Liouville action \eqref{eq_intro_action} is the action of a compactified Gaussian Free Field (GFF). Suppose $\phi$ is a smooth map from $\Sigma$ to $\mathbb R/2\pi R\mathbb Z$, so that $d\phi$ is a well-defined closed $1$-form with coefficients in $2\pi R\Zb$. By the Hodge decomposition, there exist a unique smooth function $f:\Sigma\to\mathbb R$ with mean zero and a unique harmonic $1$-form $\omega$ such that $d\phi=d f+\omega$. Then $\int_{\Sigma}|d\phi|_g^2\,\d v_g=\int_{\Sigma}|df|_g^2\,\d v_g+\int_{\Sigma}|\omega|_g^2\,\d v_g$. We interpret the path integral for the compactified GFF as integrating over the product measure
\[e^{-\frac{1}{4\pi}\int_\Sigma|d\phi|^2\,\d v_g}\,\D\phi := e^{-\frac1{4\pi}\int_{\Sigma}|df|_g^2\,\d v_g}\,\d f \otimes e^{-\frac1{4\pi}\int_{\Sigma}|\omega|_g^2\,\d v_g}\,\d [\omega]\otimes\d c,\]
where the first part is given by the standard GFF with mean zero and Neumann boundary condition at $\pa\Sigma$, $\d [\omega]$ is the counting measure on the first cohomology group $H^1(\Sigma;2\pi R\Zb)$ with coefficients in $2\pi R\Zb$, and $\d c$ is the Lebesgue measure on $\mathbb R/2\pi R\mathbb Z$. The compactified field $\phi$ is then
\[\phi(z) = c+f(z)+\textstyle\int_{x_0}^z\omega,\]
sampled according to the measure above. Here we fix a basepoint $x_0\in\Sigma$ and integrate over any path from $x_0$ to $z$. Since $\phi$ takes values in $\mathbb R/2\pi R\mathbb Z$, this is well-defined. The details of this construction are carried out in \cref{sec_gff}, see in particular \cref{measure1,measure2}.

We interpret the other terms in the Liouville action as functionals on this measure space. One has to be careful since $\phi$ is now multi-valued when viewed as a map to $\Rb$. As a first observation, for the Liouville action to be well-defined, these terms should be invariant under the translation $\phi\mapsto\phi+2\pi R$, which forces
\begin{equation*}
    \frac{QR}{4\pi}\left(\int_\Sigma K_g\,\d v_g+2\int_{\pa\Sigma}k_g\,\d\ell_g\right)\in\mathbb Z
,\qquad\text{and}\qquad e^{\i \beta R\cdot2\pi}=e^{\i\frac{\beta}{2}R\cdot2\pi}=1.
\end{equation*}
By the Gauss--Bonnet formula, these conditions amount to $QR\in\Zb$ and $\beta R\in2\Zb$.

To make sense of the curvature terms $\int_\Sigma K_g\phi\,\d v_g$ and $\int_{\pa\Sigma}k_g\phi\,\d\ell_g$, the idea is to integrate over a domain of full measure on which $\phi$ is single-valued. This is done by cutting the surface along a family of curves called a separating family (defined in \cref{sec_sep_fam}), leaving a null-homologous domain. With appropriate correction terms that depend on the cutting curves, the resulting integral can be shown to be invariant, except that changing the homotopy classes of the cutting curves introduces an integral anomaly, which is why we actually need $QR\in2\Zb$ for a general surface. \cref{curv-term-section} is devoted to this.

Finally, the Liouville potential terms $\int_\Sigma e^{\i \beta\phi}\,\d v_g$ and $\int_{\pa\Sigma}e^{\i\frac{\beta}{2}\phi}\,\d\ell_g$ have been studied as a probabilistic object in \cite{lacoin2015complex,lacoin2023probabilistic}, called the imaginary Gaussian Multiplicative Chaos (GMC). The parameter $\beta$ is restricted to $(0,\sqrt 2)$, and important estimates of the bulk imaginary GMC $\int_\Sigma e^{\i \beta\phi}\,\d v_g$ (with Dirichlet boundaries actually, see Segal's axioms below) are made in \cite{CILT} for the construction of CILT. In Section \ref{GMC}, we apply the techniques used in \cite{CILT} to the terms $\int_\Sigma e^{\i \beta\phi}\,\d v_g$ and $\int_{\pa\Sigma}e^{\i\frac{\beta}{2}\phi}\,\d\ell_g$ on surfaces with Neumann boundary.

As a concrete example, we specialize to the case of the disk. Let $g$ be a Neumann extendible (\blue{meaning that it extends to a symmetric metric on the doubled surface, see \cref{doubling} for the precise definition}) conformal metric on the unit disk $\mathbb D$. We fix a compactification radius $R>0$. As discussed above, we require $\beta\in(0,\sqrt 2)$, $QR\in\Zb$, and $\beta R\in2\Zb$. The first cohomology is trivial, so the curvature terms need no special treatment. Let $X_g$ be the Neumann GFF on $(\mathbb D,g)$. We define the bulk and boundary GMCs $M_{g}(X_{g},\d v_{g})$ and $L_{g}(X_{g},\d \ell_{g})$ by
\begin{equation*}
    M_g(X_g,\d v_g):=\lim_{\varepsilon\to0}\varepsilon^{-\beta^2/2}e^{\i\beta X_{g,\varepsilon}(x)}\, \d v_g(x) ,\ \ \  L_g(X_g,\d \ell_g):=\lim_{\varepsilon\to0}\varepsilon^{-\beta^2/4}e^{\i\beta /2X_{g,\varepsilon}(x)}\, \d \ell_g(x),
\end{equation*}
where $X_{g,\varepsilon}$ is a suitable regularization of the random distribution $X_g$ at scale $\varepsilon$ with respect to the metric $g$. The convergence is nontrivial, and we refer to \cite{lacoin2015complex} for the details. Writing $\phi=c+X_g$, the path integral for a test functional $F(\phi)$ is
\begin{equation*}
    \langle F\rangle_{\mathbb D,g}:=C(g)\int_{\mathbb R/2\pi R\mathbb Z}\mathbb E\left[F(\phi)\,e^{-\frac{\i Q}{4\pi}\int_\Sigma K_g\phi\,\d v_g-\frac{\i Q}{2\pi}\int_{\pa\Sigma}k_g\phi\,\d\ell_g-\mu M_g(\phi,\mathbb D)-L_g(\phi, \mu_\pa1_{\pa\mathbb D})}\right]\d c,
\end{equation*}
where $C(g)$ is an explicit constant depending only on the metric $g$ and the expectation is over the GFF $X_g$.

Let us introduce an important type of functionals called \textbf{electric operators}. For $\alpha,\eta\in\Rb$ such that $\alpha R\in \mathbb Z$, $\eta R\in 2\mathbb Z$, we define formally the electric operators $V_{\alpha}(z)$, $V_{\eta}(x)$ at $z\in\mathbb D$, $x\in\pa\mathbb D$ as the $\varepsilon\to0$ limit of
\begin{equation*}
    V_{\alpha,g,\varepsilon}(z):=\varepsilon^{-\frac{\alpha^2}2}e^{\i\alpha \phi_{g,\varepsilon}(z)},V_{\eta,g,\varepsilon}(x):=\varepsilon^{-\frac{\eta^2}4}e^{\i\frac\eta2 \phi_{g,\varepsilon}(x)},
\end{equation*}
where $\phi_{g,\varepsilon}=c+X_{g,\varepsilon}$, and the regularization is necessary since $\phi$ does not make sense pointwise. We define the path integral with electric insertions as
\begin{equation*}
    \langle FV_{\alphabf,\etabf}(\zbf,\xbf)\rangle_{\mathbb D,g}:=C\lim_{\varepsilon\to0}\int_{\mathbb R/2\pi R\mathbb Z}\mathbb E\left[F(\phi)\,V_{g,\varepsilon}(\phi)\,e^{-\frac{\i Q}{4\pi}\int_{\mathbb D} K_g\phi\,\d v_g-\frac{\i Q}{2\pi}\int_{\pa\mathbb D}k_g\phi\,\d\ell_g-\mu M_g(\phi,\mathbb D)-\mu_\pa L_g(\phi, \mu_\pa1_{\pa\mathbb D})}\right]\d c,
\end{equation*}
where $\zbf=(z_1,\cdots,z_\s)$, $\alphabf=(\alpha_1,\cdots,\alpha_\s)$ are the data of bulk insertions, and similarly for the boundary data $\xbf$, $\etabf$, and $V_{g,\varepsilon}(\phi)=\prod_{j=1}^\s V_{\alpha_j,g,\varepsilon}(z_j)\prod_{j=1}^\t V_{\eta_j,g,\varepsilon}(x_j)$. Here the constant $C$ depends on $(g,\zbf,\xbf,\alphabf,\etabf)$. Since $V_{g,\varepsilon}(\phi)$ should be invariant under the translation $\phi\mapsto\phi+ 2\pi R$, we require that $\sum_{j=1}^\s\alpha_jR+\sum_{j=1}^\t\frac{\eta_j}{2}R\in\Zb$. The existence of this limit is proved in Section \ref{bcilt-corr}, which relies on the imaginary Cameron--Martin theorem for the GMC (Proposition \ref{prop_im_CM}). It turns out that we should assume $\alpha_j,\eta_j>Q$ for all $j$ to guarantee convergence.

Taking $F=1$, we obtain the \textbf{correlation functions} $\langle V_{\alphabf,\etabf}(\zbf,\xbf)\rangle_{\mathbb D,g}$, which are fundamental objects exhibiting rich symmetries. \textcolor{black}{These symmetries, especially Ward identities to be proven (the proof will be in a future work), are essential tools in the bootstrap program demonstrated in \cite{BPZ}. The three-point function on the sphere (which is the structure constant of CILT) is known to be given by the imaginary DOZZ formula \cite{CILT}. In Section \ref{sec_open}, we show that the correlation functions on the disk (which contain the structure constants of BCILT) involve combinations of Dotsenko--Fateev 
and Selberg 
integrals. More precisely, they are expectations of mixed moments of the boundary and bulk GMCs. In particular, when the bulk $\mu=0$, the boundary one-point function has a formula which produces the Fyodorov--Bouchaud formula. Note that there have been several papers on moments of the real boundary GMC \cite{Remy:2017phd,Remy:2020akc,Remy:2020suk} (built with tools like the BPZ equations), and we expect similar results for the imaginary one. It would also be interesting to connect our integrals with the Coulomb gas integrals in \cite{math_logcft_Liu:2024ybm}.}

Apart from electric operators, we can also introduce another type called \textbf{magnetic operators}. This is when we require the field $\phi$ to have a winding number $m\in\Zb$ (called the magnetic charge) around an interior point $z\in \Sigma^\circ=\Sigma\setminus\pa\Sigma$. As before, we put magnetic charges $\mbf=(m_1,\ldots,m_\s)\in\Zb^\s$ at $\zbf$. In order to make sense of the electric operators $e^{\alpha_j\phi(z_j)}$ (even for an otherwise smooth field), we specify a nonzero tangent vector $v_j$ at $z_j$ and define $\phi(z_j)$ to be the limit of $\phi(w)$ as $w$ approaches $z_j$ in the direction of $v_j$. Then the above construction carries over, where $\omega$ is now a closed $1$-form on $\Sigma\setminus\zbf$ with the prescribed cycles around each $z_j$. The combined operators are called the \textbf{electromagnetic operators}. \textcolor{black}{Note that we do not put magnetic operators on the boundary, so electromagnetic operators exist only in the interior $\Sigma^\circ$. They also produce certain types of Coulomb gas integrals, for which we refer to \cite[Section 7]{CILT} and \cite{Neretin:2022pbv} for more explanations.}

As a concrete example, let us consider the case of an annulus $\mathbb A$ with two magnetic insertions $\mbf=(m_1,m_2)\in\Zb^2$ at $\zbf=(z_1,z_2)\in(\Ab^\circ)^2$. See \cref{fig:annulus separating} for an illustration. In the notation of Section \ref{cohomologies}, the set of possible choices of $\omega$ is in bijection with the affine subspace $H^1_{\mbf}(\mathbb A\setminus\zbf)$ of $H^1(\mathbb A\setminus\zbf)$ consisting of cohomology classes with the prescribed cycles $\mbf$ around $\zbf$. We have $\int_{\pa_+\Ab}\omega=\int_{\pa_-\Ab}\omega+m_1+m_2$, where $\pa_\pm\Ab$ are the outer and inner boundaries of $\Ab$, both with counterclockwise orientation. Then the cohomology class of $\omega$ is determined by choosing $\int_{\pa_-\Ab}\omega$ (or $\int_{\pa_+\Ab}\omega$), so that $H^1_{\mbf}(\mathbb A\setminus\zbf)\cong\Zb$, and the measure $e^{-\frac1{4\pi}\int_{\Sigma}|\omega|_g^2\,\d v_g}\,\d [\omega]$ becomes a sum over $\Zb$. To make sense of the curvature terms, the separating family should include curves connecting the insertions with the boundary. Changing this family as in \cref{fig:annulus separating} introduces an anomaly of $2\pi$ by a Gauss--Bonnet calculation.
We define similarly the electromagnetic operators as the limit of $V_{g,\varepsilon}(\phi)$, where $\phi$ now has the multi-valued term $\int_{x_0}^z\omega$.
Then the path integral $\la FV_{\alphabf,\etabf,\mbf}(\vbf,\xbf)\ra_{\Sigma,g}$ on $\Sigma$ is
\[C\lim_{\varepsilon\to0}\sum_{[\omega]\in H^1_\mbf(\Sigma\setminus\zbf)}
e^{-\pi R^2\int_\Sigma|\omega|_g^2\,\d v_g}\, \int_{\mathbb R/2\pi R\mathbb Z}\mathbb E\left[F(\phi)\,V_{g,\varepsilon}(\phi)\,e^{-\frac{\i Q}{4\pi}\int_\Sigma K_g\phi\,\d v_g-\frac{\i Q}{2\pi}\int_{\pa\Sigma}k_g\phi\,\d\ell_g-\mu M_g(\phi,\Sigma)-L_g(\phi,\mu_\pa1_{\pa\Sigma})}\right]\d c,\]
where $\phi=c+X_g+2\pi R\int_{x_0}^z\omega$ ($\omega$ has coefficients in $\Zb$) and $C$ depends on $(g,\zbf,\xbf,\alphabf,\etabf,\mbf,\vbf)$.

\begin{figure}
    \centering
    \begin{subfigure}{0.3\textwidth}
    \centering
        \includegraphics[width=\textwidth]{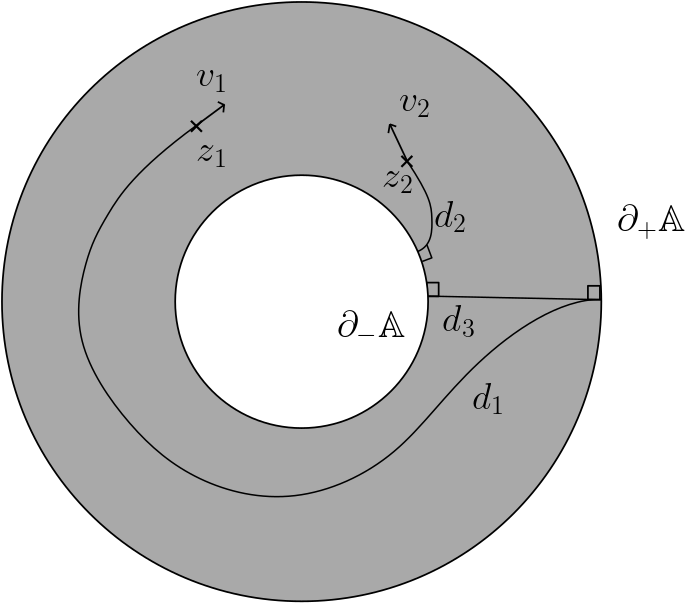}
        \subcaption{left}
    \end{subfigure}
        \begin{subfigure}{0.3\textwidth}
        \centering
        \includegraphics[width=\textwidth]{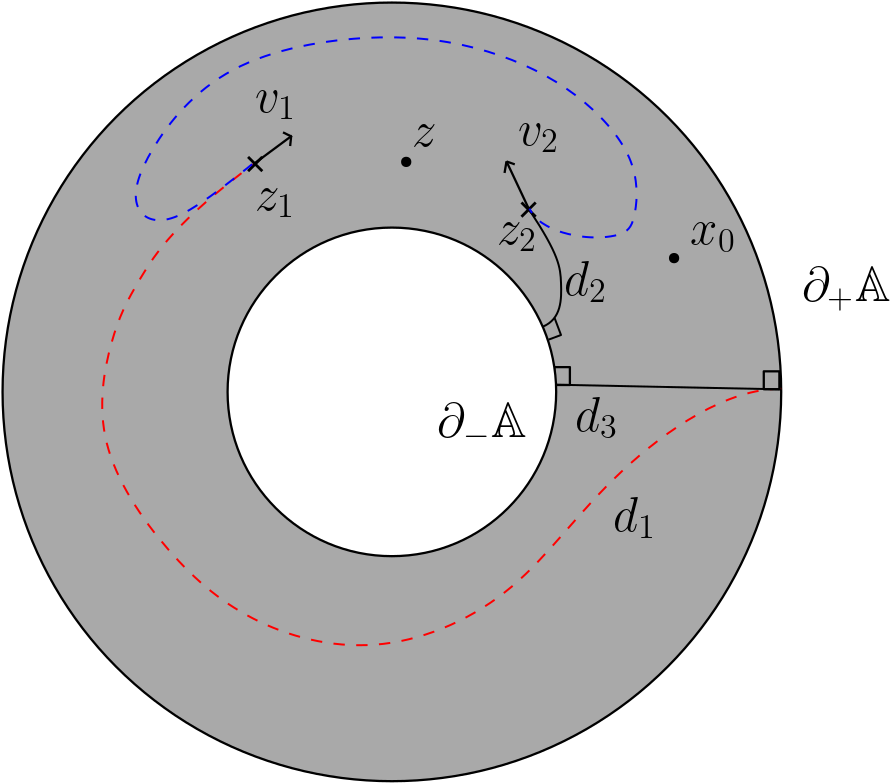}
        \subcaption{right}
    \end{subfigure}
    \caption{On the left is a possible choice of a separating family $(d_1,d_2,d_3)$ on the annulus $\mathbb A$ with two Neumann boundary circles $\pa_\pm\Ab$ and two insertions $(z_1,z_2)$. Note that the orientation of the $d_j$ does not matter. On the right, we obtain a new separating family by replacing $d_1$ with the blue curve between $z_1$ and $z_2$. Then the value of $\int_{x_0}^z\omega$ changes by $m_1$, the winding number around $z_1$.}
    \label{fig:annulus separating}
\end{figure}


As proposed in \cite{Segal04}, an important step in solving the conformal bootstrap of a CFT is constructing the \textbf{Segal functor}. This construction associates to a surface with Dirichlet boundary an operator, called the \textbf{amplitude}, whose integral kernel is given by the path integral over fields with prescribed boundary values at the Dirichlet boundary. Then one can recover the correlation functions by gluing such surfaces along the Dirichlet boundary, which corresponds to composing the amplitude operators. In boundary CFTs, the additional complexity is that one can have both the Neumann boundary and the Dirichlet boundary, and they create corners at the intersection. We refer to \cite{Segal,BLCFT,CILT} for previous works in this direction for the real Liouville CFT and CILT without boundary. For BCILT, the topological structure further complicates the problem, as already seen in defining the correlation functions. One of the main contributions of this paper is a clear framework that facilitates such generalizations.

\subsection{Statement of the main results}

In order to describe the general situation in a precise manner, we propose the following definition:

\begin{defn}\label{extended-surface}
An \textit{extended surface} consists of the following data:
\begin{enumerate}[label=(\arabic*)]
\item $\Sigma$ is a compact Riemann surface with corners (see, e.g., \cite[Definition 2.2]{BLCFT}) of genus $\g$.
\item We put labels on the boundary circles of $\Sigma$ so that we have a partition $\pa\Sigma=\pa_\N\Sigma\sqcup\pa_\D\Sigma\sqcup\pa_\M\Sigma$ where:
    \begin{itemize}
    \item $\pa_\N\Sigma$ consists of $\bN$ real-analytic circles $c_1^\N,\ldots,c_{\bN}^\N$ with Neumann boundary condition.
    \item $\pa_\D\Sigma$ consists of $\bD$ real-analytic circles $c_1^\D,\ldots,c_{\bD}^\D$ with Dirichlet boundary condition.
    \item $\pa_\M\Sigma$ consists of $\bM$ piecewise real-analytic circles $c_1^\M,\ldots,c_{\bM}^\M$ with mixed boundary condition. More precisely, each $c_i^\M$ consists of $k_i$ real-analytic semicircles $c_{i,1}^\MN,\ldots,c_{i,k_i}^\MN$ with Neumann boundary condition and $k_i$ real-analytic semicircles $c_{i,1}^\MD,\ldots,c_{i,k_i}^\MD$ with Dirichlet boundary condition. We assume that $c_{i,j}^\MN\cap c_{i,j'}^\MN=c_{i,j}^\MD\cap c_{i,j'}^\MD=\varnothing$ for $j\neq j'$ and that the intersections $c_{i,j}^\MN\cap c_{i,j'}^\MD$ are orthogonal (if nonempty), i.e., the corners of $\Sigma$ are right angles.
    \end{itemize}
    The precise meaning of the labels will become evident shortly. We write $\pa_\MN\Sigma=\bigsqcup_{i=1}^{\bM}\bigsqcup_{j=1}^{k_i}c_{i,j}^\MN$, $\paN\Sigma=\pa_\N\Sigma\sqcup\pa_\MN\Sigma$, and likewise $\pa_\MD\Sigma$, $\paD\Sigma$. In other words, $\paN\Sigma$ is the full Neumann boundary, whereas $\pa_\N\Sigma$ contains only the Neuamnn circles, and similarly for $\D$.
\item $\zetabf=(\zeta_1^\D,\ldots,\zeta_{\bD}^\D,\zeta_1^\MD,\ldots,\zeta_{\bMD}^\MD)$ where:
    \begin{itemize}
    \item $\zeta_1^\D,\ldots,\zeta_{\bD}^\D$ are real-analytic parametrizations $\Tb=\{e^{\i\theta}:\theta\in\Rb\}\to\pa_\D\Sigma$ of the Dirichlet circles $c_1^\D,\ldots,c_{\bD}^\D$ in $\pa_\D\Sigma$.
    \item $\zeta_1^\MD,\ldots,\zeta_{\bMD}^\MD$ are Neumann extendible (see \cref{doubling}) real-analytic parametrizations $\Tb_+=\{e^{\i\theta}:\theta\in[0,\pi]\}\to\pa_\MD\Sigma$ of the Dirichlet semicircles $c_{1,1}^\MD,\ldots,c_{\bM,k_{\bM}}^\MD$ in $\pa_\MD\Sigma$, where $\bMD=\sum_{i=1}^{\bM}k_i$. We relabel the indices of these semicircles accordingly as $c_1^\MD,\ldots,c_{\bMD}^\MD$ (note that a boundary component may contain several Dirichlet semicircles).
    \end{itemize}
    A Dirichlet circle or semicircle is \textit{outgoing} (resp.\ \textit{incoming}) if the orientation induced by its parametrization coincides with (resp.\ is reverse to) the boundary orientation induced by $\Sigma$ \blue{(we use the convention that the boundary of the unit disk has counterclockwise orientation)}. We write $\varsigma_i^\D=1$ (resp.\ $-1$) if $c_i^\D$ is outgoing (resp.\ incoming), and likewise $\varsigma_i^\MD$ for $c_i^\MD$, so that $\varsigmabf=(\varsigma_1^\D,\ldots,\varsigma_{\bD}^\D,\varsigma_1^\MD,\ldots,\varsigma_{\bMD}^\MD)$ encodes the orientations of the parametrizations in $\zetabf$.
    
    To simplify the notation later on, we write $\paS=\bigsqcup_{i=1}^{\bD}\Tb\sqcup\bigsqcup_{i=1}^{\bMD}\Tb_+$, so that $\zetabf$ can be viewed as a diffeomorphism $\paS\xrightarrow{\sim}\paD\Sigma$. We maintain this distinction to emphasize the fact that the Dirichlet boundary is parametrized.
\item $\zbf=(z_1,\ldots,z_\s)$ are distinct points on $\Sigma^\circ=\Sigma\setminus\pa\Sigma$.
\item $\xbf=(x_1,\ldots,x_\t)$ are distinct points on $\paN\Sigma\setminus\paD\Sigma$.
\end{enumerate}
Such an extended surface is said to be of type $(\g,\bN,\bD,\bM,\bMD,\s,\t)$. See \cref{fig:double} for an illustration.
\end{defn}

\begin{figure}
\centering
\includegraphics{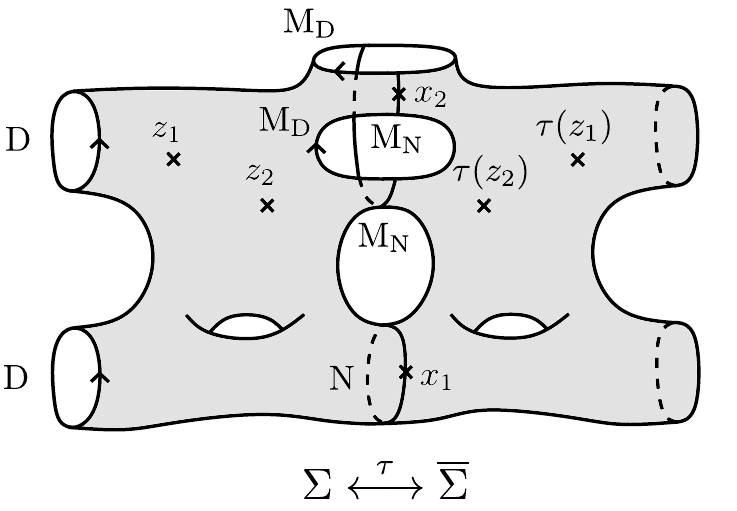}
\caption{An extended surface of type $(1,1,2,1,2,2,2)$ and its double}
\label{fig:double}
\end{figure}

\medskip

Let $\Sigma$ be an extended surface and $g$ a Neumann extendible (see \cref{doubling}) conformal metric on $\Sigma$. Fix a compactification radius $R>0$. We shall give a probabilistic \blue{definition of the formal path integral (see \cref{def:correlation_function} for the case $\paD\Sigma=\varnothing$ and \cref{def:amplitude} for the general case)}
\[\Ac_{\Sigma,g,\alphabf,\etabf,\vbf,\xbf,\mbf,\zetabf}(F,\bt^\kbf)=
\int F(\phi)\prod_{j=1}^\s e^{\i\alpha_j\phi(v_j)}\prod_{j=1}^\t e^{\i\frac{\eta_j}{2}\phi(x_j)}\,e^{-S_\L(\phi)}\,\D\phi,\]
where the integral is over maps $\phi:\Sigma\setminus\zbf\to\Rb/2\pi R\Zb$ satisfying the mixed boundary condition
\[\pa_\nu\phi|_{\paN\Sigma}=0,\qquad\zetabf^*\phi|_{\paD\Sigma}=\bt^\kbf,\]
and having a given winding number $m_j\in\Zb$ around each $z_j\in\Sigma^\circ$. Here $\bt^\kbf:\paS\to\Rb/2\pi R\Zb$ is the Dirichlet boundary value
(\blue{$\kbf\in\Zb^{\b_\D}$ encodes the winding numbers on the Dirichlet circles, see \cref{section:measure-space1}}), $F$ is in a suitable function space containing the constants, $S_\L$ is the imaginary Liouville functional
\[S_\L(\phi)=
\frac{1}{4\pi}\int_{\Sigma}|d\phi|_g^2\,\d v_g
+\frac{\i Q}{4\pi}\int_\Sigma K_g\phi\,\d v_g+\frac{\i Q}{2\pi}\int_{\pa\Sigma}k_g\phi\,\d\ell_g
+\mu\int_\Sigma e^{\i \beta\phi}\,\d v_g+\int_{\paN\Sigma}\mu_\pa\,e^{\i\frac{\beta}{2}\phi}\,\d\ell_g,\]
and the parameters are:
\begin{itemize}
\item $\beta\in(0,\sqrt{2})$ such that $\beta R\in2\Zb$ (or $\Zb$ if $\mu_\pa=0$).
\item $Q=\beta/2-2/\beta$ such that $QR\in4\Zb$ (or $2\Zb$ if $\Sigma$ has no corners). It is called the \textit{background charge}.
\item $\mu\in\Cb$, $\mu_\pa:\paN\Sigma\to\Cb$ is piecewise constant. They are called the \textit{cosmological constants}.
\item $\alphabf=(\alpha_1,\ldots,\alpha_\s)\in\Rb^\s$, $\etabf=(\eta_1,\ldots,\eta_\t)\in\Rb^\t$ such that
\[\begin{cases}\fr j,\,\alpha_j,\eta_j>Q,\\\displaystyle\sum_{j=1}^\s\alpha_jR+\sum_{j=1}^\t\frac{\eta_j}{2}R\in\Zb.\end{cases}\]
They are called the \textit{electric charges}.
\item $\mbf=(m_1,\ldots,m_\s)\in\Zb^\s$. They are called the \textit{magnetic charges}.
\item $\vbf=(v_1,\ldots,v_\s)$ where $v_j$ is a nonzero tangent vector at $z_j$.
\end{itemize}
We remark that both the Neumann boundary $\paN\Sigma$ and the Dirichlet boundary $\paD\Sigma$ can be empty. The case $\paN\Sigma=\varnothing$ is exactly CILT as studied in \cite{CILT}.

\begin{thm}\label{main-theorem}
This defines a Conformal Field Theory in the following sense:
\begin{enumerate}[label=\emph{(\arabic*)}]
\item \emph{(\textbf{Convergence})} The path integral $\Ac_{\Sigma,g,\vbf,\xbf,\alphabf,\mbf,\etabf,\zetabf}(F,\bt^\kbf)$ defines an $L^2$ function in the boundary value $\bt^\kbf$ \blue{with respect to an appropriate measure (defined using the GFF on the (semi)circle, see \cref{section:measure-space1}).} We view it as the Schwartz kernel of an operator $\Ac_{\Sigma,g,\alphabf,\etabf,\vbf,\xbf,\mbf,\zetabf}(F)$ acting on the \blue{$L^2$ space on the measure space of boundary values}, called the \emph{amplitude}. In particular, if $\paD\Sigma=\varnothing$, this defines a complex number, called the \emph{correlation function}.
\item \emph{(\textbf{Weyl anomaly})} This theory is conformally covariant in the following sense. For $\rho\in C^\infty(\Sigma,\Rb)$ Neumann extendible (see \cref{doubling}) with $\rho|_{\paD\Sigma}=0$ if $\paD\Sigma\neq\varnothing$, we have
\begin{align*}
\Ac_{\Sigma,e^\rho g,\alphabf,\etabf,\vbf,\xbf,\mbf,\zetabf}(F)&=\Ac_{\Sigma,g,\alphabf,\etabf,\vbf,\xbf,\mbf,\zetabf}\big(F\big(\cdot-\tfrac{\i Q}{2}\rho\big)\big)\\
&\qquad\exp\left(\frac{c_\L}{96\pi}\int_\Sigma\big(|d\rho|_g^2+2K_g\rho\big)\,\d v_g-\sum_{j=1}^\s\Delta_{\alpha_j,m_j}\rho(z_j)-\frac{1}{2}\sum_{j=1}^\t\Delta_{\eta_j}\rho(x_j)\right),
\end{align*}
where
\[c_\L=1-6Q^2\]
is the \emph{central charge} of the theory and
\[\Delta_{\alpha,m}=\frac{\alpha}{2}\left(\frac{\alpha}{2}-Q\right)+\frac{m^2R^2}{4},\qquad \Delta_{\eta}=\frac{\eta}{2}\left(\frac{\eta}{2}-Q\right)\]
are the \emph{conformal dimensions} of the vertex operators $V_{\alpha,m}(v)=e^{\i\alpha\phi(v)}$, $V_\eta(x)=e^{\i\frac{\eta}{2}\phi(x)}$, respectively.
\item \emph{(\textbf{Spin})} For $\thetabf\in\Rb^\s$, we have
\[\Ac_{\Sigma,g,\alphabf,\etabf,e^{\i\thetabf}\vbf,\xbf,\mbf,\zetabf}(F)=\Ac_{\Sigma,g,\alphabf,\etabf,\vbf,\xbf,\mbf,\zetabf}(F)\,e^{\i R\sum_{j=1}^\s(\alpha_j-Q)m_j\theta_j},\]
where $e^{\i\thetabf}\vbf=(e^{\i\theta_1}v_1,\ldots,e^{\i\theta_\s}v_\s)$.
\item \emph{(\textbf{Gluing})} This defines a projective functor from the category where morphisms are extended surfaces and composition is given by gluing, to the category of Hilbert spaces. For the precise statement, see \cref{gluing-bcilt}.
\end{enumerate}
\end{thm}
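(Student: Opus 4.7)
\textbf{Convergence} is the starting point. First I would decompose the field as $\phi=c+P_g\bt^\kbf+X_g+2\pi R\int_{x_0}^z\omega$, where $c\in\Rb/2\pi R\Zb$ is the zero mode, $P_g\bt^\kbf$ is the harmonic extension of the Dirichlet boundary data, $X_g$ is the centered GFF with Neumann conditions on $\paN\Sigma$ and zero Dirichlet conditions on $\paD\Sigma$, and $\omega$ ranges over the affine lattice $H^1_\mbf(\Sigma\setminus\zbf;2\pi R\Zb)$. Substituting this into the regularized path integral and applying the imaginary Cameron--Martin theorem (\cref{prop_im_CM}) converts each electric insertion into a deterministic shift of $X_g$, whose singular factors are exactly cancelled by the renormalizations $\varepsilon^{-\alpha_j^2/2}$ and $\varepsilon^{-\eta_j^2/4}$. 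What remains is an $\varepsilon$-uniform moment estimate for the shifted bulk and boundary imaginary GMCs, supplied by \cref{GMC} under $\beta\in(0,\sqrt 2)$ and $\alpha_j,\eta_j>Q$; integrating out $c$ and the finite-dimensional harmonic part via Fubini and Cauchy--Schwarz then gives the $L^2$ bound in $\bt^\kbf$.

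\textbf{Weyl anomaly and spin.} Under $g\mapsto e^\rho g$ with $\rho|_{\paD\Sigma}=0$, I would track every term separately. The compactified GFF partition function produces a Polyakov--Alvarez anomaly with central charge $1$, transported from the doubled closed surface via the Neumann-extendibility hypothesis. Using $K_{e^\rho g}\,\d v_{e^\rho g}=(K_g-\Delta_g\rho)\,\d v_g$ and $k_{e^\rho g}\,\d\ell_{e^\rho g}=(k_g-\tfrac12\pa_\nu\rho)\,\d\ell_g$ together with Green's identity, the curvature coupling rewrites as the original one applied to $\phi-\tfrac{\i Q}{2}\rho$, plus a quadratic-in-$\rho$ correction that combines with the GFF anomaly to upgrade the central charge to $c_\L=1-6Q^2$. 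The bulk and boundary GMCs transform covariantly once the shift is applied, and the electric operator renormalizations produce the conformal weights $\Delta_{\alpha_j,m_j}\rho(z_j)$ and $\tfrac12\Delta_{\eta_j}\rho(x_j)$. For spin, rotating $v_j$ by $e^{\i\theta_j}$ picks up a phase $e^{\i\alpha_j m_j R\theta_j}$ from the magnetic component $2\pi R\int_{x_0}^z\omega$ evaluated in the new direction, and a compensating $e^{-\i Qm_jR\theta_j}$ from the local Gauss--Bonnet contribution of the curvature term along the separating curve emanating from $z_j$, combining to give $e^{\i R(\alpha_j-Q)m_j\theta_j}$.

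\textbf{Gluing.} The main obstacle is the gluing axiom: for $\Sigma$ obtained by gluing $\Sigma_1$ and $\Sigma_2$ along matching Dirichlet circles, one must establish $\Ac_\Sigma=\Ac_{\Sigma_2}\circ\Ac_{\Sigma_1}$ up to an explicit scalar. The proof decomposes each ingredient compatibly: (i) the Markov property of the GFF under mixed boundary conditions writes $X_g$ as independent GFFs on $\Sigma_1$ and $\Sigma_2$ conditioned on a common boundary field, generalizing \cite{Segal} to our setting (the Neumann-extendibility of $g$ and $\zetabf$ is exactly what ensures smoothness of the doubled metric at the gluing curve); (ii) via a Mayer--Vietoris argument, the lattice sum over $H^1_\mbf(\Sigma\setminus\zbf;2\pi R\Zb)$ factors as a sum over boundary winding numbers $\kbf$ (matching the measure on $\paS$ used to build the $L^2$ space) followed by lattice sums on each piece; (iii) the curvature terms, bulk GMC, and boundary GMC decompose additively once the separating families are chosen to be compatible with the gluing, and the anomaly from changing them is controlled by the integrality conditions $QR\in4\Zb$ and $\beta R\in2\Zb$; (iv) the electric insertions each live on a single side. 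Step (ii) is the principal novel difficulty, requiring careful bookkeeping of how the global cohomological constraints interact with the Dirichlet parametrizations $\zetabf$, a complication that has no analogue in the closed-surface CILT of \cite{CILT}. The residual scalar from (i) and (iii) is exactly what makes the Segal functor projective rather than strict.
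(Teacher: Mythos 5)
Your outline of the Weyl anomaly, spin, and gluing arguments matches the spirit of the paper, modulo sign conventions for the Laplacian. The genuine gap is in the Convergence item, which is also where the paper's argument has its most distinctive feature.

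First, a minor structural omission: your field decomposition $\phi=c+P_g\bt^\kbf+X_g+2\pi R\int_{x_0}^z\omega$ is incomplete when $\paD\Sigma\neq\varnothing$. Because the boundary value $\bt^\kbf$ carries winding data, a harmonic ``extension'' is not unique even up to the zero mode; one must also sum over a discrete part indexed by $H^1_\D(\Sigma)\cong\Zb^{\bD+\bMD}/\Zb$ (the kernel of $H^1(\Sigma\setminus\zbf,\paD\Sigma)\to H^1(\Sigma\setminus\zbf)$), which tracks the different homotopy classes of $\TR$-valued harmonic extensions. This part is essential for the gluing isomorphism you invoke in step (ii) to close up.

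Second, and more seriously, your claim that ``integrating out $c$ and the finite-dimensional harmonic part via Fubini and Cauchy--Schwarz then gives the $L^2$ bound in $\bt^\kbf$'' is not an adequate proof of the $L^2$ statement. The boundary data $\varphibf$ is a GFF on $\Tb$ or $\Tb_+$, so $P_\Sigma\varphibf$ is a.s.\ singular near $\paD\Sigma$, and it enters the bulk and boundary GMCs as the multiplicative factor $e^{\i\beta P_\Sigma\varphibf}$. The exponential-moment estimates of Section 6 are stated for a fixed measurable weight $f$, and they are not uniform in $\varphibf$ without additional work; Fubini and Cauchy--Schwarz alone do not tame this singularity. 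The paper circumvents this entirely by the \emph{Dirichlet double} trick: one reflects $\Sigma$ across $\paD\Sigma$ to form the boundaryless (w.r.t.\ Dirichlet) surface $\Sigma^{\D\#2}$, applies the free-field gluing theorem (\cref{gluing}) to rewrite $\|\Ac^\varepsilon_{\Sigma,\dots}(F)\|^2_{L^2(\Dc'(\paS,\TR))}$ as a correlation-type integral on $\Sigma^{\D\#2}$, and then invokes the convergence estimate already established for correlation functions. In particular, in the paper the Convergence and Gluing items are not logically independent in the way your outline treats them: the $L^2$ existence of amplitudes is proved \emph{through} gluing, not before it. If you want to avoid the doubling trick you would need a quantitative version of the GMC estimates that is uniform along the random harmonic data $P_\Sigma\varphibf$, which is nontrivial and not what the cited lemmas provide.

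For the record, the Weyl anomaly computation you sketch is sound (the sign in $K_{e^\rho g}\,\d v_{e^\rho g}=(K_g+\Delta_g\rho)\,\d v_g$ is plus with the geometer's Laplacian used in the paper), and the spin argument correctly combines the $e^{\i\alpha_j m_j R\theta_j}$ from the multi-valued primitive with the $e^{-\i Q m_j R\theta_j}$ from \cref{curv-term-spin}. The gluing outline captures the right ingredients (Markov decomposition, lattice factorization from \cref{gluing-top}, Gauss--Bonnet control of the separating-family anomaly via $QR\in4\Zb$), though the cohomological bookkeeping is done case-by-case in the paper rather than by a single Mayer--Vietoris argument.
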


\medskip

{The novelty of this work consists mainly of the treatment of the curvature terms in the presence of boundary with possible corners (see \cref{curv-term-section}) and the corresponding estimates for the GMC required for proving convergence (see \cref{GMC}). In particular, we generalize several estimates in \cite{CILT}, like the exponential moments of the GMC and its Cameron--Martin theorem, to surfaces with Neumann boundary.}

\subsection{Open questions and future works}

{For more insights on this CFT, we refer to \cite[Section 1]{CILT}. As another concrete example of a CFT, it would be interesting to understand the relation between BCILT and other \textcolor{black}{important} continuum models like the Schramm--Loewner Evolution\cite{Schramm:1999rp} and Conformal Loop Ensembles\cite{CLE}. They are \textcolor{black}{also powerful ingredients in} the mating-of-trees formalism \cite{Duplantier:2014daa} and are expected to have strong interplay with CFTs (first advocated in physics like \cite{connection_Ikhlef:2015eua} or the review \cite{cilt_exp_review_rushkin2007critical}). In the case of Liouville CFT, the connection between these models has led to mutual better understandings like \cite{SLEipp,allbdystruc,CLEint}, and also unexpected results for statistical physics models like \cite{connection_new_Ang:2022iqv,connection_new_Nolin:2023vkn}.}

There are several questions to be answered for BCILT itself. First, it is unclear how (or whether it is meaningful) to put magnetic charges on the (Neumann) boundary, since having a nontrivial winding number violates the Neumann boundary condition. Second, the structure constants of BCILT are not fully understood yet. In Section \ref{sec_open}, we briefly describe how these constants can be written as certain Coulomb gas integrals, and we hope their explicit forms will be studied somewhere else. The next step would be to investigate the conformal bootstrap for BCILT, which might be easier than the bulk case, see \cite{kogan2000boundary,read2007associative} for similar works in this direction.

\subsection{Organization of the paper}

The paper is organized as follows. \cref{sec_geo_pre} is a compilation of the geometric and topological preliminaries used in this paper. \cref{sec_gff} defines the spaces of random fields in the path integrals. \cref{gluing-free} verifies gluing for the free theory, i.e., the compactified free boson with boundary. \cref{curv-term-section} is devoted to the curvature terms in the Liouville action. \cref{GMC} provides all GMC estimates used in the construction of BCILT. Finally, in \cref{sec_path_int}, we prove \cref{main-theorem}. For the previous sections, the reader who is only interested in the correlation functions (as opposed to Segal's amplitudes) can assume $\paD\Sigma=\varnothing$ and skip \cref{Poisson,gluing-top,section:measure-space1,section:measure-space3,gluing-free}.

\subsection{Acknowledgments}

We would like to thank C. Guillarmou and R. Rhodes for introducing us to this theory and for their comments on the draft of this paper. We also thank Yulai Huang, N. Huguenin, and B. Wu (as well as any others whom we may have unwittingly omitted) for many valuable discussions. Y. Xiao acknowledges that this project has received funding from the European Union’s Horizon 2020 research and innovation programme under the Marie Skłodowska-Curie grant agreement No 101126554, \textcolor{black}{also acknowledges the support of ANR-21-CE40-0003. Y. Xie acknowledges support from the Jean-Pierre Aguilar scholarship from the Fondation CFM pour la Recherche. Both authors acknowledge the support from the Simons foundation grant ``Probabilistic Paths to QFT".}

\section{Geometric preliminaries}\label{sec_geo_pre}

In this section, we collect the geometric and topological ingredients needed for BCILT.

\subsection{Neumann doubling}\label{doubling}

A useful method for studying the Neumann boundary condition is \textit{Cardy's doubling trick}.

Let $\Sigma$ be an extended surface. The \textit{(Neumann) double} of $\Sigma$ is the surface obtained by gluing $\Sigma$ and its oppositely oriented copy $\ol{\Sigma}$ along their Neumann boundaries via the identity map $\paN\Sigma\xleftrightarrow{\sim}\paN\ol{\Sigma}$. We denote it by $\Sigma^{\#2}$. It is naturally an extended surface with $\pa_\N\Sigma^{\#2}=\pa_\M\Sigma^{\#2}=\varnothing$, $\pa_\D\Sigma^{\#2}=\paD\Sigma\cup\paD\ol{\Sigma}$. If $\Sigma$ is of type $(\g,\bN,\bD,\bM,\bMD,\s,\t)$, then $\Sigma^{\#2}$ is of type $(2\g,0,2\bD+\bMD,0,0,2\s,\t)$. See \cref{fig:double} for an illustration.

Consider the \textit{reflection} $\tau$ on $\Sigma^{\#2}$ with respect to $\paN\Sigma=\paN\ol{\Sigma}$, which is an antiholomorphic involution exchanging $\Sigma$ and $\ol{\Sigma}$. A functorial object $u$ on $\Sigma^{\#2}$ is \textit{even} (resp.\ \textit{odd}) if $\tau_*u=u$ (resp.\ $\tau_*u=-u$). This definition applies to functions, tensors, distributions, etc.

Any tensor $u$ on $\Sigma$ has a unique even extension to $\Sigma^{\#2}$, denoted by $u^{\#2}$. We say $u$ is \textit{Neumann extendible} if $u^{\#2}$ is smooth at $\paN\Sigma$. We are interested in the following cases:
\begin{itemize}
\item If a metric $g$ is Neumann extendible, then $\tau$ is an isometry with respect to $g^{\#2}$, and $\paN\Sigma$ is necessarily geodesic, being its fixed point set.
\item If a function $f$ is Neumann extendible, then $f$ satisfies the Neumann boundary condition $\pa_\nu f|_{\paN\Sigma}=0$.
\item If a $1$-form $\omega$ is Neumann extendible, then $i_\nu\omega|_{\paN\Sigma}=0$, where $i_\nu$ denotes contraction with $\nu$.
\end{itemize}

One can view $\Tb$ as the double of $\Tb_+$ with respect to $\pa\Tb_+$. A real-analytic embedding $\zeta:\Tb_+\to\Sigma$ with $\zeta(\pa\Tb_+)\subset\paN\Sigma$ is \textit{Neumann extendible} if its unique extension $\zeta^{\#2}:\Tb\to\Sigma^{\#2}$ with $\zeta^{\#2}(e^{-\i\theta})=\tau(\zeta^{\#2}(e^{\i\theta}))$ is real-analytic. In \cref{extended-surface}, we require the parametrizations $\zeta_i^\MD$ to be Neumann extendible, so that the corresponding parametrizations $(\zeta_i^\MD)^{\#2}$ for $\Sigma^{\#2}$ are real-analytic.

In view of the relation between the Neumann boundary condition and Neumann extendibility, one can think of a boundary CFT on an extended surface $\Sigma$ with $\paN\Sigma\neq\varnothing$ as an ordinary CFT on its double $\Sigma^{\#2}$ where the path integrals are restricted to even fields. This doubling idea will be a guiding principle throughout the paper.

\subsection{Distribution spaces}

For important technical reasons (see \cref{GFF}), it is necessary to integrate over fields that are not even in $L^1_\loc$, i.e., they will generally be distributions. Therefore, we need to specify the appropriate distribution space on an extended surface. As a rule of thumb, we require the space $\Dc$ of test functions to be Neumann extendible and supported away from the Dirichlet boundary, and we define the space $\Dc'$ of distributions to be the subspace of even distributions on the double.

Let $\Sigma$ be an extended surface. If $\paN\Sigma=\varnothing$, we define $\Dc'(\Sigma,\Rb)$ to be the usual distribution space, i.e., the space of continuous linear functionals on the space of compactly supported smooth $2$-forms on $\Sigma$. In general, we define $\Dc'(\Sigma,\Rb)$ to be the space of even distributions on $\Sigma^{\#2}$. We view $C(\Sigma,\Rb)$ as a subspace of $\Dc'(\Sigma,\Rb)$ in the following way. For $u\in C(\Sigma,\Rb)$, we view it as a distribution in $\Dc'(\Sigma,\Rb)$ with
\[\la u,\eta\ra=\frac{1}{2}\int_{\Sigma^{\#2}} u^{\#2}\eta\]
for a test $2$-form $\eta$ on $\Sigma^{\#2}$. Let
\[\Dc(\Sigma,\Rb)=\{f\in C^\infty(\Sigma,\Rb):f\text{ Neumann extendible},\supp f\cap\paD\Sigma=\varnothing\}.\]
Note that $C_c^\infty(\Sigma,\Rb)=\{f\in C^\infty(\Sigma,\Rb):\supp f\subset\Sigma^\circ\}\subset\Dc(\Sigma,\Rb)$, with equality if and only if $\paN\Sigma=\varnothing$. The pairing between $\Dc'(\Sigma,\Rb)$ and $\Dc(\Sigma,\Rb)$ depends on a volume form. Let $g$ be a Neumann extendible metric on $\Sigma$. Then its volume form $\d v_g$ is also Neumann extendible. For $u\in\Dc'(\Sigma,\Rb)$, $f\in\Dc(\Sigma,\Rb)$, we define
\[\int_\Sigma uf\,\d v_g=\la u,(f\,\d v_g)^{\#2}\ra.\]
This is consistent with the usual meaning of the integral for $u\in C(\Sigma,\Rb)$. Note that the constant functions are in $\Dc(\Sigma,\Rb)$ if and only if $\paD\Sigma=\varnothing$. In this case, we define
\[\Dc_0'(\Sigma,\Rb)=\{u\in\Dc'(\Sigma,\Rb):\textstyle\int_\Sigma u\,\d v_g=0\},\]
which depends implicitly on $g$. We have a decomposition $\Dc'(\Sigma,\Rb)=\Rb\oplus\Dc_0'(\Sigma,\Rb)$. The constant part is called the \textit{zero mode}.

For the boundary value, we also need distribution spaces on $\Tb$ and $\Tb_+$. The definitions are similar, except that we use the canonical angle form $d\theta$ to identify $1$-forms with functions. We define $\Dc(\Tb,\Rb)=C^\infty(\Tb,\Rb)$ and $\Dc'(\Tb,\Rb)$ to be the space of continuous linear functionals on the space of smooth functions on $\Tb$. Viewing $\Tb$ as the double of $\Tb_+$, we define $\Dc(\Tb_+,\Rb)=\{f\in C^\infty(\Tb_+,\Rb):f\text{ Neumann extendible}\}\supsetneq C_c^\infty(\Tb_+,\Rb)$ and $\Dc'(\Tb_+,\Rb)$ to be the space of even distributions on $\Tb$, i.e., distributions $u\in\Dc'(\Tb,\Rb)$ such that $\la u,f(e^{\i\theta})\ra=\la u,f(e^{-\i\theta})\ra$ for $f\in C^\infty(\Tb,\Rb)$. We define $\Dc'_0(\Tb,\Rb)$, $\Dc_0'(\Tb_+,\Rb)$ as above, so that we have decompositions $\Dc'(\Tb,\Rb)=\Rb\oplus\Dc_0'(\Tb,\Rb)$, $\Dc'(\Tb_+,\Rb)=\Rb\oplus\Dc_0'(\Tb_+,\Rb)$. For $\wt{\varphi}$ in $\Dc'(\Tb,\Rb)$ or $\Dc'(\Tb_+,\Rb)$, we denote its decomposition by 
\[\wt{\varphi}=c+\varphi,\qquad\text{where }\textstyle c=\frac{1}{2\pi}\int_\Tb\wt{\varphi}\,\d\theta,\;\int_\Tb\varphi\,\d\theta=0,\]
i.e., the presence or absence of the tilde indicates the presence or absence of the zero mode. Finally, for an extended surface $\Sigma$, we define
\[\Dc'(\zetabf^*\paD\Sigma,\Rb)=\Dc'(\Tb,\Rb)^{\bD}\times\Dc'(\Tb_+,\Rb)^{\bMD}.\]
This is consistent with the natural identification $\Fc(\paS)=\Fc(\Tb)^{\bD}\times\Fc(\Tb_+)^{\bMD}$ for any function class $\Fc$ ($\Dc$, $C^k$, $L^p$, etc.), and it is merely for notational convenience. We have a decomposition $\Dc'(\paS,\Rb)=\Rb^{\bD+\bMD}\oplus\Dc_0'(\paS,\Rb)$, and for $\bt=(\wt{\varphi}_1^\D,\ldots,\wt{\varphi}_{\bD}^\D,\wt{\varphi}_1^\MD,\ldots,\wt{\varphi}_{\bMD}^\MD)\in\Dc'(\zetabf^*\paD\Sigma,\Rb)$, we denote its decomposition by $\bt=\cbf+\varphibf$, with the obvious meanings.

\subsection{Laplacians and Green's functions}

Let $\Sigma$ be an extended surface and $g$ a Neumann extendible conformal metric on $\Sigma$. We denote by $\Delta_g=d^*d=-{*}d{*}d$ the geometer's Laplacian with respect to $g$, acting on functions on $\Sigma$. It extends to a continuous map $\Dc'(\Sigma,\Rb)\to\Dc'(\Sigma,\Rb)$ and its restriction to $\Dc(\Sigma,\Rb)$ extends uniquely to an unbounded self-adjoint operator on $L^2(\Sigma,\Rb)$ with compact resolvent. By integration by parts, for $\phi\in C^2(\Sigma,\Rb)$ satisfying the mixed boundary condition $\pa_\nu\phi|_{\paN\Sigma}=\zetabf^*\phi|_{\paD\Sigma}=0$, we have
\[\int_\Sigma|d\phi|_g^2\,\d v_g=\int_\Sigma\phi\,\Delta_g\phi\,\d v_g.\]
For $\rho\in C^\infty(\Sigma,\Rb)$ Neumann extendible,\footnote{This is to ensure that $e^\rho g$ is also Neumann extendible.}
\[\Delta_{e^\rho g}=e^{-\rho}\Delta_g,\qquad\d v_{e^\rho g}=e^\rho\,\d v_g,\qquad|d\phi|_{e^\rho g}^2=e^{-\rho}|d\phi|_g^2,\]
so both sides are \textit{conformally invariant}, i.e., independent of the choice of the conformal metric $g$.

If $\paD\Sigma=\varnothing$, then $\ker\Delta_g=\Rb$ and $\Delta_g:\Dc_0'(\Sigma,\Rb)\to\Dc_0'(\Sigma,\Rb)$ is invertible. The \textit{Green's function} $G_g$ with respect to $g$ is the Schwartz kernel (with respect to $\d v_g$) of this inverse. In other words, for $f\in\Dc(\Sigma,\Rb)$,
\[(R_gf)(x)=\int_\Sigma G_g(x,x')f(x')\,\d v_g(x')\]
solves
\[\Delta_gR_gf=f-\frac{1}{\vol_g\Sigma}\int_\Sigma f\,\d v_g\]
with the Neumann boundary condition $\pa_\nu R_gf|_{\pa\Sigma}=0$ (note that $\pa\Sigma=\pa_\N\Sigma=\paN\Sigma$). If $\pa\Sigma\neq\varnothing$, we have \[G_g(x,x')=G_{g^{\#2}}(x,x')+G_{g^{\#2}}(x,\tau(x'))=G_{g^{\#2}}(x,x')+G_{g^{\#2}}(\tau(x),x').\]
For $\rho\in C^\infty(\Sigma,\Rb)$ Neumann extendible,
\[G_{e^\rho g}(x,x')=G_g(x,x')-f(x)-f(x')+\frac{1}{\vol_{e^\rho g}\Sigma}\int_\Sigma f\,\d v_{e^\rho g},\]
where
\[f(x)=\frac{1}{\vol_{e^\rho g}\Sigma}\int_\Sigma G_g(x,y)\,\d v_{e^\rho g}(y).\]

If $\paD\Sigma\neq\varnothing$, then $\ker\Delta_g=0$ and $\Delta_g:\Dc'(\Sigma,\Rb)\to\Dc'(\Sigma,\Rb)$ is invertible. The Green's function $G_g$ in this case is the Schwartz kernel of $\Delta_g^{-1}$. This is similar to but simpler than the previous case. There is no zero mode due to the Dirichlet boundary condition, and $G_g$ is conformally invariant, so we sometimes also write $G_\Sigma$.

\begin{rem*}
These facts are standard for manifolds without corners (see \cite[Chapter 5]{Taylor} for example). For extended surfaces, one can reduce to this case by doubling, noting that $\Delta_g$ is essentially the restriction of $\Delta_{g^{\#2}}$ on $\Sigma^{\#2}$ to even functions. The same remark applies to the next subsection.
\end{rem*}

\subsection{Poisson and Dirichlet-to-Neumann operators}\label{Poisson}

Let $\Sigma$ be an extended surface with $\paD\Sigma\neq\varnothing$.

Any $\bt\in C(\paS,\Rb)$ admits a unique harmonic extension $P_\Sigma\bt$ to $\Sigma$ with mixed boundary condition. More precisely, $P_\Sigma\bt\in C(\Sigma,\Rb)\cap C^\infty(\Sigma\setminus\paD\Sigma,\Rb)$, $d{*}dP_\Sigma\bt=0$ on $\Sigma\setminus\paD\Sigma$, $\pa_\nu P_\Sigma\bt|_{\paN\Sigma\setminus\paD\Sigma}=0$, $\zetabf^*P_\Sigma\bt|_{\paD\Sigma}=\bt$. If $\paN\Sigma\neq\varnothing$, we have $(P_\Sigma\bt)^{\#2}=P_{\Sigma^{\#2}}\bt^{\#2}$. This defines the \textit{Poisson operator} $P_\Sigma$ on $\Sigma$. It extends to a continuous linear map $\Dc'(\paS,\Rb)\to\Dc'(\Sigma,\Rb)\cap C^\infty(\Sigma\setminus\paD\Sigma,\Rb)$ with Schwartz kernel (with respect to $\d\theta$) in $C^\infty((\Sigma\setminus\paD\Sigma)\times\paS,\Rb)$ given by $K_{P_\Sigma}(x,\cdot)=-\zetabf^*\pa_{\nu_\zetabf}G_\Sigma(x,\cdot)|_{\paD\Sigma}$ for $x\in\Sigma\setminus\paD\Sigma$, where $\nu_\zetabf$ is the outward-pointing normal vector along $\paD\Sigma$ determined by $i_{\nu_\zetabf}(\zetabf_*d\theta)=0$, $i_{\nu_\zetabf}(*\zetabf_*d\theta)=-\varsigmabf$. Note that the Hodge star $*$ on $1$-forms is conformally invariant, so neither $P_\Sigma$ nor $\nu_\zetabf$ depends on a conformal metric.

For a general $\bt\in\Dc'(\paS,\Rb)$, $P_\Sigma\bt$ is singular near $\paD\Sigma$, but the Dirichlet boundary condition $\zetabf^*P_\Sigma\bt|_{\paD\Sigma}=\bt$ can be interpreted in the following weak sense. We extend the parametrizations in $\zetabf$ uniquely to holomorphic or antiholomorphic parametrizations of a tubular neighborhood. Let $c$ be a Dirichlet (semi)circle in $\paD\Sigma$ with parametrization $\zeta$. For $\phi\in C(\Sigma\setminus\paD\Sigma,\Rb)$, we define the \textit{trace} $\zeta^*\phi|_c$ of $\phi$ at $c$ to be the limit of $e^{\i\theta}\mapsto\phi(\zeta(e^{-t+\i\theta}))$ in $\Dc'(\Tb,\Rb)$ or $\Dc'(\Tb_+,\Rb)$ as $t\to0$, if it exists. Since $P_\Sigma$ is continuous as a map $\Dc(\paS,\Rb)\to\Dc(\paS,\Rb)$, it is easy to check that for any $\bt\in\Dc'(\paS,\Rb)$, the trace of $P_\Sigma\bt$ at $\paD\Sigma$ exists and $\zetabf^*P_\Sigma\bt|_{\paD\Sigma}=\bt$ in this sense.

It is well-known that $P_\Sigma$ extends to a continuous linear map $H^s(\paS,\Rb)\to H^{s+1/2}(\Sigma,\Rb)$ for $s\geq-1/2$, where $H^s$ denotes the $L^2$ Sobolev space of order $s\in\Rb$ (see \cite[Chapter 5, Proposition 1.8]{Taylor}).

The \textit{Dirichlet-to-Neumann operator} $\Dbf_\Sigma$ on $\Sigma$ is defined by $\Dbf_\Sigma\bt=\zetabf^*\pa_{\nu_\zetabf}P_\Sigma\bt|_{\paD\Sigma}$ for $\bt\in\Dc(\paS,\Rb)$, where $\nu_\zetabf$ is as above. Its Schwartz kernel $K_{\Dbf_\Sigma}(x,x')=\zetabf_x^*\pa_{\nu_\zetabf}^xK_{P_\Sigma}(x,x')|_{x\in\paD\Sigma}=-\zetabf^*_x\zetabf^*_{x'}\pa_{\nu_{\zetabf}}^x\pa_{\nu_\zetabf}^{x'}G_\Sigma(x,x')|_{\paD\Sigma\times\paD\Sigma}$ is singular at the diagonal. If $\paN\Sigma\neq\varnothing$, we have $(\Dbf_{\Sigma}\bt)^{\#2}=\Dbf_{\Sigma^{\#2}}\bt^{\#2}$. By integration by parts, for $\bt\in C^1(\paS,\Rb)$, we have
\[\int_\Sigma|dP_\Sigma\bt|_g^2\,\d v_g=\int_{\paS}\bt\,\Dbf_\Sigma\bt\,\d\theta,\]
where $g$ is any conformal metric on $\Sigma$. We denote by $\la\cdot,\cdot\ra$ the $L^2$ inner product on $\paS$ with respect to $\d\theta$, so that the right-hand side is $\la\bt,\Dbf_\Sigma\bt\ra$.

\begin{rem*}
Our definition differs from that of \cite{Segal,CILT} in that we use $\zetabf$ to fix the length of the normal vector $\nu_\zetabf$ along $\paD\Sigma$, whereas in \cite{Segal,CILT}, the length is fixed by requiring it to be unit with respect to a conformal metric $g$. These two definitions coincide if $g$ is \textit{admissible} in the terminology there, i.e., $\zetabf^*g=|dz|^2/|z|^2$ in a neighborhood of $\paD\Sigma$. In general, they are conformally equivalent and induce the same $L^2$ inner product on $\paS$ (in their version, the measure on $\paS$ is induced by $g$).
\end{rem*}

Let $\Dbf$ be the Fourier multiplier by $(|n|)_{n\in\Zb}$ on $\Dc'(\Tb,\Rb)$ and $\Dc'(\Tb_+,\Rb)$, i.e., $\Dbf1=0$, $\Dbf\cos n\theta=|n|\cos n\theta$, $\Dbf\sin n\theta=|n|\sin n\theta$ for $n\neq0$. It coincides with the Dirichlet-to-Neumann operators on the unit disk $\Db=\{z:|z|\leq1\}$ with Dirichlet boundary $\Tb$ and the unit half disk $\Db_+=\{z\in\Db:\im z\geq0\}$ with Dirichlet boundary $\Tb_+$ and Neumann boundary $\Db_+\cap\Rb=[-1,1]$. We use the same symbol $\Dbf$ to denote the corresponding product operator on $\Dc'(\paS,\Rb)$. Since the Green's function on any Riemann surface has the same $\log$ singularity at the diagonal, it is not hard to see that for any extended surface $\Sigma$, the difference $\Dbf_\Sigma-\Dbf$ is a smoothing operator, i.e., its Schwartz kernel is in $C^\infty(\paS\times\paS,\Rb)$ (see \cite[Lemma 5.7]{BLCFT} for a detailed proof).

\subsection{Separating families}\label{sec_sep_fam}

In preparation for \cref{curv-term-section}, we introduce the following definition:

\begin{defn}\label{sep-family-extended}
Let $\Sigma$ be an extended surface. We relabel the boundary circles $c_i^\N,c_i^\D,c_i^\M$ of $\Sigma$ as $c_1,\ldots,c_\b$ where $\b=\bN+\bD+\bM$. Fix $\vbf=(v_1,\ldots,v_\s)$ where $v_i$ is a nonzero tangent vector at $z_i$. A \textit{separating family} of $\Sigma$ with respect to $\vbf$ is a tuple $\deltabf=(a_1,b_1,\ldots,a_\g,b_\g,d_1,\ldots,d_{\b+\s-1})$ where:
\begin{enumerate}[label=(\arabic*)]
\item $a_1,b_1,\ldots,a_\g,b_\g$ are smoothly embedded circles (i.e., simple closed curves) on $\Sigma^\circ$.
\item $d_1,\ldots,d_{\b+\s-1}$ are smoothly embedded semicircles (i.e., simple paths) on $\Sigma$ such that for each $d_i$:
\begin{itemize}
    \item $d_i^\circ\subset\Sigma^\circ\setminus\zbf$, $\pa d_i\subset(\pa\Sigma\setminus(\pa_\N\Sigma\cap\pa_\D\Sigma))\cup\zbf$ (so that $d_i$ is disjoint from the corners of $\Sigma$).\footnote{By abuse of notation, we identify an (ordered) tuple with the union of its elements. For example, here $\zbf$ means $\bigcup_i\{z_i\}$.}
    \item If $d_i\cap\pa\Sigma\neq\varnothing$, then $d_i$ intersects $\pa\Sigma$ orthogonally.
    \item If $d_i\cap z_j\neq\varnothing$ for some $j$, i.e., $z_j\in\pa d_i$, then $v_j$ is tangent to $d_i$ at $z_j$.
\end{itemize}
Note that only the directions of the $v_i$ are relevant.
\item Each $a_i$ intersects $b_i$ exactly once and the intersection is transversal and positively oriented. There are no other intersections between the curves in $\deltabf$, except for perhaps intersections between the $d_i$ at their ends.
\item $\pa\Sigma\cup\zbf\cup\bigcup_id_i$ is connected. Equivalently, if we view $c_1,\ldots,c_\b,z_1,\ldots,z_\s$ as vertices and $d_1,\ldots,d_{\b+\s-1}$ as edges between them, then they form a tree. In particular, $\zbf\subset\bigcup_id_i$.
\end{enumerate}
We write $\deltabf=\sigmabf\times\dbf$\footnote{We use the Cartesian product notation $\times$ instead of the union notation $\cup$ to emphasize that these are ordered tuples.} where $\sigmabf=(a_i,b_i)_{i=1}^\g$ is an \textit{interior topological basis} of $\Sigma$ and $\dbf=(d_i)_{i=1}^{\b+\s-1}$. Clearly any interior topological basis extends to a separating family.
\end{defn}

\begin{rem*}
This generalizes the following definitions in \cite{CILT}:
\begin{itemize}
    \item A \textit{canonical geometric basis for $H_1(\Sigma)$} is an interior topological basis.
    \item A \textit{canonical geometric basis for $H_1(\Sigma,\pa\Sigma)$} is a separating family in the case $(\g, 0, \bD, 0, 0, 0,0)$.
    \item A \textit{defect graph} is a separating family in the case $(\g, 0, \bD, 0, 0, \s, 0)$.
\end{itemize}
There is one subtle difference: in \cite{CILT}, the tangent vector of a $d_i$ at an end depends on whether $d_i$ goes \textit{from} or \textit{to} that end, whereas in our definition, the orientation of $d_i$ is irrelevant. The latter appears to be more natural, for reasons explained later.
\end{rem*}

\begin{figure}
\centering
\includegraphics{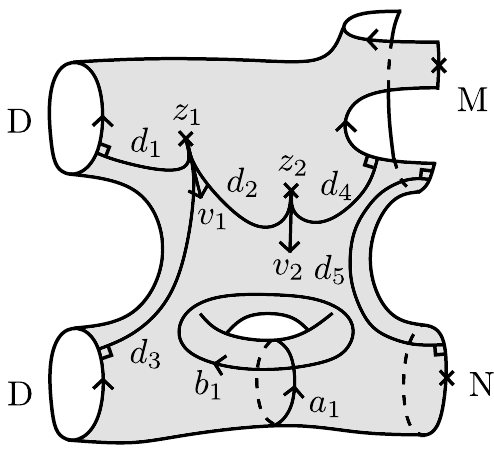}
\caption{A separating family of an extended surface}
\label{sep-family-pic}
\end{figure}

\begin{lem}\label{null-homologous3}
Let $\deltabf$ be a separating family of $\Sigma$.
\begin{enumerate}[label=\emph{(\arabic*)}]
\item $\Sigma\setminus\deltabf$ is connected.
\item Any simple circle on $\Sigma^\circ\setminus\deltabf$ bounds a compact region contained in $\Sigma^\circ\setminus\zbf$.
\item For any simple semicircle $d$ on $\Sigma$ with $\pa d\subset\pa\Sigma\cup\zbf$, $d^\circ\subset\Sigma^\circ\setminus\deltabf$, the set $\Sigma\setminus(\deltabf\cup d)$ consists of two connected components.
\end{enumerate}
\end{lem}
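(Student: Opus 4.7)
The plan is to prove all three statements by establishing that $\Sigma\setminus\deltabf$ is a connected planar open surface (i.e.\ embeds in $S^2$), and then deducing (2) and (3) from planar Jordan-type arguments together with homological constraints imposed by disjointness from $\deltabf$.

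For (1), I cut in two stages. First, $P:=\Sigma\setminus\sigmabf$ is a connected planar surface. Connectedness follows because each $a_i$, and subsequently each $b_i$, is non-separating in the intermediate cut surface. Planarity follows from the fact that an interior topological basis captures the full genus: after an ambient isotopy bringing the intersection points $p_i=a_i\cap b_i$ to a common basepoint (which preserves topological type), cutting along $\sigmabf$ yields a $4\g$-gon with the $\b$ original holes retained, which is planar. Second, the tree $\dbf$ lies inside $P$, and cutting a connected planar surface along a contractible subgraph attached to its boundary cannot disconnect it, so $\Sigma\setminus\deltabf=P\setminus\dbf$ is connected.

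For (2), let $\gamma\subset\Sigma^\circ\setminus\deltabf$ be a simple closed curve. Disjointness from $\sigmabf$ gives $[\gamma]\cdot[a_i]=[\gamma]\cdot[b_i]=0$ in $H_1(\Sigma\setminus\zbf;\Zb)$, so $[\gamma]$ lies in the peripheral subgroup generated by the boundary classes $[c_j]$ and the loops $[\mu_k]$ around the marked points. Disjointness from the tree $\dbf$ further forces $[\gamma]=0$: the set $\pa\Sigma\cup\zbf\cup\dbf$ is connected, hence entirely on one side of $\gamma$, and the resulting intersection-number relations with the tree edges $d_i$ (each of which connects two peripheral vertices) pin down all remaining peripheral coefficients of $[\gamma]$. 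A null-homologous simple closed curve on a connected orientable surface separates it, and the component on the side opposite $\pa\Sigma\cup\zbf\cup\dbf$ is a compact subsurface of $\Sigma^\circ\setminus\zbf$ with boundary exactly $\gamma$.

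For (3), the key observation is that after cutting $\Sigma$ along $\deltabf$, the set $\pa\Sigma\cup\dbf\cup\zbf$ traces out a single boundary component of the cut surface $\overline{\Sigma\setminus\deltabf}$: its connectedness in $\Sigma$ is preserved because cutting along each $d_i$ splits it into two sides that still meet at the endpoints in $\pa\Sigma\cup\zbf$. Hence the arc $d$ has both endpoints on a single boundary component of the connected planar surface $\overline{\Sigma\setminus\deltabf}$, and any such arc in a planar surface separates it into exactly two components (by Jordan--Schoenflies applied to the planar embedding). The main technical obstacle is rigorously establishing the planarity of $P=\Sigma\setminus\sigmabf$ when the intersection points $p_i$ are not assumed to coincide at a single basepoint; I would handle this either by constructing an explicit ambient isotopy bringing the $p_i$ to a common point, or by induction on $\g$ where cutting along $(a_1,b_1)$ yields a surface of genus $\g-1$ with one additional boundary component, after which the induction hypothesis applies.
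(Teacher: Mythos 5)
Your proposal follows the same overall plan as the paper's proof: cut $\Sigma$ along $\deltabf$, observe the result is a connected planar surface, and read off (1)--(3). The paper does this in a single stroke, asserting that the cut surface $\Sigma_\deltabf$ is homeomorphic to the sphere minus $\g+1$ disjoint disks (one new boundary circle per pair $(a_i,b_i)$, and a final boundary circle assembled from $\pa\Sigma\cup\bigcup d_i$), after which everything is immediate. You split the cut into two stages and give a homological argument for (2); that is a legitimate alternative, and in fact it fills in detail that the paper glosses over. However, several steps as written are shaky. First, the $4\g$-gon picture arises only when all the $a_i,b_i$ pass through a common basepoint; here the pairs $(a_i,b_i)$ are pairwise disjoint, so one cannot ambient-isotope the distinct intersection points $p_i$ to a single point (that would require merging points, which is not an isotopy). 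Your fallback by induction on $\g$ is the right fix: cutting along $a_1$ opens two boundary circles, and cutting along the resulting arc $b_1$ merges them into one, reducing the genus by one and staying connected. Second, the claim that ``cutting a connected planar surface along a contractible subgraph attached to its boundary cannot disconnect it'' is false as stated --- an arc in a disc with both endpoints on the boundary circle disconnects it. What saves the argument here is that $\dbf$ is a tree whose edges join distinct vertices, so at the moment each $d_i$ is cut, its two endpoints lie on distinct boundary components of the current cut surface, and such an arc never separates. Third, in (2), the aside that ``the set $\pa\Sigma\cup\zbf\cup\dbf$ is connected, hence entirely on one side of $\gamma$'' presupposes that $\gamma$ separates and should be dropped; the intersection-number argument underneath is the real content. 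Concretely: disjointness from $\deltabf$ gives $[\gamma]\cdot[a_i]=[\gamma]\cdot[b_i]=0$ and $[\gamma]\cdot[d_i]=0$, and since $([a_i],[b_i],[d_i])$ is a dual basis for $H_1(\Sigma\setminus\zbf)$ under Lefschetz duality, $[\gamma]=0$, so $\gamma$ separates $\Sigma$, with the compact piece (the side not meeting the connected set $\pa\Sigma\cup\zbf\cup\dbf$) contained in $\Sigma^\circ\setminus\zbf$. Your treatment of (3) matches the paper's.
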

\begin{proof}
Consider the compact surface $\Sigma_\deltabf$ obtained by cutting $\Sigma$ along $\deltabf$. More precisely, we compactify the open surface $\Sigma\setminus\deltabf$ into a compact topological surface with boundary $\Sigma_\deltabf$ using charts near $\deltabf$ on $\Sigma$. Clearly $\Sigma_\deltabf$ is homeomorphic to the sphere minus $\g+1$ disjoint disks, where $\g$ boundary circles correspond to the $a_i,b_i$ and the last boundary circle corresponds to $\pa\Sigma\cup\bigcup_id_i$. Then (1-2) are trivial. As for (3), $d$ corresponds to a semicircle on $\Sigma_\deltabf$ with ends in the last boundary circle, so it is clear that $\Sigma^\circ\setminus(\deltabf\cup d)\cong\Sigma_\deltabf^\circ\setminus d$ consists of two connected components.
\end{proof}

\subsection{Cohomology}\label{cohomologies}

Let $\Sigma$ be an extended surface. In BCILT, the Liouville field is a map $\phi:\Sigma\setminus\zbf\to\Rb/2\pi R\Zb$. Since the target space $\Rb/2\pi R\Zb$ has nontrivial topology, the topology of $\Sigma$ comes into play. In particular, we are interested in the first cohomology groups $H^1(\Sigma\setminus\zbf)$, $H^1(\Sigma\setminus\zbf,\paD\Sigma)$, and the kernel of the natural homomorphism $H^1(\Sigma\setminus\zbf,\paD\Sigma)\to H^1(\Sigma\setminus\zbf)$, which we denote by $H_\D^1(\Sigma\setminus\zbf)$.\footnote{Unless otherwise noted, homology and cohomology groups take coefficients in $\Zb$.}

Cohomology classes in these cohomology groups can be represented by differential forms:
\begin{align*}
&H^1(\Sigma\setminus\zbf;\Rb)=\{\omega:d\omega=0,i_\nu\omega|_{\paN\Sigma}=0\}/
\{df:\pa_\nu f|_{\paN\Sigma}=0\},\\
&H^1(\Sigma\setminus\zbf,\paD\Sigma;\Rb)=\{\omega:d\omega=0,i_\nu\omega|_{\paN\Sigma}=0,\zetabf^*\omega|_{\paD\Sigma}=0\}/
\{df:\pa_\nu f|_{\paN\Sigma}=f|_{\paD\Sigma}=0\},\\
&H_\D^1(\Sigma\setminus\zbf;\Rb)=\{df:\pa_\nu f|_{\paN\Sigma}=0,f|_{\paD\Sigma}\text{ locally constant}\}/\{df:\pa_\nu f|_{\paN\Sigma}=f|_{\paD\Sigma}=0\},
\end{align*}
where $\omega$ is a smooth $1$-form on $\Sigma\setminus\zbf$, $f$ is a smooth function on $\Sigma\setminus\zbf$. Restricting the coefficients to $\Zb$ amounts to requiring that $\int_c\omega\in\Zb$ for any circle $c$ on $\Sigma\setminus\zbf$ and that $f|_{\paD\Sigma}$ takes values in $\Zb$. To see that these quotients are isomorphic to the standard cohomology groups (e.g., singular cohomology), it suffices to check that the pairing $([\omega],[c])\mapsto\int_c\omega$ induces isomorphisms with the duals of the corresponding homology groups. Since we shall make no use of this fact, we leave the proof to the interested reader. In particular, $H^1(\Sigma\setminus\zbf)\cong\Zb^{2\g+\bN+\bD+\bM+\s-1}$, $H_\D^1(\Sigma\setminus\zbf)\cong\Zb^{\bD+\bMD-1}$.

For each $z_i$, let $\pc_i$ be the boundary of a small disk $D_i$ on $\Sigma$ centered at $z_i$, with the boundary orientation induced by $D_i$. The isotopy class of $\pc_i$ does not depend on the choice of $D_i$. For $\mbf=(m_1,\ldots,m_\s)\in\Zb^\s$, we add the subscript $_\mbf$ to a cohomology group $H^1(-)$ to denote the preimage of $\mbf$ under the homomorphism $H^1(-)\to\Zb^\s,[\omega]\mapsto(\int_{\pc_i}\omega)_{i=1}^\s$. Similarly, for $\kbf=(k_1,\ldots,k_{\bD})\in\Zb^{\bD}$, we add the subscript $_\kbf$ to $H^1(-)$ to denote the preimage of $\kbf$ under the homomorphism $H^1(-)\to\Zb^{\bD},[\omega]\to(\varsigma_i^\D\int_{c_i^\D}\omega)_{i=1}^{\bD}$. Note that the orientations of the parametrizations $\zeta_i^\D$ are used here. For example, $H_{\mbf,\kbf}^1(\Sigma\setminus\zbf)$ is the set of cohomology classes $[\omega]\in H^1(\Sigma\setminus\zbf)$ with $\int_{\pc_i}\omega=m_i$, $\int_{c_i^\D}\omega=\varsigma_i^\D k_i$. It is easy to see that $H_{\mbf,\kbf}^1(\Sigma\setminus\zbf)=\varnothing$ if and only if $\bN=\bM=0$ (i.e., $\pa\Sigma=\pa_\D\Sigma$) and $\sum_{i=1}^\s m_i\neq\sum_{i=1}^{\bD}\varsigma_i^\D k_i$. If it is nonempty, it is affine over $H_{\mathbf{0},\mathbf{0}}^1(\Sigma\setminus\zbf)\cong H_{\mathbf{0}}^1(\Sigma)\cong\Zb^{2\g+\bN+\bM-1}$ in the sense that $H_{\mbf,\kbf}^1(\Sigma\setminus\zbf)=[\omega]+H_{\mathbf{0},\mathbf{0}}^1(\Sigma\setminus\zbf)$ for any $[\omega]\in H_{\mbf,\kbf}^1(\Sigma\setminus\zbf)$.

Concerning the group $H_\D^1(\Sigma\setminus\zbf)$, we have an explicit isomorphism $H_\D^1(\Sigma\setminus\zbf)\xrightarrow{\sim}\Zb^{\bD+\bMD}/\Zb,[df]\mapsto(f|_{c_i^\D})_{i=1}^{\bD}\times(f|_{c_i^\MD})_{i=1}^{\bMD}+\Zb$. This is essentially the inverse of the connecting homomorphism $H^0(\paD\Sigma)\to H^1(\Sigma\setminus\zbf,\paD\Sigma)$ in the long exact sequence of cohomology for the pair $(\Sigma\setminus\zbf,\paD\Sigma)$. In particular, we have a natural isomorphism $H_\D^1(\Sigma)\xrightarrow{\sim}H_\D^1(\Sigma\setminus\zbf)$ induced by restriction. Thus we shall work with $H_\D^1(\Sigma)$ instead.

\subsubsection*{Admissible representatives}

In our application to BCILT, we will take representatives of cohomology classes that satisfy the conditions summarized in the following definition:

\begin{defn}\label{admissible}
A $1$-form $\omega$ on $\Sigma$ is \textit{admissible} if:
\begin{enumerate}[label=(\arabic*)]
\item $\omega$ is smooth on $\Sigma\setminus\zbf$ with $d\omega=0$ pointwise.
\item  Near each $z_i$, $\omega$ is $L^1$ and $d\omega=m_i\delta_{z_i}$, $d{*}\omega=0$ in the sense of distributions, where $m_i=\int_{\pc_i}\omega$, $\delta_z$ denotes the Dirac measure at $z$.
\item $\omega$ is Neumann extendible. In particular, $i_\nu\omega|_{\paN\Sigma}=0$.
\item In a neighborhood of each $c_i^\D$, $(\zeta_i^\D)^*\omega=\varsigma_i^\D k_i\frac{1}{2\pi}d\theta$, where $k_i=\int_{c_i^\D}\omega$.
\item In a neighborhood of each $c_i^\MD$, $(\zeta_i^\MD)^*\omega=0$.
\item $\int_c\omega\in\Zb$ for any circle $c$ on $\Sigma\setminus\zbf$.
\end{enumerate}
A function $f$ on $\Sigma$ is \textit{admissible} if:
\begin{enumerate}[label=(\arabic*)]
\item $f$ is smooth on $\Sigma$ and harmonic near $\zbf$.
\item $f$ is Neumann extendible. In particular, $\pa_\nu f|_{\paN\Sigma}=0$.
\item In a neighborhood of $\paD\Sigma$, $f$ is locally constant and takes values in $\Zb$.
\end{enumerate}
If $f$ is an admissible function, then $df$ is an admissible $1$-form. The converse is false.
\end{defn}

The following lemma describes the behavior of admissible $1$-forms near $\zbf$:

\begin{lem}
Let $\omega$ be an $L_\loc^1$ $1$-form on a simply connected neighborhood $D$ of $0$ in $\Cb$ such that $d\omega=m\delta_0$, $d{*}\omega=0$ in the sense of distributions, where $m\in\Rb$. Then there exists a harmonic function $h$ on $D$ such that $\omega=m\frac{1}{2\pi}d\theta+dh$, where $\theta$ is the argument on $\Cb\setminus0$.
\end{lem}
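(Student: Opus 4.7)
The plan is to subtract off the model singular form $\omega_0 := \frac{1}{2\pi}\,d\theta$, which carries the entire $\delta_0$ contribution, and then invoke elliptic regularity to conclude the remainder is a smooth closed and co-closed $1$-form, hence the differential of a harmonic function.

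First, I would verify that $\omega_0$ lies in $L^1_{\loc}(D)$ (the coefficients of $d\theta$ have an $O(1/|z|)$ singularity, which is integrable in $2$D) and satisfies $d\omega_0 = \delta_0$ and $d{*}\omega_0 = 0$ distributionally on $D$. Both are routine Stokes computations on $D\setminus B(0,\varepsilon)$ followed by letting $\varepsilon\to 0$. For $d\omega_0$, pairing with a compactly supported test function $\phi$ gives a boundary integral on $|z|=\varepsilon$ that converges to $\phi(0)$, producing the Dirac mass. For $d{*}\omega_0$, one uses the identity $*d\theta = -d\log r$ in the Euclidean metric; the corresponding boundary term on $|z|=\varepsilon$ vanishes because $d\log r$ pulls back to $d\log\varepsilon = 0$ on that circle.

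Next, I set $\omega_1 := \omega - m\omega_0$, which is still in $L^1_{\loc}(D)$. By the hypotheses on $\omega$ together with Step $1$, both $d\omega_1 = 0$ and $d{*}\omega_1 = 0$ hold distributionally on all of $D$. In particular, $\omega_1$ is annihilated by the Hodge Laplacian $dd^\ast + d^\ast d$ on $1$-forms, which is elliptic. Hence by Weyl's lemma (elliptic regularity), $\omega_1$ is smooth on $D$.

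Finally, $\omega_1$ is now a smooth closed $1$-form on the simply connected domain $D$, so the classical Poincaré lemma produces a smooth function $h$ on $D$ with $dh=\omega_1$. The co-closedness $d{*}\omega_1=0$ then reads $\Delta h = 0$ (pointwise), so $h$ is harmonic, and $\omega = m\omega_0 + dh = \frac{m}{2\pi}\,d\theta + dh$ as desired. The only place requiring genuine care is Step $1$, where one must check that the distributional identities for $\omega_0$ hold on all of $D$ and not merely on $D\setminus\{0\}$; this is precisely where the winding $m$ is encoded into the $\delta_0$ source term.
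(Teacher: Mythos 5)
Your proposal follows exactly the paper's argument: verify the distributional identities $d(\tfrac{1}{2\pi}d\theta)=\delta_0$ and $d{*}(\tfrac{1}{2\pi}d\theta)=0$ by testing against compactly supported functions, subtract the model form, apply elliptic (Weyl) regularity to the resulting closed and co-closed $L^1_{\loc}$ form, and finish with the Poincaré lemma on the simply connected $D$. The only cosmetic difference is that you route the second computation through the identity $*d\theta=-d\log r$ where the paper works directly in polar coordinates; the content is the same.
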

\begin{proof}
First we check that $dd\theta=2\pi\delta_0$, $d{*}d\theta=0$ in the sense of distributions. Clearly $dd\theta=d{*}d\theta=0$ on $\Cb\setminus0$. For a test function $f\in\Dc(D)$ with $\supp f\subset D_0=\{z:|z|\leq\varepsilon\}$ for $\varepsilon>0$ small, by integration by parts,
\begin{align*}
&\int_Dd\theta\wedge df=-\int_{D_0}\pa_r f\,dr\wedge d\theta=-\int_0^\varepsilon\pa_r\left(\int_0^{2\pi}f(re^{\i\theta})\,\d\theta\right)\d r=2\pi f(0),\\
&\int_Dd\theta\wedge *df=-\int_{D_0}\pa_\theta f\,\frac{1}{r}dr\wedge d\theta=-\int_0^\varepsilon\frac{1}{r}\int_0^{2\pi}\pa_\theta f(re^{\i\theta})\,\d\theta\,\d r=0.
\end{align*}
Let $\omega$ be as in the lemma. Then $\omega_0=\omega-m\frac{1}{2\pi}d\theta$ satisfies $d\omega_0=d{*}\omega_0=0$. By elliptic regularity, $\omega_0$ is smooth. Since $D$ is simply connected, $\omega_0=dh$ for some function $h$ on $D$. Then $\Delta h=d^*\omega=0$.
\end{proof}

\begin{lem}
\mbox{}
\begin{enumerate}[label=\emph{(\arabic*)}]
\item Any cohomology class $[\omega]\in H^1(\Sigma\setminus\zbf)$ has an admissible representative $\omega$.
\item Any cohomology class $[df]\in H_\D^1(\Sigma\setminus\zbf)$ has a representative with an admissible primitive $f$.
\end{enumerate}
\end{lem}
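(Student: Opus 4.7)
The plan is to produce admissible representatives via successive local modifications of a smooth closed starting representative, each adding an exact form so the cohomology class is preserved. For part~(1), I would begin with a smooth closed $\omega_0$ satisfying $i_\nu\omega_0|_{\paN\Sigma}=0$ (available by the de Rham description of $H^1(\Sigma\setminus\zbf;\Rb)$ recalled above), and modify it first to be Neumann extendible, then to match the prescribed local models near each puncture and near each Dirichlet (semi)circle. For Neumann extendibility I would pass to the double $\Sigma^{\#2}$, where the condition becomes $\tau$-invariance plus smoothness across $\paN\Sigma$: symmetrize $\omega_0^{\#2}\mapsto(\omega_0^{\#2}+\tau^*\omega_0^{\#2})/2$ and add an exact correction $d\psi$, with $\psi$ supported in a collar of $\paN\Sigma$ and chosen via Borel's lemma to cancel the odd-$t$ jet of the tangential component and the even-$t$ jet of the normal component in a boundary-defining coordinate $t$.

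Next, I would use $\tau$-invariant cutoffs throughout to preserve Neumann extendibility. Near each $z_j$, the preceding lemma gives $\omega-m_j\frac{d\theta_j}{2\pi}=dh_j$ on a punctured disk, and a radial cutoff of $h_j$ replaces $\omega$ with the harmonic model $m_j\frac{d\theta_j}{2\pi}$ near $z_j$, yielding condition~(2). On a collar of each $c_i^\D$ (resp.\ $c_i^\MD$), the difference of $\omega$ with the model $\varsigma_i^\D k_i\frac{d\theta}{2\pi}$ (resp.\ $0$) is closed with zero period by the $\kbf$-prescription of the cohomology class, hence exact on the collar; a further cutoff transitions $\omega$ to the model in a thin collar, yielding~(4)--(5). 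Conditions~(1), (3), (6) are preserved throughout since all corrections are smooth, $\tau$-invariant, and exact.

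For part~(2), the same strategy applies to a smooth representative $f_0$ with $\pa_\nu f_0|_{\paN\Sigma}=0$ and $f_0|_{\paD\Sigma}\in\Zb$ locally constant: first $\tau$-symmetrize and jet-correct for Neumann extendibility, choosing $\psi$ to vanish on $\paD\Sigma$; next, for each Dirichlet component $c$ with value $k_c$, a collar cutoff replaces $f$ by $k_c$ in a thin neighborhood (the increment vanishes on $c$ and thus lies in the quotient ideal); finally, near each $z_j$, replace $f$ by the harmonic extension of its boundary values on a small $\pa D_j$, patched by a cutoff supported in $D_j\subset\Sigma^\circ$. The main obstacle throughout is the Neumann-extendibility promotion at the corners of $\pa_\M\Sigma$, where the Borel-type correction $\psi$ must simultaneously realize the prescribed normal jet on $\paN\Sigma$ and vanish (or have vanishing Dirichlet data) on $\paD\Sigma$; the orthogonality of $\paN$ and $\paD$ at the corners makes these two jet conditions compatible, which enables the construction.
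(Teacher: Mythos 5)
Your proposal is correct and the overall strategy --- start from a smooth closed representative and correct it by exact forms supported in collars so that $\omega$ matches the prescribed local models near each puncture and Dirichlet (semi)circle --- is the same one the paper uses. The genuine difference is in how Neumann extendibility is handled. The paper simply doubles along $\paN\Sigma$ at the outset, reducing to $\paN\Sigma=\varnothing$ (on the double one picks the $\tau$-symmetric average of a smooth representative, and then Neumann extendibility is no longer a separate condition); you instead stay on $\Sigma$ and fix the even extension's jets across $\paN\Sigma$ by a Borel-lemma correction $d\psi$. Your route works, but two points are glossed over. First, the ``symmetrize $\omega_0^{\#2}$'' step is vacuous: the even extension $\omega_0^{\#2}$ of a tensor on $\Sigma$ is by definition $\tau$-invariant, so there is nothing to average; the only nontrivial obstruction is smoothness across $\paN\Sigma$. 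Second, and more substantively, the jet correction is a coupled problem: $d\psi$ has both a normal component $\partial_t\psi\,dt$ and a tangential component $\partial_s\psi\,ds$, and you cannot prescribe their jets independently. Prescribing the odd $t$-jets of $\psi$ at $t=0$ kills the even $t$-jet of the normal component of $\omega_0-d\psi$; that this \emph{also} kills the odd $t$-jet of the tangential component is not automatic --- it follows precisely because $\omega_0$ is closed ($\partial_s a=\partial_t b$ in local coordinates). Your write-up asserts both jet cancellations without noting this compatibility, which is the crux of why the Borel-lemma step succeeds. Once one observes this, your argument is complete, but the paper's doubling reduction avoids the issue entirely and is shorter. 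The remaining steps (cutoffs near $\zbf$ and $\paD\Sigma$; for part~(2), collar cutoffs to constants near $\paD\Sigma$ and harmonic replacement near $\zbf$, all by corrections in the quotient ideal) match the paper's construction.
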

\begin{proof}
By doubling, we may assume $\paN\Sigma=\varnothing$, so that $\pa\Sigma=\pa_\D\Sigma=\paD\Sigma$. Fix a conformal metric $g$ on $\Sigma$. For each $z_i$, let $\xi_i$ be a holomorphic parametrization of a neighborhood of $z_i$.

(1): Let $\omega$ be any smooth representative of a cohomology class $[\omega]\in H^1(\Sigma\setminus\zbf)$. Clearly we may take $\omega$ Neumann extendible. In a neighborhood of each $c_i^\D$, $\omega-(\zeta_i^\D)_*(\varsigma_i^\D k_i\frac{1}{2\pi}d\theta)$ (where $k_i=\int_{c_i^\D}\omega$) is exact, so $\omega=(\zeta_i^\D)_*(\varsigma_i^\D k_i\frac{1}{2\pi}d\theta)+df_i^\D$ for some smooth function $f_i^\D$ defined near $c_i^\D$. Similarly, $\omega=df_i^\MD$ for some smooth function $f_i^\MD$ defined near each $c_i^\MD$, and $\omega=(\zeta_i^{\rm P})_*(m_i\frac{1}{2\pi}d\theta)+df_i^{\rm P}$ (where $m_i=\int_{\pc_i}\omega$, $\zeta_i^{\rm P}$ is a holomorphic parametrization of a neighborhood of $z_i$) for some smooth function $f_i^{\rm P}$ defined near each $z_i$. Let $f$ be a Neumann extendible smooth function on $\Sigma$ such that $f=f_i^\D$ near each $c_i^\D$, $f=f_i^\MD$ near each $c_i^\MD$, $f=f_i^{\rm P}$ near each $z_i$. Then $\omega-df$ is an admissible representative for $[\omega]$.

(2): Let $f$ be any smooth primitive for a cohomology class $[df]\in H_\D^1(\Sigma\setminus\zbf)$. With $\chi_i'$, $\chi$ as above, let $f_0=\sum_{i=1}^{\bD}\chi_i'f|_{c_i^\D}$, then $\chi f_0+(1-\chi)f$ is an admissible primitive for $[df]$.
\end{proof}

Let $g$ be a conformal metric on $\Sigma$. We will need to consider the $L^2$-norm of an admissible $1$-form $\omega$ on $\Sigma$. Since $d\theta$ is not $L^2$ near $0$, we introduce the regularization
\[\int_\Sigma^\reg|\omega|_g^2\,\d v_g=\lim_{\varepsilon\to0}\left(\int_{\Sigma\setminus\bigsqcup_{i=1}^\s D_g(z_i,\varepsilon)}\!|\omega|_g^2\,\d v_g+\frac{1}{2\pi}\sum_{i=1}^\s m_i^2\log\varepsilon\right),\]
where $m_i=\int_{\pc_i}\omega$ (i.e., $[\omega]\in H_\mbf^1(\Sigma\setminus\zbf)$), $D_g(z,r)$ denotes the geodesic disk of radius $r$ centered at $z$ with respect to $g$. It is straightforward to check that this converges. For $\rho\in C^\infty(\Sigma,\Rb)$, we have (see \cite[Lemma 3.11]{CILT})
\[\int_\Sigma^\reg|\omega|_{e^\rho g}^2\,\d v_{e^\rho g}=\int_\Sigma^\reg|\omega|_g^2\,\d v_g+\frac{1}{4\pi}\sum_{i=1}^\s m_i^2\rho(z_i).\]

\subsubsection*{Harmonic representatives}

It will often be convenient to take harmonic representatives of cohomology classes.

\begin{defn}
A $1$-form $\omega$ on $\Sigma$ is \textit{almost harmonic} if it is Neumann extendible, $L^1$, and $d\omega=\sum_{i=1}^\s m_i\delta_{z_i}$, $d{*}\omega=0$ on all of $\Sigma$ in the sense of distributions.
\end{defn}

\begin{lem}\label{harmonic}
\mbox{}
\begin{enumerate}[label=\emph{(\arabic*)}]
\item If $\paD\Sigma=\varnothing$, any cohomology class in $H^1(\Sigma\setminus\zbf)$ has a unique almost harmonic representative.
\item If $\paD\Sigma\neq\varnothing$, for any admissible $1$-form $\omega$ on $\Sigma$, there exists a unique smooth function $f$ on $\Sigma$ with $f|_{\paD\Sigma}=0$ such that $\omega+df$ is almost harmonic.
\end{enumerate}
\end{lem}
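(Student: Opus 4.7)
The strategy for both parts is to reduce almost-harmonicity to a Poisson equation $\Delta f = h$ on $\Sigma$ with appropriate boundary conditions, and then invoke the Laplacian solvability statements recalled in the preceding subsections.

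\textbf{Reduction.} The key observation is that for any admissible $\omega$, the distribution $d^*\omega$ extends by zero across $\zbf$ to a Neumann extendible smooth function on $\Sigma$. Indeed, admissibility forces $d{*}\omega = 0$ as a distribution in a neighborhood of each $z_i$, so the pointwise smooth function $-{*}d{*}\omega$ on $\Sigma\setminus\zbf$ must vanish identically on a punctured neighborhood of each $z_i$; call this smooth extension $h := -d^*\omega$. Writing the candidate as $\omega + df$ with $f$ smooth on $\Sigma$, the identities $d(\omega+df) = d\omega = \sum_i m_i\delta_{z_i}$ and $\omega+df \in L^1$ are automatic, while the remaining almost-harmonicity requirements $d{*}(\omega+df)=0$ and Neumann extendibility translate into the Poisson equation $\Delta f = h$ together with $\pa_\nu f|_{\paN\Sigma} = 0$ (which, via the doubling characterization of Neumann extendibility, makes $df$ Neumann extendible).

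\textbf{Part (1).} Using the previous lemma, pick any admissible representative $\omega$ of the given cohomology class. Integration by parts together with $i_\nu\omega|_{\paN\Sigma} = 0$ yields $\int_\Sigma h\,\d v_g = \int_{\pa\Sigma} i_\nu\omega\,\d\ell = 0$, which is precisely the compatibility condition for the Neumann BVP $\Delta f = h$, $\pa_\nu f|_{\pa\Sigma} = 0$. Standard theory then produces $f$, unique up to a constant, so $\omega + df$ is an almost harmonic representative in the given class. For uniqueness, suppose $\omega_1, \omega_2$ are both almost harmonic and cohomologous. Since both have residue $m_i$ at each $z_i$, their difference is exact with a single-valued primitive $h_0$ on $\Sigma\setminus\zbf$, and $dh_0$ satisfies $d(dh_0) = d{*}(dh_0) = 0$ distributionally with $dh_0 \in L^1_\loc$. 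Elliptic regularity then forces $dh_0$, and hence $h_0$, to extend smoothly across $\zbf$; consequently $h_0$ is a Neumann harmonic function on the connected surface $\Sigma$, therefore constant, and $\omega_1 = \omega_2$.

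\textbf{Part (2).} The same reduction gives the mixed BVP $\Delta f = h$, $f|_{\paD\Sigma} = 0$, $\pa_\nu f|_{\paN\Sigma} = 0$. Since $\paD\Sigma \neq \varnothing$, the Laplacian $\Delta_g$ with these boundary conditions is invertible on $\Dc'(\Sigma,\Rb)$ by the Green's function discussion preceding the lemma, producing a unique solution $f = \Delta_g^{-1} h$. Smoothness of $f$ on $\Sigma$ (up to and including the corners) follows from elliptic regularity of the mixed BVP; the cleanest route is to work on the Neumann double $\Sigma^{\#2}$, on which $f^{\#2}$ solves a pure Dirichlet BVP with smooth right-hand side, and restrict. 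Uniqueness of $f$ is immediate from the invertibility of $\Delta_g$.

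\textbf{Expected obstacle.} The most delicate technical step is the regularity analysis near the punctures $\zbf$: both the smooth extension of $h$ in the reduction and the smoothness of the primitive $h_0$ in the uniqueness argument for part (1) hinge on elliptic regularity for an $L^1_\loc$ $1$-form that is closed and co-closed in the distributional sense. A secondary subtlety is smoothness of $f$ at the right-angle corners of $\Sigma$ in part (2), which is handled by the Neumann doubling trick since corners between $\paN\Sigma$ and $\paD\Sigma$ become ordinary boundary points on $\Sigma^{\#2}$.
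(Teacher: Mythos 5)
Your proof is correct and takes essentially the same route as the paper's: reduce almost-harmonicity to a Poisson problem $\Delta_g f = -d^*\omega$ (the paper has an apparent sign typo here), observe that $d^*\omega$ extends smoothly by zero across $\zbf$ and $\paD\Sigma$ by admissibility, and then solve with Neumann (part 1, checking $\int_\Sigma d^*\omega\,\d v_g = 0$) or mixed (part 2) boundary conditions, with uniqueness coming from elliptic regularity at the punctures plus the fact that a harmonic function with the relevant boundary data is constant. The only cosmetic difference is in part (2): you apply the mixed Laplacian's inverse directly to obtain $f$, whereas the paper writes $f = h - P_\Sigma\zetabf^*h|_{\paD\Sigma}$ with $h = \Delta_g^{-1}d^*\omega$; both land on the same $f$ by the uniqueness you prove.
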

\begin{proof}
Let $\omega$ be an admissible representative of a cohomology class $[\omega]\in H^1(\Sigma\setminus\zbf)$.

Existence: Fix a Neumann extendible conformal metric $g$ on $\Sigma$. For a smooth function $f$ on $\Sigma$, $\omega+df$ is almost harmonic if and only if $d{*}df=d{*}\omega$, i.e., $\Delta_gf=d^*\omega$. Now $d^*\omega$ is smooth on $\Sigma$ with $d^*\omega=0$ near $\paD\Sigma\cup\zbf$. If $\paD\Sigma=\varnothing$, then $\int_\Sigma d^*\omega\,\d v_g=-\int_\Sigma d{*}\omega=0$, and we take $f=\Delta_g^{-1}d^*\omega$. If $\paD\Sigma\neq\varnothing$, we take $f=h-P_\Sigma\zetabf^*h|_{\paD\Sigma}$ where $h=\Delta_g^{-1}d^*\omega$.

Uniqueness: The difference of two almost harmonic forms in the same cohomology class in $H^1(\Sigma\setminus\zbf)$ has the form $dh$ with $d{*}dh=0$, i.e., $h$ is harmonic. In particular, it is smooth on $\Sigma$. If $\paD\Sigma=\varnothing$, then $h$ is constant, so $dh=0$. If $\paD\Sigma\neq\varnothing$, then by assumption, $dh=d(f_1-f_2)$ where $f_1|_{\paD\Sigma}=f_2|_{\paD\Sigma}=0$, so $h|_{\paD\Sigma}$ is constant, thus $h$ is constant, so again $dh=0$.
\end{proof}

\begin{rem*}
If $\paD\Sigma\neq\varnothing$, harmonic $1$-forms or functions on $\Sigma$ are generally not admissible, since admissibility requires them to be constant in a neighborhood of $\paD\Sigma$. This latter condition is important for gluing (see the next subsection).
\end{rem*}

For the summability of BCILT correlation functions, we need:

\begin{lem}\label{cohomology-summable}
Suppose $\paD\Sigma=\varnothing$. For $\mbf\in\Zb^\s$, $a>0$, we have $\sum_{[\omega]\in H_\mbf^1(\Sigma\setminus\zbf)}e^{-a\int_\Sigma^\reg|\omega^\h|_g^2\,\d v_g}<\infty$, where $\omega^\h$ denotes the unique almost harmonic representative in $[\omega]$.
\end{lem}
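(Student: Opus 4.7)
The strategy is to reduce the sum to a Gaussian sum over a full-rank lattice in a finite-dimensional Euclidean space. Since $\paD\Sigma=\varnothing$, the set $H_\mbf^1(\Sigma\setminus\zbf)$ is nonempty and affine over $\Lambda:=H_\mathbf{0}^1(\Sigma;\Zb)$. Fix an arbitrary base class $[\omega_0]\in H_\mbf^1(\Sigma\setminus\zbf)$ with almost harmonic representative $\omega_0^\h$ (provided by \cref{harmonic}). For any $[\omega]=[\omega_0]+[\omega']$ with $[\omega']\in\Lambda$, the form $\omega'$ has trivial winding numbers around every $z_i$, so its harmonic representative $\omega'^\h$ is smooth on all of $\Sigma$. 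By the uniqueness in \cref{harmonic}, $\omega^\h=\omega_0^\h+\omega'^\h$.

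Next I set up the Hodge-theoretic lattice picture. Let $\Hc$ be the finite-dimensional real vector space of smooth harmonic $1$-forms on $\Sigma$ satisfying $i_\nu\omega|_{\paN\Sigma}=0$, equipped with the inner product $\la\omega,\eta\ra_{L^2}=\int_\Sigma\la\omega,\eta\ra_g\,\d v_g$; this is positive-definite because any harmonic form of zero $L^2$-norm vanishes. By the real analogue of \cref{harmonic}, the harmonic representative map $[\omega']\mapsto\omega'^\h$ is a linear isomorphism $H_\mathbf{0}^1(\Sigma;\Rb)\isoto\Hc$, under which $\Lambda$ embeds as a full-rank lattice (integrality of periods being a discrete condition).

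Then I expand
\[
\int_\Sigma^\reg|\omega^\h|_g^2\,\d v_g
=\int_\Sigma^\reg|\omega_0^\h|_g^2\,\d v_g+2L(\omega'^\h)+\|\omega'^\h\|_{L^2}^2,
\]
where $L(\omega'^\h):=\int_\Sigma\la\omega_0^\h,\omega'^\h\ra_g\,\d v_g$ is an absolutely convergent ordinary integral: $\omega_0^\h$ is $L^1$ with $O(1/r)$ singularities at $\zbf$, while $\omega'^\h$ is bounded, so the pointwise pairing is $L^1$ and no regularization is needed. The linear functional $L$ on the Euclidean space $\Hc$ is represented by a unique vector $v_0\in\Hc$ via $L(\omega')=\la v_0,\omega'\ra_{L^2}$. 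Completing the square gives
\[
\int_\Sigma^\reg|\omega^\h|_g^2\,\d v_g=C_0+\|\omega'^\h+v_0\|_{L^2}^2,
\qquad C_0:=\int_\Sigma^\reg|\omega_0^\h|_g^2\,\d v_g-\|v_0\|_{L^2}^2,
\]
so the series in question equals $e^{-aC_0}\sum_{\lambda\in\Lambda}e^{-a\|\lambda+v_0\|_{L^2}^2}$. Convergence of such a shifted Gaussian lattice sum in a finite-dimensional Euclidean space is standard, e.g.\ by comparison with $\int_\Hc e^{-a\|x+v_0\|_{L^2}^2/2}\,dx$ after absorbing a fixed constant from the ratio of Gaussian mass in a fundamental domain.

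\textbf{Main obstacle.} The only nontrivial conceptual point is verifying that the harmonic representative map identifies $\Lambda$ with a genuine full-rank lattice in $\Hc$, so that the resulting sum is truly Gaussian on a lattice; this is secured by combining the existence/uniqueness of harmonic representatives in \cref{harmonic} with the de Rham period isomorphism. Once this identification is in place, the cross-term bookkeeping and the completion of the square are routine, and convergence is immediate from the standard Gaussian lattice sum.
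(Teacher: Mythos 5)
Your proof is correct and follows essentially the same route as the paper: fix a base class $[\omega_0]$, use the affine structure over the lattice of harmonic forms, and expand the regularized $L^2$-norm into the base term, a linear cross-term, and a positive-definite quadratic term. The paper simply compresses the final step (completing the square and invoking Gaussian lattice-sum convergence) into ``this clearly converges,'' whereas you spell out the bookkeeping, including the useful observations that the cross-term is an honest $L^1$ integral and that the $L^2$-form is positive-definite on the space of harmonic representatives.
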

\begin{proof}
Let $\Hc^1(\Sigma)=\{\omega^\h:[\omega]\in H^1(\Sigma)=H_{\mathbf{0}}^1(\Sigma\setminus\zbf)\}$, which is a finite-dimensional lattice. Fix $[\omega_0]\in H_\mbf^1(\Sigma\setminus\zbf)$. The sum is $\sum_{\omega^\h\in\Hc^1(\Sigma)}e^{-a\int_\Sigma^\reg|\omega_0^\h+\omega^\h|_g^2\,\d v_g}$. Now $\int_\Sigma^\reg|\omega_0^\h+\omega^\h|_g^2\,\d v_g=\int_\Sigma^\reg|\omega_0^\h|_g^2\,\d v_g+\int_\Sigma|\omega^\h|_g^2\,\d v_g+2\int_\Sigma\la\omega_0^\h,\omega^\h\ra_g\,\d v_g$, where $\int_\Sigma\la\omega_0^\h,\cdot\ra_g\,\d v_g$ is a linear functional on $\Hc^1(\Sigma)$, so this clearly converges.
\end{proof}

\subsection{Gluing}\label{gluing-top}

Gluing and cutting are inverse operations that produce new extended surfaces. In this subsection, we study the change of topology under these operations, in preparation for \cref{gluing-free,curv-term-section}. We are interested in separating families and the groups $H^1(\Sigma\setminus\zbf)$, $H_\D^1(\Sigma)$.

One can glue two Dirichlet circles or semicircles. One can glue two surfaces or self-glue, i.e., glue two Dirichlet (semi)circles of the same surface. We discuss each of these four cases. For topological considerations, it is necessary to glue one (semi)circle at a time.

\begin{notation}
For a tuple $\Xbf=(X_i)_i$, we write $\Xbf^\c=(X_i)_{i\geq2}$, so that $\Xbf=X_1\times\Xbf^\c=X_1\times X_2\times \Xbf^{\c\c}$. \blue{This notation will come in handy in the discussion that follows.}
\end{notation}

\subsubsection{Gluing two surfaces along a circle}\label{glue1}

Let $\Sigma$, $\Sigma'$ be two extended surfaces. We add a prime $'$ to denote the corresponding object for $\Sigma'$. Suppose $c_i^\D\subset\pa_\D\Sigma$ is outgoing and ${c_j^\D}'\subset\pa_\D\Sigma'$ is incoming, or vice versa. Let $\Sigma\#\Sigma'$ be the surface obtained by gluing $c_i^\D$ and ${c_j^\D}'$ via ${\zeta_j^\D}'\circ(\zeta_i^\D)^{-1}$ (\cref{glue1-pic}). It is naturally an extended surface of type $(\g+\g',\bN+\bN',\bD+\bD'-2,\bM+\bM',\bMD+\bMD',\s+\s',\t+\t')$. We always denote by $\Cc$ the glued (semi)circle, which in this case is the common image of $\zeta_i^\D$, ${\zeta_j^\D}'$ on $\Sigma\#\Sigma'$.

For notational convenience, suppose $i=j=1$.

\begin{itemize}
\item Let $\deltabf=\sigmabf\times\dbf$ be a separating family of $\Sigma$ such that $c_1^\D\cap d_i\neq\varnothing$ if and only if $i=1$, and likewise for $\deltabf'=\sigmabf'\times\dbf'$ of $\Sigma'$. We assume that ${\zeta_1^\D}'\circ(\zeta_1^\D)^{-1}$ sends $c_1^\D\cap d_1$ to ${c_1^\D}'\cap d_1'$, so that $d_1\cup d_1'$ glues to a simple semicircle $d_1^\#$ on $\Sigma\#\Sigma'$. Then $\deltabf\#\deltabf'=(\sigmabf\times\sigmabf')\times(d_1^\#\times\dbf^\c\times\dbf'^\c)$ is a separating family of $\Sigma\#\Sigma'$.

This assumes that $d_1,d_1'$ exist. Otherwise $\Sigma$ or $\Sigma'$ has only one boundary component and no punctures. Then $\deltabf\#\deltabf'=(\sigmabf\times\sigmabf')\times(\dbf^\c\times\dbf'^\c)$ is a separating family of $\Sigma\#\Sigma'$.
\item For $\kbf^\c\in\Zb^{\bD-1}$, $\kbf'^\c\in\Zb^{\bD'-1}$, we have a bijection
\[\setlength\arraycolsep{1pt}
\begin{array}[t]{cccccc}
H^1_{\kbf^\c\times\kbf'^\c}(\Sigma\#\Sigma'\setminus\zbf\cup\zbf')&\xleftrightarrow{\sim\,}&\displaystyle\bigsqcup_{k\in\Zb}&H_{k\times\kbf^\c}^1(\Sigma\setminus\zbf)&\times&H_{k\times\kbf'^\c}^1(\Sigma'\setminus\zbf')\\
{[\omega]}&\mapsto&\int_\Cc\omega&{[\omega|_\Sigma]}&&{[\omega|_{\Sigma'}]}\\
{[\omega^\#]}&\mapsfrom&&{[\omega]}&&{[\omega']}
\end{array}\]
where in the inverse map, we choose admissible representatives $\omega$, $\omega'$ of $[\omega]\in H_{k\times\kbf^\c}^1(\Sigma\setminus\zbf)$, $[\omega']\in H_{k\times\kbf'^\c}^1(\Sigma'\setminus\zbf')$, so that the $1$-form $\omega^\#$ defined to be $\omega$ on $\Sigma$ and $\omega'$ on $\Sigma'$ is admissible on $\Sigma\#\Sigma'$.
\item We have a surjective homomorphism
\[\setlength\arraycolsep{1pt}
\begin{array}[t]{ccccc}
H_\D^1(\Sigma)&\times&H_\D^1(\Sigma')&\to&H_\D^1(\Sigma\#\Sigma')\\
{[df]}&&{[df']}&\mapsto&{[df^\#]}
\end{array}\]
where we choose admissible primitives $f$, $f'$ for $[df]\in H_\D^1(\Sigma)$, $[df']\in H_\D^1(\Sigma')$ such that $f|_{c_1^\D}=f'|_{{c_1^\D}'}=0$, so that the function $f^\#$ defined to be $f$ on $\Sigma$ and $f'$ on $\Sigma'$ is admissible on $\Sigma\#\Sigma'$. Its kernel is $\{([df],[df']):f|_{\paD\Sigma\setminus c_1^\D}=f'|_{\paD\Sigma'\setminus{c_1^\D}'}=0,f|_{c_1^\D}=f'|_{{c_1^\D}'}\in\Zb\}$.
\end{itemize}

\begin{rem*}
It is important that admissible $1$-forms and functions are constant \textit{in a neighborhood of $\paD\Sigma$}, so that $\omega^\#$, $f^\#$ here are smooth at $\Cc$. The same remark applies to the other cases below.
\end{rem*}

\begin{figure}
\centering
\includegraphics{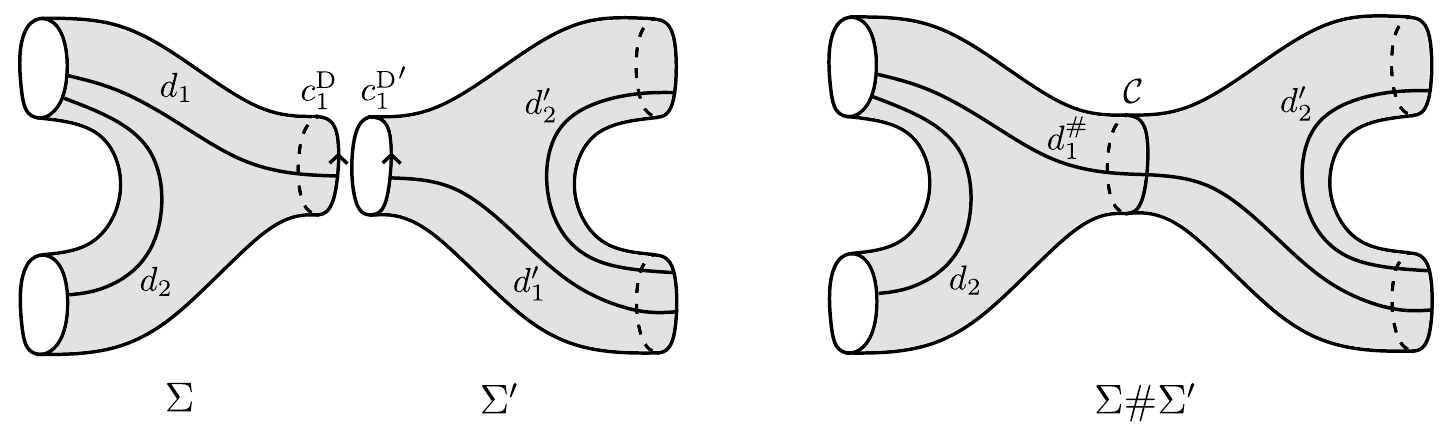}
\caption{Gluing two surfaces along a Dirichlet circle}
\label{glue1-pic}
\end{figure}

\subsubsection{Gluing two surfaces along a semicircle}\label{glue2}

Let $\Sigma$, $\Sigma'$ be two extended surfaces, as in the previous case. Suppose $c_i^\MD\subset\pa_\MD\Sigma$ is outgoing and ${c_j^\MD}'\subset\pa_\MD\Sigma'$ is incoming, or vice versa. Let $\Sigma\#\Sigma'$ be the surface obtained by gluing $c_i^\MD$ and ${c_j^\MD}'$ via ${\zeta_j^\MD}'\circ(\zeta_i^\MD)^{-1}$ (\cref{glue2-pic}). It is naturally an extended surface of type $(\g+\g',\bN+\bN'+n,\bD+\bD',\bM+\bM'-1-n,\bMD+\bMD-2,\s+\s',\t+\t')$ for $n\in\{0,1\}$, depending on whether the new boundary circle is Neumann or mixed.

Again, suppose $i=j=1$. This case is similar to but simpler than the previous case.

\begin{itemize}
\item Let $\deltabf=\sigmabf\times\dbf$ be a separating family of $\Sigma$ and likewise for $\deltabf'=\sigmabf'\times\dbf'$ of $\Sigma'$. Then $\deltabf\#\deltabf'=(\sigmabf\times\sigmabf')\times(\dbf\times\dbf')$ is a separating family of $\Sigma\#\Sigma'$.
\item For $\kbf\in\Zb^{\bD}$, $\kbf'\in\Zb^{\bD'}$, we have a bijection
\[\setlength\arraycolsep{1pt}
\begin{array}[t]{ccccc}
H^1_{\kbf\times\kbf'}(\Sigma\#\Sigma'\setminus\zbf\cup\zbf')&\xleftrightarrow{\sim\,}&H_\kbf^1(\Sigma\setminus\zbf)&\times&H_{\kbf'}^1(\Sigma'\setminus\zbf')\\
{[\omega]}&\mapsto&{[\omega|_\Sigma]}&&{[\omega|_{\Sigma'}]}\\
{[\omega^\#]}&\mapsfrom&{[\omega]}&&{[\omega']}
\end{array}\]
where the inverse map is as in the previous case.
\item We have a surjective homomorphism
\[\setlength\arraycolsep{1pt}
\begin{array}[t]{ccccc}
H_\D^1(\Sigma)&\times&H_\D^1(\Sigma')&\to&H_\D^1(\Sigma\#\Sigma')\\
{[df]}&&{[df']}&\mapsto&{[df^\#]}
\end{array}\]
which is as in the previous case with $f|_{c_1^\D}=f'|_{{c_1^\D}'}=0$ replaced by $f|_{c_1^\MD}=f'|_{{c_1^\MD}'}=0$. Its kernel is $\{([df],[df']):f|_{\paD\Sigma\setminus c_1^\MD}=f'|_{\paD\Sigma'\setminus{c_1^\MD}'}=0,f|_{c_1^\MD}=f'|_{{c_1^\MD}'}\in\Zb\}$.
\end{itemize}

\begin{figure}
\centering
\includegraphics{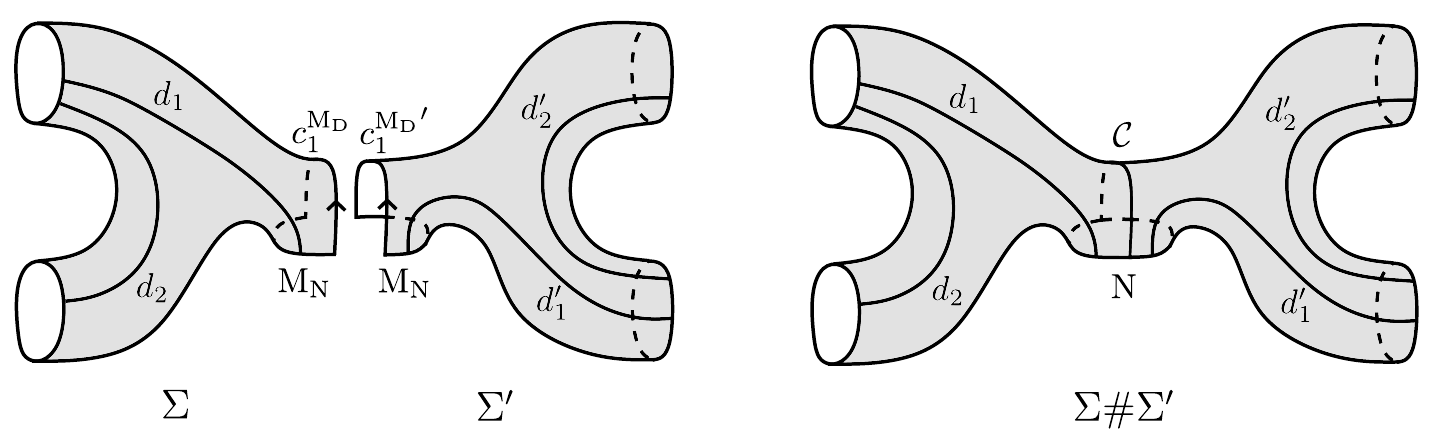}
\caption{Gluing two surfaces along a Dirichlet semicircle}
\label{glue2-pic}
\end{figure}

\subsubsection{Self-gluing a surface along a circle}\label{glue3}

Let $\Sigma$ be an extended surface. Suppose $c_i^\D\subset\pa_\D\Sigma$ is outgoing and $c_j^\D\subset\pa_\D\Sigma$ is incoming. Let $\#\Sigma$ be the surface obtained by gluing $c_i^\D$ and $c_j^\D$ via $\zeta_j^\D\circ(\zeta_i^\D)^{-1}$ (\cref{glue3-pic}). It is naturally an extended surface of type $(\g+1,\bN,\bD-2,\bM,\bMD,\s,\t)$.

For notational convenience, suppose $i=1$, $j=2$.

\begin{itemize}
\item Let $\deltabf=\sigmabf\times\dbf$ be a separating family of $\Sigma$ such that $d_1$ goes from $c_1^\D$ to $c_2^\D$, $c_2^\D\cap d_i=\varnothing$ for $i\neq1$, $c_1^\D\cap d_i=\varnothing$ for $i\neq1,2$. We assume that $\zeta_2^\D\circ(\zeta_1^\D)^{-1}$ sends $c_1^\D\cap d_1$ to $c_2^\D\cap d_1$, so that $d_1$ glues to a simple circle $d^\#_1$ on $\#\Sigma^\circ$. Then $\#\deltabf=(\sigmabf\times\Cc\times d^\#_1)\times\dbf^{\c\c}$ is a separating family of $\#\Sigma$ (recall that $\Cc$ denotes the glued circle). (If $d_2$ does not exist, then $\dbf^{\c\c}$ is replaced by the empty tuple.)
\item Fix an admissible function $h$ on $\Sigma$ with $h|_{\paD\Sigma\setminus c_2^\D}=0$, $h|_{c_2^\D}=1$. By assumption, $dh=0$ near $\paD\Sigma$, so it glues to a $1$-form $(dh)^\#$ on $\#\Sigma$. For $\kbf^{\c\c}\in\Zb^{\bD-2}$, we have a bijection
\[\setlength\arraycolsep{1pt}
\begin{array}[t]{cccccc}
H^1_{\kbf^{\c\c}}(\#\Sigma\setminus\zbf)&\xleftrightarrow{\sim\,}&\displaystyle\bigsqcup_{k\in\Zb}&H_{k\times k\times\kbf^{\c\c}}^1(\Sigma\setminus\zbf)&\times&\Zb\\
{[\omega]}&\mapsto&\int_\Cc\omega&{[\omega|_\Sigma]}&&\int_{d_1^\#}\omega\\
{[\omega^\#+n(dh)^\#]}&\mapsfrom&&{[\omega]}&&n
\end{array}\]
where in the inverse map, we choose an admissible representative $\omega$ of $[\omega]\in H_{k\times k\times\kbf^{\c\c}}^1(\Sigma\setminus\zbf)$ with $\int_{d_1}\omega=0$, so that it glues to an admissible $1$-form $\omega^\#$ on $\#\Sigma$.
\item We have a surjective homomorphism
\[\setlength\arraycolsep{1pt}
\begin{array}[t]{cccccc}
H_\D^1(\Sigma)&\to&H_\D^1(\#\Sigma)&\times&\Zb\\
{[df]}&\mapsto&{[d(f-(f|_{c_2^\D}-f|_{c_1^\D})h)]}&&f|_{c_2^\D}-f|_{c_1^\D}
\end{array}\]
where we choose an admissible primitive $f$ for $[df]\in H_\D^1(\Sigma\setminus\zbf)$, so that $f-(f|_{c_2^\D}-f|_{c_1^\D})h$ glues to an admissible function (denoted in the same way) on $\#\Sigma$. Its kernel is $\{[df]:f|_{\paD\Sigma\setminus c_1^\D\cup c_2^\D}=0,f|_{c_1^\D}=f|_{c_2^\D}\in\Zb\}$.
\end{itemize}

\begin{figure}
\centering
\includegraphics{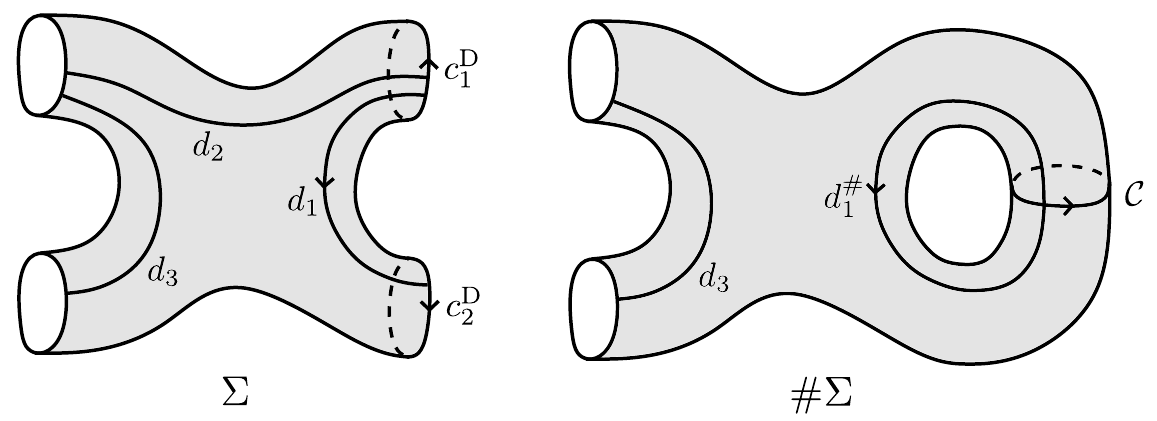}
\caption{Self-gluing a surface along a Dirichlet circle}
\label{glue3-pic}
\end{figure}

\subsubsection{Self-gluing a surface along a semicircle}\label{glue4}

Let $\Sigma$ be an extended surface. Suppose $c_i^\MD\subset\pa_\MD\Sigma$ is outgoing and $c_j^\MD\subset\pa_\MD\Sigma$ is incoming. Let $\#\Sigma$ be the surface obtained by gluing $c_i^\MD$ and $c_j^\MD$ via $\zeta_j^\MD\circ(\zeta_i^\MD)^{-1}$, which is naturally an extended surface. It is necessary to further distinguish two cases, depending on whether $c_i^\MD$, $c_j^\MD$ are in the same mixed boundary circle of $\Sigma$.

\paragraph{Case 1.}
Suppose $c_i^\MD\subset c_{i'}^\M$, $c_j^\MD\subset c_{j'}^\M$, $i'\neq j'$ (\cref{glue4-pic}). Then $\#\Sigma$ is of type $(\g+1,\bN+n,\bD,\bM-1-n,\bMD-2,\s,\t)$ for $n\in\{0,1\}$. For notational convenience, suppose $i=i'=1$, $j=j'=2$.

\begin{itemize}
\item Let $\deltabf=\sigmabf\times\dbf$ be a separating family of $\Sigma$ such that $d_1$ goes from $c_1^\MD$ to $c_2^\MD$, $c_2^\M\cap d_i=\varnothing$ for $i\neq1$, $c_1^\M\cap d_i=\varnothing$ for $i\neq1,2$, $c_1^\MD\cap d_2=\varnothing$. We assume that $\zeta_2^\MD\circ(\zeta_1^\MD)^{-1}$ sends $c_1^\MD\cap d_1$ to $c_2^\MD\cap d_1$, so that $d_1$ glues to a simple circle $d^\#_1$ on $\#\Sigma^\circ$. We move the image of $c_2^\M$ on $\#\Sigma$ slightly to get a smoothly embedded circle $c$ on $(\#\Sigma^\circ\setminus\deltabf)\cup d^\#_1$. Then $\#\deltabf=(\sigmabf\times c\times d^\#_1)\times(d_2\times\dbf^{\c\c})$ is a separating family of $\#\Sigma$. (If $d_2$ does not exist, then $d_2\times\dbf^{\c\c}$ is replaced by the empty tuple.)
\item For $\kbf\in\Zb^{\bD}$, we have a bijection
\[\setlength\arraycolsep{1pt}
\begin{array}[t]{cccccc}
H^1_\kbf(\#\Sigma\setminus\zbf)&\xleftrightarrow{\sim\,}&H_\kbf^1(\Sigma\setminus\zbf)&\times&\Zb\\
{[\omega]}&\mapsto&{[\omega|_\Sigma]}&&\int_{d_1^\#}\omega\\
{[\omega^\#+n(dh)^\#]}&\mapsfrom&{[\omega]}&&n
\end{array}\]
where the inverse map is as in the previous case with $h|_{\paD\Sigma\setminus c_2^\D}=0$, $h|_{c_2^\D}=1$ replaced by $h|_{\paD\Sigma\setminus c_2^\MD}=0$, $h|_{c_2^\MD}=1$.
\item We have a surjective homomorphism
\[\setlength\arraycolsep{1pt}
\begin{array}[t]{cccccc}
H_\D^1(\Sigma)&\to&H_\D^1(\#\Sigma)&\times&\Zb\\
{[df]}&\mapsto&{[d(f-(f|_{c_2^\MD}-f|_{c_1^\MD})h)]}&&f|_{c_2^\MD}-f|_{c_1^\MD}
\end{array}\]
as in the previous case. Its kernel is $\{[df]:f|_{\paD\Sigma\setminus c_1^\MD\cup c_2^\MD}=0,f|_{c_1^\MD}=f|_{c_2^\MD}\in\Zb\}$.
\end{itemize}

\begin{figure}
\centering
\includegraphics{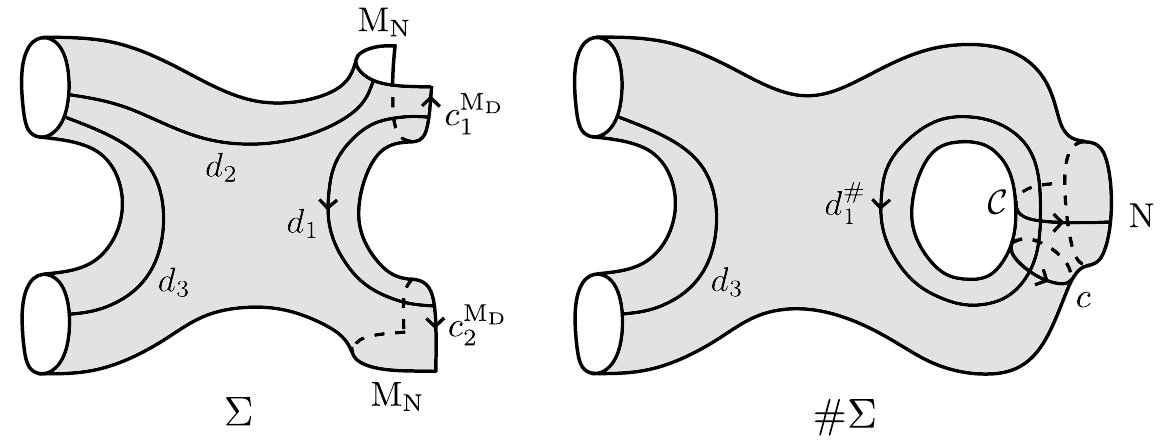}
\caption{Self-gluing a surface along a Dirichlet semicircle (case 1)}
\label{glue4-pic}
\end{figure}

\paragraph{Case 2.}
Suppose $c_i^\MD\cup c_j^\MD\subset c_k^\M$ for some $k$ (\cref{glue4'-pic}). Then $\#\Sigma$ is of type $(\g,\bN+n,\bD,\bM+1-n,\bMD-2,\s,\t)$ for $n\in\{0,1,2\}$. Again, suppose $i=1$, $j=2$.

\begin{itemize}
    \item Let $\deltabf=\sigmabf\times\dbf$ be a separating family of $\Sigma$ disjoint from $c_1^\MD$, $c_2^\MD$. Then $\#\deltabf=\sigmabf\times(\Cc\times\dbf)$ is a separating family of $\#\Sigma$.
    \item  Let $d$ be a simple semicircle on $\Sigma\setminus\zbf$ from $c_1^\MD$ to $c_2^\MD$ such that $\zeta_2^\MD\circ(\zeta_1^\MD)^{-1}$ sends $d\cap c_1^\MD$ to $d\cap c_2^\MD$, so that $d$ glues to a simple circle $d^\#$ on $\#\Sigma$. The homology class of $d^\#$ in $H_1(\#\Sigma\setminus\zbf)$ does not depend on the choice of $d$. For $\kbf\in\Zb^{\bD}$, we have a bijection
\[\setlength\arraycolsep{1pt}
\begin{array}[t]{cccccc}
H^1_\kbf(\#\Sigma\setminus\zbf)&\xleftrightarrow{\sim\,}&H_\kbf^1(\Sigma\setminus\zbf)&\times&\Zb\\
{[\omega]}&\mapsto&{[\omega|_\Sigma]}&&\int_{d^\#}\omega\\
{[\omega^\#+n(dh)^\#]}&\mapsfrom&{[\omega]}&&n
\end{array}\]
as in case 1.
\item We have a surjective homomorphism
\[\setlength\arraycolsep{1pt}
\begin{array}[t]{cccccc}
H_\D^1(\Sigma)&\to&H_\D^1(\#\Sigma)&\times&\Zb\\
{[df]}&\mapsto&{[d(f-(f|_{c_2^\MD}-f|_{c_1^\MD})h)]}&&f|_{c_2^\MD}-f|_{c_1^\MD}
\end{array}\]
as in case 1.
\end{itemize}

\begin{figure}
\centering
\includegraphics{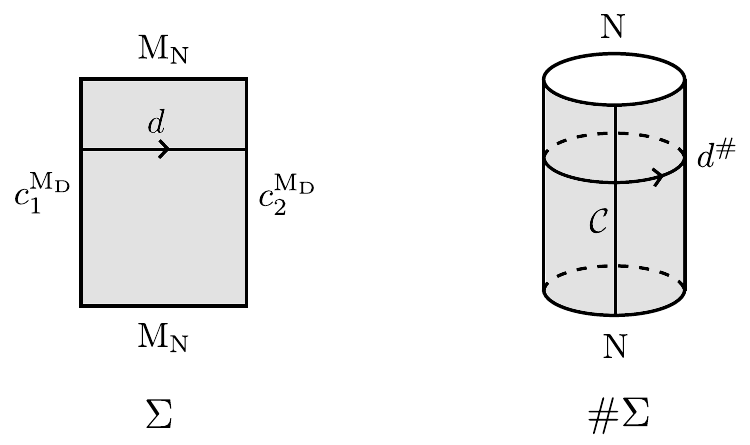}
\caption{Self-gluing a surface along a Dirichlet semicircle (case 2)}
\label{glue4'-pic}
\end{figure}

\section{Construction of the free field}\label{sec_gff}

From now on, we fix $R>0$ and write $\TR=\Rb/2\pi R\Zb$. In this section, we define the measure spaces over which our path integrals take place. More precisely, we give a probabilistic construction of the measure 
\[e^{-\frac{1}{4\pi}\int_\Sigma|d\phi|_g^2\,\d v_g}\,\D\phi\]
on the space of maps $\phi:\Sigma\setminus\zbf\to\TR$. This is known as the \textit{compactified massless free boson}. It is also a \textit{nonlinear $\sigma$-model} with values in a one-dimensional torus.

\begin{notation}
We add a bar to denote the projection $\Rb\to\TR$. For example, if $\phi:\Sigma\to\Rb$, then $\ol{\phi}:\Sigma\to\TR$ is its projection to $\TR$. Maintaining this distinction helps to reduce ambiguity.
\end{notation}

\subsection{Gaussian Free Fields (GFFs)}\label{GFF}

The probabilistic tool here is the \textit{Gaussian Free Field (GFF)}. We briefly review the main ideas while adopting a more measure-theoretic viewpoint.

The general construction is as follows. Let $(M,g)$ be a compact Riemannian manifold of dimension $D$ and $\Delta$ a nonnegative elliptic operator on $M$ of order $k$. We would like to make sense of the measure $e^{-\frac{1}{2}\int_M\phi\,\Delta\phi\,\d v_g}\,\D\phi$ on the space of functions $\phi:M\to\Rb$. The idea is to think of $\Delta$ as an infinite-dimensional positive semidefinite matrix. Let $(e_j)_{j=1}^\infty\subset C^\infty(M,\Rb)$ be an $L^2$-orthonormal basis of eigenfunctions of $\Delta$ with $\Delta e_j=\lambda_je_j$, where $\lambda_j\geq0$ is nondecreasing in $j$. Here if $\pa M\neq\varnothing$, one imposes a boundary condition, which essentially amounts to choosing a self-adjoint extension of $\Delta$ on $L^2(M,\Rb)$. Suppose $\dim\ker\Delta=d$, so that $\lambda_1=\cdots=\lambda_d=0<\lambda_{d+1}\leq\lambda_{d+2}\leq\cdots\nearrow\infty$. The \textit{GFF} of $\Delta$ is defined by the series
\[X_\Delta=\sum_{j=d+1}^\infty\frac{a_j}{\sqrt{\lambda_j}}\,e_j,\]
where the $a_j$ are i.i.d.\ standard Gaussians. By Weyl's law, $\lambda_j\sim Cj^{k/D}$ for some constant $C$ as $j\to\infty$. By elliptic regularity, the $L^2$ Sobolev space of order $s\in\Rb$ on $M$ is $H^s(M,\Rb)=\bigl\{\sum_{j=1}^\infty c_je_j:\sum_{j=1}^\infty\lambda_j^{2s/k}c_j^2<\infty\bigr\}\subset\Dc'(M,\Rb)$. Thus $X_\Delta$ converges a.s.\ in $H^s(M,\Rb)$ for $s<(k-D)/2$. In particular, it defines a random distribution on $M$. Let $\Dc_\Delta'(M,\Rb)=\{u\in\Dc'(M,\Rb):\fr f\in\ker\Delta,\int_\Sigma uf\,\d v_g=0\}$, so that $X_\Delta\in\Dc_\Delta'(M,\Rb)$ a.s. We have a decomposition $\Dc'(M,\Rb)=\ker\Delta\oplus\Dc_\Delta'(M,\Rb)$. The covariance of $X_\Delta$ is the Green's function $G_\Delta$ of $\Delta$ orthogonal to $\ker\Delta$, i.e., the Schwartz kernel of the inverse of $\Delta:\Dc_\Delta'(M,\Rb)\xrightarrow{\sim}\Dc_\Delta'(M,\Rb)$:
\[\Eb[X_\Delta(x)X_\Delta(x')]=G_\Delta(x,x')=\sum_{j=d+1}^\infty\frac{1}{\lambda_j}e_j(x)e_j(x').\]
Identifying $\ker\Delta$ with $\Rb^d$ using the basis $(e_j)_{j=1}^d$, we equip $\Dc_\Delta'(M,\Rb)$ with the probability measure $\d X_\Delta$ induced by $X_\Delta$ and $\Dc'(M,\Rb)$ with the product measure $\d\mu_\Delta=\d c^d\,\d X_\Delta$ where $\d c^d$ is the Lebesgue measure on $\Rb^d$. Intuitively, the total mass of $e^{-\frac{1}{2}\int_\Sigma\phi\,\Delta\phi\,\d v_g}\,\D\phi$ on $\Dc_\Delta'(M,\Rb)$ is $\prod_{j=d+1}^\infty\sqrt{\frac{2\pi}{\lambda_j}}$, which can be formalized as $(\det'\frac{1}{2\pi}\Delta)^{-1/2}$, where $\det'$ denotes the \textit{regularized determinant} of Ray--Singer. Thus we define $e^{-\frac{1}{2}\int_M\phi\,\Delta\phi\,\d v_g}\,\D\phi=(\det'\frac{1}{2\pi}\Delta)^{-1/2}\,\d\mu_\Delta$.

We are interested in the following cases:
\begin{itemize}
\item $M=\Tb$, $g=d\theta^2$, $\Delta=\frac{1}{\pi}\Dbf$. We denote the GFF by $\varphi$. Explicitly,
\[\varphi(\theta)=\sum_{n>0}\frac{1}{\sqrt{n}}\big(x_n\cos(n\theta)-y_n\sin(n\theta)\big)=\sum_{n\neq0}\varphi_ne^{\i n\theta},\]
where the $x_n,y_n$ are i.i.d.\ standard Gaussians, $\varphi_n=\frac{1}{2\sqrt{n}}(x_n+\i y_n)$, $\varphi_{-n}=\ol{\varphi_n}$ for $n>0$. Its covariance is\footnote{By abuse of notation, we write functions on $\Tb$ or $\Tb_+$ as functions of $\theta$.}
\[\Eb[\varphi(\theta)\varphi(\theta')]=-\log|e^{\i\theta}-e^{\i\theta'}|.\]
We equip $\Dc_0'(\Tb,\Rb)$ with the probability measure $\d\varphi$ and $\Dc'(\Tb,\Rb)=\Rb\oplus\Dc_0'(\Tb,\Rb)$ with the product measure $\d\wt{\varphi}=\d c\,\d\varphi$ where $\wt{\varphi}=c+\varphi$.
\item $M=\Tb_+$, $g=d\theta^2$, $\Delta=\frac{1}{\pi}\Dbf$ with Neumann boundary condition. We denote the GFF by $\varphi^\h$. Explicitly,
\[\varphi^\h(\theta)=\sum_{n>0}\sqrt{\frac{2}{n}}\,x_n^\h\cos(n\theta)=\sum_{n\neq0}\varphi_n^\h e^{\i n\theta},\]
where the $x_n^\h$ are i.i.d.\ standard Gaussians, $\varphi_n^\h=\frac{1}{\sqrt{2n}}x_n^\h$, $\varphi_{-n}^\h=\varphi_n^\h$ for $n>0$. Note that the coefficients differ from the previous case by $\sqrt{2}$, since $\vol\Tb=2\vol\Tb_+$. Its covariance is
\[\Eb[\varphi^\h(\theta)\varphi^\h(\theta')]=-2\log|e^{\i\theta}-e^{\i\theta'}|.\]
We equip $\Dc_0'(\Tb_+,\Rb)$ with the probability measure $\d\varphi^\h$ and $\Dc'(\Tb_+,\Rb)=\Rb\oplus\Dc_0'(\Tb_+,\Rb)$ with the product measure $\d\wt{\varphi}^\h=\d c^\h\,\d\varphi^\h$ where $\wt{\varphi}^\h=c^\h+\varphi^\h$.
\item $M=\Sigma$ is an extended surface, $g$ is a Neumann extendible conformal metric on $\Sigma$, $\Delta=\frac{1}{2\pi}\Delta_g$ with mixed boundary condition. We denote the GFF by $X_g$. Its covariance is
\[\Eb[X_g(x)X_g(x')]=2\pi\,G_g(x,x').\]
If $\paN\Sigma\neq\varnothing$, we have
\[X_g\law\frac{X_{g^{\#2}}+X_{g^{\#2}}\circ\tau}{\sqrt{2}}.\]
If $\paD\Sigma=\varnothing$, we equip $\Dc_0'(\Sigma,\Rb)$ with the probability measure $\d X_g$ and $\Dc'(\Sigma,\Rb)$ with the product measure $\d c\,\d X_g$. For $\rho\in C^\infty(\Sigma,\Rb)$ Neumann extendible, we have
\[X_{e^\rho g}\law X_g-\frac{1}{\vol_{e^\rho g}\Sigma}\int_\Sigma X_g\,\d v_{e^\rho g}.\]
If $\paD\Sigma\neq\varnothing$, we equip $\Dc'(\Sigma,\Rb)$ with the probability measure $\d X_g$. There is no zero mode, and the law of $X_g$ is conformally invariant.
\end{itemize}

In all of these cases, the GFF converges in $H^s$ for $s<0$ and is a.s.\ not in $H^0=L^2$. Note that here the zero mode (corresponding to $\ker\Delta$) is not $L^2$-normalized. This normalization will manifest itself as multiplicative constants in the formulas later.

Returning to the general theory, we state some standard results on GFFs that will be used in calculations.

\begin{thm}[\textbf{Cameron--Martin}]\label{CM}
For $h\in H_\Delta^{k/2}(M,\Rb)=H^{k/2}(M,\Rb)\cap\Dc_\Delta'(M,\Rb)$, the translated measure $\d(X_\Delta-h)$ is equivalent to $\d X_\Delta$ with Radon--Nikodym derivative given by
\[\d(X_\Delta-h)=\exp\left(-\frac{1}{2}\int_Mh\,\Delta h\,\d v_g+\int_MX_\Delta\,\Delta h\,\d v_g\right)\d X_\Delta.\]
\end{thm}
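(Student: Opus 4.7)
The plan is to reduce to the classical finite-dimensional Cameron--Martin formula via the spectral expansion of the GFF. Writing $h = \sum_{j>d} h_j e_j$, the hypothesis $h \in H^{k/2}_\Delta(M,\Rb)$ gives $\sum_{j>d} \lambda_j h_j^2 = \int_M h\,\Delta h\,\d v_g < \infty$, so that $h$ lies in the Cameron--Martin space of the GFF.

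First, I would truncate to finitely many modes: set $X_\Delta^N = \sum_{d<j\leq N} \frac{a_j}{\sqrt{\lambda_j}} e_j$ and $h^N = \sum_{d<j\leq N} h_j e_j$. In the basis $(e_j/\sqrt{\lambda_j})_{d<j\leq N}$, the translation $X_\Delta^N \mapsto X_\Delta^N - h^N$ corresponds to the shift $a_j \mapsto a_j - \sqrt{\lambda_j}\,h_j$ of i.i.d.\ standard Gaussians, so the product Cameron--Martin formula on $\Rb^{N-d}$ yields, for bounded measurable $F$,
\[\Eb[F(X_\Delta^N - h^N)] = \Eb\!\left[F(X_\Delta^N)\,\exp\!\left(-\tfrac{1}{2}\textstyle\sum_{d<j\leq N}\lambda_j h_j^2 + \sum_{d<j\leq N} a_j \sqrt{\lambda_j}\,h_j\right)\right],\]
with the sign on the linear term dictated by the convention used for $\d(\cdot)$ in the statement.

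Next, I would identify the two sums with the stated integrals: $\sum_j \lambda_j h_j^2 = \int_M h^N\,\Delta h^N\,\d v_g$, converging to $\int_M h\,\Delta h\,\d v_g$; and $\sum_j a_j \sqrt{\lambda_j}\,h_j = \int_M X_\Delta^N\,\Delta h^N\,\d v_g$, which is a centered Gaussian of variance $\sum_j \lambda_j h_j^2$. The partial sums form an $L^2$-bounded martingale, hence converge a.s.\ and in $L^2$ to a centered Gaussian of variance $\int_M h\,\Delta h\,\d v_g$. This limit is by definition the distributional pairing $\int_M X_\Delta\,\Delta h\,\d v_g$, well defined because $\Delta h$ lies in $H^{-k/2}(M,\Rb)$ and therefore pairs with $X_\Delta \in H^s(M,\Rb)$ for $s < (k-D)/2$.

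The main step --- and the only delicate one --- is the limit passage $N \to \infty$ at the level of measures. I would verify that the densities $Z_N$ appearing above form a positive martingale with $\Eb[Z_N]=1$ (a direct Gaussian moment computation using mode independence) and that $Z_N \to Z_\infty$ a.s. Uniform integrability, immediate from the $L^2$ bound just established, upgrades this to $L^1$-convergence, so $\d(X_\Delta - h)$ is absolutely continuous with respect to $\d X_\Delta$ with density $Z_\infty$. Equivalence of the measures follows from $Z_\infty > 0$ a.s., the inverse derivative being obtained by swapping $h$ with $-h$. I expect this limit argument, together with the stochastic identification of $\int_M X_\Delta\,\Delta h\,\d v_g$, to be the principal technical obstacle, both relying essentially on the assumption $h \in H^{k/2}_\Delta$.
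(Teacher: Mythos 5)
The paper itself does not prove this theorem; it states it as a standard fact with a remark explaining the spectral interpretation of $\int_M X_\Delta\,\Delta h\,\d v_g$ and refers to \cite[Proposition 2.4.2]{bogachev1998gaussian} for a general version. Your proof fills this gap and is correct: the reduction to the finite-dimensional Gaussian shift via spectral truncation, the identification of the quadratic and linear terms with the stated integrals, the $L^2$-bounded martingale argument for a.s.\ and $L^2$ convergence of the linear term, and the uniform-integrability upgrade to $L^1$ convergence of the densities $Z_N$ (one should compute $\Eb[Z_N^2]=e^{v_N}$ explicitly, where $v_N=\sum_{j\leq N}\lambda_jh_j^2$, but that is a one-line Gaussian moment) are exactly the ingredients of the standard proof, and they align with the paper's own remark on how the stochastic integral is to be read.

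The one place you should not hedge is the sign. As literally stated, the finite-dimensional identity is
\[
\Eb[F(X-a)]=\Eb\!\left[F(X)\,e^{-\tfrac12|a|^2-X\cdot a}\right],
\]
with a \emph{minus} sign on the linear term, so the formula with $+\int X_\Delta\,\Delta h\,\d v_g$ is the Radon--Nikodym derivative of the law of $X_\Delta+h$, not of $X_\Delta-h$. This is not your error --- the paper's own Theorem 3.3 (Girsanov), whose one-line proof is ``take $h=\Eb[YX_\Delta]$,'' only comes out right if one reads the theorem as giving $\Eb[F(X_\Delta+h)]=\Eb[F(X_\Delta)\exp(-\tfrac12\int h\,\Delta h+\int X_\Delta\,\Delta h)]$. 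So either ``$\d(X_\Delta-h)$'' is being used as shorthand for the translated measure $A\mapsto\d X_\Delta(A-h)$ (i.e.\ the law of $X_\Delta+h$, consistent with the corollary's notation $\d\mu_\Delta(\cdot-h)$), or the statement has a sign typo. Rather than saying the sign is ``dictated by the convention,'' it is cleaner to settle it by cross-checking against Girsanov, which fixes the $+$ sign and pins the shift direction to $+h$.
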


\blue{Here the term $\int_MX_\Delta\,\Delta h\,\d v_g$ is not to be interpreted pointwise but as follows. Using the notation above, for $h=\sum c_je_j$, $\int_MX_\Delta\,\Delta h\,\d v_g=\sum\sqrt{\lambda_j}\,c_ja_j\,e_j$. By assumption, $\sum\lambda_jc_j^2<\infty$, so this converges a.s.\ and in $L^2$. For a more general statement, we refer to \cite[Proposition 2.4.2]{bogachev1998gaussian}.}

\begin{cor}
For $h\in H^{k/2}(M,\Rb)$, the translated measure $\d\mu_\Delta(\cdot-h)$ is equivalent to $\d\mu_\Delta$ with Radon--Nikodym derivative given by
\[\d\mu_\Delta(X-h)=\exp\left(-\frac{1}{2}\int_Mh\,\Delta h\,\d v_g+\int_MX\,\Delta h\,\d v_g\right)\d\mu_\Delta(X).\]
\end{cor}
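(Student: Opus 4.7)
The plan is to reduce to \cref{CM} by isolating the zero mode. Decompose $h = h_0 + h_\perp$ according to the orthogonal splitting $\Dc'(M,\Rb) = \ker\Delta \oplus \Dc_\Delta'(M,\Rb)$, where $h_0 \in \ker\Delta \cong \Rb^d$ and $h_\perp \in H^{k/2}(M,\Rb) \cap \Dc_\Delta'(M,\Rb) = H_\Delta^{k/2}(M,\Rb)$. Similarly write $X = c + X_\Delta$ with $c \in \Rb^d$ and $X_\Delta \in \Dc_\Delta'(M,\Rb)$. By definition of $\mu_\Delta$ as the product measure $\d c^d \otimes \d X_\Delta$, we have
\[\d\mu_\Delta(X - h) = \d c^d(c - h_0) \otimes \d(X_\Delta - h_\perp).\]

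First I would observe that Lebesgue measure on $\Rb^d$ is translation invariant, so $\d c^d(c - h_0) = \d c^d(c)$. For the second factor, apply \cref{CM} with $h$ replaced by $h_\perp \in H_\Delta^{k/2}(M,\Rb)$, giving
\[\d(X_\Delta - h_\perp) = \exp\left(-\tfrac{1}{2}\textstyle\int_M h_\perp \Delta h_\perp \, \d v_g + \int_M X_\Delta\, \Delta h_\perp\, \d v_g\right) \d X_\Delta.\]

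Finally, I would check that the exponent agrees with the one in the statement. Since $\Delta h_0 = 0$, we have $\Delta h = \Delta h_\perp$, so $\int_M h \Delta h\,\d v_g = \int_M h_\perp \Delta h_\perp\,\d v_g + \int_M h_0 \Delta h_\perp\,\d v_g$; the last term vanishes by orthogonality of $\ker\Delta$ and $\Dc_\Delta'(M,\Rb)$ with respect to the $L^2$ pairing (equivalently, $\Delta h_\perp \in \Dc_\Delta'$ and $h_0 \in \ker\Delta$). The same orthogonality, applied with $c$ in place of $h_0$ and interpreted in the pairing sense appropriate to the distributional integral $\int_M X\,\Delta h\,\d v_g$ (which is the convergent random series defined via eigenfunction expansion as in the remark after \cref{CM}), gives $\int_M X\,\Delta h\, \d v_g = \int_M X_\Delta\, \Delta h_\perp\, \d v_g$. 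Combining these identities yields the desired formula.

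The only nontrivial point, which would be the main thing to write out carefully, is the justification of $\int_M c\,\Delta h_\perp\,\d v_g = 0$ at the level of the random distributional pairing; this follows because in the eigenfunction expansion $h_\perp = \sum_{j \geq d+1} c_j e_j$ one has $\Delta h_\perp = \sum_{j \geq d+1} \lambda_j c_j e_j$ with no contribution from the zero eigenspace, so the pairing with any element of $\ker\Delta$ vanishes term by term.
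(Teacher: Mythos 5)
Your proof is correct, and it takes the natural route the paper leaves implicit: the paper states the corollary without proof as an immediate consequence of \cref{CM}. Your argument — splitting $h = h_0 + h_\perp$ and $X = c + X_\Delta$ along $\ker\Delta \oplus \Dc_\Delta'(M,\Rb)$, using translation invariance of Lebesgue measure on $\ker\Delta$ to absorb $h_0$, applying \cref{CM} to $h_\perp$, and then verifying that the cross terms $\int_M h_0\,\Delta h_\perp\,\d v_g$ and $\int_M c\,\Delta h_\perp\,\d v_g$ vanish because $\Delta h_\perp$ lies in $\operatorname{span}\{e_j : j>d\}$ and hence pairs to zero with $\ker\Delta$ (equivalently, by self-adjointness $\int c\,\Delta h_\perp = \int (\Delta c)\,h_\perp = 0$) — is exactly the intended argument.
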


In probabilistic language, it can be stated in the following way:

\begin{thm}[\textbf{Girsanov transform}]\label{girsanov}
For $F\in L^1(\Dc_\Delta'(M,\Rb))$, $Y$ a centered real Gaussian variable, we have
\[\Eb[F(X_\Delta)\,e^Y]=e^{\frac{1}{2}\Eb[Y^2]}\,\Eb[F(X_\Delta+\Eb[YX_\Delta])].\]
\end{thm}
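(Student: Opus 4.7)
The plan is to reduce Girsanov's transform to the Cameron--Martin theorem (\cref{CM}) by decomposing $Y$ orthogonally against $X_\Delta$. Implicitly one assumes that $Y$ is jointly Gaussian with $X_\Delta$; if $Y$ is independent of $X_\Delta$ then $\Eb[YX_\Delta]$ vanishes and the identity reduces to the Laplace transform $\Eb[e^Y]=e^{\frac12\Eb[Y^2]}$. The candidate shift dictated by the right-hand side is $h:=\Eb[YX_\Delta]$; expanding in the eigenbasis $(e_j)$ of $\Delta$ gives $h=\sum_{j\geq d+1}\frac{b_j}{\sqrt{\lambda_j}}e_j$ with $b_j=\Eb[Ya_j]$.

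The first step is to verify that $h$ lies in the Cameron--Martin space $H_\Delta^{k/2}(M,\Rb)$: this follows from $\int_M h\,\Delta h\,\d v_g=\sum_j b_j^2\leq\Eb[Y^2]$, which is Bessel's inequality for the orthonormal family $(a_j)$ in the Gaussian Hilbert space. Next, decompose $Y=Y_1+Y'$ where $Y_1:=\sum_{j\geq d+1}b_ja_j$ is the $L^2(\Pb)$-projection of $Y$ onto the closed linear span of the $a_j$, and $Y'$ is orthogonal to every $a_j$. A direct eigenbasis computation identifies $Y_1$ with the Wiener pairing $\int_M X_\Delta\,\Delta h\,\d v_g$ appearing in Cameron--Martin (cf.\ the remark following \cref{CM}), and joint Gaussianity promotes orthogonality of $Y'$ with the $a_j$ to genuine independence of $Y'$ from $X_\Delta$.

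Applying \cref{CM} with the shift $h$ then gives
\[\Eb[F(X_\Delta+h)]=\Eb\!\left[F(X_\Delta)\exp\!\left(-\tfrac12\!\int_M h\,\Delta h\,\d v_g+\int_M X_\Delta\,\Delta h\,\d v_g\right)\right]=e^{-\frac12\Eb[Y_1^2]}\,\Eb\!\left[F(X_\Delta)e^{Y_1}\right].\]
Since $Y_1$ is $\sigma(X_\Delta)$-measurable while $Y'$ is independent of $\sigma(X_\Delta)$, one has $\Eb[F(X_\Delta)e^Y]=\Eb[F(X_\Delta)e^{Y_1}]\,\Eb[e^{Y'}]$; combining this with $\Eb[e^{Y'}]=e^{\frac12\Eb[(Y')^2]}$ and the Pythagorean identity $\Eb[Y^2]=\Eb[Y_1^2]+\Eb[(Y')^2]$ produces the claimed formula. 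The only real obstacle is interpretive rather than computational: the pairing $\int_M X_\Delta\,\Delta h\,\d v_g$ must be read as the $L^2(\Pb)$-limit of its eigen-expansion (since $X_\Delta$ is merely a distribution and $\Delta h\in H^{-k/2}$), but this is standard once $h\in H_\Delta^{k/2}(M,\Rb)$ is established.
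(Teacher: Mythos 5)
Your proof is correct and follows the same approach as the paper, whose entire proof is the one-liner ``Take $h=\Eb[YX_\Delta]$ in the Cameron--Martin theorem.'' You fill in the details this leaves implicit — the orthogonal decomposition $Y=Y_1+Y'$ to handle the component of $Y$ independent of $X_\Delta$, Bessel's inequality for $h\in H_\Delta^{k/2}(M,\Rb)$, and the correct observation that joint Gaussianity of $Y$ with $X_\Delta$ is implicitly assumed in the statement.
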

\begin{proof}
Take $h=\Eb[YX_\Delta]$ in the Cameron--Martin theorem.
\end{proof}

We also need an imaginary version where $Y$ is replaced by $\i Y$. For this to make sense, we introduce the following function class. Let $\Ec(\Dc_\Delta'(M,\Rb))$ be the space of functions $F:\Dc_\Delta'(M,\Rb)\to\Cb$ of the form $F(X)=P(\la X,h_1\ra,\ldots,\la X,h_m\ra)$ where $m\in\Nb$, $P$ is a complex polynomial, $h_1,\ldots,h_m\in\Dc(M,\Rb)$. For $h\in H_\Delta^{k/2}(M,\Rb)$, $F(\cdot+\i h)$ makes sense for $F\in\Ec(\Dc_\Delta'(M,\Rb))$.

\begin{thm}[\textbf{Imaginary Girsanov transform}]\label{imaginary-girsanov}
For $F\in\Ec(\Dc_\Delta'(M,\Rb))$, $Y$ a centered real Gaussian variable, we have
\[\Eb[F(X_\Delta)\,e^{\i Y}]=e^{-\frac{1}{2}\Eb[Y^2]}\,\Eb[F(X_\Delta+\i\,\Eb[YX_\Delta])].\]
\end{thm}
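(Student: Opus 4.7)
The strategy is to analytically continue the real Girsanov transform (\cref{girsanov}) in the coupling parameter. Since $F\in\Ec(\Dc_\Delta'(M,\Rb))$ is by definition a complex polynomial in finitely many pairings $\la X,h_1\ra,\ldots,\la X,h_m\ra$ with $h_j\in\Dc(M,\Rb)$, each side of the asserted identity reduces to a finite-dimensional integral against the joint centered Gaussian vector $(\la X_\Delta,h_1\ra,\ldots,\la X_\Delta,h_m\ra,Y)$. This is precisely the structure needed to extend both sides to entire functions of a complex parameter and apply the identity theorem.

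Concretely, I would set
\[G(s):=\Eb\bigl[F(X_\Delta)\,e^{sY}\bigr],\qquad H(s):=e^{\frac{s^2}{2}\Eb[Y^2]}\,\Eb\bigl[F(X_\Delta+s\,\Eb[YX_\Delta])\bigr],\qquad s\in\Cb.\]
Expanding $X_\Delta=\sum_{j>d}\lambda_j^{-1/2}a_je_j$, the shift $\Eb[YX_\Delta]=\sum_{j>d}\lambda_j^{-1/2}\Eb[Ya_j]\,e_j$ lies in $H_\Delta^{k/2}(M,\Rb)$ because its squared $H^{k/2}$-norm equals $\sum_j\Eb[Ya_j]^2\leq\Eb[Y^2]$ by Bessel's inequality applied to the orthonormal family $(a_j)$ in $L^2(\Omega)$. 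Hence $F(X_\Delta+s\,\Eb[YX_\Delta])$ is a well-defined element of $\Ec$ for every $s\in\Cb$ and is polynomial in $s$, so $H$ is entire. For $G$, I would invoke Morera's theorem: the integrand is entire in $s$ pointwise, and on any compact $K\subset\Cb$ it is dominated by $|F(X_\Delta)|\,e^{(\max_{s\in K}|s|)|Y|}$, whose expectation is finite by Cauchy--Schwarz together with $\Eb[|F(X_\Delta)|^2]<\infty$ (polynomials in Gaussians have all moments) and $\Eb[e^{c|Y|}]<\infty$ for all $c>0$.

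Applying \cref{girsanov} with $Y$ replaced by the centered Gaussian $sY$ yields $G(s)=H(s)$ for every $s\in\Rb$, since $\Eb[(sY)^2]=s^2\Eb[Y^2]$ and $\Eb[(sY)\,X_\Delta]=s\,\Eb[YX_\Delta]$. By the identity principle for holomorphic functions, this equality extends to all of $\Cb$; specializing to $s=\i$ gives exactly the imaginary Girsanov formula. The only genuinely non-routine ingredient is the entirety of $G$, but this is standard once the uniform-on-compacts moment bound above is in place; every other step is bookkeeping about the class $\Ec$.
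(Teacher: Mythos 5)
The paper states \cref{imaginary-girsanov} without a written proof, so there is no proof to compare against directly; however, your analytic-continuation argument is correct, complete, and is in fact the same strategy the paper deploys explicitly one level up for the imaginary Cameron--Martin theorem for GMC in \cref{prop_im_CM} (define a holomorphic function of the coupling parameter, match against the real Girsanov shift on the real or imaginary axis, invoke the identity theorem). Your verification that $\Eb[YX_\Delta]\in H^{k/2}_\Delta$ via Bessel's inequality and your Morera/Cauchy--Schwarz bound for the entirety of $G$ are exactly the routine-but-necessary checks, and the restriction $F\in\Ec$ is precisely what makes the shift $F(\cdot+s\,\Eb[YX_\Delta])$ an honest polynomial in $s$, so nothing is missing.
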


The next theorem allows us to compare the GFFs associated to different operators, which is essentially a special case of the Feldman--Hájek theorem:

\begin{thm}
Let $\Delta$, $\wt{\Delta}$ be two strictly positive elliptic operators on $M$ such that $\wt{\Delta}\Delta^{-1}-I$ is smoothing (i.e., it extends to a continuous map $\Dc'(M,\Rb)\to\Dc(M,\Rb)$). Then $\d X_{\wt{\Delta}}$ is equivalent to $\d X_\Delta$ with Radon--Nikodym derivative given by
\[\d X_{\wt{\Delta}}=\sqrt{\det\nolimits_\Fr(\wt{\Delta}\Delta^{-1})}\,e^{-\frac{1}{2}\int_M X_\Delta\,(\wt{\Delta}-\Delta)X_\Delta\,\d v_g}\,\d X_\Delta,\]
where $\det_\Fr$ denotes the Fredholm determinant.
\end{thm}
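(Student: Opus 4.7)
The plan is to reduce the claim to a version of the Feldman--Hájek / Cameron--Martin theorem for two centered Gaussian measures with distinct covariance operators, via the spectral decomposition of $\Delta$. First I would diagonalize $\Delta$ with an $L^2(M,\d v_g)$-orthonormal eigenbasis $(e_j)_{j\geq 1}$ of eigenvalues $\lambda_j>0$, so that $X_\Delta=\sum_j\lambda_j^{-1/2}a_je_j$ with $a_j$ i.i.d.\ $N(0,1)$. Setting $\eta_j=\sqrt{\lambda_j}\la X,e_j\ra$, one identifies the law of $X_\Delta$ with the white-noise law on the coordinates $\eta=(\eta_j)$, while the law of $X_{\wt\Delta}$ becomes, in the same coordinates, the centered Gaussian with covariance operator $(I+K')^{-1}$, where $K':=\Delta^{-1/2}\wt\Delta\,\Delta^{-1/2}-I$ is self-adjoint. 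Thus it suffices to prove an explicit Radon--Nikodym formula for $N(0,(I+K')^{-1})$ with respect to $N(0,I)$ on the sequence space.

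Second, I would verify that $K'$ is trace class. The hypothesis says $K:=\wt\Delta\,\Delta^{-1}-I$ is smoothing, and a direct calculation gives $K'=\Delta^{-1/2}K\Delta^{1/2}$; since conjugation by $\Delta^{\pm 1/2}$ preserves smoothness of the Schwartz kernel, $K'$ is again smoothing. Smoothing operators on a compact manifold lie in every Schatten ideal, so in particular $K'$ is trace class and $\det\nolimits_\Fr(I+K')$ is well-defined. The identity $\det\nolimits_\Fr(I+K')=\det\nolimits_\Fr(\wt\Delta\,\Delta^{-1})$ then follows from similarity via $\Delta^{1/2}$ together with the cyclic invariance of Fredholm determinants of identity-plus-trace-class operators.

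Third, I would establish the explicit formula $\d N(0,(I+K')^{-1})/\d N(0,I)(\eta)=\sqrt{\det\nolimits_\Fr(I+K')}\,\exp\bigl(-\tfrac12\la\eta,K'\eta\ra\bigr)$ by truncating to $V_N=\spanv(e_1,\ldots,e_N)$ and passing to the limit. On each $V_N$ both measures are ordinary finite-dimensional Gaussians and the formula reduces to an elementary matrix computation. To pass to $N\to\infty$, the finite-rank determinants $\det(I+P_NK'P_N)$ converge to $\det\nolimits_\Fr(I+K')$ by trace-norm convergence $P_NK'P_N\to K'$, and the finite-rank quadratic forms converge a.s.\ and in $L^p$ to the well-defined random variable $\la\eta,K'\eta\ra$, which makes sense precisely because $K'$ is trace class. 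Uniform integrability then upgrades convergence of the truncated densities to convergence of the measures themselves.

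Finally, pulling back to the original variables via $\eta_j=\sqrt{\lambda_j}\la X,e_j\ra$, the quadratic form $\la\eta,K'\eta\ra$ becomes the formal pairing $\int_M X(\wt\Delta-\Delta)X\,\d v_g$, interpreted as follows: since $\wt\Delta-\Delta=K\Delta$ with $K$ smoothing, the image $K\Delta X_\Delta$ lies in $C^\infty(M,\Rb)$ almost surely (even though $X_\Delta$ is only a distribution), so the pairing is unambiguously defined by duality. The main obstacle I expect is purely analytic: giving rigorous meaning to this bilinear form on distributional arguments and controlling the truncation limits in step three — essentially justifying the Feldman--Hájek machinery in the concrete pseudodifferential setting. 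Once this is in hand, the claimed Radon--Nikodym identity follows directly from the identifications above.
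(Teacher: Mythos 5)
Your proof takes the same finite-dimensional-truncation approach as the paper (project onto the span of the first $n$ eigenfunctions of $\Delta$, apply the elementary Gaussian change-of-density formula, pass to the limit), but you present it in the symmetric Feldman--H\'{a}jek form via the change of variables $\eta_j=\sqrt{\lambda_j}\la X,e_j\ra$ and carry out the limiting argument yourself (trace-norm convergence of the truncated determinants, $L^2$/a.s.\ convergence of the truncated quadratic forms, uniform integrability) rather than computing characteristic functionals and deferring the convergence to an external reference as the paper does. The explicit verifications you add — that $K'=\Delta^{-1/2}\wt{\Delta}\Delta^{-1/2}-I$ is a smoothing (hence trace-class) conjugate of $K=\wt{\Delta}\Delta^{-1}-I$, and that $\det_\Fr(I+K')=\det_\Fr(\wt{\Delta}\Delta^{-1})$ by similarity, plus the observation that $\int_M X(\wt{\Delta}-\Delta)X\,\d v_g$ is unambiguous because $K\Delta X\in C^\infty$ a.s.\ — are correct and fill in details the paper leaves implicit.
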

\begin{proof}
Let $\Pi_n$ be the projection from $L^2(M,\Rb)$ to the finite subspace $\mathcal H_n$ spanned by $e_1,\ldots,e_n$, the first $n$ orthonormal basis of $\Delta$ . Define $\Delta_n,\wt \Delta_n$ as $\Pi_n\circ\Delta\circ \Pi_n$ and $\Pi_n\circ\wt \Delta\circ \Pi_n$. Then, by the knowledge of the Gaussian distribution,
    \begin{equation*}
        \frac{\d X_{\wt{\Delta}_n}}{\d X_{{\Delta}_n}}=\sqrt{\det\nolimits(\wt{\Delta}_n\Delta^{-1}_n)}\,e^{-\frac{1}{2}\int_M X_{\Delta_n}\,(\wt{\Delta}_n-\Delta_n)X_{\Delta_n}\,\d v_g}.
    \end{equation*}
For a suitable test function $h$ constructed from $e_0,\ldots,e_N$ where $N<n$, \textcolor{black}{$\int_M X_{\Delta}h\  \d v_g=\int_M X_{\Delta_n}h\  \d v_g$ according to the construction of GFF. Therefore, we can study the distribution of $\int_M X_{\wt{\Delta}_n}h\  \d v_g$ using $\d X_{{\Delta}}$, and}
\begin{equation*}
\begin{split}
&\int e^{\i\int_M X_{\wt{\Delta}}h\  \d v_g}\d X_{\wt{\Delta}}\\
    =&\int e^{\i\int_M X_{\wt{\Delta}}h\  \d v_g}\sqrt{\det\nolimits(\wt{\Delta}_n\Delta^{-1}_n)}\,e^{-\frac{1}{2}\int_M X_{\Delta}\,(\wt{\Delta}_n-\Delta_n)X_{\Delta}\,\d v_g}\d X_{{\Delta}}\\
    =&\int e^{\i\int_M X_{\wt{\Delta}_n}h\  \d v_g}\frac{\d X_{\wt{\Delta}_n}}{\d X_{{\Delta}_n}}\d X_{{\Delta}_n}= e^{-\frac12\int_M h\  (\wt{\Delta}_n|_{\mathcal H_n})^{-1}\Pi_n h\ \d v_g},
\end{split}
\end{equation*}
where on the last line,  $\int_M h\  (\wt{\Delta}_n|_{\mathcal H_n})^{-1}\Pi_n h\ \d v_g$ converges to $\int_M h(x)  G_{\wt{\Delta}}(x,x')h(x') \d v_g$. Next, we claim that the first line converges to 
\begin{equation*}
    \int e^{\i\int_M X_{\wt{\Delta}}h\  \d v_g}\sqrt{\det\nolimits_\Fr(\wt{\Delta}\Delta^{-1})}\,e^{-\frac{1}{2}\int_M X_\Delta\,(\wt{\Delta}-\Delta)X_\Delta\,\d v_g}\ \d X_{{\Delta}}.
\end{equation*}
Since  $\wt{\Delta}-\Delta$ is smoothing, this follows from the discussion in \cite[Lemma 5.3]{Segal}.
\end{proof}
We will need the following generalization to operators with nonzero kernels. For an operator $\Delta$, we write $\Delta_0=\Delta+\Pi_\Delta$ where $\Pi_\Delta$ is the orthogonal projection on $L^2(M,\Rb)$ onto $\ker\Delta$. Note that $\Delta_0=\Delta$ if and only if $\ker\Delta=0$.

\begin{cor}\label{FH-var}
Let $\Delta$, $\wt{\Delta}$ be two nonnegative elliptic operators on $M$ such that $\wt{\Delta}-\Delta$ is smoothing. Then $\d\mu_{\wt{\Delta}}$ is equivalent to $\d\mu_\Delta$ with Radon--Nikodym derivative given by
\[\d\mu_{\wt{\Delta}}(X)=(2\pi)^{\frac{\dim\ker\wt{\Delta}-\dim\ker\Delta}{2}}\sqrt{\det\nolimits_\Fr(\wt{\Delta}_0\Delta_0^{-1})}\,e^{-\frac{1}{2}\int_M X\,(\wt{\Delta}-\Delta)X\,\d v_g}\,\d\mu_\Delta(X).\]
\end{cor}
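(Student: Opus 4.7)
The plan is to reduce to the previous theorem by perturbing $\Delta$ and $\wt\Delta$ into strictly positive operators, then carefully account for the zero modes.

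First I would introduce $\Delta_0 = \Delta + \Pi_\Delta$ and $\wt\Delta_0 = \wt\Delta + \Pi_{\wt\Delta}$, which are strictly positive elliptic operators of order $k$ with the same principal symbols as $\Delta$, $\wt\Delta$ respectively. Since $\Pi_\Delta$ and $\Pi_{\wt\Delta}$ are finite-rank projections onto subspaces of $C^\infty(M,\Rb)$, they are smoothing. Hence $\wt\Delta_0-\Delta_0 = (\wt\Delta-\Delta)+(\Pi_{\wt\Delta}-\Pi_\Delta)$ is smoothing, and one checks that $\wt\Delta_0\Delta_0^{-1}-I$ is smoothing as well (since $\Delta_0^{-1}$ has the same order $-k$ as $\Delta^{-1}$ on the orthogonal complement of the kernel, and everything is smoothed on the finite-dimensional pieces). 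Therefore the preceding theorem applies to $\Delta_0$, $\wt\Delta_0$ and yields
\[\d X_{\wt\Delta_0}(X) = \sqrt{\det\nolimits_\Fr(\wt\Delta_0\Delta_0^{-1})}\,\exp\!\left(-\tfrac{1}{2}\int_M X(\wt\Delta_0-\Delta_0)X\,\d v_g\right)\d X_{\Delta_0}(X).\]

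Next I would rewrite the exponent in terms of $\wt\Delta-\Delta$. Writing $c = \Pi_\Delta X \in \ker\Delta$ and $\wt c = \Pi_{\wt\Delta}X \in \ker\wt\Delta$, and using that $\Pi_\Delta$, $\Pi_{\wt\Delta}$ are $L^2$-orthogonal projections, one has $\int_M X\,\Pi_\Delta X\,\d v_g = |c|^2$ and similarly for $\wt c$, so
\[\int_M X(\wt\Delta_0-\Delta_0)X\,\d v_g = \int_M X(\wt\Delta-\Delta)X\,\d v_g + |\wt c|^2 - |c|^2.\]

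The remaining step is to convert between $\d X_{\Delta_0}$ and $\d\mu_\Delta$ (and likewise for the tilded versions). Since $\Delta_0$ restricts to the identity on $\ker\Delta$, the random field $X_{\Delta_0}$ decomposes as the sum of a standard Gaussian on $\ker\Delta \cong \Rb^d$ (with $d = \dim\ker\Delta$) and an independent copy of $X_\Delta$ on $\Dc_\Delta'(M,\Rb)$. Consequently
\[\d X_{\Delta_0}(X) = (2\pi)^{-d/2}\,e^{-|c|^2/2}\,\d\mu_\Delta(X),\]
and analogously $\d X_{\wt\Delta_0}(X) = (2\pi)^{-\wt d/2}\,e^{-|\wt c|^2/2}\,\d\mu_{\wt\Delta}(X)$ with $\wt d = \dim\ker\wt\Delta$. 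Substituting these into the Radon--Nikodym identity above, the factors $e^{-|\wt c|^2/2}$ and $e^{+|c|^2/2}$ contributed by the $|\wt c|^2 - |c|^2$ correction cancel exactly against the normalization Gaussians, leaving the powers of $2\pi$ as $(2\pi)^{(\wt d - d)/2}$ and reproducing the claimed formula.

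The main obstacle is simply bookkeeping: one must be careful that the ``zero mode'' decompositions for $\Delta$ and $\wt\Delta$ are with respect to \emph{different} finite-dimensional subspaces of $\Dc'(M,\Rb)$, so the $|\wt c|^2 - |c|^2$ term is essential to compensate for the fact that the Lebesgue-measure factors in $\d\mu_\Delta$ and $\d\mu_{\wt\Delta}$ live on different subspaces. The calculation above shows that these contributions cancel cleanly, so no further analytic input beyond the preceding theorem is needed.
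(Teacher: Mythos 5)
Your proposal is correct and takes essentially the same route as the paper's proof: both pass to the strictly positive operators $\Delta_0 = \Delta + \Pi_\Delta$, $\wt\Delta_0 = \wt\Delta + \Pi_{\wt\Delta}$, invoke the preceding Feldman--H\'ajek-type theorem for that pair, and use the identity $\d X_{\Delta_0} = (2\pi)^{-d/2}e^{-\frac{1}{2}\int_M X\,\Pi_\Delta X\,\d v_g}\,\d\mu_\Delta(X)$ (and its tilded analogue) to convert. You merely spell out the zero-mode cancellation $(|\wt c|^2-|c|^2)$ that the paper leaves implicit with the phrase ``thus the formula follows.''
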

\begin{proof}
By definition,
\[\d X_{\Delta_0}=(2\pi)^{-\frac{d}{2}}e^{-\frac{1}{2}\sum_{j=1}^dc_j^2}\,\d c^d\,\d X_\Delta=(2\pi)^{-\frac{d}{2}}e^{-\frac{1}{2}\int_M X\,\blue{\Pi_\Delta} X\,\d v_g}\,\d\mu_\Delta(X),\]
where $d=\dim\ker\Delta$, $X=\sum_{j=1}^dc_je_j+X_\Delta$. Likewise for $\wt{\Delta}$. Clearly $\wt{\Delta}_0\Delta_0^{-1}-I=(\wt{\Delta}-\Delta+\Pi_{\wt{\Delta}}-\Pi_\Delta)\Delta_0^{-1}$ is smoothing, hence Hilbert--Schmidt. Thus the formula follows from the Feldman--Hájek theorem for $\Delta_0$, $\wt{\Delta}_0$.
\end{proof}

\subsection{The spaces $\Dc'(\Tb,\TR)$ and $\Dc'(\Tb_+,\TR)$}\label{section:measure-space1}

We would like to define the spaces of $\TR$-valued distributions on $\Tb$ and $\Tb_+$. This is a toy example for the next subsection, and they also play a role in the gluing of amplitudes. Taking inspiration from the decomposition (recall that the bar denotes the projection $\Rb\to\TR$)
\[\setlength\arraycolsep{1pt}
\begin{array}[t]{ccccccc}
\TR & \times & C_0(\Tb,\Rb) & \times & \Zb & \xrightarrow{\sim} & C(\Tb,\TR)\\
c&&\varphi&&k&\mapsto&c+\ol{\varphi}+\ol{kR\theta}
\end{array}\]
where $C_0(\Tb,\Rb)=\{\varphi\in C(\Tb,\Rb):\int_\Tb\varphi\,\d\theta=0\}=C(\Tb,\Rb)\cap\Dc_0'(\Tb,\Rb)$, we define
\[\Dc'(\Tb,\TR)=\TR\times\Dc_0'(\Tb,\Rb)\times\Zb.\]
We equip it with the product measure $\d\wt{\varphi}^k=\d c\,\d\varphi\,\d k$ where $\wt{\varphi}^k=(c,\varphi,k)$, $\d c$ is the Lebesgue measure on $\TR$, $\d k$ is the counting measure on $\Zb$. By the decomposition above, we view $C(\Tb,\TR)$ naturally as a subspace of $\Dc'(\Tb,\TR)$.

Similarly, we define
\[\Dc'(\Tb_+,\TR)=\TR\times\Dc_0'(\Tb_+,\Rb),\]
and equip it with the product measure $\d\wt{\varphi}^\h=\d c^\h\,\d\varphi^\h$ where $\wt{\varphi}^\h=(c^\h,\varphi^\h)$ ($\h$ for ``half''). By doubling, we view $C(\Tb_+,\TR)$ naturally as a subspace of $\Dc'(\Tb_+,\TR)$. Note that there is no discrete part $\Zb$ on $\Tb_+$ because $\ol{kR\theta}$ is even if and only if $k=0$. This reflects the fact that $\Tb_+$ is topologically trivial.

Finally, for an extended surface $\Sigma$, we define $\Dc'(\paS,\TR)=\Dc'(\Tb,\TR)^{\bD}\times\Dc'(\Tb_+,\TR)^{\bMD}$ \blue{and equip it with the product measure}. For 
\[\bt^\kbf=(\wt{\varphi}_1^{k_1},\ldots,\wt{\varphi}_{\bD}^{k_{\bD}},\wt{\varphi}_1^\h,\ldots,\wt{\varphi}_{\bMD}^\h)\in\Dc'(\paS,\TR),\]
we have a decomposition $\bt^\kbf=\cbf^\kbf+\varphibf$, where 
\begin{align*}
&\cbf^\kbf=(c_1^{k_1},\ldots,c_{\bD}^{k_{\bD}},c_1^\h,\ldots,c_{\bMD}^\h)\in(\TR\times\Zb)^{\bD}\times\TR^{\bMD},\\
&\varphibf=(\varphi_1,\ldots,\varphi_{\bD},\varphi_1^\h,\ldots,\varphi_{\bMD}^\h)\in\Dc_0'(\Tb,\Rb)^{\bD}\times\Dc_0'(\Tb_+,\Rb)^{\bMD}.
\end{align*}

The Hilbert spaces $\Hc=L^2(\Dc'(\Tb,\TR))$ and $\Hc_+=L^2(\Dc'(\Tb_+,\TR))$ are the \textit{state spaces} of BCILT.

\subsection{The space $\Dc'^\N(\Sigma\setminus\zbf,\TR)$ ($\paD\Sigma=\varnothing$)}\label{section:measure-space2}

Let $\Sigma$ be an extended surface with $\paD\Sigma=\varnothing$ and $g$ a Neumann extendible conformal metric on $\Sigma$. We would like to define the space $\Dc'^\N(\Sigma\setminus\zbf,\TR)$ of $\TR$-valued distributions $\phi$ on $\Sigma\setminus\zbf$ satisfying the Neumann boundary condition $\pa_\nu\phi|_{\pa\Sigma}=0$ (with $\pa\Sigma=\pa_\N\Sigma=\paN\Sigma$). Let (recall \cref{admissible})
\[C^\N(\Sigma\setminus\zbf,\TR)=\{\phi\in C^\infty(\Sigma\setminus\zbf,\TR):\tfrac{1}{2\pi R}d\phi\text{  admissible}\}.\]
Here we trivialize the tangent bundle of $\TR$ using the Lie group structure, so that $d\phi$ is a closed $1$-form on $\Sigma\setminus\zbf$. Note that the admissibility of $d\phi$ implies $\pa_\nu\phi|_{\pa\Sigma}=i_\nu d\phi|_{\pa\Sigma}=0$. We equip this space with the $C_\loc^\infty$ topology on $\Sigma\setminus\zbf$. We define
\[\Dc'^\N(\Sigma\setminus\zbf,\TR)=\Dc_0'(\Sigma,\Rb)\times C^\N(\Sigma\setminus\zbf,\TR)/\!\sim\]
where $(Y_1,\psi_1)\sim(Y_2,\psi_2)$ if $Y_1-Y_2\in C(\Sigma,\Rb)$ and $\ol{Y_1-Y_2}=\psi_2-\psi_1$. This is the quotient by the closed subspace $\{(Y,-\ol{Y}):Y\in C^\N(\Sigma,\Rb),\int_\Sigma Y\,\d v_g=0\}$, so it is Hausdorff.

For $x_0\in\Sigma\setminus\zbf$ and $\omega$ a closed $1$-form on $\Sigma\setminus\zbf$, we denote by $I_{x_0}(\omega)$ the multi-valued function on $\Sigma\setminus\zbf$ defined by $x\mapsto\int_\gamma\omega$ where $\gamma$ is any path on $\Sigma\setminus\zbf$ from $x_0$ to $x$. If $\int_c\omega\in\Zb$ for any circle $c$ on $\Sigma\setminus\zbf$, then $I_{x_0}(2\pi R\omega)$ projects to a well-defined map $\ol{I_{x_0}(2\pi R\omega)}:\Sigma\setminus\zbf\to\TR$ with $\frac{1}{2\pi R}d\ol{I_{x_0}(2\pi R\omega)}=\omega$. This defines a one-to-one correspondence between the homotopy classes of maps $\Sigma\setminus\zbf\to\TR$ and the cohomology classes in $H^1(\Sigma\setminus\zbf)$, i.e., a bijection $\pi_0(C(\Sigma\setminus\zbf,\TR))\xleftrightarrow{\sim}H^1(\Sigma\setminus\zbf)$, which motivates the following decomposition of $\Dc'^\N(\Sigma\setminus\zbf,\TR)$:

\begin{prop}\label{measure1}
Fix $x_0\in\Sigma\setminus\zbf$ and choose an admissible representative $\omega$ in each cohomology class $[\omega]\in H^1(\Sigma\setminus\zbf)$. We have a homeomorphism
\[\setlength\arraycolsep{1pt}
\begin{array}[t]{ccccccc}
\TR & \times & \Dc_0'(\Sigma,\Rb) & \times & H^1(\Sigma\setminus\zbf) & \xrightarrow{\sim} & \Dc'^\N(\Sigma\setminus\zbf,\TR)\\
c&&X_g&&[\omega]&\mapsto&[(X_g,c+\ol{I_{x_0}(2\pi R\omega)})]
\end{array}\]
On the left-hand side, consider the measure
\[\sqrt{\frac{\vol_g\Sigma}{\det'\Delta_{g}}}\exp\left(-\pi R^2\int_\Sigma^\reg|\omega|_g^2\,\d v_g-R\int_{\Sigma}X_g\,d^*\omega\,\d v_g\right)\d c\,\d X_g\,\d[\omega],\]
where \blue{$\d X_g$ is the measure induced by the GFF $X_g$ (recall \cref{GFF})}, $\d c$ is the Lebesgue measure on $\TR$, $\d[\omega]$ is the counting measure on $H^1(\Sigma\setminus\zbf)$. Pushing forward this measure defines a measure on $\Dc'^\N(\Sigma\setminus\zbf,\TR)$ that does not depend on the choices of $x_0$ or the representatives $\omega$. We denote it by $\d\phi_g$ where $\phi_g=[(X_g,c+\ol{I_{x_0}(2\pi R\omega)})]$.
\end{prop}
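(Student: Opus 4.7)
The plan is to establish the statement in two largely independent pieces: first, the bijection/homeomorphism between the product space and $\Dc'^\N(\Sigma\setminus\zbf,\TR)$, and second, the independence of the pushforward measure from the basepoint $x_0$ and the chosen representatives. The three factors $(c,X_g,[\omega])$ parametrize respectively the zero mode, the fluctuation, and the winding class of the compactified field.

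First I would verify that the map is well-defined and continuous. Admissibility of $\omega$ ensures $\int_c\omega\in\Zb$ for every cycle $c$, so $\ol{I_{x_0}(2\pi R\omega)}$ descends to a single-valued map $\Sigma\setminus\zbf\to\TR$ with admissible differential $\omega$; hence $c+\ol{I_{x_0}(2\pi R\omega)}\in C^\N(\Sigma\setminus\zbf,\TR)$ and the pair represents a class in the quotient. For injectivity, suppose $(c_i,X_i,[\omega_i])$ ($i=1,2$) map to the same class. Then $h:=X_1-X_2\in C(\Sigma,\Rb)$ and $\ol h=(c_2-c_1)+\ol{I_{x_0}(2\pi R(\omega_2-\omega_1))}$ on $\Sigma\setminus\zbf$. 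The right-hand side has angular behavior $\ol{R(m_j^{(2)}-m_j^{(1)})\theta}$ near each $z_j$, which extends continuously to $z_j$ only when the magnetic charges agree; then $\omega_2-\omega_1$ extends smoothly to $\Sigma$, and the existence of the real-valued lift $h$ of $\ol h$ forces its $H^1(\Sigma;\Zb)$-class to vanish, so $[\omega_1]=[\omega_2]$ and $\omega_1=\omega_2$ by the fixed choice of representatives. Finally $\ol h=c_2-c_1$ is constant, so $h$ is a real constant by connectedness, and the zero-mean condition gives $h=0$, $c_1=c_2$. For surjectivity, given $[(Y,\psi)]$, let $\omega$ be the chosen representative of $[\tfrac{1}{2\pi R}d\psi]$. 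The closed $1$-form $\tfrac{1}{2\pi R}d\psi-\omega$ has matching singular parts at $\zbf$, so extends smoothly to $\Sigma$; being exact on $\Sigma\setminus\zbf$, it is globally exact, say $dg$ with $g\in C^\infty(\Sigma,\Rb)$. Then $\psi=\ol{I_{x_0}(2\pi R\omega)}+2\pi R\ol g+c_0$ for some $c_0\in\TR$, and $(c_0+2\pi R\,\ol{\bar g},\,Y+2\pi R(g-\bar g),\,[\omega])$ lies in the preimage.

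Next I would turn to measure independence. A change of basepoint $x_0\to x_1$ shifts $I_{x_0}(2\pi R\omega)$ by a constant in $\TR$, absorbed into $c$ by translation-invariance of $\d c$. The more substantial task is independence from the choice of representative: replacing $\omega$ by $\omega'=\omega+df$ with $f$ admissible yields a new parametrization $\Phi'=\Phi\circ T$, where the equivalence relation together with the zero-mean constraint $X_g\in\Dc_0'$ forces
\[T(c,X_g,[\omega])=\big(c+2\pi R\,\ol{\bar f-f(x_0)},\;X_g+2\pi R(f-\bar f),\;[\omega]\big).\]
It thus suffices to show $T_*(\rho'\,\d c\,\d X_g\,\d[\omega])=\rho\,\d c\,\d X_g\,\d[\omega]$, where $\rho,\rho'$ are the densities for $\omega,\omega'$ respectively. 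The $c$-translation preserves $\d c$; the Gaussian shift of $X_g$ by $h:=2\pi R(f-\bar f)$ produces, via \cref{CM} applied with $\Delta=\Delta_g/(2\pi)$, the Radon--Nikodym factor $\exp\bigl(R\int X_g\Delta_g f\,\d v_g-\pi R^2\int|df|_g^2\,\d v_g\bigr)$. At the density level, the expansion $\int_\Sigma^\reg|\omega+df|_g^2\,\d v_g=\int_\Sigma^\reg|\omega|_g^2\,\d v_g+\int_\Sigma|df|_g^2\,\d v_g$ (cross term zero by integration by parts) and $d^*(\omega+df)=\Delta_gf$ (since $d^*\omega=0$) combine, together with $\int(f-\bar f)\Delta_gf\,\d v_g=\int|df|_g^2\,\d v_g$ from $\int\Delta_g f\,\d v_g=0$, to cancel all $f$-dependent factors, recovering $\rho$.

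The main technical obstacle is justifying the integration-by-parts identity $\int_\Sigma\omega\wedge*df=0$ despite the $L^1$ singularities of $\omega$ at $\zbf$ and the weak sense of $d^*\omega=0$. The plan is to excise small geodesic disks $D_g(z_j,\varepsilon)$ and control the boundary contributions as $\varepsilon\to0$: on $\paN\Sigma$ the pullback of $*\omega$ vanishes by the Neumann condition $i_\nu\omega|_{\paN\Sigma}=0$; on $\pa D_g(z_j,\varepsilon)$ the singular part $\omega\sim\tfrac{m_j}{2\pi}\,\d\theta$ gives $*\omega\sim-\tfrac{m_j}{2\pi}\,\d\log r$ with vanishing tangential pullback to the circle; and the bulk contribution is $\int f\,d^*\omega\,\d v_g=0$. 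The same kind of care makes $\int_\Sigma X_g\,d^*\omega\,\d v_g$ well-defined as a distribution pairing and equal to zero for admissible $\omega$, consistent with its appearance in the density formula.
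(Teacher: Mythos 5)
Your bijectivity argument is fine (the paper simply exhibits the inverse map, which is equivalent but terser), and the basepoint independence is handled correctly. The gap is in the representative-independence part, and it is a conceptual one: you implicitly assume $d^*\omega=0$ for admissible $\omega$, which \cref{admissible} does \emph{not} guarantee. Admissibility only requires $d{*}\omega=0$ \emph{near} $\zbf$; globally, an admissible representative may well have $d^*\omega\neq0$. As a consequence, the cross term in $\int_\Sigma^\reg|\omega+df|_g^2\,\d v_g$ is $2\int_\Sigma\la\omega,df\ra_g\,\d v_g=2\int_\Sigma f\,d^*\omega\,\d v_g$, which does not vanish, your rewriting $d^*(\omega+df)=\Delta_gf$ is not the right identity, and the closing claim that $\int_\Sigma X_g\,d^*\omega\,\d v_g=0$ for admissible $\omega$ is false. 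That term is in the density precisely \emph{because} $d^*\omega$ need not vanish: it is the compensating factor that makes the pushforward representative-independent.

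The correct accounting, which the paper carries out, keeps $d^*\omega$ general. After shifting $X_g$ by $h=f-\bar f$ (or $h=2\pi R(f-\bar f)$ in your normalization) and applying Cameron--Martin, the surviving term linear in $\omega$ is $-R\int_\Sigma\la\omega,df\ra_g\,\d v_g+R\int_\Sigma h\,d^*\omega\,\d v_g$, and this vanishes because $\int_\Sigma\la\omega,df\ra_g\,\d v_g=\int_\Sigma f\,d^*\omega\,\d v_g=\int_\Sigma(f-\bar f)\,d^*\omega\,\d v_g$, using Stokes (boundary contributions at $\paN\Sigma$ vanish by $i_\nu\omega=0$, those at $\zbf$ by the local structure of admissible $\omega$) together with $\int_\Sigma d^*\omega\,\d v_g=0$. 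Your calculation with $d^*\omega=0$ is a special case. It \emph{can} be turned into a complete proof by always comparing an arbitrary admissible representative against the unique almost harmonic one supplied by \cref{harmonic}~(1) and appealing to transitivity; but you do not say this, and the way the argument is written --- especially the technical paragraph at the end --- claims properties of admissible $\omega$ that do not hold.
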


Here the notation $X_g$ alludes to the GFF and emphasizes the dependence on the metric $g$. From the measure-theoretic viewpoint, it denotes a deterministic element in the space $\Dc'_0(\Sigma,\Rb)$, the measure on which depends on $g$. From the probabilistic viewpoint, it can be thought of as a random distribution sampled from the GFF with respect to $g$.

\begin{proof}
The inverse map is $[(Y,\psi)]\mapsto\textstyle(\psi(x_0)+\ol{\frac{1}{\vol_g\Sigma}\int_\Sigma f\,\d v_g},Y+f-\frac{1}{\vol_g\Sigma}\int_\Sigma f\,\d v_g,[\omega])$, where $\omega$ is the chosen representative of $[\frac{1}{2\pi R}d\psi]\in H^1(\Sigma\setminus\zbf)$, $f=I_{x_0}\big(\frac{1}{2\pi R}d\psi-\omega\big)\in\Dc(\Sigma,\Rb)$. Continuity is trivial.

Changing $x_0$ amounts to a translation in $c$, i.e., $[(X_g,c+I_{x_0'}(2\pi R\omega)]=[(X_g,(c-\ol{\int_{x_0}^{x_0'}2\pi R\omega})+I_{x_0}(2\pi R\omega))]$. Suppose we change a representative $\omega$ to $\omega+\frac{1}{2\pi R}df$. By admissibility, $f\in\Dc(\Sigma,\Rb)$. We have $[(X_g,c+I_{x_0}(2\pi R\omega+df))]=[(X_g+(f-\frac{1}{\vol_g\Sigma}\int_\Sigma f\,\d v_g),(c+\frac{1}{\vol_g\Sigma}\int_\Sigma f\,\d v_g-f(x_0))+I_{x_0}(2\pi R\omega))]$, so this amounts to a translation in $c$ and the translation by $h=f-\frac{1}{\vol_g\Sigma}\int_\Sigma f\,\d v_g$ on the GFF $X_g$. By the Cameron--Martin theorem,
\[\d(X_g+h)=\exp\left(-\frac{1}{4\pi}\int_\Sigma|dh|_g^2\,\d v_g-\frac{1}{2\pi}\int_\Sigma X_g\,\Delta_gh\,\d v_g\right)\d X_g.\]
Now
\begin{align*}
\left(-\pi R^2\int_\Sigma^\reg|\omega+\tfrac{1}{2\pi R}df|_g^2\,\d v_g-R\int_{\Sigma}X_g\,d^*(\omega+\tfrac{1}{2\pi R}df)\,\d v_g\right)-\left(-\pi R^2\int_\Sigma^\reg|\omega|_g^2\,\d v_g-R\int_{\Sigma}(X_g+h)\,d^*\omega\,\d v_g\right)\\
=-\frac{1}{4\pi}\int_\Sigma|df|_g^2\,\d v_g-\frac{1}{2\pi}\int_\Sigma X_g\,\Delta_gf\,\d v_g,
\end{align*}
so the factors cancel out, as expected.
\end{proof}

\begin{rem*}
While it is necessary to put the Lebesgue measure on $\TR$ for translation invariance, one can put any measure on $H^1(\Sigma\setminus\zbf)$. In particular, putting a Dirac measure on $H^1(\Sigma\setminus\zbf)$ amounts to restricting the path integral to fields in a given homotopy class. The resulting theories are well-defined but generally not diffeomorphism invariant in the sense of Quantum Field Theory.
\end{rem*}

For $\mbf\in\Zb^\s$, let $C_\mbf^\N(\Sigma\setminus\zbf,\TR)$ be the subspace of maps $\phi\in C^\N(\Sigma\setminus\zbf,\TR)$ with winding number $m_i$ around $z_i$, i.e., $\phi_*[\pc_i]=m_i$, where $\phi_*:H_1(\Sigma\setminus\zbf)\to H_1(\TR)\cong\Zb$ is the homomorphism on the first homology induced by $\phi$, and $\pc_i$ is the boundary of a small disk centered at $z_i$ (as in \cref{cohomologies}). We define
\[\Dc'_\mbf(\Sigma\setminus\zbf,\TR)=\Dc_0'(\Sigma,\Rb)\times C_\mbf^\N(\Sigma\setminus\zbf,\TR)/\!\sim\,\subset\Dc'(\Sigma\setminus\zbf,\TR).\]
The homeomorphism above is compatible with the subscript $_\mbf$ in the sense that it restricts to a homeomorphism $\TR\times\Dc_0'(\Sigma,\Rb)\times H_\mbf^1(\Sigma\setminus\zbf)\xrightarrow{\sim}\Dc_\mbf'^\N(\Sigma\setminus\zbf,\TR)$, and the measure on $\Dc'^\N(\Sigma\setminus\zbf,\TR)$ restricts to a measure on $\Dc_\mbf'^\N(\Sigma\setminus\zbf,\TR)$, denoted in the same way.

\begin{prop}[\textbf{Conformal covariance}]\label{measure-conf1}
For $\rho\in C^\infty(\Sigma,\Rb)$ Neumann extendible, we have on $\Dc_\mbf'^\N(\Sigma\setminus\zbf,\TR)$
\[\d\phi_{e^\rho g}=\exp\left(\frac{1}{96\pi}\int_\Sigma(|d\rho|_g^2+2K_g\rho)\,\d v_g-\frac{R^2}{4}\sum_{i=1}^\s m_i^2\rho(z_i)\right)\d\phi_g.\]
\end{prop}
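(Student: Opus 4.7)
My plan is to express both measures in a common parameterization and reduce the claim to matching three weight factors. Both $\d\phi_g$ and $\d\phi_{e^\rho g}$ are pushforwards of measures on $\TR\times\Dc_0'(\Sigma,\Rb)\times H_\mbf^1(\Sigma\setminus\zbf)$, but the splitting $\Dc'(\Sigma,\Rb)=\Rb\oplus\Dc_0'(\Sigma,\Rb)$ depends on which metric imposes the mean-zero condition. Set $A(X)=\frac{1}{\vol_{e^\rho g}\Sigma}\int_\Sigma X\,\d v_{e^\rho g}$ and consider the affine bijection $(c,X)\mapsto(c+A(X),X-A(X))$, which sends $\TR\times\Dc_{0,g}'(\Sigma,\Rb)$ onto $\TR\times\Dc_{0,e^\rho g}'(\Sigma,\Rb)$ while preserving the sum $c+X$, and hence induces the identity on $\Dc'^\N(\Sigma\setminus\zbf,\TR)$ once the same admissible representative $\omega$ is chosen for both parameterizations. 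I first check that this change of variables is measure preserving: conditional on $X$ the shift $c\mapsto c+A(X)$ preserves Lebesgue on $\TR$, so $c+A(X)$ is Lebesgue and independent of $X-A(X)$; and the marginal law of $X-A(X)$ is precisely $\d X_{e^\rho g}$ by the conformal covariance of the GFF recalled in Section \ref{GFF}.

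With this change of variables in hand, I compare the three weight factors in Proposition \ref{measure1}. The linear term $\exp(-R\int_\Sigma X\,d^*\omega\,\d v_g)$ is unchanged: in two dimensions $d^*\omega\,\d v_g=-d{*}\omega$ is metric independent (since $*$ on $1$-forms is conformally invariant), and for admissible $\omega$ Stokes together with $i_\nu\omega|_{\paN\Sigma}=0$ gives $\int_\Sigma d^*\omega\,\d v_g=-\int_{\paN\Sigma}{*}\omega=0$, so replacing $X$ by $X-A(X)$ contributes nothing. The quadratic term picks up the anomaly
\[\exp\Bigl(-\pi R^2\int_\Sigma^\reg|\omega|_{e^\rho g}^2\,\d v_{e^\rho g}\Bigr)=\exp\Bigl(-\pi R^2\int_\Sigma^\reg|\omega|_g^2\,\d v_g\Bigr)\exp\Bigl(-\frac{R^2}{4}\sum_{i=1}^\s m_i^2\rho(z_i)\Bigr)\]
by the regularization formula recalled in Section \ref{cohomologies}, producing the second summand of the stated Radon--Nikodym derivative. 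Finally, the determinantal prefactor yields the Weyl anomaly via the Polyakov formula for the Neumann Laplacian on $(\Sigma,g)$,
\[\sqrt{\frac{\vol_{e^\rho g}\Sigma\,\det'\Delta_g}{\vol_g\Sigma\,\det'\Delta_{e^\rho g}}}=\exp\Bigl(\frac{1}{96\pi}\int_\Sigma(|d\rho|_g^2+2K_g\rho)\,\d v_g\Bigr).\]
Multiplying the three contributions gives the desired identity.

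I expect the main obstacle to be the measure-preserving property in the first paragraph. The zero mode and the mean-zero piece of a field live in metric-dependent subspaces of $\Dc'(\Sigma,\Rb)$, and one must verify carefully that the affine bridging map generates no hidden Jacobian between the two product measures $\d c\,\d X_g$ and $\d c'\,\d X_{e^\rho g}$. The remaining ingredients (the Polyakov formula, the GFF conformal covariance, and the regularization anomaly for the $\omega$-norm) have all been recalled in earlier sections and are invoked here as black boxes.
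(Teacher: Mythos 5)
Your proposal is correct and follows the same route as the paper: pull back $\d\phi_{e^\rho g}$ to the $g$-parameterization via the affine bijection $(c,X,[\omega])\mapsto(c+\ol{A(X)},X-A(X),[\omega])$, use $X_{e^\rho g}\law X_g-A(X_g)$ together with translation invariance of Lebesgue measure on $\TR$ to see that the base measure $\d c\,\d X\,\d[\omega]$ is preserved, then read off the Radon--Nikodym derivative from the change of the weight factors via the Polyakov formula and the regularization anomaly for $\int_\Sigma^\reg|\omega|_g^2\,\d v_g$. You go slightly further than the paper's proof by explicitly verifying that the linear weight $\exp(-R\int_\Sigma X\,d^*\omega\,\d v_g)$ is unchanged, using conformal invariance of $d{*}\omega$ and $\int_\Sigma d{*}\omega=0$ for admissible $\omega$; the paper leaves this implicit, so this is a useful clarification rather than a genuine deviation.

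The concern you flag about a hidden Jacobian in the bridging map is not actually an obstacle: since $A$ is a deterministic linear functional of $X$, the law of $c+\ol{A(X)}$ conditional on $X$ is Lebesgue independently of $X$, so the image measure factors cleanly as $\d c\,\d X_{e^\rho g}$ with no extra density, exactly as the paper asserts.
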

\begin{proof}
Fix $x_0,\omega$ as in \cref{measure1}. For a Neumann extendible metric $g$ on $\Sigma$, let $\Phi_g:\TR\times\Dc'_{0,g}(\Sigma,\Rb)\times H^1(\Sigma\setminus\zbf)\xrightarrow{\sim}\Dc'^\N(\Sigma\setminus\zbf,\TR)$ be the associated homeomorphism with respect to $g$, where the subscript $_g$ in $\Dc'_{0,g}(\Sigma,\Rb)$ denotes dependence on $g$. Then $\Phi_{e^\rho g}^{-1}\circ\Phi_g:(c,X_g,[\omega])\mapsto(c+\ol{\frac{1}{\vol_{e^\rho g}\Sigma}\int_\Sigma X_g\,\d v_{e^\rho g}},X_g-\frac{1}{\vol_{e^\rho g}\Sigma}\int_\Sigma X_g\,\d v_{e^\rho g},[\omega])$. Recall that $X_{e^\rho g}\law X_g-\frac{1}{\vol_{e^\rho g}\Sigma}\int_\Sigma X_g\,\d v_{e^\rho g}$. Thus $\d c\,\d X_{e^\rho g}\,\d[\omega]=\d c\,\d X_g\,\d[\omega]$. The conformal covariance of $\int_\Sigma^\reg|\omega|_g^2\,\d v_g$ gives the second factor. By the Polyakov formula (\cite[Proposition 5.10]{BLCFT}),
\[\frac{\det'\Delta_{e^\rho g}}{\vol_{e^\rho g}\Sigma}=\frac{\det'\Delta_{g}}{\vol_g\Sigma}\,\exp\left(-\frac{1}{48\pi}\int_\Sigma\big(|d\rho|_g^2+2K_g\rho\big)\,\d v_g\right),\]
which gives the first factor.
\end{proof}

\begin{rem*}
Without the assumption that $g$ is Neumann extendible (so that $\pa\Sigma=\paN\Sigma$ is geodesic), one should multiply the measure on $\Dc'^\N(\Sigma\setminus\zbf,\TR)$ by $e^{\frac{1}{8\pi}\int_{\pa\Sigma}k_g\,\d\ell_g}$ to have the correct Polyakov formula.
\end{rem*}

\subsection{The space $\Dc'^{\M,\bt^\kbf}(\Sigma\setminus\zbf,\TR)$ ($\paD\Sigma\neq\varnothing$)}\label{section:measure-space3}

Let $\Sigma$ be an extended surface with $\paD\Sigma\neq\varnothing$ and $g$ a Neumann extendible conformal metric on $\Sigma$. For $\bt^\kbf\in\Dc'(\paS,\TR)$ \blue{(recall \cref{section:measure-space1})}, we would like to define the space $\Dc'^{\M,\bt^\kbf}(\Sigma\setminus\zbf,\TR)$ of $\TR$-valued distributions $\phi$ on $\Sigma\setminus\zbf$ satisfying the mixed boundary condition $\pa_\nu\phi|_{\paN\Sigma}=0$, $\zetabf^*\phi|_{\paD\Sigma}=\bt^\kbf$. The definition is more intricate due to two things:

\begin{itemize}
    \item It is not obvious how to define the boundary value of a distribution. Moreover, the boundary value $\bt^\kbf$ here is highly singular (not even in $L_\loc^1$).
    \item The solution to the previous problem in the real theory was to define the field as the Dirichlet GFF plus the harmonic extension of the boundary value. However, in our case, the field takes values in $\TR$, and harmonic extensions (in the sense of harmonic maps) are not unique. This nonuniqueness manifests itself as the cohomology $H_\D^1(\Sigma)$ (recall \cref{cohomologies}) below.
\end{itemize}

Our strategy is to separate the singular part and the topological part in the boundary value $\bt^\kbf$. First let
\[C^{\M,\const}(\Sigma\setminus\zbf,\TR)=\{\phi\in C^\infty(\Sigma\setminus\zbf,\TR):\tfrac{1}{2\pi R}d\phi\text{ admissible}\}.\]
This is formally the same as $C^\N(\Sigma\setminus\zbf,\TR)$ for the case $\paD\Sigma=\varnothing$. Here the admissibility of $d\phi$ implies:
\begin{itemize}
\item $\pa_\nu\phi|_{\paN\Sigma}=i_\nu d\phi|_{\paN\Sigma}=0$, as before.
\item In a neighborhood of each $c_i^\D$, $(\zeta_i^\D)^*\phi-\ol{k_i R\theta}$ is constant, where $k_i=\phi_*[c_i^\D]$ is the winding number of $\phi$ on $c_i^\D$.
\item In a neighborhood of each $c_i^\MD$, $(\zeta_i^\MD)^*\phi$ is constant.
\end{itemize}
We equip this space with the $C_\loc^\infty$ topology on $\Sigma\setminus\zbf$. Let
\[C^\M(\Sigma\setminus\zbf,\TR)=\{\phi\in C^\infty(\Sigma\setminus(\zbf\cup\paD\Sigma),\TR):\zetabf^*\phi|_{\paD\Sigma}\text{ exists},\phi-\ol{P_{\Sigma}\Pi\zetabf^*\phi|_{\paD\Sigma}}\in C^{\M,\const}(\Sigma\setminus\zbf,\TR)\},\]
where the trace $\zetabf^*\phi|_{\paD\Sigma}$ is in the weak sense (see \cref{Poisson}), $\Pi$ is the projection $\bt^\kbf=\cbf^\kbf+\varphibf\mapsto\varphibf$ on $\Dc'(\paS,\TR)$. Note that the maps in $C^\M(\Sigma\setminus\zbf,\TR)$ are generally singular at $\paD\Sigma$. By definition, we have a bijection $\Dc_0'(\paS,\Rb)\times C^{\M,\const}(\Sigma\setminus\zbf,\TR)\xrightarrow{\sim} C^\M(\Sigma\setminus\zbf,\TR),(\varphibf,\psi)\mapsto\ol{P_\Sigma\varphibf}+\psi$. We equip $C^\M(\Sigma\setminus\zbf,\TR)$ with the pushforward of the product topology, so that the trace map $C^\M(\Sigma\setminus\zbf,\TR)\to\Dc'(\paS,\TR),\phi\mapsto\zetabf^*\phi|_{\paD\Sigma}$ is a continuous surjection. For $\bt^\kbf\in\Dc'(\paS,\TR)$, let
\begin{align*}
C^{\M,\bt^\kbf}(\Sigma\setminus\zbf,\TR)&=\{\phi\in C^\M(\Sigma\setminus\zbf,\TR):\zetabf^*\phi|_{\paD\Sigma}=\bt^\kbf\}\\
&=\{\ol{P_\Sigma\varphibf}+\psi:\psi\in C^{\M,\const}(\Sigma\setminus\zbf,\TR),\zetabf^*\psi|_{\paD\Sigma}=\cbf^\kbf\}\\
&=\ol{P_\Sigma\varphibf}+C^{\M,\cbf^\kbf}(\Sigma\setminus\zbf,\TR),
\end{align*}
which is a closed subspace of $C^\M(\Sigma\setminus\zbf,\TR)$. Note that $C^{\M,\const}(\Sigma\setminus\zbf,\TR)=\bigsqcup_{\cbf^\kbf}C^{\M,\cbf^\kbf}(\Sigma\setminus\zbf,\TR)$, hence the notation. Finally, we define
\[\Dc'^\M(\Sigma\setminus\zbf,\TR)=\Dc'(\Sigma,\Rb)\times C^\M(\Sigma\setminus\zbf,\TR)/\!\sim\]
where $(Y_1,\psi_1)\sim(Y_2,\psi_2)$ if $Y_1-Y_2\in C(\Sigma,\Rb)$ with $(Y_1-Y_2)|_{\paD\Sigma}=0$ and $\ol{Y_1-Y_2}=\psi_2-\psi_1$. This is the quotient by the closed subspace $\{(Y,-\ol{Y}):Y\in C^{\M,\mathbf{0}}(\Sigma,\Rb)\}$, so it is Hausdorff. For $\bt^\kbf\in\Dc'(\paS,\TR)$, let
\begin{align*}
\Dc'^{\M,\bt^\kbf}(\Sigma\setminus\zbf,\TR)&=\Dc'(\Sigma,\Rb)\times C^{\M,\bt^\kbf}(\Sigma\setminus\zbf,\TR)/\!\sim\\
&=\{[(Y,\psi)]\in\Dc'^\M(\Sigma\setminus\zbf,\TR):\zetabf^*\psi|_{\paD\Sigma}=\bt^\kbf\},
\end{align*}
which is a closed subspace of $\Dc'^\M(\Sigma\setminus\zbf,\TR)$.

Unlike the previous case, $H^1(\Sigma\setminus\zbf)$ is insufficient to determine the homotopy type of a map in $C^{\M,\bt^\kbf}(\Sigma\setminus\zbf,\TR)$, i.e., two maps in $C^{\M,\bt^\kbf}(\Sigma\setminus\zbf,\TR)$ that are homotopic in $C(\Sigma\setminus\zbf,\TR)$ are not necessarily homotopic in $C^{\M,\bt^\kbf}(\Sigma\setminus\zbf,\TR)$. Indeed, the kernel of the natural homomorphism $\pi_0(C^{\M,\mathbf{0}}(\Sigma\setminus\zbf,\TR))\to\pi_0(C(\Sigma\setminus\zbf,\TR))$ is isomorphic to $H_\D^1(\Sigma\setminus\zbf)$, so there is a (noncanonical) bijection $\pi_0(C^{\M,\bt^\kbf}(\Sigma\setminus\zbf,\TR))\cong H_\kbf^1(\Sigma\setminus\zbf)\times H_\D^1(\Sigma)$, which motivates the following decomposition of $\Dc'^{\M,\bt^\kbf}(\Sigma\setminus\zbf,\TR)$:

\begin{prop}\label{measure2}
Let $\bt^\kbf\in\Dc'(\paS,\TR)$. We make the following choices:
\begin{itemize}
\item Fix $x_0\in\Sigma\setminus\zbf$.
\item Take $c_0=\cbf^\kbf(\zetabf^{-1}(x_0))\in\TR$ if $x_0\in\paD\Sigma$, otherwise take any $c_0\in\TR$.
\item Choose an admissible representative $\omega$ in each cohomology class $[\omega]\in H_\kbf^1(\Sigma\setminus\zbf)$ such that $\zetabf^*I_{x_0}(2\pi R\omega)|_{\paD\Sigma}+c_0=\cbf^\kbf$.
\item Choose an admissible primitive $f$ for each cohomology class $[df]\in H_\D^1(\Sigma)$.
\end{itemize}
We have a homeomorphism
\[\setlength\arraycolsep{1pt}
\begin{array}[t]{ccccccc}
\Dc'(\Sigma,\Rb) & \times & H_\D^1(\Sigma) & \times & H^1_\kbf(\Sigma\setminus\zbf) & \to & \Dc'^{\M,\bt^\kbf}(\Sigma\setminus\zbf,\TR)\\
X_g&&[df]&&[\omega]&\mapsto&[(X_g,\ol{P_\Sigma\varphibf}+\ol{2\pi Rf}+I_{x_0}(2\pi R\omega)+c_0)]
\end{array}\]
On the left-hand side, consider the measure
\[\frac{e^{\frac{1}{8\pi}\int_{\paD\Sigma}k_g\,\d\ell_g}}{\sqrt{\det'\Delta_{g}}}\,e^{-\frac{1}{4\pi}\la\varphibf,(\Dbf_\Sigma-\Dbf)\varphibf\ra}\exp\left(-\pi R^2\int_\Sigma^\reg|\omega+df|_g^2\,\d v_g-R\int_\Sigma(X_g+P_\Sigma\varphibf)\,d^*(\omega+df)\,\d v_g\right)\d X_g\,\d[df]\,\d[\omega],\]
where $\d[df]$, $\d[\omega]$ are the counting measures on $H_\D^1(\Sigma)$, $H^1(\Sigma\setminus\zbf)$, respectively. Pushing forward this measure defines a measure on $\Dc'^{\M,\bt^\kbf}(\Sigma\setminus\zbf,\TR)$ that does not depend on the choices of $x_0,c_0,\omega,f$. We denote it by $\d\phi_g$ where $\phi_g=[(X_g,\ol{P_\Sigma\varphibf}+\ol{2\pi Rf}+I_{x_0}(2\pi R\omega)+c_0)]$.
\end{prop}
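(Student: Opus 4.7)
The plan is to follow the approach of \cref{measure1}, with the extra bookkeeping needed for the singular Dirichlet trace and the additional $H_\D^1(\Sigma)$-freedom in the harmonic lift. To construct the inverse map, given $[(Y,\phi)]\in\Dc'^{\M,\bt^\kbf}(\Sigma\setminus\zbf,\TR)$, I would set $\varphibf=\Pi\,\zetabf^*\phi|_{\paD\Sigma}$ and $\psi=\phi-\ol{P_\Sigma\varphibf}\in C^{\M,\cbf^\kbf}(\Sigma\setminus\zbf,\TR)$. Then $[\tfrac{1}{2\pi R}d\psi]\in H_\kbf^1(\Sigma\setminus\zbf)$ has a chosen representative $\omega$, and the normalization $\zetabf^*I_{x_0}(2\pi R\omega)|_{\paD\Sigma}+c_0=\cbf^\kbf$ forces $\psi-I_{x_0}(2\pi R\omega)-c_0$ to be single-valued on $\Sigma\setminus\zbf$ and to vanish in $\TR$ along $\paD\Sigma$; it therefore lifts uniquely to $\ol{2\pi R F_0}$ for some admissible $F_0$ (its harmonicity near $\zbf$ inherited from that of $\omega$ and $\tfrac{1}{2\pi R}d\psi$). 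Decomposing $F_0=f+h_0$ where $f$ is the chosen primitive of $[dF_0]\in H_\D^1(\Sigma)$ and $h_0\in C^{\M,\mathbf{0}}(\Sigma,\Rb)$, and setting $X_g=Y+2\pi R h_0$, gives the triple $(X_g,[df],[\omega])$; bicontinuity is routine.

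Next, I would verify invariance of the pushforward measure under the discrete choices. Changing the primitive $f$ to another admissible primitive of $[df]$ differs by an integer constant, which projects to $0\in\TR$. Changing $x_0$, with the accompanying $c_0$, amounts to a translation in the zero-mode $\TR$-coordinate absorbed by the $\TR$-part of $\d X_g$. The genuine check is $\omega\mapsto\omega+dh$: preservation of admissibility together with $[\omega]\in H_\kbf^1$ and $[df]\in H_\D^1(\Sigma)$ forces $h$ to be smooth, Neumann extendible, harmonic near $\zbf$, and locally constant near $\paD\Sigma$ with a common value $a=h|_{\paD\Sigma}$ on all boundary components (a non-uniform value would alter $[df]$). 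The constraint then forces $c_0\mapsto c_0+\ol{2\pi R(h(x_0)-a)}$, and the equivalence relation in $\Dc'^\M$ allows the compensating shift $X_g\mapsto X_g-2\pi R(h-a)$ since $(h-a)|_{\paD\Sigma}=0$.

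For the substantive Cameron--Martin cancellation, \cref{CM} applied to $X_g\mapsto X_g-2\pi R(h-a)$ produces the density factor
\[\exp\!\left(-\pi R^2\int_\Sigma|dh|_g^2\,\d v_g+R\int_\Sigma X_g\,\Delta_g h\,\d v_g\right),\]
using $\Delta_g a=0$. Expanding the explicit exponentials on the left-hand side of \cref{measure2} under $\omega\mapsto\omega+dh$ and integrating by parts the cross-term $\int_\Sigma\la\omega+df,dh\ra_g\,\d v_g$ (whose boundary contributions vanish because $i_\nu(\omega+df)|_{\pa\Sigma}=0$ by admissibility and $\pa_\nu h|_{\pa\Sigma}=0$ by Neumann extendibility on $\paN\Sigma$ and local constancy near $\paD\Sigma$), the obvious cancellation with the Cameron--Martin factor leaves two residual terms: $-R\int_\Sigma P_\Sigma\varphibf\cdot\Delta_g h\,\d v_g$ and $2\pi R^2 a\int_\Sigma d^*(\omega+df)\,\d v_g$. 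The latter equals $-2\pi R^2 a\int_{\pa\Sigma}i_\nu(\omega+df)\,\d\ell_g=0$; the former, by Green's identity applied to the pair $(P_\Sigma\varphibf,h-a)$ and the vanishing of $\pa_\nu h$ on $\pa\Sigma$, reduces to $Ra\la 1,\Dbf_\Sigma\varphibf\ra=Ra\la\Dbf_\Sigma 1,\varphibf\ra=0$ since $\Dbf_\Sigma 1=0$ (the constant function is its own harmonic extension). The Polyakov-like prefactor depends only on $g$ and the $\la\varphibf,(\Dbf_\Sigma-\Dbf)\varphibf\ra$-term depends only on $\varphibf$, so both are manifestly unaffected.

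The main obstacle will be the careful integration-by-parts bookkeeping in the third paragraph, where one must justify the pairing $\la P_\Sigma\varphibf,\Delta_g(h-a)\ra=\la\varphibf,\zetabf^*\pa_\nu h|_{\paD\Sigma}\ra$ in the weak sense since $P_\Sigma\varphibf$ is only a distribution at $\paD\Sigma$, and simultaneously handle the regularization of $\int_\Sigma^\reg|\omega+df|_g^2\,\d v_g$ at $\zbf$; both are handled by the weak-trace formalism of \cref{Poisson} and the regularization introduced after \cref{cohomologies}, and together constitute the essential new ingredients beyond the proof of \cref{measure1}.
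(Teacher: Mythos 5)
Your overall approach matches the paper's: construct the inverse map by reading off $\varphibf$, $[\omega]$, $[df]$ from the Liouville field and shifting $X_g$ by the residual, then verify independence of choices via a Cameron--Martin computation. The paper's own proof is terse and simply records four "formulas for changing the choices," referring back to \cref{measure1} for the Cameron--Martin cancellation; you spell out that cancellation and, as far as I can check, get the algebra right, including the integration-by-parts boundary terms and the observation $\Dbf_\Sigma 1 = 0$.

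There is, however, one concrete error in your case analysis of the second paragraph. You assert that two admissible primitives of a fixed class $[df]\in H_\D^1(\Sigma)$ differ only by an integer constant. This is false: if $[df_1]=[df_2]$ then $df_1-df_2\in\{dh':\pa_\nu h'|_{\paN\Sigma}=0,\,h'|_{\paD\Sigma}=0\}$, so $f_1-f_2=h'+c$ with $h'$ a generally \emph{nonconstant} function vanishing on $\paD\Sigma$ and $c\in\Zb$. Only the $c$ part projects to $0$ in $\TR$; the $h'$ part forces a genuine Cameron--Martin shift $X_g\mapsto X_g+2\pi Rh'$ (this is exactly the paper's fourth bullet). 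Your own inverse-map construction already acknowledges this, since you decompose $F_0=f+h_0$ with $h_0\in C^{\M,\mathbf{0}}(\Sigma,\Rb)$ a nonconstant function and absorb $h_0$ into $X_g$ — so the two paragraphs are internally inconsistent. Fortunately the Cameron--Martin computation you perform in the third paragraph for the shift by $2\pi R(h-a)$, where $(h-a)|_{\paD\Sigma}=0$, is precisely the one needed here, so the fix costs nothing new. Relatedly, your treatment of $\omega\mapsto\omega+dh$ assumes $h|_{\paD\Sigma}$ is a single common value $a$; when $h|_{\paD\Sigma}$ takes distinct integer values on different components (still consistent with the constraint $\zetabf^*I_{x_0}(2\pi R\omega)|_{\paD\Sigma}+c_0=\cbf^\kbf$), you dismiss this by saying it "would alter $[df]$" — correct, but you should then actually record that the change is a bijective relabeling of $H_\D^1(\Sigma)$ (the paper's third bullet), which preserves the counting measure $\d[df]$, before appealing to the Cameron--Martin shift for the remaining part. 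Finally, the phrase "absorbed by the $\TR$-part of $\d X_g$" is misleading: when $\paD\Sigma\neq\varnothing$ there is no zero mode in $\d X_g$; the $(x_0,c_0)$ change is simply trivial on the level of the map and constraint (paper's first bullet), with the pure $c_0$ change for $x_0\notin\paD\Sigma$ being another Cameron--Martin shift (paper's second bullet).
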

\begin{proof}
The inverse map is $[(Y,\ol{P_\Sigma\varphibf}+\psi)]\mapsto([Y+h],[df],[\omega])$, where $\omega$ is the chosen representative of $[\frac{1}{2\pi R}d\psi]\in H_\kbf^1(\Sigma\setminus\zbf)$, $f$ is the chosen primitive for $[\frac{1}{2\pi R}d\psi-\omega]\in H_\D^1(\Sigma)$, $h=I_{x_0}\big(\frac{1}{2\pi R}d\psi-\omega\big)-c_0-f\in\Dc(\Sigma,\Rb)$. Continuity is trivial.

The rest of the proof is similar to \cref{measure1}. We record the formulas for changing the choices:
\begin{itemize}
\item Changing $x_0$ to $x_0'$ amounts to changing $c_0$ to $c_0+\int_{x_0}^{x_0'}\omega$.
\item If $x_0\notin\paD\Sigma$, changing $c_0$ to $c_0'$ forces changing $\omega$ to $\omega+dh$ with $h\in\Dc(\Sigma,\Rb)$ admissible, $h(x_0)=2\pi R(c_0-c_0')$, which amounts to the translation $X_g\mapsto X_g+2\pi Rh$.
\item Changing $\omega$ to $\omega+dh$ with $h$ admissible, $h(x_0)=0$ amounts to the translation $[df]\mapsto[df]+[dh]$ on $H_\D^1(\Sigma)$.
\item Changing $f$ to $f+h$ with $h\in\Dc(\Sigma,\Rb)$ admissible amounts to the translation $X_g\mapsto X_g+2\pi Rh$.\qedhere
\end{itemize}
\end{proof}

\begin{rem*}
Similarly to the remark after \cref{measure1}, while it is necessary to to put the Lebesgue measure on $\TR$ and the counting measure on $H_\D^1(\Sigma)$ for translation invariance, one can put any measure on $H^1(\Sigma\setminus\zbf)$.
\end{rem*}

For $\mbf\in\Zb^\s$, we define the corresponding subspaces with the subscript $_\mbf$ as before, so that the homeomorphism above restricts to a homeomorphism $\Dc'(\Sigma,\Rb)\times H_\D^1(\Sigma)\times H_{\kbf,\mbf}^1(\Sigma\setminus\zbf)\xrightarrow{\sim}\Dc_{\mbf}'^{\M,\bt^\kbf}(\Sigma\setminus\zbf,\TR)$, and the measure on $\Dc'^{\M,\bt^\kbf}(\Sigma\setminus\zbf,\TR)$ restricts to a measure on $\Dc_{\mbf}'^{\M,\bt^\kbf}(\Sigma\setminus\zbf,\TR)$, denoted in the same way.

\begin{prop}[\textbf{Conformal covariance}]\label{measure-conf2}
For $\rho\in C^\infty(\Sigma,\Rb)$ Neumann extendible with $\rho|_{\paD\Sigma}=0$, we have on $\Dc_{\mbf}'^{\M,\bt^\kbf}(\Sigma\setminus\zbf,\TR)$
\[\d\phi_{e^\rho g}=\exp\left(\frac{1}{96\pi}\int_\Sigma(|d\rho|_g^2+2K_g\rho)\,\d v_g-\frac{R^2}{4}\sum_{i=1}^\s m_i^2\rho(z_i)\right)\d\phi_g.\]
\end{prop}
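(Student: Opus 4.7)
The plan is to fix the same choices of $x_0,c_0,\omega,f$ in \cref{measure2} for both metrics and compare the density on the left-hand side of the homeomorphism factor by factor. A first simplification compared with \cref{measure-conf1} is that in the Dirichlet case the homeomorphism itself is independent of the metric (there is no zero mode to shift), so only the density changes.

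Next I would isolate the conformally invariant pieces. The law $\d X_g$ of the Dirichlet GFF is conformally invariant; the operators $\Dbf_\Sigma$, $\Dbf$ and the Poisson map $P_\Sigma$ depend only on the conformal class; and $\d[df]$, $\d[\omega]$ are topological counting measures. On a surface the Hodge star on $1$-forms is conformally invariant, hence for any function $Y$ and $1$-form $\eta$, $\int_\Sigma Y\,d^*\eta\,\d v_g=-\int_\Sigma Y\,d{*}\eta$ is independent of the conformal representative; applied to $Y=X_g+P_\Sigma\varphibf$ and $\eta=\omega+df$ this shows the linear source term is unchanged. Hence only two ingredients can contribute an anomaly.

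The first anomalous ingredient is the regularized $L^2$-norm of $\omega+df$. Since $f$ is smooth, $\omega+df$ has the same winding numbers $m_i=\int_{\pc_i}\omega$ around each $z_i$ as $\omega$, and the transformation rule for $\int_\Sigma^\reg|\cdot|_g^2\,\d v_g$ from \cite[Lemma 3.11]{CILT} gives
\[-\pi R^2\int_\Sigma^\reg|\omega+df|_{e^\rho g}^2\,\d v_{e^\rho g}=-\pi R^2\int_\Sigma^\reg|\omega+df|_g^2\,\d v_g-\frac{R^2}{4}\sum_{i=1}^\s m_i^2\rho(z_i),\]
which is precisely the insertion anomaly in the statement.

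The second, and main, ingredient is the prefactor $e^{\frac{1}{8\pi}\int_{\paD\Sigma}k_g\,\d\ell_g}/\sqrt{\det'\Delta_g}$, which should carry the bulk Weyl anomaly $\exp\!\bigl(\tfrac{1}{96\pi}\int_\Sigma(|d\rho|_g^2+2K_g\rho)\,\d v_g\bigr)$. To treat the mixed boundary Laplacian I would pass to the closed double: since $g$ and $\rho$ are Neumann extendible and $\rho|_{\paD\Sigma}=0$, the even extension $\rho^{\#2}$ vanishes on $\paD\Sigma^{\#2}$, and $\Delta_g$ corresponds to the even sector of the Dirichlet Laplacian $\Delta_{g^{\#2}}^\D$ on $\Sigma^{\#2}$. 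Applying the Polyakov formula for the Dirichlet Laplacian on the closed double (analogue of \cite[Proposition 5.10]{BLCFT}), halving via the $\Zb_2$-symmetry, and combining with the conformal transformation $k_{e^\rho g}\,\d\ell_{e^\rho g}=(k_g+\tfrac{1}{2}\pa_\nu\rho)\,\d\ell_g$ on $\paD\Sigma$ (together with $k_g=\pa_\nu\rho=0$ on $\paN\Sigma$), the $\pa_\nu\rho$ boundary contributions cancel between Polyakov's boundary correction and the change of the $k_g$ prefactor, leaving the desired bulk anomaly. The main technical obstacle is this last step: the mixed-boundary Polyakov formula is not directly in the standard references, the coefficients must match the intentionally chosen $\tfrac{1}{8\pi}$ factor of the $k_g$ prefactor so that the boundary cancellation is exact, and the doubling argument requires care in isolating the even sector of $\det'\Delta_{g^{\#2}}^\D$ from its odd counterpart.
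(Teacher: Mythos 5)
Your proposal is correct and follows essentially the same route as the paper, which disposes of this proposition in one line by reference to \cref{measure-conf1}, noting that $X_{e^\rho g}\law X_g$ in the Dirichlet case. You have simply unpacked what that reference means: the homeomorphism is metric-independent, $\d X_g$, $\la\varphibf,(\Dbf_\Sigma-\Dbf)\varphibf\ra$, the counting measures, and the source term $\int_\Sigma Y\,d^*(\omega+df)\,\d v_g=-\int_\Sigma Y\,d{*}(\omega+df)$ are conformally invariant; the insertion anomaly comes from the transformation of $\int_\Sigma^\reg|\omega+df|^2_g\,\d v_g$ via \cite[Lemma~3.11]{CILT}; and the bulk Weyl anomaly comes from the determinant prefactor via the Polyakov formula.

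The one place you add useful detail beyond the paper's proof is the role of the $e^{\frac{1}{8\pi}\int_{\paD\Sigma}k_g\,\d\ell_g}$ factor. You correctly identify that even with $\rho|_{\paD\Sigma}=0$ the normal derivative $\pa_\nu\rho$ need not vanish there, and that the $\pa_\nu\rho$ boundary term in the mixed-boundary Polyakov formula is designed to cancel against the variation of this prefactor — this is precisely the mechanism the remark after \cref{measure-conf1} alludes to and is what \cite[Proposition~5.10]{BLCFT} supplies. Two small points: with the paper's convention $k_{e^\rho g}=e^{-\rho/2}(k_g-\tfrac12\pa_\nu\rho)$, so $k_{e^\rho g}\,\d\ell_{e^\rho g}=(k_g-\tfrac12\pa_\nu\rho)\,\d\ell_g$ (opposite sign from what you wrote — likely just a normal-orientation convention difference, but worth pinning down before claiming cancellation); and the doubling step you sketch is indeed how the paper reduces to the boundaryless/cornerless case, so the worry about isolating the even sector of $\det'\Delta_{g^{\#2}}$ is the right technical concern, resolved in the cited reference.
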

\begin{proof}
This is similar to but simpler than \cref{measure-conf1}, since $X_{e^\rho g}\law X_g$ in this case.
\end{proof}

\begin{rem*}
Our definition is equivalent to that in \cite{CILT} in the following sense. Let $\Sigma$ be an extended surface with Dirichlet boundary only.
\begin{itemize}
    \item The topological part in \cite{CILT} is $H_{\mbf,\kbf}^1(\Sigma\setminus\zbf,\pa\Sigma)$, which is in bijection with our $H_{\mbf,\kbf}^1(\Sigma\setminus\zbf)\times H_\D^1(\Sigma)$ (recall that $H_\D^1(\Sigma)\cong H_\D^1(\Sigma\setminus\zbf)$ is the kernel of the natural homomorphism $H^1(\Sigma\setminus\zbf,\pa\Sigma)\to H^1(\Sigma\setminus\zbf)$). This bijection is not canonical. 
    \item In \cite{CILT}, the zero mode of the boundary value $\bt^\kbf$ is in $\Rb$, and one has to check that the resulting path integral is invariant under translation by $2\pi R\Zb$. However, since we define the boundary value $\bt^\kbf\in\Dc'(\paS,\TR)$ to have zero mode in $\TR=\Rb/2\pi R\Zb$, this invariance is automatic for us.
\end{itemize}


\end{rem*}

\section{Gluing for the free field}\label{gluing-free}

In this section, we prove Segal's gluing axiom for the free field, which is really the crux of the matter. Before we state the result, let us explain the idea. We return to the discussion in \cref{introduction} on general path integrals (with values in a target space $M$). Focusing on the Dirichlet boundary and ignoring the parametrizations $\zetabf$, the path integral is
\[\int_{\substack{\phi:\Sigma\to M\\\phi|_{\paD\Sigma}=\bt}}F(\phi)\,e^{-S(\phi)}\,\D\phi.\]
Suppose we cut $\Sigma$ into two surfaces $\Sigma_1$, $\Sigma_2$ along a simple curve $\Cc$. We consider $\Cc$ to be a Dirichlet boundary of both $\Sigma_1$ and $\Sigma_2$. Write $\bt=\bt_1\times\bt_2$ where $\bt_i=\bt|_{\paD\Sigma_i\setminus\Cc}$. Suppose the Lagrangian $S$ is \textit{local}, i.e.,
\[S(\phi)=S(\phi|_{\Sigma_1})+S(\phi|_{\Sigma_2}).\]
Then for a \textit{local} observable $F$ on $\Sigma$ of the form $F(\phi)=F_1(\phi|_{\Sigma_1})F_2(\phi|_{\Sigma_2})$ for some observables $F_i$ on $\Sigma_i$, it is natural to expect that
\[\int_{\substack{\phi:\Sigma\to M\\\phi|_{\paD\Sigma}=\bt}}F(\phi)\,e^{-S(\phi)}\,\D\phi=\int_{\wt{\varphi}:\Cc\to M}
\bigg(\int_{\substack{\phi_1:\Sigma_1\to M\\\phi_1|_{\paD\Sigma_1}=\wt{\varphi}\times\bt_1}}F_1(\phi_1)\,e^{-S(\phi_1)}\,\D\phi_1\bigg)
\bigg(\int_{\substack{\phi_2:\Sigma_2\to M\\\phi_2|_{\paD\Sigma_2}=\wt{\varphi}\times\bt_2}}F_2(\phi_2)\,e^{-S(\phi_2)}\,\D\phi_2\bigg)\D\wt{\varphi},\]
since both sides integrate over the same space of fields. We shall state this as an isomorphism of measure spaces
\[\bigsqcup_{\wt{\varphi}\in\Dc'(\Cc,M)}\Dc'^{\M,\wt{\varphi}\times\bt_1}(\Sigma_1,M)\times\Dc'^{\M,\wt{\varphi}\times\bt_2}(\Sigma_2,M)\xrightarrow{\sim}\Dc'^{\M,\bt}(\Sigma,M)\]
where we equip the left-hand side with an appropriate product measure.

Instead of cutting, we state the result for the inverse operation of gluing. In each of the four cases discussed in \cref{gluing-top}, one needs to distinguish between the cases $\paD\Sigma=\varnothing$ and $\paD\Sigma\neq\varnothing$ for the glued surface $\Sigma$, since the definitions of $\Dc'^\N(\Sigma\setminus\zbf,\TR)$ and $\Dc'^{\M,\bt}(\Sigma\setminus\zbf,\TR)$ are fundamentally different. This results in a total of eight cases. To avoid the tedium of stating a separate result for each case, we introduce the following conventions to unify the notation:

\begin{itemize}
    \item For an extended surface $\Sigma$ with $\paD\Sigma=\varnothing$, we define $\Dc'(\paS,\TR)=\{*\}$, $C^{\M,\const}(\Sigma\setminus\zbf,\TR)=C^\N(\Sigma\setminus\zbf,\TR)$, $\Dc'^{\M,*}(\Sigma\setminus\zbf,\TR)=\Dc'^\N(\Sigma\setminus\zbf,\TR)$, where $*$ denotes the empty tuple. This unifies the cases $\paD\Sigma=\varnothing$ and $\paD\Sigma\neq\varnothing$.
    \item We extend all previous definitions naturally to disjoint unions of extended surfaces. For example, if $\Sigma=\Sigma_1\sqcup\Sigma_2$, then $\Dc'^\M(\Sigma\setminus\zbf,\TR)=\Dc'^\M(\Sigma_1\setminus\zbf_1,\TR)\times\Dc'^\M(\Sigma_2\setminus\zbf_2,\TR)$. This unifies the cases of gluing two surfaces and self-gluing a surface.
    \item We allow arbitrary reordering of the parametrizations in $\zetabf$ and reindex them as $\zetabf=(\zeta_i)_{i=1}^{\bD+\bMD}$. This unifies the cases of gluing along a circle or semicircle.
\end{itemize}

Let $\Sigma$ be either an extended surface or a disjoint union of two extended surfaces with $\zeta_1$, $\zeta_2$ on different connected components. Suppose $\zeta_1$ is outgoing and $\zeta_2$ is incoming, and they both parametrize a circle or semicircle. Let $\Sigma^\#$ be the surface obtained by gluing the images of $\zeta_1$, $\zeta_2$ on $\Sigma$ via $\zeta_2\circ\zeta_1^{-1}$, which is naturally an extended surface with parametrizations $\zetabf_0=\zetabf^{\c\c}=(\zeta_i)_{i\geq3}$. We denote by $\Cc$ the common image of $\zeta_1$, $\zeta_2$ on $\Sigma^\#$ and $\zeta$ its induced parametrization. It is either a simple circle with $\Cc\subset(\Sigma^\#)^\circ$ or a Neumann extendible simple semicircle with $\Cc^\circ\subset(\Sigma^\#)^\circ$, $\pa\Cc\subset\paN\Sigma^\#$. We write $\zeta^*\Cc=\Tb$ or $\Tb_+$ accordingly.

We shall need the following interior Poisson and Dirichlet-to-Neumann operators. For $\wt{\varphi}\in\Dc'(\zeta^*\Cc,\Rb)$, we define $P_{\Sigma^\#,\Cc}\wt{\varphi}=P_\Sigma(\wt{\varphi}\times\wt{\varphi}\times\mathbf{0})$, $\Dbf_{\Sigma^\#,\Cc}\wt{\varphi}=\zeta_1^*\pa_\nu P_{\Sigma^\#,\Cc}\wt{\varphi}|_{\im\zeta_1}+\zeta_2^*\pa_\nu P_{\Sigma^\#,\Cc}\wt{\varphi}|_{\im\zeta_2}$. By integration by parts, for $\wt{\varphi}\in C^1(\zeta^*\Cc,\Rb)$, we have
\[\int_\Sigma|dP_{\Sigma^\#,\Cc}\wt{\varphi}|_g^2\,\d v_g=\int_{\zeta^*\Cc}\wt{\varphi}\,\Dbf_{\Sigma^\#,\Cc}\wt{\varphi}\,\d\theta,\]
where $g$ is any conformal metric on $\Sigma^\#$. Note that $P_{\Sigma^\#\Cc}1=1$ if and only if $\paD\Sigma^\#=\varnothing$, so
\[\ker\Dbf_{\Sigma^\#,\Cc}=\begin{cases}0,&\paD\Sigma^\#\neq\varnothing,\\\Rb,&\paD\Sigma^\#=\varnothing.\end{cases}\]
As in \cref{Poisson}, we have that $\Dbf_{\Sigma^\#,\Cc}-2\Dbf$ is smoothing.

For a functorial object $u$ on $\Sigma^\#$, we denote by $u|_\Sigma$ its pullback to $\Sigma$ via the quotient map $\Sigma\to\Sigma^\#$. Let $g$ be a Neumann extendible conformal metric on $\Sigma^\#$. We equip $\Sigma$ with the metric $g|_\Sigma$.

We are now ready to state the gluing theorem.

\begin{thm}[\textbf{Gluing}]\label{gluing}
For $\bt_0^{\kbf_0}\in\Dc'(\zetabf_0^*\paD\Sigma^\#,\TR)$, we have an isomorphism of measure spaces
\[\setlength\arraycolsep{1pt}
\begin{array}[t]{cccc}
\displaystyle\bigsqcup_{\wt{\varphi}^k\in\Dc'(\zeta^*\Cc,\TR)}  & \Dc'^{\M,\wt{\varphi}^k\times\wt{\varphi}^k\times\bt_0^{\kbf_0}}(\Sigma\setminus\zbf,\TR) & \xrightarrow{\sim} & \Dc'^{\M,\bt_0^{\kbf_0}}(\Sigma^\#\setminus\zbf,\TR)\\
&[(Y,\psi)]&\mapsto&[(Y+P_{\Sigma^\#,\Cc}(\varphi-h_\Cc),\psi-\ol{P_{\Sigma^\#,\Cc}(\varphi-h_\Cc)})]
\end{array}\]
where $h_\Cc=\zeta^*P_{\Sigma^\#}\varphibf_0|_\Cc$, and the measure on the left-hand side is $C\,\d\phi_{g|_\Sigma}\,\d\wt{\varphi}^k$ with
\[C=\begin{cases}(\sqrt{2}\pi)^{-1},&\paD\Sigma^\#\neq\varnothing,\;\Cc\,\textrm{circle},\\\sqrt{2},&\paD\Sigma^\#\neq\varnothing,\;\Cc\,\textrm{semicircle},\\
(2\pi)^{-3/4},&\paD\Sigma^\#=\varnothing,\;\Cc\,\textrm{circle},\\(2\pi)^{1/4},&\paD\Sigma^\#=\varnothing,\;\Cc\,\textrm{semicircle}.\end{cases}\]
\end{thm}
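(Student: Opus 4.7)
The strategy is to unpack both sides using the parametrizations of \cref{measure1,measure2} with compatible choices. On $\Sigma^\#$ fix a base point $x_0 \notin \Cc$, an admissible representative $\omega^\#$ for each class in $H^1_{\kbf_0}(\Sigma^\# \setminus \zbf)$, and an admissible primitive $f^\#$ for each class in $H_\D^1(\Sigma^\#)$. Pull these back to $\Sigma$ via the quotient $\Sigma \to \Sigma^\#$ and extend them using the bijections and surjective homomorphisms of \cref{gluing-top}: for each $(k,[\omega^\#])$ with $k \in \Zb$ (absent for a semicircle) choose a representative in the fiber class in $H^1_{k,k,\kbf_0}(\Sigma \setminus \zbf)$, while the $\Zb$-kernel of the $H_\D^1$ gluing map is absorbed into the $H_\D^1(\Sigma)$ sum. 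With these choices the topological sums on the two sides are in explicit bijection.

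Decomposing the gluing datum as $\wt{\varphi}^k = c + \varphi + \ol{kR\theta}$ (omitting $\ol{kR\theta}$ for a semicircle), the integer $k$ is swallowed by the cohomology bijection; the zero mode $c \in \TR$ provides a zero mode on $\Sigma^\#$ when $\paD\Sigma^\# = \varnothing$, or is absorbed by the $\Zb$-kernel of the $H_\D^1$ map otherwise. The shift $[(Y,\psi)] \mapsto [(Y + P_{\Sigma^\#,\Cc}(\varphi - h_\Cc), \psi - \ol{P_{\Sigma^\#,\Cc}(\varphi-h_\Cc)})]$ is engineered precisely to transfer the singularity of $\psi$ at $\Cc$ into $Y$, producing a legitimate representative of the RHS field. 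The algebraic identity to verify is then a direct expansion of the regularized $L^2$-norm,
\[
\int_{\Sigma^\#}^\reg |\omega^\# + df^\#|^2_g\, \d v_g = \int_\Sigma^\reg |\omega + df|^2_g\, \d v_g + 2\int_\Sigma \la \omega+df, dP_{\Sigma^\#,\Cc}(\varphi - h_\Cc)\ra_g\, \d v_g + \la \varphi-h_\Cc, \Dbf_{\Sigma^\#,\Cc}(\varphi-h_\Cc)\ra,
\]
together with a Schur-complement identity relating the Dirichlet-to-Neumann operators $\Dbf_\Sigma$, $\Dbf_{\Sigma^\#}$, and $\Dbf_{\Sigma^\#,\Cc}$, and the corresponding integration-by-parts identities for the $X_g\, d^*$ cross terms. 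Performing the Gaussian integration over $\varphi$ with the resulting quadratic form produces the expected factor on the RHS, modulo a Gaussian normalization associated with covariance $2\pi\,\Dbf_{\Sigma^\#,\Cc}^{-1}$.

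The final step is to match the determinant prefactors and extract $C$. The essential input is the Markov property of the GFF across $\Cc$, equivalent to the decomposition in law $X_{g,\Sigma^\#} \law X_{g|_\Sigma,\Sigma} + P_{\Sigma^\#,\Cc}\,\varphi_\Cc$ with $\varphi_\Cc$ an independent Gaussian of covariance $2\pi\,\Dbf_{\Sigma^\#,\Cc}^{-1}$ (modulo the constant mode when $\paD\Sigma^\# = \varnothing$). Combined with \cref{FH-var}, this reduces the determinant matching to the Burghelea--Friedlander--Kappeler surgery formula $\det'\Delta_{g|_\Sigma} = c_\Cc \cdot \det'\Delta_{g,\Sigma^\#} \cdot \det\nolimits_\Fr\!\big(\Dbf_{\Sigma^\#,\Cc}/(2\Dbf)\big)$, for a constant $c_\Cc$ depending only on the topology of $\Cc$. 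Applying this uniformly across the four cases is the main obstacle: the semicircle case reduces to the circle case by Neumann doubling (which halves the spectrum of $\Dbf$), and the presence or absence of $\paD\Sigma^\#$ governs whether $\ker\Dbf_{\Sigma^\#,\Cc}$ contains the constants, producing an additional $\sqrt{2\pi}$-factor from the zero mode via \cref{FH-var}. The four listed values of $C$ emerge from tracking these factors together with the normalization of $\d\wt{\varphi}^k$ on $\Dc'(\zeta^*\Cc,\TR)$ (Lebesgue $\d c$ on $\TR$, plus $\sum_{k\in\Zb}$ for circles only).
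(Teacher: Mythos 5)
Your plan follows the paper's proof essentially verbatim: unpack both sides via \cref{measure1,measure2} with compatible choices, track the cohomology bookkeeping from \cref{gluing-top}, apply the Markov property of the GFF across $\Cc$ to split $\Dc'(\Sigma^\#,\Rb)$ into $\Dc'(\Sigma,\Rb)\times\Dc'(\zeta^*\Cc,\Rb)$ (this is the paper's \cref{Markov-reformulation}), compare DtN quadratic forms (the paper's \cref{lem-free-field-amplitude} plays the role of your Schur-complement step), and match determinant prefactors via the surgery formula (the paper's \cref{gluing-det}, citing BLCFT Prop.\ B.3). The overall structure, including the observation that $k$ is swallowed by the $H^1$ bijection while the zero mode $c$ combines with the $\Zb$-kernel of the $H^1_\D$ surjection to produce $\Rb = [0,2\pi R)\times\Zb$ when $\paD\Sigma^\#\neq\varnothing$ and a genuine $\TR$-zero-mode when $\paD\Sigma^\#=\varnothing$, is precisely what the paper does. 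Two small slips: your quoted surgery formula has the determinants on the wrong sides (it should express $\det'\Delta_g$ of the \emph{glued} surface as a product involving $\det'\Delta_{g|_\Sigma}$ and $\det_\Fr(\Dbf_{\Sigma^\#,\Cc,0}(2\Dbf_0)^{-1})$, not the other way around); and your expansion of the regularized $L^2$-norm conflates the cohomology contribution (which the paper arranges to coincide on both sides by choosing compatible admissible representatives) with the $\int X\,d^*\omega$ cross terms and Cameron--Martin factors arising from the field shift. Neither affects the validity of the approach.
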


Comments on the statement:
\begin{itemize}
    \item Recall (\cref{section:measure-space1}) the decompositions $\bt_0^{\kbf_0}=\cbf_0^{\kbf_0}+\varphibf_0$, $\wt{\varphi}^k=c^k+\varphi$, where $k\in\Zb$ if $\Cc$ is a circle, $k=\h$ (for ``half'') if $\Cc$ is a semicircle. By definition, $P_{\Sigma^\#}\varphibf_0|_\Sigma=P_\Sigma(h_\Cc\times h_\Cc\times\varphibf_0)$. Here $\psi\in\ol{P_\Sigma(\varphi\times\varphi\times\varphibf_0)}+C^{\M,c^k\times c^k\times\cbf_0^\kbf}(\Sigma\setminus\zbf,\TR)$, so $\psi-\ol{P_{\Sigma^\#,\Cc}(\varphi-h_\Cc)}\in\ol{P_\Sigma(h_\Cc\times h_\Cc\times\varphibf_0)}+C^{\M,c^k\times c^k\times\cbf_0^\kbf}(\Sigma\setminus\zbf,\TR)$ glues to a map in $\ol{P_{\Sigma^\#}\varphibf_0}+C^{\M,\cbf_0^\kbf}(\Sigma^\#\setminus\zbf,\TR)$.
    \item There is a caveat with the interpretation of $Y+P_{\Sigma^\#,\Cc}(\varphi-h_\Cc)\in\Dc'(\Sigma^\#,\Rb)$. Since $\Dc(\Sigma,\Rb)\subsetneq\Dc(\Sigma^\#,\Rb)$, a general distribution on $\Sigma$ does not glue to a distribution on $\Sigma^\#$. Here $Y\in\Dc'(\Sigma^\#,\Rb)$ is defined a.s.\ on $\Dc_0'(\Sigma,\Rb)$: one checks that the series defining the GFF $X_{g|_\Sigma}$ on $\Sigma$ also converges a.s.\ in $\Dc'(\Sigma^\#,\Rb)$. As for the other term, $\varphi\in H^{-1/2}(\zeta^*\Cc,\Rb)$ a.s., so $P_{\Sigma^\#,\Cc}\varphi\in L^2(\Sigma,\Rb)=L^2(\Sigma^\#,\Rb)$ a.s. There is no problem with $P_{\Sigma^\#,\Cc}h_\Cc$ since $h_\Cc$ is smooth.
\end{itemize}

It is straightforward to check that this map is well-defined. Note that it is clearly compatible with the subcript $_\mbf$, i.e., it restricts to an isomorphism $\bigsqcup_{\wt{\varphi}^k\in\Dc'(\zeta^*\Cc,\TR)}\Dc_\mbf'^{\M,\wt{\varphi}^k\times\wt{\varphi}^k\times\bt_0^{\kbf_0}}(\Sigma\setminus\zbf,\TR)\xrightarrow{\sim}\Dc_\mbf'^{\M,\bt_0^{\kbf_0}}(\Sigma^\#\setminus\zbf,\TR)$.

The remainder of this section is devoted to the proof of this theorem.

\begin{lem}\label{Markov-reformulation}
We have an isomorphism of measure spaces
\[\setlength\arraycolsep{1pt}
\begin{array}[t]{ccccc}
\Dc'(\Sigma,\Rb) & \times & \Dc'(\zeta^*\Cc,\Rb) & \xrightarrow{\sim} & \Dc'(\Sigma^\#,\Rb)\\
X_{g|_\Sigma}&&\wt{\varphi}&\mapsto&X_{g|_\Sigma}+P_{\Sigma^\#,\Cc}\wt{\varphi}
\end{array}\]
where the measure on the left-hand side is
\[C\sqrt{\det\nolimits_\Fr(\Dbf_{\Sigma^\#,\Cc,0}(2\Dbf_0)^{-1})}\,e^{-\frac{1}{4\pi}\la\wt{\varphi},(\Dbf_{\Sigma^\#,\Cc}-2\Dbf)\wt{\varphi}\ra}\,\d X_{g|_\Sigma}\,\d\wt{\varphi}\]
with
\[C=\begin{cases}
1/\sqrt{\pi},&\paD\Sigma^\#\neq\varnothing,\;\Cc\,\textrm{circle},\\
1/\sqrt{2\pi},&\paD\Sigma^\#\neq\varnothing,\;\Cc\,\textrm{semicircle},\\
\sqrt{2},&\paD\Sigma^\#=\varnothing.
\end{cases}\]
Here the subscript $_0$ has the same meaning as in \cref{FH-var}.
\end{lem}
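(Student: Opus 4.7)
The plan is to recast the isomorphism as the Gaussian Markov decomposition of the GFF on $\Sigma^\#$ along the curve $\Cc$, and then to identify the induced boundary Gaussian with the reference measure $\d\wt{\varphi}$ via \cref{FH-var}.

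First I would check that $\tau:(Y,\wt\varphi)\mapsto Y+P_{\Sigma^\#,\Cc}\wt\varphi$ is a continuous linear bijection between the two distribution spaces, with inverse $X\mapsto(X-P_{\Sigma^\#,\Cc}(\zeta^*X|_\Cc),\zeta^*X|_\Cc)$. This is routine once one uses the Sobolev mapping properties of the Poisson operator and of the trace recalled in \cref{Poisson}, together with the a.s.\ regularity of the GFFs involved; nothing at this stage depends on the Gaussian measures themselves.

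The heart of the argument is the orthogonal splitting of the Dirichlet energy. For $Y\in C^\infty(\Sigma)$ with $Y|_{\paD\Sigma}=0$ (so vanishing on both copies of $\Cc$) and smooth $\wt\varphi$, integration by parts on $\Sigma$ yields
\[\int_{\Sigma^\#}\lvert d(Y+P_{\Sigma^\#,\Cc}\wt\varphi)\rvert_g^2\,\d v_g=\int_\Sigma\lvert dY\rvert_{g|_\Sigma}^2\,\d v_{g|_\Sigma}+\la\wt\varphi,\Dbf_{\Sigma^\#,\Cc}\wt\varphi\ra,\]
the cross term vanishing because $P_{\Sigma^\#,\Cc}\wt\varphi$ is harmonic on $\Sigma$, Neumann on $\paN\Sigma^\#$ and Dirichlet on $\paD\Sigma^\#\setminus\Cc$, while $Y$ vanishes on the two copies of $\Cc$; the remaining boundary contributions collapse to the definition of $\Dbf_{\Sigma^\#,\Cc}$. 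Hence under $\tau^{-1}$ the quadratic form of $\tfrac1{2\pi}\Delta_g$ on $\Sigma^\#$ splits as a direct sum, and the Gaussian measure factors as the product of the Dirichlet GFF $\d X_{g|_\Sigma}$ on $\Sigma$ and an independent centered Gaussian $\Phi$ on $\zeta^*\Cc$ of law $\d\mu_{\frac1{2\pi}\Dbf_{\Sigma^\#,\Cc}}$.

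To rewrite $\d\Phi$ in terms of $\d\wt\varphi$, I would invoke \cref{FH-var} with $\Delta=\tfrac1\pi\Dbf$ and $\wt\Delta=\tfrac1{2\pi}\Dbf_{\Sigma^\#,\Cc}$: by construction $\d\mu_\Delta=\d\wt\varphi$, and the hypothesis that $\wt\Delta-\Delta$ is smoothing is the statement, already recorded at the end of \cref{Poisson}, that $\Dbf_{\Sigma^\#,\Cc}-2\Dbf$ is smoothing. The exponent $-\tfrac12\la\wt\varphi,(\wt\Delta-\Delta)\wt\varphi\ra=-\tfrac1{4\pi}\la\wt\varphi,(\Dbf_{\Sigma^\#,\Cc}-2\Dbf)\wt\varphi\ra$ and the determinant $\sqrt{\det_\Fr(\Dbf_{\Sigma^\#,\Cc,0}(2\Dbf_0)^{-1})}$ then match the stated formula verbatim.

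The main obstacle will be identifying the multiplicative constant $C$ across the four cases, which amounts to a zero-mode accounting. One has $\dim\ker\Dbf_{\Sigma^\#,\Cc}=0$ if $\paD\Sigma^\#\ne\varnothing$ and $1$ otherwise, $\dim\ker\tfrac1\pi\Dbf=1$ on both $\Tb$ and $\Tb_+$, and when $\paD\Sigma^\#=\varnothing$ one must further reconcile the Lebesgue zero mode of $X_g$ on $\Sigma^\#$ with that of $\Phi$ on $\zeta^*\Cc$. The $(2\pi)^{(\dim\ker\wt\Delta-\dim\ker\Delta)/2}$ factor from \cref{FH-var}, combined with a combinatorial $\sqrt2$ arising from the identification of two Dirichlet copies of $\Cc$ on $\Sigma$ with one internal circle on $\Sigma^\#$ (and symmetrically in the semicircle case and in the zero-mode matching when $\paD\Sigma^\#=\varnothing$), produces the four stated values of $C$.
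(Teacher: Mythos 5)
Your proposal follows the same route as the paper's proof: the inverse map is the same, the factorization of the law of the GFF on $\Sigma^\#$ into the Dirichlet GFF on $\Sigma$ times the law of the trace on $\Cc$ is exactly the paper's appeal to the Markov property, the identification of that trace law with $\d X_{\frac{1}{2\pi}\Dbf_{\Sigma^\#,\Cc}}$ uses the same Green's-function argument, and the comparison with $\d\wt\varphi$ is the same invocation of \cref{FH-var}. The one point to be careful about, which you flag but understate, is that $\d\mu_{\frac1\pi\Dbf}$ is \emph{not} equal to $\d\wt\varphi$ (the zero mode of $\wt\varphi$ uses the coordinate $c$ rather than the $L^2$-normalized coordinate, costing a factor $\sqrt{\vol(\zeta^*\Cc)}$), and it is precisely this discrepancy, together with the absent vs.\ present zero mode of $\Dbf_{\Sigma^\#,\Cc}$, that produces the four values of $C$ — so this is where the actual bookkeeping lives rather than being a separate ``combinatorial $\sqrt2$''.
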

\begin{proof}
The inverse map is $Y\mapsto(Y|_\Sigma-P_{\Sigma^\#,\Cc}\zeta^*Y|_\Cc,\zeta^*Y|_\Cc)$. By the Markov property of GFFs, $X_g\law X_{g|_\Sigma}+P_{\Sigma^\#,\Cc}\zeta^*X_g|_\Cc$. Thus the measure on the left-hand side should be $\d X_{g|_\Sigma}\,\d\bt_\Cc$ where $\d\bt_\Cc$ is the pushforward of the measure on $\Dc'(\Sigma^{\#},\Rb)$ to $\Dc'(\zeta^*\Cc,\Rb)$ via the map $Y\mapsto\zeta^*Y|_\Cc$. We claim that $\d\bt_\Cc$ coincides with the measure induced by $\frac{1}{2\pi}\Dbf_{\Sigma^\#,\Cc}$ where the zero mode is not $L^2$-normalized. Then the comparison between $\d\bt_\Cc$ and $\d\wt{\varphi}$ follows from an application of the proof of \cref{FH-var} to $\frac{1}{2\pi}\Dbf_{\Sigma^\#,\Cc}$ and $\frac{1}{\pi}\Dbf$, with a different constant due to the normalization of the zero mode. More precisely, $\d X_{\frac{1}{\pi}\Dbf_0}=\sqrt{\frac{\vol\zeta^*\Cc}{2\pi^2}}\,e^{-\frac{\vol\zeta^*\Cc}{2\pi}c^2}\,\d c\,\d X_{\frac{1}{\pi}\Dbf}$, $\d X_{\frac{1}{2\pi}\Dbf_{\Sigma^\#,\Cc,0}}=\sqrt{\frac{\vol\zeta^*\Cc}{4\pi^2}}\,e^{-\frac{\vol\zeta^*\Cc}{4\pi}c^2}\,\d c\,\d X_{\frac{1}{2\pi}\Dbf_{\Sigma^\#,\Cc}}$ if $\paD\Sigma^\#=\varnothing$, hence $C$.

Clearly $\zeta^*X_g|_\Cc$ is a Gaussian field on $\zeta^*\Cc$ with covariance $\Eb[\wt{\varphi}(\theta)\wt{\varphi}(\theta')]=2\pi\,G_g(\zeta(e^{\i\theta}),\zeta(e^{\i\theta'}))$. If $\paD\Sigma^\#\neq\varnothing$, then $\Dbf_{\Sigma^\#,\Cc}$ is invertible and the Schwartz kernel of $\Dbf_{\Sigma^\#,\Cc}^{-1}$ is $K_{\Dbf_{\Sigma^\#,\Cc}^{-1}}=(\zeta\times\zeta)^*G_g|_{\Cc\times\Cc}$ (\cite[Lemma 5.6]{BLCFT}), so $\zeta^*X_g|_\Cc\law X_{\frac{1}{2\pi}\Dbf_{\Sigma^\#,\Cc}}$. If $\paD\Sigma^\#=\varnothing$, then the inverse of $\Dbf_{\Sigma^\#,\Cc}:\Dc_0'(\zeta^*\Cc,\Rb)\to\Dc_0'(\zeta^*\Cc,\Rb)$ is the projection of the operator $\Dc_0'(\zeta^*\Cc,\Rb)\to\Dc'(\zeta^*\Cc,\Rb)$ with Schwartz kernel $(\zeta\times\zeta)^*G_g|_{\Cc\times\Cc}$ to $\Dc_0'(\zeta^*\Cc,\Rb)$ (ibid.), so $\d\zeta^*(c+X_g)|_\Cc=\d c\,\d\zeta^*X_g|_\Cc=\d c\,\d X_{\frac{1}{2\pi}\Dbf_{\Sigma^\#,\Cc}}$.
\end{proof}

\begin{lem}\label{gluing-det}
We have
\[\textstyle\det'\Delta_g=C\det'\Delta_{g|_\Sigma}\det_\Fr(\Dbf_{\Sigma^\#,\Cc,0}(2\Dbf_0)^{-1})\]
with
\[C=\begin{cases}2\pi,&\paD\Sigma^\#\neq\varnothing,\;\Cc\,\textrm{circle},\\\sqrt{2\pi},&\paD\Sigma^\#\neq\varnothing,\;\Cc\,\textrm{semicircle},\\
\vol_g\Sigma,&\paD\Sigma^\#=\varnothing,\;\Cc\,\textrm{circle},\\\sqrt{\frac{2}{\pi}}\vol_g\Sigma,&\paD\Sigma^\#=\varnothing,\;\Cc\,\textrm{semicircle}.\end{cases}\]
\end{lem}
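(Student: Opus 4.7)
The plan is to extract this identity directly from Lemma \ref{Markov-reformulation} by comparing normalization constants on the two sides. Recall the convention in \cref{GFF}: for any nonnegative elliptic $\Delta$, the formal Lebesgue-density identity
\[e^{-\frac{1}{2}\int_M \phi\,\Delta\phi\,\d v_g}\,\D\phi = \bigl(\det{}'\tfrac{1}{2\pi}\Delta\bigr)^{-1/2}\,\d\mu_\Delta,\]
where $\d\mu_\Delta = \d c^{\dim\ker\Delta}\,\d X_\Delta$. Under the change of variables $Y = X_{g|_\Sigma} + P_{\Sigma^\#,\Cc}\wt{\varphi}$, integration by parts gives the orthogonal splitting
\[\int_{\Sigma^\#}|dY|_g^2\,\d v_g = \int_\Sigma|dX_{g|_\Sigma}|_g^2\,\d v_g + \langle \wt{\varphi},\Dbf_{\Sigma^\#,\Cc}\wt{\varphi}\rangle,\]
so the formal Lebesgue measures factorize as $\D\phi = \D\phi_\Sigma\,\D\wt{\varphi}$ (where the first factor uses Dirichlet boundary on $\Cc$). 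This would heuristically yield $(\det{}'\tfrac{1}{2\pi}\Delta_g)^{-1/2} = (\det{}'\tfrac{1}{2\pi}\Delta_{g|_\Sigma})^{-1/2}(\det{}'\tfrac{1}{2\pi}\Dbf_{\Sigma^\#,\Cc})^{-1/2}$, and the lemma should amount to making this precise.

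Concretely, I would multiply the identity of Lemma \ref{Markov-reformulation} by the normalization $(\det{}'\tfrac{1}{2\pi}\Delta_{g|_\Sigma})^{-1/2}$ on the left factor and observe, via Corollary \ref{FH-var} applied to $\tfrac{1}{2\pi}\Dbf_{\Sigma^\#,\Cc}$ versus $\tfrac{1}{\pi}\Dbf$, that the Radon--Nikodym density $C\sqrt{\det_\Fr(\Dbf_{\Sigma^\#,\Cc,0}(2\Dbf_0)^{-1})}\,e^{-\frac{1}{4\pi}\langle\wt{\varphi},(\Dbf_{\Sigma^\#,\Cc}-2\Dbf)\wt{\varphi}\rangle}$ converts $\d\wt{\varphi}$ into the measure $\d\mu_{\frac{1}{2\pi}\Dbf_{\Sigma^\#,\Cc}}$ up to an explicit power of $2\pi$ dictated by the change in zero-mode dimension. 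Then the product measure on the left reassembles into $(\det{}'\tfrac{1}{2\pi}\Delta_{g|_\Sigma})^{-1/2}(\det{}'\tfrac{1}{2\pi}\Dbf_{\Sigma^\#,\Cc})^{-1/2}$ times the formal Lebesgue, while the right side is $(\det{}'\tfrac{1}{2\pi}\Delta_g)^{-1/2}$ times the formal Lebesgue, and the identity falls out.

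Finally, I would unwind $\det{}'\tfrac{1}{2\pi}$ into $\det{}'$ using the standard bookkeeping: for an operator $A$ of the appropriate order with $\dim\ker A = d$, the difference between $\det{}'\tfrac{1}{2\pi}A$ and $\det{}'A$ introduces an explicit power of $2\pi$, and when $d\geq 1$ one must also divide by $\vol$ (i.e.\ $\vol_g\Sigma$ for $\Delta_g$ and $\vol_{\d\theta}\zeta^*\Cc \in\{2\pi,\pi\}$ for $\Dbf_{\Sigma^\#,\Cc}$) to pass between the $L^2$-normalized zero mode and the geometric zero mode. The four cases of the statement correspond exactly to the four combinations: $\paD\Sigma^\#\neq\varnothing$ kills the kernel of $\Delta_g$ (and $\Dbf_{\Sigma^\#,\Cc}$), while $\paD\Sigma^\#=\varnothing$ introduces a constant zero mode, and the volume of the cut $\Cc$ depends on whether it is a circle or a semicircle.

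The main obstacle, and essentially the whole content, will be careful bookkeeping of these finite-dimensional $(2\pi)^{\pm 1/2}$ and $\sqrt{\vol}$ factors produced by the zero modes, and ensuring that the various conventions ($\det{}'$ versus $\det{}'\tfrac{1}{2\pi}$, the $2\Dbf$ normalization inside the Fredholm determinant, and the normalization of $\d X_\Delta$ as a \emph{probability} measure) are reconciled consistently across the four cases to yield precisely the constants $2\pi$, $\sqrt{2\pi}$, $\vol_g\Sigma$, and $\sqrt{2/\pi}\,\vol_g\Sigma$ stated.
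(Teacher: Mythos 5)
The paper does not actually prove this lemma; it is cited directly from \cite[Proposition B.3]{BLCFT}, which establishes a version of the Burghelea--Friedlander--Kappeler (BFK) gluing formula for zeta-regularized determinants with the boundary conventions used here.

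There is a genuine gap in your approach: the step where you ``factorize the formal Lebesgue measure'' as $\D\phi = \D\phi_\Sigma\,\D\wt{\varphi}$ is not a mathematical statement that can be made rigorous, and it is not something that can be extracted from \cref{Markov-reformulation}. That lemma is a rigorous statement about the \emph{probability} measures $\d X_g$, $\d X_{g|_\Sigma}$, $\d\wt{\varphi}$ — normalizing constants do not appear in it, so it carries no information about the spectral determinants. The identity $e^{-\frac{1}{2}\int\phi\Delta\phi}\,\D\phi=(\det'\tfrac{1}{2\pi}\Delta)^{-1/2}\d\mu_\Delta$ is introduced in \cref{GFF} purely as a definition/heuristic motivating what the regularized partition function ought to be; one cannot ``cancel the $\D\phi$'s'' on both sides of \cref{Markov-reformulation} because $\D\phi$ is not a measure and the product formula for it under the orthogonal splitting $\Dc'(\Sigma^\#)\cong\Dc'(\Sigma)\times\Dc'(\zeta^*\Cc)$ is exactly the thing that has no content without a regularization scheme. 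Your argument is the standard physics-level heuristic \emph{for} the BFK formula, not a proof of it: making it rigorous requires independent spectral analysis (variation of $\log\det'$ in a gluing parameter, heat-kernel parametrix comparisons, or a Mayer--Vietoris argument for the zeta function), which is precisely the content of the cited reference. Additionally, you acknowledge that ``essentially the whole content'' of the lemma (the four explicit constants) remains as unfinished bookkeeping; those constants are produced by the zero-mode and normalization conventions in the BFK argument itself, so they cannot be pinned down without actually carrying out that analysis.
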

\begin{proof}
See \cite[Proposition B.3]{BLCFT}.
\end{proof}

\begin{lem}\label{lem-free-field-amplitude}
For $\bt_0\in\Dc'(\zetabf_0^*\paD\Sigma^\#,\Rb)$, $\wt{\varphi}\in\Dc'(\zeta^*\Cc,\Rb)$, we have
\begin{align*}
&\la\wt{\varphi}\times\wt{\varphi}\times\bt_0,(\Dbf_\Sigma-\Dbf)(\wt{\varphi}\times\wt{\varphi}\times\bt_0)\ra-\la\bt_0,(\Dbf_{\Sigma^\#}-\Dbf)\bt_0\ra\\
&\hspace{5em}=\la\wt{\varphi},(\Dbf_{\Sigma^\#,\Cc}-2\Dbf)\wt{\varphi}\ra-2\la\wt{\varphi},\Dbf_{\Sigma^\#,\Cc}h_\Cc\ra+\la h_\Cc,\Dbf_{\Sigma^\#,\Cc}h_\Cc\ra,
\end{align*}
where $h_\Cc=\zeta^*P_{\Sigma^\#}\bt_0|_\Cc$.
\end{lem}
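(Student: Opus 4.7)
The plan is to reduce the identity to a Dirichlet energy decomposition on the cut surface $\Sigma$. By density (both sides extend continuously from smooth data via the smoothness of $\Dbf_\Sigma-\Dbf$, $\Dbf_{\Sigma^\#}-\Dbf$, $\Dbf_{\Sigma^\#,\Cc}-2\Dbf$), I first assume $\wt{\varphi}\in C^\infty(\zeta^*\Cc,\Rb)$ and $\bt_0\in C^\infty(\zetabf_0^*\paD\Sigma^\#,\Rb)$. The key idea is to decompose the harmonic extension on $\Sigma$ into the harmonic extension on $\Sigma^\#$ plus an interior Poisson term, and check that the cross Dirichlet-energy term vanishes.

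First, I would verify the fundamental identity
\[P_\Sigma(\wt{\varphi}\times\wt{\varphi}\times\bt_0)=P_{\Sigma^\#}\bt_0|_\Sigma+P_{\Sigma^\#,\Cc}(\wt{\varphi}-h_\Cc).\]
Both sides are harmonic on $\Sigma$ with Neumann condition on $\paN\Sigma$, and match on $\paD\Sigma=\im\zeta_1\cup\im\zeta_2\cup\zetabf_0^*\paD\Sigma^\#$: on $\zetabf_0^*\paD\Sigma^\#$ both give $\bt_0$, and on $\im\zeta_i$ both give $\wt{\varphi}$ (using that $\zeta_i^*P_{\Sigma^\#}\bt_0|_{\im\zeta_i}=h_\Cc$ by definition of $h_\Cc$, and $P_{\Sigma^\#,\Cc}$ has boundary values $\wt{\varphi}-h_\Cc$ on both $\im\zeta_1$ and $\im\zeta_2$). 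Uniqueness of the mixed boundary problem gives the identity.

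Next, I compute $\int_\Sigma|dP_\Sigma(\wt{\varphi}\times\wt{\varphi}\times\bt_0)|_g^2=\la\wt{\varphi}\times\wt{\varphi}\times\bt_0,\Dbf_\Sigma(\wt{\varphi}\times\wt{\varphi}\times\bt_0)\ra$ by expanding the square of the decomposition above. The pure term for $P_{\Sigma^\#}\bt_0$ gives $\int_{\Sigma^\#}|dP_{\Sigma^\#}\bt_0|_g^2=\la\bt_0,\Dbf_{\Sigma^\#}\bt_0\ra$ (since $\Cc$ has measure zero), and the pure term for $P_{\Sigma^\#,\Cc}(\wt{\varphi}-h_\Cc)$ gives $\la\wt{\varphi}-h_\Cc,\Dbf_{\Sigma^\#,\Cc}(\wt{\varphi}-h_\Cc)\ra$. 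The crux is that the cross term vanishes: integration by parts on $\Sigma$ (using harmonicity of $P_{\Sigma^\#}\bt_0$) gives
\[\int_\Sigma\la dP_{\Sigma^\#}\bt_0,dP_{\Sigma^\#,\Cc}(\wt{\varphi}-h_\Cc)\ra_g\,\d v_g=\int_{\pa\Sigma}P_{\Sigma^\#,\Cc}(\wt{\varphi}-h_\Cc)\,\pa_\nu P_{\Sigma^\#}\bt_0\,\d\ell_g.\]
The contribution from $\paN\Sigma$ vanishes because $\pa_\nu P_{\Sigma^\#}\bt_0|_{\paN\Sigma^\#}=0$ by Neumann extendibility; from $\zetabf_0^*\paD\Sigma^\#$ it vanishes because $P_{\Sigma^\#,\Cc}(\wt{\varphi}-h_\Cc)=0$ there; and on $\im\zeta_1\cup\im\zeta_2$ it reduces to
\[\int_{\zeta^*\Cc}(\wt{\varphi}-h_\Cc)\bigl(\zeta_1^*\pa_\nu P_{\Sigma^\#}\bt_0+\zeta_2^*\pa_\nu P_{\Sigma^\#}\bt_0\bigr)\d\theta,\]
which is zero because $P_{\Sigma^\#}\bt_0$ is smooth across $\Cc$ on $\Sigma^\#$ while the outward normals of $\im\zeta_1,\im\zeta_2$ in $\Sigma$ become opposite transverse directions across $\Cc$ in $\Sigma^\#$, so these two pullbacks cancel pointwise.

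Finally, assembling Steps 2 and 3 yields
\[\la\wt{\varphi}\times\wt{\varphi}\times\bt_0,\Dbf_\Sigma(\wt{\varphi}\times\wt{\varphi}\times\bt_0)\ra-\la\bt_0,\Dbf_{\Sigma^\#}\bt_0\ra=\la\wt{\varphi}-h_\Cc,\Dbf_{\Sigma^\#,\Cc}(\wt{\varphi}-h_\Cc)\ra,\]
and subtracting $\la\wt{\varphi}\times\wt{\varphi}\times\bt_0,\Dbf(\wt{\varphi}\times\wt{\varphi}\times\bt_0)\ra-\la\bt_0,\Dbf\bt_0\ra=2\la\wt{\varphi},\Dbf\wt{\varphi}\ra$ from both sides (using that $\Dbf$ acts componentwise on $\Dc'(\paS,\Rb)$) and expanding $\la\wt{\varphi}-h_\Cc,\Dbf_{\Sigma^\#,\Cc}(\wt{\varphi}-h_\Cc)\ra$ gives the claimed identity. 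The main obstacle is essentially the bookkeeping in Step 3 for the cancellation; in the case where $\Cc$ is a semicircle, the corners where $\Cc$ meets $\paN\Sigma$ orthogonally can be handled by doubling (the entire argument pulls back to $\Sigma^{\#2}$ under Cardy's trick, where everything is smooth and no corners appear). The extension from smooth $\bt_0,\wt{\varphi}$ to distributions follows by continuity of the finite-valued combinations $\la\cdot,(\Dbf_*-\Dbf_?)\cdot\ra$ in the $\Dc'$-topology.
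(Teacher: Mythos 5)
Your proof is correct, and it arrives at the identity by a cleaner and more geometric route than the paper's. Both proofs hinge on the same essential fact: since $P_\Sigma(h_\Cc\times h_\Cc\times\bt_0)=P_{\Sigma^\#}\bt_0|_\Sigma$ and the latter is smooth across $\Cc$, the two normal-derivative pullbacks $\zeta_1^*\pa_\nu(\cdot)|_{\im\zeta_1}+\zeta_2^*\pa_\nu(\cdot)|_{\im\zeta_2}$ applied to it cancel. The paper uses this directly: it splits the boundary datum bilinearly as $\wt{\varphi}\times\wt{\varphi}\times\bt_0=(\wt{\varphi}\times\wt{\varphi}\times\mathbf{0})+(0\times0\times\bt_0)$ and performs a chain of substitutions on DN-operator pairings. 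Your version instead decomposes the Poisson extension as $P_\Sigma(\wt{\varphi}\times\wt{\varphi}\times\bt_0)=P_{\Sigma^\#}\bt_0|_\Sigma+P_{\Sigma^\#,\Cc}(\wt{\varphi}-h_\Cc)$, expands the Dirichlet energy, and encodes the same cancellation as the vanishing of the cross term, verified by a single integration by parts on $\Sigma$. This buys you two things: the geometric content (Neumann extendibility of $P_{\Sigma^\#}\bt_0$, Dirichlet vanishing of $P_{\Sigma^\#,\Cc}$ at $\zetabf_0^*\paD\Sigma^\#$, smoothness across $\Cc$) is visible term by term, and the Poisson decomposition you establish in Step 1 is exactly the map used in the paper's statement of gluing for the real free field immediately after the lemma, so your proof effectively derives that formula rather than taking it as given. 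The paper's proof is shorter but is pure symbol-pushing; your version has a few more steps but is more transparent. The density argument at the start and end is adequate: since $\Dbf_\Sigma-\Dbf$, $\Dbf_{\Sigma^\#}-\Dbf$, $\Dbf_{\Sigma^\#,\Cc}-2\Dbf$ all have smooth Schwartz kernels, each side is a separately continuous bilinear form on $\Dc'(\paS,\Rb)\times\Dc'(\paS,\Rb)$ and hence determined by its values on smooth data.
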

\begin{proof}
By definition,
\[\la\wt{\varphi},(\Dbf_{\Sigma^\#,\Cc}-2\Dbf)\wt{\varphi}\ra=\la\wt{\varphi}\times\wt{\varphi}\times\mathbf{0},(\Dbf_\Sigma-\Dbf)(\wt{\varphi}\times\wt{\varphi}\times\mathbf{0})\ra.\]
Since $P_\Sigma(h_\Cc\times h_\Cc\times\bt_0)=P_{\Sigma^\#}\bt_0|_\Sigma$ where $P_{\Sigma^\#}\bt_0$ is smooth at $\Cc$, we have
\[\zeta_1^*\pa_\nu P_\Sigma(h_\Cc\times h_\Cc\times\bt_0)|_{\im\zeta_1}+\zeta_2^*\pa_\nu P_\Sigma(h_\Cc\times h_\Cc\times\bt_0)|_{\im\zeta_2}=0,\]
so
\[\la\wt{\varphi}\times\wt{\varphi}\times\mathbf{0},\Dbf_\Sigma(h_\Cc\times h_\Cc\times\bt_0)\ra=0,\qquad\la\wt{\varphi}\times\wt{\varphi}\times\mathbf{0},(\Dbf_\Sigma-\Dbf)(0\times0\times\bt_0)\ra=-\la\wt{\varphi},\Dbf_{\Sigma^\#,\Cc}h_\Cc\ra.\]
Similarly, $\la h_\Cc\times h_\Cc\times\mathbf{0},\Dbf_\Sigma(h_\Cc\times h_\Cc\times\bt_0)\ra=0$, so
\[\la h_\Cc,\Dbf_{\Sigma^\#,\Cc}h_\Cc\ra=\la h_\Cc\times h_\Cc\times\mathbf{0},\Dbf_\Sigma(h_\Cc\times h_\Cc\times\mathbf{0})\ra=-\la h_\Cc\times h_\Cc\times\mathbf{0},\Dbf_\Sigma(0\times 0\times\bt_0)\ra.\]
We have $\la 0\times0\times\bt_0,(\Dbf_\Sigma-\Dbf)(h_\Cc\times h_\Cc\times\bt_0)\ra=\la\bt_0,(\Dbf_{\Sigma^\#}-\Dbf)\bt_0\ra$, so
\[\la 0\times0\times\bt_0,(\Dbf_\Sigma-\Dbf)(0\times0\times\bt_0)\ra-\la\bt_0,(\Dbf_{\Sigma^\#}-\Dbf)\bt_0\ra=-\la 0\times0\times\bt_0,\Dbf_\Sigma(h_\Cc\times h_\Cc\times\mathbf{0})\ra=\la h_\Cc,\Dbf_{\Sigma^\#,\Cc}h_\Cc\ra.\]
Summing up, we get the desired formula.
\end{proof}

With these three lemmas, one can immediately prove gluing for the real theory, which in our notation can be stated as an isomorphism of measure spaces
\[\setlength\arraycolsep{1pt}
\begin{array}[t]{cccc}
\displaystyle\bigsqcup_{\wt{\varphi}\in\Dc'(\zeta^*\Cc,\Rb)}  & \Dc'^{\M,\wt{\varphi}\times\wt{\varphi}\times\bt_0}(\Sigma,\Rb) & \xrightarrow{\sim} & \Dc'^{\M,\bt_0}(\Sigma^\#,\Rb)\\
&P_\Sigma(\wt{\varphi}\times\wt{\varphi}\times\bt_0)+X_{g|_\Sigma}&\mapsto&P_{\Sigma^\#}\bt_0+(P_{\Sigma^\#,\Cc}(\wt{\varphi}-h_\Cc)+X_g)
\end{array}\]
where $h_\Cc=\zeta^*P_{\Sigma^\#}\bt_0|_\Cc$, and the measure on the left-hand side is $C\,\d\phi_{g|_\Sigma}\,\d\wt{\varphi}$ with the same $C$ coming from \cref{Markov-reformulation,gluing-det}. For BCILT, one still needs to study the gluing of the cohomology part, which has to be done on a case-by-case basis as in \cref{gluing-top}. We shall explain the idea and give the complete proof only in one case, the other cases being similar.

The gist of the proof is the following. We would like to apply \cref{Markov-reformulation}, but recall that the zero mode of $\wt{\varphi}^k$ is in $\TR$. By the discussion in \cref{gluing-top}, in all cases, we have a surjective map
\[\bigsqcup_{k\in H^1(\zeta^*\Cc)}H^1_\D(\Sigma)\times H_{k\times k\times\kbf_0}^1(\Sigma\setminus\zbf)\to H_\D^1(\Sigma^\#)\times H_{\kbf_0}^1(\Sigma^\#\setminus\zbf)\]
If $\paD\Sigma^\#\neq\varnothing$, this map has kernel $\Zb$, which complements the zero mode of $\wt{\varphi}^k$, so that the zero mode in the end is really integrated over $\Zb\times\TR\cong\Rb$ (as measure spaces), as in \cref{Markov-reformulation}. If $\paD\Sigma^\#=\varnothing$, this map is bijective, and we restrict the zero mode in \cref{Markov-reformulation} to $\TR$, giving exactly the definition of $\Dc'^\N(\Sigma\setminus\zbf,\TR)$ for $\paD\Sigma=\varnothing$ where the zero mode is over $\TR$. This shows that the map in \cref{gluing} is bijective on the set-theoretic level (up to sets of measure zero). The factors in the product measures cancel out by calculations using the Cameron--Martin theorem, similarly to the proofs of \cref{measure1,measure2}.

\begin{proof}[Proof of \cref{gluing} in the case of gluing two surfaces along a Dirichlet circle]
We resume the notations in \cref{glue1}. Let $g$ be a Neumann extendible conformal metric on $(\Sigma\sqcup\Sigma')^\#=\Sigma\#\Sigma'$.

\paragraph{The case $\paD(\Sigma\#\Sigma')\neq\varnothing$.}
Unraveling the notational conventions, the map in the theorem in this case is
\[\small\setlength\arraycolsep{1pt}
\begin{array}[t]{cccccc}
\displaystyle\bigsqcup_{\wt{\varphi}^k\in\Dc'(\Tb,\TR)} &  \Dc'^{\M,\wt{\varphi}^k\times(\bt^\c)^{\kbf^\c}}\!(\Sigma\setminus\zbf,\TR) & \times & \Dc'^{\M,\wt{\varphi}^k\times(\bt'^\c)^{\kbf'^\c}}\!(\Sigma'\setminus\zbf',\TR) & \xrightarrow{\sim} & \Dc'^{\M,(\bt^\c)^{\kbf^\c}\times(\bt'^\c)^{\kbf'^\c}}\!(\Sigma\#\Sigma'\setminus\zbf\cup\zbf',\TR)\\
&[(Y,\psi)]&&[(Y',\psi')]&\mapsto&[(Y+Y'+P_{\Sigma\#\Sigma',\Cc}(\varphi-h_\Cc),\psi^\#-\ol{P_{\Sigma\#\Sigma',\Cc}(\varphi-h_\Cc)})]
\end{array}\]
where $h_\Cc=\zeta^*P_{\Sigma\#\Sigma'}(\varphibf^\c\times\varphibf'^\c)|_\Cc$, $\psi^\#$ is defined to be $\psi$ on $\Sigma$ and $\psi'$ on $\Sigma'$.

First we explain the topology on the left-hand side. Let $\aux$ be an admissible function on $\Sigma\#\Sigma'$ such that $\aux=0$ near $\paD(\Sigma\#\Sigma')$, $\aux=1$ near $\Cc$. For $\wt{\varphi}^k=c^k+\varphi\in\Dc'(\Tb,\TR)$, we have a homeomorphism $\Dc'^{\M,0^k\times(\bt^\c)^{\kbf^\c}}\!(\Sigma\setminus\zbf,\TR)\xrightarrow{\sim}\Dc'^{\M,\wt{\varphi}^k\times(\bt^\c)^{\kbf^\c}}\!(\Sigma\setminus\zbf,\TR),[(Y,\psi)]\mapsto[(Y,\psi+P_\Sigma(\varphi\times\mathbf{0})+c\aux|_\Sigma)]$, and likewise for $\Sigma'$. Then the left-hand side is in bijection with
\[\TR\times\Dc_0'(\Tb,\Rb)\times\bigsqcup_{k\in\Zb}\big(\Dc'^{\M,0^k\times(\bt^\c)^{\kbf^\c}}\!(\Sigma\setminus\zbf,\TR)\times\Dc'^{\M,0^k\times(\bt'^\c)^{\kbf'^\c}}\!(\Sigma'\setminus\zbf',\TR)\big),\]
and we equip it with the pullback of the natural topology on this space. Clearly this topology does not depend on the choice of $\aux$.

To compare the pullback measure and the product measure $\d\phi_{g|_\Sigma}\,\d\phi_{g|_{\Sigma'}}\,\d\wt{\varphi}^k$ on the left-hand side, we make the following choices on $\Sigma$, $\Sigma'$, $\Sigma\#\Sigma'$ according to \cref{measure2} (we add $'$ for $\Sigma'$ and $^\#$ for $\Sigma\#\Sigma'$):
\begin{itemize}
\item We choose $x_0=\zeta_1^\D(1)\in\Sigma$, $x_0'={\zeta_1^\D}'(1)\in\Sigma'$, $x_0^\#\in\Sigma\#\Sigma'$ the common image of $x_0$, $x_0'$ on the glued circle $\Cc$.
\item We choose $c_0=c_0'=c$ (the zero mode of $\wt{\varphi}^k$), $c_0^\#=0$.
\item  In each cohomology class $[\omega^\#]\in H_{\kbf^\c\times\kbf'^\c}^1(\Sigma\#\Sigma'\setminus\zbf\cup\zbf')$ with $\int_\Cc\omega^\#=k$, we choose an admissible representative $\omega^\#$ such that $\zetabf_0^*I_{x_0^\#}(2\pi R\omega^\#)|_{\paD(\Sigma\#\Sigma')}=(\cbf^\c)^{\kbf^\c}\times(\cbf'^\c)^{\kbf'^\c}$, $\zeta^*I_{x_0^\#}(2\pi R\omega^\#)|_{\Cc}=0^k$, and we take it as the representative for defining the measure on $\Dc'^{\M,(\bt^\c)^{\kbf^\c}\times(\bt'^\c)^{\kbf'^\c}}\!(\Sigma\#\Sigma'\setminus\zbf\cup\zbf',\TR)$. For $c\in[0,2\pi R)\cong\TR$ (as measure spaces), $\omega_c=\omega^\#|_\Sigma+\frac{1}{2\pi R}c\;\!d\aux|_{\Sigma}$ is an admissible $1$-form on $\Sigma$ with $\zetabf^*I_{x_0}(2\pi R\omega_c)|_{\paD\Sigma}+c_0=c^k\times(\cbf^\c)^{\kbf^\c}$, and we take it as the representative for defining the measure on $\Dc'^{\M,\wt{\varphi}^k\times(\bt^\c)^{\kbf^\c}}\!(\Sigma\setminus\zbf,\TR)$. Likewise for $\Sigma'$. This is well-defined because $H_{\kbf^\c\times\kbf'^\c}^1(\Sigma\#\Sigma'\setminus\zbf\cup\zbf')\xrightarrow{\sim}\bigsqcup_{k\in\Zb}H_{k\times\kbf^\c}^1(\Sigma\setminus\zbf)\times H_{k\times\kbf'^\c}^1(\Sigma'\setminus\zbf')$ is a bijection.
\item For each cohomology class $[df^\#]\in H_\D^1(\Sigma\#\Sigma')$, we choose an admissible primitive $f^\#$ such that $f^\#=0$ near $\Cc$, and we take it as the representative for defining the measure on $\Dc'^{\M,(\bt^\c)^{\kbf^\c}\times(\bt'^\c)^{\kbf'^\c}}\!(\Sigma\#\Sigma'\setminus\zbf\cup\zbf',\TR)$. For $n\in\Zb$, $f_n=f^\#|_\Sigma+n(\aux|_\Sigma-1)$ is an admissible function on $\Sigma$, and we take it as the representative for defining the measure on $\Dc'^{\M,\wt{\varphi}^k\times(\bt^\c)^{\kbf^\c}}(\Sigma\setminus\zbf,\TR)$. Likewise for $\Sigma'$. Here $\Zb$ comes from the kernel $K=\Zb([d\aux|_\Sigma],[d\aux|_{\Sigma'}])$ of the surjective homomorphism $H_\D^1(\Sigma)\times H_\D^1(\Sigma')\to H_\D^1(\Sigma\#\Sigma')$ discussed in \cref{glue1}. Choosing primitives for $\Sigma\#\Sigma'$ in this way amounts to choosing a section for it. Then primitives for $\Sigma$, $\Sigma'$ are chosen using the induced bijection $(H_\D^1(\Sigma)\times H_\D^1(\Sigma'))/K\times\Zb\xrightarrow{\sim}H_\D^1(\Sigma)\times H_\D^1(\Sigma')$.
\end{itemize}

With these choices, the map in question becomes
\begin{align*}
&(\wt{\varphi}^k,[(X_{g|_\Sigma},\ol{P_\Sigma(\varphi\times\varphibf^\c)}+\ol{2\pi Rf_n}+I_{x_0}(2\pi R\omega_c)+c_0)],[(X_{g|_{\Sigma'}},\ol{P_{\Sigma'}(\varphi\times\varphibf'^\c)}+\ol{2\pi Rf_n'}+I_{x_0'}(2\pi R\omega_c')+c_0'])\\
&\qquad\mapsto[(X_{g|_\Sigma}+X_{g|_{\Sigma'}}+P_{\Sigma\#\Sigma',\Cc}(\varphi-h_\Cc),\ol{P_{\Sigma\#\Sigma'}(\varphibf^\c\times\varphibf'^\c)}+\ol{2\pi Rf^\#_n}+I_{x_0^\#}(2\pi R\omega^\#_c)+c_0)]\\
&\qquad=[(X_{g|_\Sigma}+X_{g|_{\Sigma'}}+P_{\Sigma\#\Sigma',\Cc}(\varphi-h_\Cc)+(2\pi Rn+c)\aux,\ol{P_{\Sigma\#\Sigma'}(\varphibf^\c\times\varphibf'^\c)}+\ol{2\pi Rf^\#}+I_{x_0^\#}(2\pi R\omega^\#))]
\end{align*}
which decomposes into three independent parts:
\begin{gather*}
\Dc_0'(\Tb,\Rb)\times[0,2\pi R)\times\Zb\times\Dc'(\Sigma,\Rb)\times\Dc'(\Sigma',\Rb)\to\Dc'(\Sigma\#\Sigma',\Rb)\\
\bigsqcup_{k\in\Zb}H_{k\times\kbf^\c}^1(\Sigma\setminus\zbf)\times H_{k\times\kbf'^\c}^1(\Sigma'\setminus\zbf')
\xrightarrow{\sim} H^1_{\kbf^\c\times\kbf'^\c}(\Sigma\#\Sigma'\setminus\zbf\cup\zbf')\\
(H_\D^1(\Sigma)\times H_\D^1(\Sigma'))/K\xrightarrow{\sim}H_\D^1(\Sigma\#\Sigma')
\end{gather*}
The second and third maps are bijective, so they trivially preserve the counting measures. Since $[0,2\pi R)\times\Zb\xrightarrow{\sim}\Rb,(c,n)\mapsto2\pi Rn+c$ as measure spaces, the first map is equivalent to $\Dc'(\Sigma,\Rb)\times\Dc'(\Sigma',\Rb)\times\Dc'(\Tb,\Rb)\xrightarrow{\sim}\Dc'(\Sigma\#\Sigma',\Rb),(X_{g|_\Sigma},X_{g|_{\Sigma'}},\wt{\varphi})\mapsto X_{g|_\Sigma}+X_{g|_{\Sigma'}}+P_{\Sigma\#\Sigma',\Cc}(\varphi-h_\Cc)+c\aux$, where the zero mode $c\in\Rb$. This differs from the map in \cref{Markov-reformulation} by a translation. It remains to collect the factors, i.e., the functions before the product measure on the left-hand side. Write $X_g=X_{g|_\Sigma}+X_{g|_{\Sigma'}}+P_{\Sigma\#\Sigma',\Cc}\wt{\varphi}$, $\aux'=P_{\Sigma\#\Sigma'}1-\aux$. By \cref{Markov-reformulation} and Cameron--Martin, the factor of the pullback measure is
\begin{align*}
&\frac{1}{\sqrt{\pi}}\sqrt{\textstyle\det_\Fr(\Dbf_{\Sigma\#\Sigma',\Cc}(2\Dbf_0)^{-1})}\,e^{-\frac{1}{4\pi}\la\wt{\varphi},(\Dbf_{\Sigma\#\Sigma',\Cc}-2\Dbf)\wt{\varphi}\ra}\times
\frac{e^{\frac{1}{8\pi}\int_{\paD(\Sigma\#\Sigma')}k_g\,\d\ell_g}}{\sqrt{\det'\Delta_g}}\,e^{-\frac{1}{4\pi}\la\varphibf^\c\times\varphibf'^\c,(\Dbf_{\Sigma\#\Sigma'}-\Dbf)(\varphibf^\c\times\varphibf'^\c)\ra}\\
&\times\exp\left(-\pi R^2\int_{\Sigma\#\Sigma'}^\reg|\omega^\#+df^\#|_g^2\,\d v_g-R\int_{\Sigma\#\Sigma'}(X_g-P_{\Sigma\#\Sigma',\Cc}h_\Cc-c\aux'+P_{\Sigma\#\Sigma'}(\varphibf^\c\times\varphibf'^\c))\,d^*(\omega^\#+df^\#)\,\d v_g\right)\\
&\times e^{-\frac{1}{4\pi}\la h_\Cc,\Dbf_{\Sigma\#\Sigma',\Cc}h_\Cc\ra+\frac{1}{2\pi}\la\wt{\varphi},\Dbf_{\Sigma\#\Sigma',\Cc}h_\Cc\ra}\exp\left(-\frac{c^2}{4\pi}\int_{\Sigma\#\Sigma'}|d\aux'|_g^2\,\d v_g+\frac{c}{2\pi}\int_{\Sigma\#\Sigma'}\la d(X_g-P_{\Sigma\#\Sigma',\Cc}h_\Cc),d\aux'\ra_g\,\d v_g\right).
\end{align*}
By definition, the factor of the product measure $\d\phi_{g|_\Sigma}\,\d\phi_{g|_{\Sigma'}}\,\d\wt{\varphi}^k$ is
\begin{align*}
&\frac{e^{\frac{1}{8\pi}(\int_{\paD\Sigma}k_g\,\d\ell_g+\int_{\paD\Sigma'}k_g\,\d\ell_g)}}{\sqrt{\det'\Delta_{g|_\Sigma}\det'\Delta_{g|_{\Sigma'}}}}\,e^{-\frac{1}{4\pi}\la\varphi\times\varphibf^\c,(\Dbf_\Sigma-\Dbf)(\varphi\times\varphibf^\c)\ra-\frac{1}{4\pi}\la\varphi\times\varphibf'^\c,(\Dbf_{\Sigma'}-\Dbf)(\varphi\times\varphibf'^\c)\ra)}\\
&\times\exp\bigg({-}\pi R^2\int_{\Sigma\#\Sigma'}^\reg\big|\omega^\#+df^\#+\tfrac{c}{2\pi R}du\big|_g^2\,\d v_g\\
&\hspace{5em}-R\int_{\Sigma\#\Sigma'}(X_g-P_{\Sigma\#\Sigma',\Cc}h_\Cc-cP_{\Sigma\#\Sigma',\Cc}1+P_{\Sigma\#\Sigma'}(\varphibf^\c\times\varphibf'^\c))\,d^*(\omega_c^\#+df_n^\#+\tfrac{c}{2\pi R}du)\,\d v_g\bigg).
\end{align*}
The difference between the integrals over $\Sigma\#\Sigma'$ is
\begin{align*}
&\frac{c^2}{4\pi}\int_{\Sigma\#\Sigma'}(|du|_g^2-|d\aux'|_g^2)\,\d v_g+\frac{c}{2\pi}\int_{\Sigma\#\Sigma'}\la d(X_g-P_{\Sigma\#\Sigma',\Cc}h_\Cc),d\aux'\ra_g\,\d v_g\\
&\hspace{5em}+\frac{c}{2\pi}\int_{\Sigma\#\Sigma'}\la d(X_g-P_{\Sigma\#\Sigma',\Cc}h_\Cc-cP_{\Sigma\#\Sigma',\Cc}1+P_{\Sigma\#\Sigma'}(\varphibf^\c\times\varphibf'^\c)),d\aux\ra_g\,\d v_g\\
&=-\frac{c^2}{4\pi}\int_{\Sigma\#\Sigma'}|dP_{\Sigma\#\Sigma',\Cc}1|_g^2\,\d v_g+\frac{c}{2\pi}\int_{\Sigma\#\Sigma'}\la dP_{\Sigma\#\Sigma',\Cc}(\wt{\varphi}-h_\Cc),dP_{\Sigma\#\Sigma',\Cc}1\ra_g\,\d v_g\\
&=\frac{c^2}{4\pi}\la1,\Dbf_{\Sigma\#\Sigma',\Cc}1\ra+\frac{c}{2\pi}\la\varphi-h_\Cc,\Dbf_{\Sigma\#\Sigma',\Cc}1\ra,
\end{align*}
where we used integration by parts several times. The other terms cancel out by \cref{gluing-det,lem-free-field-amplitude}.

\paragraph{The case $\paD(\Sigma\#\Sigma')=\varnothing$.}
The map in this case is
\[\setlength\arraycolsep{1pt}
\begin{array}[t]{cccccc}
\displaystyle\bigsqcup_{\wt{\varphi}^k\in\Dc'(\Tb,\TR)} &  \Dc'^{\M,\wt{\varphi}^k}(\Sigma\setminus\zbf,\TR) & \times & \Dc'^{\M,\wt{\varphi}^k}(\Sigma'\setminus\zbf',\TR) & \xrightarrow{\sim} & \Dc'^\N(\Sigma\#\Sigma'\setminus\zbf\cup\zbf',\TR)\\
&[(Y,\psi)]&&[(Y',\psi')]&\mapsto&[(Y+Y'+P_{\Sigma\#\Sigma',\Cc}\varphi,\psi^\#-\ol{P_{\Sigma\#\Sigma',\Cc}\varphi})]
\end{array}\]
where $\psi^\#$ is defined to be $\psi$ on $\Sigma$ and $\psi'$ on $\Sigma'$. We make the same choices as in the previous case. The difference now is that $K=0$. In fact, $H_\D^1(\Sigma)=H_\D^1(\Sigma')=H_\D^1(\Sigma\#\Sigma')=0$. The map decomposes into two independent parts:
\begin{gather*}
\Dc_0'(\Tb,\Rb)\times[0,2\pi R)\times\Dc'(\Sigma,\Rb)\times\Dc'(\Sigma',\Rb)\to\Dc'(\Sigma\#\Sigma',\Rb)\\
\bigsqcup_{k\in\Zb}H_k^1(\Sigma\setminus\zbf)\times H_k^1(\Sigma'\setminus\zbf')
\xrightarrow{\sim} H^1(\Sigma\#\Sigma'\setminus\zbf\cup\zbf')
\end{gather*}
Since $P_{\Sigma\#\Sigma'}1=1$, we can restrict the zero mode on both sides in \cref{Markov-reformulation} to $[0,2\pi R)$ to get an isomorphism of measure spaces
\[\Dc'(\Sigma,\Rb)\times\Dc'(\Sigma',\Rb)\times(\TR\times\Dc_0'(\Tb,\Rb))\to\TR\times\Dc_0'(\Sigma\#\Sigma',\Rb)\]
which is what we want, since $\Dc'^\N(\Sigma\#\Sigma'\setminus\zbf\cup\zbf',\TR)\cong\TR\times\Dc_0'(\Sigma\#\Sigma',\Rb)\times H^1(\Sigma\#\Sigma'\setminus\zbf\cup\zbf')$. One checks similarly that the factors match.
\end{proof}

\section{Curvature terms}\label{curv-term-section}

This section generalizes \cite[Section 4]{CILT}.

Let $\Sigma$ be an extended surface and $g$ a conformal metric on $\Sigma$. For the definition of BCILT, we need to make sense of the integrals
\[\frac{1}{2}\int_{\Sigma\setminus\zbf}K_gI_{x_0}(\omega)\,\d v_g+\int_{\pa\Sigma}k_gI_{x_0}(\omega)\,\d\ell_g,\]
where $\omega$ is a closed $1$-form on $\Sigma\setminus\zbf$, $x_0\in\Sigma\setminus\zbf$, $I_{x_0}(\omega)$ is as defined in \cref{section:measure-space2}. The problem is that $I_{x_0}(\omega)$ is multi-valued. To circumvent this, the idea is to integrate over a domain of full measure on which $I_{x_0}(\omega)$ is single-valued. By (2) of \cref{null-homologous3}, such a domain can be obtained by removing a separating family. Then one introduces suitable regularization terms to ensure that the resulting quantity is well-behaved under change of separating family.

The precise definition is as follows. For notational convenience, we relabel the set $C=\{c_i^\N\}_{i=1}^{\bN}\sqcup\{c_i^\D\}_{i=1}^{\bD}\sqcup\{c_i^\M\}_{i=1}^{\bM}\sqcup\{z_i\}_{i=1}^\s$ as $\{c_i\}_{i=1}^{\btilde}$, where $\btilde=\bN+\bD+\bM+\s$. For $c\in C$, we denote by $[c]$ the homology class of $c$ in $H_1(\Sigma\setminus\zbf)$, where $[z_i]$ is defined to be $-[\pc_i]$. In much of what follows, the type of an element in $C$ is irrelevant.

Let $\deltabf=(a_1,b_1,\ldots,a_\g,b_\g,d_1,\ldots,d_{\btilde-1})$ be a separating family of $\Sigma$ (with respect to an implicitly fixed $\vbf$). For each $d_i$, let $e_i$ be a simple circle on $(\Sigma\setminus\deltabf)\cup d_i^\circ$ such that $e_i$ intersects $d_i$ exactly once and the intersection is transversal and positively oriented. The homology class $[e_i]$ of $e_i$ does not depend on the choice of $e_i$. In fact, we have the following combinatorial description of $[e_i]$. Suppose $d_i$ goes from $c_s$ to $c_t$. Deleting $d_i$ from the tree formed by the $d_j$, it splits into two connected components. Let $C_s$ be the set of elements in $C$ in the connected component of $c_s$, and likewise $C_t$ for $c_t$, so that $C=C_s\sqcup C_t$. Then $[e_i]=\sum_{c\in C_s}[c]=-\sum_{c\in C_t}[c]$. Note that reversing the orientation of $d_i$ amounts to reversing the orientation of $e_i$.

Let $\omega$ be a closed $1$-form on $\Sigma\setminus\zbf$. For $x_0\in\Sigma\setminus\deltabf$, we define $I_{x_0}^\deltabf(\omega):\Sigma\setminus\deltabf\to\Rb,x\mapsto\int_\gamma\omega$ where $\gamma$ is any path on $\Sigma\setminus\deltabf$ from $x_0$ to $x$. By \cref{null-homologous3}, it is well-defined, i.e., single-valued. It is smooth on $\Sigma\setminus\deltabf$ and has jumps at $\deltabf$ given by the cycles of $\omega$. We assume that $I_{x_0}^\deltabf(\omega)$ is bounded, so that $I_{x_0}^\deltabf(\omega)\in L^\infty(\Sigma,\Rb)$, $I_{x_0}^\deltabf(\omega)|_{\pa\Sigma}\in L^\infty(\pa\Sigma,\Rb)$, since $\deltabf\cap\pa\Sigma$ is finite. Then we define
\begin{align*}
K_{\Sigma,g,x_0}^\deltabf(\omega)&=
\frac{1}{2}\int_\Sigma K_gI_{x_0}^\deltabf(\omega)\,\d v_g+\int_{\pa\Sigma}k_gI_{x_0}^{\deltabf}(\omega)\,\d\ell_g\\
&\phantom{{}={}}+\sum_{i=1}^\g\left(\int_{a_i}\omega\int_{b_i}k_g\,\d\ell_g-\int_{b_i}\omega\int_{a_i}k_g\,\d\ell_g\right)+\sum_{i=1}^{\btilde-1}\int_{e_i}\omega\int_{d_i}k_g\,\d\ell_g.
\end{align*}
Note that the last term does not depend on the orientation of $d_i$.

\begin{thm}[\textbf{Change of separating family}]\label{curv-inv}
Suppose $\int_c\omega\in\Zb$ for any circle $c$ on $\Sigma\setminus\zbf$. For two separating families $\deltabf$, $\deltabf'$ of $\Sigma$ and $x_0\in\Sigma\setminus(\deltabf\cup\deltabf')$, we have
\[K_{\Sigma,g,x_0}^{\deltabf'}(\omega)-K_{\Sigma,g,x_0}^\deltabf(\omega)\in\begin{cases}\pi\Zb,&\Sigma\textrm{ has no corners},\\\tfrac{1}{2}\pi\Zb,&\textrm{in general}.\end{cases}\]
In particular, this holds for admissible $1$-forms.
\end{thm}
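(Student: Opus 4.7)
My plan is to reduce \cref{curv-inv} to invariance under a short list of elementary moves between separating families: (i) continuous isotopy of a single curve in $\deltabf$ that avoids every other curve in $\deltabf$, every puncture, and every component of $\pa\Sigma$; (ii) an ``edge-swap'' replacing one $d_i$ by a semicircle isotopic to $d_i$ through isotopies possibly crossing other elements of $\deltabf$; (iii) a tree-rearrangement that substitutes a different spanning tree on the vertex set $\{c_1,\ldots,c_{\btilde}\}$; and (iv) a symplectic change of a handle pair $(a_i,b_i)$. Standard surface topology guarantees that any two separating families in the sense of \cref{sep-family-extended} are connected by a finite sequence of such moves, so it suffices to verify the claimed invariance one move at a time.

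Type (i) gives \emph{exact} invariance, since $I_{x_0}^\deltabf(\omega)$ and each cycle class $[e_i]$ depend continuously on the moving curve, and all integrals and sums in $K_{\Sigma,g,x_0}^\deltabf(\omega)$ are continuous functionals of these. For moves of types (ii)--(iv), the old and new configurations agree outside a compact region $D\subset\Sigma$ bounded by the old and new curves together, possibly, with portions of $\pa\Sigma$. Because $\omega$ has integer periods on $\Sigma\setminus\zbf$, the difference $I^{\deltabf'}_{x_0}(\omega)-I^\deltabf_{x_0}(\omega)$ is a constant $m\in\Zb$ on $D$ and vanishes off $D$; the integer $m$ is the period of $\omega$ along the appropriate cycle read off from the swap, via the $C=C_s\sqcup C_t$ description of $[e_i]$ given before the theorem. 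The change in the bulk-plus-boundary part of $K_{\Sigma,g,x_0}^\deltabf(\omega)$ is therefore
\[m\left(\tfrac{1}{2}\int_D K_g\,\d v_g+\int_{\pa D\cap\pa\Sigma}k_g\,\d\ell_g\right).\]

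The central computation is Gauss--Bonnet for the surface-with-corners $D$,
\[\int_D K_g\,\d v_g+\int_{\pa D}k_g\,\d\ell_g=2\pi\chi(D)-\sum_j(\pi-\theta_j),\]
which rewrites $\int_D K_g$ in terms of $k_g$ along $\pa D$, a topological number, and corner defects. The $k_g$-integrals over the portion of $\pa D$ lying in $\Sigma^\circ$ should cancel exactly against the change in the regularization sums $\sum_i\int_{a_i}\omega\int_{b_i}k_g\,\d\ell_g-\sum_i\int_{b_i}\omega\int_{a_i}k_g\,\d\ell_g+\sum_i\int_{e_i}\omega\int_{d_i}k_g\,\d\ell_g$. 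This cancellation is by design: each term there weighs the $k_g$-integral of a cutting curve by the $\omega$-period around the complementary cycle, which equals $\pm m$ on precisely one side of $D$. After cancellation only the topological and corner contributions survive, amounting to
\[\tfrac{m}{2}\Big(2\pi\chi(D)-\sum_j(\pi-\theta_j)\Big).\]
By assumption the corners of $\Sigma$ are right angles and the $d_i$ meet $\pa\Sigma$ orthogonally, so every $\theta_j\in\{\pi/2,\pi\}$, giving $\pi-\theta_j\in\{0,\pi/2\}$; the first term is in $\pi\Zb$, and the corner sum lies in $\tfrac{1}{2}\pi\Zb$ in general and vanishes entirely if $\Sigma$ has no corners.

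The main obstacle in executing this plan is the combinatorial bookkeeping in move~(iii). A tree-rearrangement simultaneously changes the cycle classes $[e_i]$ of several other edges (since $[e_i]$ records the sum of $[c]$ over one component of the tree upon removing $d_i$), and one must verify that the collective shift in $\sum_i\int_{e_i}\omega\int_{d_i}k_g\,\d\ell_g$ absorbs \emph{all} the interior $k_g$-boundary contributions of $D$ with correct signs. Once signs are fixed by consistently orienting $e_i$ via its positive intersection with $d_i$, the matching should be a routine (if tedious) verification on each model move; the remaining invariance claims for (ii) and (iv) are strictly simpler special cases of the same scheme.
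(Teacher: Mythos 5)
Your proposal takes a genuinely different route from the paper, but it contains a gap that is not a matter of bookkeeping.

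The paper's Steps 1 and 2 correspond to your moves on the edge curves $d_i$ (isotopy, homologous change, and tree-rearrangement of one end at a time), and your ``region $D$ plus Gauss--Bonnet plus cancellation with regularization terms'' computation is exactly the paper's mechanism there; that part is fine modulo the coefficient. (You wrote the surviving contribution as $\tfrac{m}{2}\bigl(2\pi\chi(D)-\sum_j(\pi-\theta_j)\bigr)$; the coefficient is $m$, not $m/2$, and the turning angles also take the value $\pi$ at puncture ends, so the set should be $\{0,\pi/2,\pi\}$ rather than $\{0,\pi/2\}$. With your coefficient the conclusion would land in $\tfrac{\pi}{4}\Zb$, weaker than the theorem. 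This is not the critical issue.)

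The critical issue is move (iv), the change of the interior topological basis $(a_i,b_i)$. You dismiss it as ``a strictly simpler special case of the same scheme,'' but your scheme hinges on the old and new curves cobounding a compact region $D$ on which $I^{\deltabf'}_{x_0}(\omega)-I^{\deltabf}_{x_0}(\omega)$ is constant. For symplectic changes of the handle basis this fails: replacing $a_1$ by $a_1+b_1$ (a Dehn twist along $b_1$), or swapping $a_1\leftrightarrow b_1$, produces new curves that are not homologous to the old ones, so no such $D$ exists, and the locus on which $I^{\deltabf'}_{x_0}(\omega)-I^{\deltabf}_{x_0}(\omega)$ is nonconstant is not localized. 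Moreover the regularization sum $\sum_i\bigl(\int_{a_i}\omega\int_{b_i}k_g-\int_{b_i}\omega\int_{a_i}k_g\bigr)$ contains geodesic-curvature integrals along the handle curves themselves, which change in a way your compact-region argument does not account for. The paper avoids this entirely: Step 3 eliminates corners and punctures by excision, normalizes the $d_j$ using Steps 1--2, then doubles along the Neumann boundary and reduces to the closed-surface statement already proved in \cite[Appendix A]{CILT}. That reduction is what handles the handle-basis moves, and there is no shortcut to it along the lines you propose.

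Two smaller concerns. First, ``standard surface topology guarantees that any two separating families are connected by a finite sequence of such moves'' is not a standard citable fact in the form you need; it is a nontrivial claim about the action of the mapping class group relative to boundary and marked points, and it needs an argument. Second, your justification of move (i) by ``exact invariance since everything depends continuously on the moving curve'' is either circular (it presupposes the anomaly lands in a discrete set) or incomplete: $I_{x_0}^{\deltabf}(\omega)$ has a genuine jump across $d_i$, and the invariance under isotopy is precisely a Gauss--Bonnet computation (as in the paper's Step 1 with $\chi(\Omega)=1$ and total turning angle $2\pi$), not a soft continuity statement.

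If you want to carry out the direct approach, you would at minimum need to (a) prove the move-connectivity claim, and (b) either find a genuinely local argument for handle-basis moves or, like the paper, reduce to the closed-surface case by excision and doubling and invoke \cite[Appendix A]{CILT}.
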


\begin{proof}
Recall the Gauss--Bonnet formula: For a compact Riemannian surface $(\Omega,g)$ with piecewise smooth boundary,
\[\frac{1}{2}\int_\Omega K_g\,\d v_g+\int_{\pa\Omega}k_g\,\d\ell_g+\sum\alpha=2\pi\chi(\Omega),\]
where $\sum\alpha$ is the sum of the turning angles along $\pa\Omega$.

We first study the effect of changing the $d_i$. It suffices to consider the case of changing one $d_i$, i.e., suppose that $\deltabf'$ differs from $\deltabf$ only in $d_i$. We add a prime $'$ to denote the corresponding object for $\deltabf'$. For notational convenience, suppose $i=1$ and $d_1$ goes from $c_1$ to $c_2$.

\paragraph{Step 1: Changing $d_1$ to a homologous $d_1'$.}
First suppose $d_1'^\circ\cap d_1^\circ=\varnothing$. By (3) of \cref{null-homologous3}, let $\Omega$ be the connected component of $\Sigma^\circ\setminus(\deltabf\cup d_1')$ not containing $x_0$. For $i=1,2$, let $y_i=d_1\cap c_i$, $y_i'=d_1'\cap c_i$, and let $s_i$ be the segment $\pa\Omega\cap c_i$ on $c_i$ from $y_i$ to $y_i'$. (If $y_i=y_i'$, $s_i$ is degenerate.) Without loss of generality, suppose the orientations of $d_1'$, $s_1$ coincide with those induced by $\Omega$, so that the orientations of $d_1$, $s_2$ are reverse to those induced by $\Omega$. See \cref{gb-homologous-change} for a possible situation.

By definition, $I_{x_0}^{\deltabf'}(\omega)=I_{x_0}^\deltabf(\omega)$ except on $\Omega$, where $I_{x_0}^{\deltabf'}(\omega)=I_{x_0}^\deltabf(\omega)+\int_{e_1}\omega$, so
\begin{align*}
\int_\Sigma K_gI_{x_0}^{\deltabf'}(\omega)\,\d v_g-\int_\Sigma K_gI_{x_0}^\deltabf(\omega)\,\d v_g&=\int_{e_1}\omega\int_\Omega K_g\,\d v_g,\\
\int_{\pa\Sigma}k_gI_{x_0}^{\deltabf'}(\omega)\,\d\ell_g-\int_{\pa\Sigma}k_gI_{x_0}^\deltabf(\omega)\,\d\ell_g&=\int_{e_1}\omega\left(\int_{s_1}k_g\,\d\ell_g-\int_{s_2}k_g\,\d\ell_g+\sum_{c\in C_\Omega}\int_ck_g\,\d\ell_g\right),
\end{align*}
where $C_\Omega=\{c\in C\setminus\{z_i\}_i:c\subset\pa\Omega\}$. Since the $e_i$ are unchanged,
\[\sum_{i=1}^{\btilde-1}\int_{e_i'}\omega\int_{d_i'}k_g\,\d\ell_g-\sum_{i=1}^{\btilde-1}\int_{e_i}\omega\int_{d_i}k_g\,\d\ell_g=\int_{e_1}\omega\left(\int_{d_1'}k_g\,\d\ell_g-\int_{d_1}k_g\,\d\ell_g\right).\]
By the Gauss--Bonnet formula on $\Omega$,
\[\frac{1}{2}\int_\Omega K_g\,\d v_g+\int_{s_1}k_g\,\d\ell_g+\int_{d_1'}k_g\,\d\ell_g-\int_{s_2}k_g\,\ell_g-\int_{d_1}k_g\,\ell_g+\sum_{c\in C_\Omega}\int_ck_g\,\d\ell_g=2\pi\chi(\Omega)-\sum\alpha.\]
Note that the terms for the $d_i$ contained in $\pa\Omega$ cancel out because $\pa\Omega$ traverses them twice with opposite directions. Summing up, we get
\[K_{\Sigma,g,x_0}^{\deltabf'}(\omega)-K_{\Sigma,g,x_0}^\deltabf(\omega)=\int_{e_1}\omega\left(2\pi\chi(\Omega)-\sum\alpha\right).\]
Since $\int_{e_1}\omega\in\Zb$ by assumption and $\chi(\Omega)\in\Zb$, it remains to count $\sum\alpha$:
\begin{itemize}
\item The angles at $y_1$, $y_1'$, $y_2$, $y_2'$ contribute $2\pi$, regardless of whether $s_1$, $s_2$ are degenerate.
\item There may be right angles on $s_1$, $s_2$ or the circles in $C_\Omega$ coming from the corners of $\Sigma$, which contribute an integral multiple of $\pi/2$. If $\pa_\M\Sigma=\varnothing$, this part does not exist.
\item The ends of the $d_i$ contained in $\pa\Omega$ contribute an integral multiple of $\pi$: An end in a boundary circle contributes two right angles. At puncture ends, all the turning angles are $\pi$, since the tangent vectors of the $d_i$ are fixed.
\end{itemize}
Thus $K_{\Sigma,g,x_0}^{\deltabf'}(\omega)-K_{\Sigma,g,x_0}^\deltabf(\omega)\in\pi\Zb$ if $\pa_\M\Sigma=\varnothing$, $\frac{1}{2}\pi\Zb$ in general.

\begin{figure}
\centering
\includegraphics{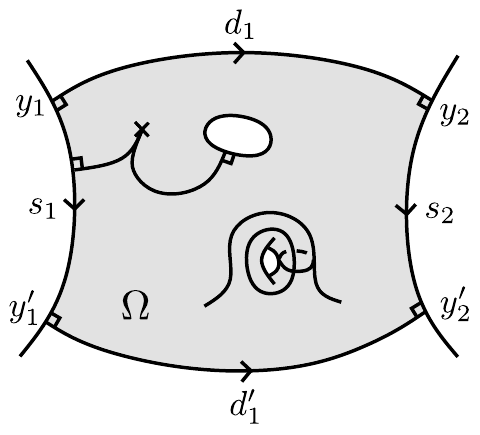}
\caption{Changing $d_1$ to a disjoint homologous $d_1'$}
\label{gb-homologous-change}
\end{figure}

For the general case, we use the following argument. First, for $d_1'$ close to $d_1$, it is easy to find $d_1''$ with $d_1''^\circ\cap d_1^\circ=d_1''^\circ\cap d_1'^\circ=\varnothing$, so that we can apply the above to the pairs $d_1,d_1''$ and $d_1',d_1''$ to get the result. This allows us to perturb $d_1$, $d_1'$. In particular, in the general case, we may assume that $d_1$, $d_1'$ intersect transversally. Then they intersect at finitely many points, giving a finite number of regions on which we apply the Gauss--Bonnet formula. Summing up, the angles at the intersections in $d_1^\circ\cap d_1'^\circ$ cancel out or give $\pi$, so the conclusion is the same.

\paragraph{Step 2: Changing $d_1$ to a nonhomologous $d_1'$.}
It suffices to consider the case of changing an end of $d_1$ to a neighboring element in $C$. For notational convenience, suppose $d_2$ goes from $c_2$ to $c_3$ and $d_1'$ goes from $c_1$ to $c_3$. Again, consider the case $d_1^\circ\cap d_1'^\circ=\varnothing$ for simplicity. As in the previous step, let $\Omega$ be the connected component of $\Sigma^\circ\setminus(\deltabf\cup d_1')$ not containing $x_0$. Let $y_1=d_1\cap c_1$, $y_2=d_1\cap c_2$, $y_2'=d_2\cap c_2$, $y_3=d_2\cap c_3$, $y_1'=d_1'\cap c_1$, $y_3'=d_1'\cap c_3$. For $i=1,2,3$, let $s_i$ be the segment $\pa\Omega\cap c_i$ on $c_i$ from $y_i$ to $y_i'$, which may be degenerate. Without loss of generality, suppose the orientations of $d_1'$, $s_1$ coincide with those induced by $\Omega$, so that the orientations of $d_1$, $s_2$, $d_2$, $s_3$ are reverse to those induced by $\Omega$. See \cref{gb-homology-change} for a possible situation.

Again, $I_{x_0}^{\deltabf'}(\omega)=I_{x_0}^\deltabf(\omega)$ except on $\Omega$, where $I_{x_0}^{\deltabf'}(\omega)=I_{x_0}^\deltabf(\omega)+\int_{e_1}\omega$, so
\begin{align*}
\int_\Sigma K_gI_{x_0}^{\deltabf'}(\omega)\,\d v_g-\int_\Sigma K_gI_{x_0}^\deltabf(\omega)\,\d v_g&=\int_{e_1}\omega\int_\Omega K_g\,\d v_g,\\
\int_{\pa\Sigma}k_gI_{x_0}^{\deltabf'}(\omega)\,\d\ell_g-\int_{\pa\Sigma}k_gI_{x_0}^\deltabf(\omega)\,\d\ell_g&=\int_{e_1}\omega\left(\int_{s_1}k_g\,\d\ell_g-\int_{s_2}k_g\,\d\ell_g-\int_{s_3}k_g\,\d\ell_g+\sum_{c\in C_\Omega}\int_ck_g\,\d\ell_g\right),
\end{align*}
where $C_\Omega=\{c\in C\setminus\{z_i\}_i:c\subset\pa\Omega\}$. Now the $e_i$ change. More precisely, we have the following relations. Deleting $d_1$, $d_2$ from the tree formed by the $d_i$, it splits into three connected components. For $i=1,2,3$, let $C_i$ be the set of elements in $C$ in the connected component of $c_i$, so that $C=C_1\sqcup C_2\sqcup C_3$. Then $[e_1]=\sum_{c\in C_1}[c]$, $[e_2]=\sum_{c\in C_1\sqcup C_2}[c]=-\sum_{c\in C_3}[c]$, $[e_1']=\sum_{c\in C_1}[c]=[e_1]$, $[e_2']=\sum_{c\in C_2}[c]=[e_2]-[e_1]$, and the other $e_i$ are unchanged. Thus
\[\sum_{i=1}^{\btilde-1}\int_{e_i'}\omega\int_{d_i'}k_g\,\d\ell_g-\sum_{i=1}^{\btilde-1}\int_{e_i}\omega\int_{d_i}k_g\,\d\ell_g=\int_{e_1}\omega\left(\int_{d_1'}k_g\,\d\ell_g-\int_{d_1}k_g\,\d\ell_g-\int_{d_2}k_g\,\d\ell_g\right).\]
By the Gauss--Bonnet formula on $\Omega$,
\[\frac{1}{2}\int_\Omega K_g\,\d v_g+\int_{s_1}k_g\,\d\ell_g+\int_{d_1'}k_g\,\d\ell_g-\int_{s_3}k_g\,\ell_g-\int_{d_2}k_g\,\ell_g-\int_{s_2}k_g\,\ell_g-\int_{d_1}k_g\,\ell_g+\sum_{c\in C_\Omega}\int_ck_g\,\d\ell_g=2\pi\chi(\Omega)-\sum\alpha.\]
Summing up, we get
\[K_{\Sigma,g,x_0}^{\deltabf'}(\omega)-K_{\Sigma,g,x_0}^\deltabf(\omega)=\int_{e_1}\omega\left(2\pi\chi(\Omega)-\sum\alpha\right).\]
The angles are counted similarly, the main difference being that the angles at $y_1$, $y_1'$, $y_2$, $y_2'$, $y_3$, $y_3'$ contribute $3\pi$.

\begin{figure}
\centering
\includegraphics{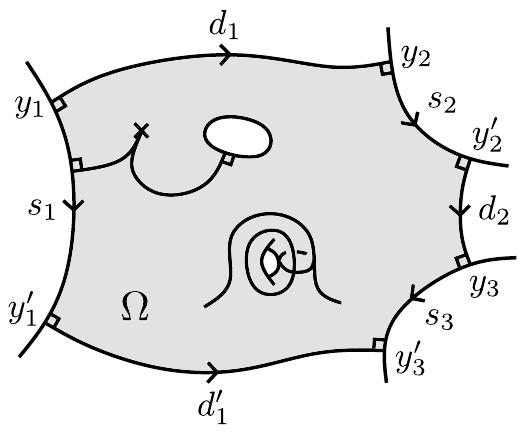}
\caption{Changing $d_1$ to a disjoint nonhomologous $d_1'$}
\label{gb-homology-change}
\end{figure}

\paragraph{Step 3: The general case.}
We turn to the general case. Our goal is to reduce to the case of closed surfaces, which was treated in detail in \cite[Appendix A]{CILT}. Let $\deltabf$, $\deltabf'$ be separating families of $\Sigma$.

Removing a small triangular region at each corner of $\Sigma$ as in \cref{smooth-corners}, we obtain a compact surface without corners $\Sigma'$ such that $(\Sigma\setminus\Sigma')\cap(\deltabf\cup\deltabf'\cup x_0)=\varnothing$. We view $\Sigma'$ as an extended surface with Dirichlet boundary, so that $\deltabf$, $\deltabf'$ are also separating families of $\Sigma'$. The point is that the theorem for $\Sigma$ is equivalent to the theorem for $\Sigma'$. Indeed, since the regularization terms are the same for $\Sigma$ and $\Sigma'$, we have
\[K_{\Sigma,g,x_0}^\deltabf(\omega)-K_{\Sigma',g,x_0}^\deltabf(\omega)=\frac{1}{2}\int_{\Sigma^\circ\setminus\Sigma'}K_gI_{x_0}^\deltabf(\omega)\,\d v_g+\int_{\pa(\Sigma^\circ\setminus\Sigma')}k_gI_{x_0}^\deltabf(\omega)\,\d\ell_g,\]
and likewise for $\deltabf'$. Thus
\begin{align*}
&(K_{\Sigma,g,x_0}^{\deltabf'}(\omega)-K_{\Sigma',g,x_0}^{\deltabf'}(\omega))-(K_{\Sigma,g,x_0}^\deltabf(\omega)-K_{\Sigma',g,x_0}^\deltabf(\omega))\\
&\hspace{5em}=\frac{1}{2}\int_{\Sigma^\circ\setminus\Sigma'}K_g(I_{x_0}^{\deltabf'}(\omega)-I_{x_0}^\deltabf(\omega))\,\d v_g+\int_{\pa(\Sigma^\circ\setminus\Sigma')}k_g(I_{x_0}^{\deltabf'}(\omega)-I_{x_0}^\deltabf(\omega))\,\d\ell_g.
\end{align*}
Now, on each connected component of $\Sigma^\circ\setminus\Sigma'$, i.e., on the removed region bounded by $\pa\Sigma$ and $\pa\Sigma'$ at a corner of $\Sigma$, $I_{x_0}^{\deltabf'}(\omega)-I_{x_0}^\deltabf(\omega)$ is constant with value given by a cycle of $\omega$, which is in $\Zb$ by assumption. We take it out of the integral and apply the Gauss--Bonnet formula to get $(K_{\Sigma,g,x_0}^{\deltabf'}(\omega)-K_{\Sigma',g,x_0}^{\deltabf'}(\omega))-(K_{\Sigma,g,x_0}^\deltabf(\omega)-K_{\Sigma',g,x_0}^\deltabf(\omega))=(K_{\Sigma,g,x_0}^{\deltabf'}(\omega)-K_{\Sigma,g,x_0}^\deltabf(\omega))-(K_{\Sigma',g,x_0}^{\deltabf'}(\omega)-K_{\Sigma',g,x_0}^\deltabf(\omega))\in\frac{1}{2}\pi\Zb$. Thus we may assume that $\Sigma$ has no corners.

In the same vein, one observes that the theorem is invariant after removing a small disk around each puncture $z_i$ (such that the boundary of the disk is orthogonal to the $d_j$ incident to $z_i$, see \cref{remove-puncture}). Thus we may assume that $\Sigma$ has no punctures.

\begin{figure}
\centering
\begin{subfigure}[t]{0.3\textwidth}
    \centering
    \includegraphics{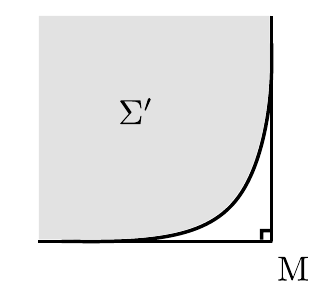}
    \caption{Eliminating a corner}
    \label{smooth-corners}
\end{subfigure}
\begin{subfigure}[t]{0.3\textwidth}
    \centering
    \raisebox{2ex}{\includegraphics{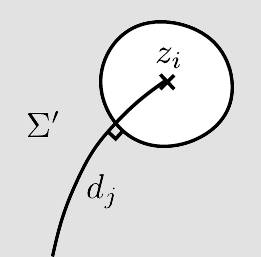}}
    \caption{Eliminating a puncture}
    \label{remove-puncture}
\end{subfigure}
\caption{Reducing to the case of closed surfaces}
\end{figure}

Now suppose $\Sigma$ has no corners or punctures. We use the same argument as in \cite[Lemma 8.3]{CILT} to remove the boundary. We label the boundary $\pa\Sigma$ as Neumann. By \cref{curv-term-calculations} below, we can take $g$ Neumann extendible and $x_0\in\pa\Sigma$. Let $\deltabf=(a_i,b_i)_{i=1}^{\g}\times(d_j)_{j=1}^{\btilde-1}$ be a separating family of $\Sigma$. By Step 2, we may change the $d_j$ so that $d_j$ goes from $c_j$ to $c_{\btilde}$ for all $j$. Consider the doubling $\Sigma^{\#2}$ of $\Sigma$ and the reflection $\tau$ on $\Sigma^{\#2}$. For all $j$, $d_j^{\#2}=d_j\cup\tau(d_j)$ glues to a simple circle on $\Sigma^{\#2}$, and it is easy to see that $\deltabf^{\#2}=(a_i,b_i,\tau(b_i),\tau(a_i))_{i=1}^\g\times(d_j^{\#2},c_j)_{j=1}^{\btilde-1}$ is a separating family (i.e., interior topological basis) of $\Sigma^{\#2}$, and $K_{\Sigma^{\#2},g^{\#2},x_0}^{\deltabf^{\#2}}(\omega^{\#2})=\frac{1}{2}K_{\Sigma,g,x_0}^{\deltabf}(\omega)$. One does the same with any other separating family $\deltabf'$ of $\Sigma$ to get $\deltabf'^{\#2}$, so $K_{\Sigma,g,x_0}^{\deltabf'}(\omega)-K_{\Sigma,g,x_0}^\deltabf(\omega)=\frac{1}{2}\big(K_{\Sigma^{\#2},g^{\#2},x_0}^{\deltabf'^{\#2}}(\omega^{\#2})-K_{\Sigma^{\#2},g^{\#2},x_0}^{\deltabf^{\#2}}(\omega^{\#2})\big)\in\pi\Zb$ by \cite[Appendix A]{CILT}.
\end{proof}

\begin{rem*}
Recall that we have a different convention for the $d_i$ in a separating family, which results in a different anomaly here. In the convention of \cite{CILT}, the anomaly is in $2\pi\Zb$. In our convention, the anomaly is at best in $\pi\Zb$, even for closed surfaces. We remark also that an anomaly in $\frac{\pi}{2}\Zb$ seems inevitable in the presence of corners.
\end{rem*}

The following are direct calculations:

\begin{prop}\label{curv-term-calculations}
The following hold:
\begin{enumerate}[label=\emph{(\arabic*)}]
\item $K_{\Sigma,g,x_0}^\deltabf(\omega)$ is linear in $\omega$.
\item For $f\in C^1(\Sigma\setminus\zbf,\Rb)\cap L^1(\Sigma,\Rb)$,
\[K_{\Sigma,g,x_0}^\deltabf(df)=\frac{1}{2}\int_\Sigma K_gf\,\d v_g+\int_{\pa\Sigma}k_gf\,\d\ell_g-(2\pi\chi(\Sigma)-\pi\bMD)f(x_0).\]
\item \emph{(\textbf{Change of base point})} For $x_0,x_0'\in\Sigma\setminus\deltabf$,
\[K_{\Sigma,g,x_0'}^\deltabf(\omega)-K_{\Sigma,g,x_0}^\deltabf(\omega)=(2\pi\chi(\Sigma)-\pi\bMD)\int_\gamma\omega,\]
where $\gamma$ is any path on $\Sigma\setminus\deltabf$ from $x_0$ to $x_0'$.
\item \emph{(\textbf{Change of conformal metric})} For $\rho\in C^\infty(\Sigma,\Rb)$,
\[K_{\Sigma,e^\rho g,x_0}^\deltabf(\omega)-K_{\Sigma,g,x_0}^\deltabf(\omega)=\frac{1}{2}\int_\Sigma\la d\rho,\omega\ra_g\,\d v_g.\]
\end{enumerate}
\end{prop}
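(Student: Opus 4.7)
Part (1) is immediate: every term in the definition of $K_{\Sigma,g,x_0}^\deltabf(\omega)$ is linear in $\omega$, since $I_{x_0}^\deltabf(\omega)$ depends linearly on $\omega$ and the cycle integrals $\int_{a_i}\omega,\int_{b_i}\omega,\int_{e_i}\omega$ are linear as well.

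For part (2), take $\omega=df$. Since $df$ is exact, all of its cycle integrals vanish, so the regularization terms in the definition of $K_{\Sigma,g,x_0}^\deltabf(df)$ all disappear. Moreover $I_{x_0}^\deltabf(df)(x)=f(x)-f(x_0)$ by the fundamental theorem of calculus (the fact that $f$ is single-valued ensures that this agrees on both sides of any curve in $\deltabf$, even though $I_{x_0}^\deltabf(df)$ is \emph{a priori} only defined on $\Sigma\setminus\deltabf$). Substituting, we get
\[K_{\Sigma,g,x_0}^\deltabf(df)=\frac{1}{2}\int_\Sigma K_gf\,\d v_g+\int_{\pa\Sigma}k_gf\,\d\ell_g-f(x_0)\left(\frac{1}{2}\int_\Sigma K_g\,\d v_g+\int_{\pa\Sigma}k_g\,\d\ell_g\right).\]
By the Gauss--Bonnet formula applied to $\Sigma$, the parenthesized quantity equals $2\pi\chi(\Sigma)-\sum\alpha$, where the turning angles $\alpha$ come from the corners of $\Sigma$. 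Each corner is a right angle by hypothesis, and one checks that there are exactly $2\bMD$ of them (each $c_i^\M$ contains $k_i$ Neumann and $k_i$ Dirichlet semicircles alternating around the circle, hence $2k_i$ corners, and $\bMD=\sum_i k_i$), giving $\sum\alpha=\pi\bMD$.

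Part (3) follows by the same observation: if $\gamma$ runs from $x_0$ to $x_0'$ in $\Sigma\setminus\deltabf$, then $I_{x_0'}^\deltabf(\omega)-I_{x_0}^\deltabf(\omega)=-\int_\gamma\omega$ is a constant on $\Sigma\setminus\deltabf$, and the regularization terms in the definition do not depend on $x_0$. So the difference reduces to $-\int_\gamma\omega\cdot(\tfrac12\int_\Sigma K_g\,\d v_g+\int_{\pa\Sigma}k_g\,\d\ell_g)$, which gives the claim after applying Gauss--Bonnet (the sign will work out once one fixes the orientation of $\gamma$ consistently with the convention in the definition of $I_{x_0}^\deltabf$).

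Part (4) is the substantive computation. Under $g\mapsto e^\rho g$, one has $K_{e^\rho g}\,\d v_{e^\rho g}=(K_g-\Delta_g\rho)\,\d v_g$ and, for any smooth curve $c$ with a chosen normal $\nu$, $k_{e^\rho g}\,\d\ell_{e^\rho g}=(k_g+\tfrac12\pa_\nu\rho)\,\d\ell_g$. Substituting these into each term of the definition, the difference $K_{\Sigma,e^\rho g,x_0}^\deltabf(\omega)-K_{\Sigma,g,x_0}^\deltabf(\omega)$ becomes
\[-\frac{1}{2}\int_\Sigma\Delta_g\rho\cdot I_{x_0}^\deltabf(\omega)\,\d v_g+\frac{1}{2}\int_{\pa\Sigma}\pa_\nu\rho\cdot I_{x_0}^\deltabf(\omega)\,\d\ell_g+\frac{1}{2}\sum_i\!\Big(\!\int_{a_i}\!\omega\int_{b_i}\!\pa_\nu\rho-\int_{b_i}\!\omega\int_{a_i}\!\pa_\nu\rho\Big)\,\d\ell_g+\frac{1}{2}\sum_j\int_{e_j}\!\omega\int_{d_j}\!\pa_\nu\rho\,\d\ell_g.\]
The plan is to apply Green's identity $\int_\Omega(\phi\Delta_g\psi-\psi\Delta_g\phi)\,\d v_g=\int_{\pa\Omega}(\psi\pa_\nu\phi-\phi\pa_\nu\psi)\,\d\ell_g$ to the first term, working on $\Sigma_\deltabf$ (the surface cut along $\deltabf$, cf.\ the proof of \cref{null-homologous3}). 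On $\Sigma_\deltabf$, $I_{x_0}^\deltabf(\omega)$ is single-valued and $dI_{x_0}^\deltabf(\omega)=\omega$. Integration by parts gives
\[-\frac{1}{2}\int_\Sigma\Delta_g\rho\cdot I_{x_0}^\deltabf(\omega)\,\d v_g=\frac{1}{2}\int_\Sigma\la d\rho,\omega\ra_g\,\d v_g-\frac{1}{2}\int_{\pa\Sigma_\deltabf}\pa_\nu\rho\cdot I_{x_0}^\deltabf(\omega)\,\d\ell_g.\]
The boundary $\pa\Sigma_\deltabf$ decomposes into $\pa\Sigma$ together with two copies of each curve in $\deltabf$ (with opposite orientations). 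On each pair of copies of $a_i$ (respectively $b_i$, $d_j$), $I_{x_0}^\deltabf(\omega)$ differs by the jump, which equals the cycle integral $\int_{b_i}\omega$ (respectively $-\int_{a_i}\omega$, $\pm\int_{e_j}\omega$). The main obstacle will be tracking these signs and orientations carefully, but the resulting boundary terms are designed to cancel exactly with the $\pa_\nu\rho$ terms coming from $\pa\Sigma$ and from the regularization corrections listed above, leaving only $\tfrac12\int_\Sigma\la d\rho,\omega\ra_g\,\d v_g$, as claimed. This is precisely the bookkeeping for which the regularization terms in the definition of $K_{\Sigma,g,x_0}^\deltabf$ were designed.
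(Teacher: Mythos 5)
Your overall approach matches the paper's exactly: parts (1)--(3) are read off the definition together with the Gauss--Bonnet formula on $\Sigma$ with its $2\bMD$ right-angle corners (which the paper likewise marks as ``clear''), and part (4) proceeds by Green's identity on the cut surface $\Sigma_\deltabf$, with the jump discontinuities $I^+-I^-$ across the curves in $\deltabf$ producing cycle integrals that cancel against the regularization terms in the definition of $K_{\Sigma,g,x_0}^\deltabf$. This is the same integration-by-parts argument the paper carries out.

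Two concrete issues. First, your conformal-change formulas carry signs opposite to the paper's conventions. The paper uses the geometer's Laplacian $\Delta_g=d^*d$ and (for the $k_g$ formula) the outward normal $\nu$, so that $K_{e^\rho g}\,\d v_{e^\rho g}=(K_g+\Delta_g\rho)\,\d v_g$ and $k_{e^\rho g}\,\d\ell_{e^\rho g}=(k_g-\tfrac12\pa_\nu\rho)\,\d\ell_g$; you wrote $(K_g-\Delta_g\rho)$ and $(k_g+\tfrac12\pa_\nu\rho)$, which are correct only for the analyst's (negative) Laplacian and the inward normal respectively. Meanwhile the Green's identity you state, $\int_\Omega(\phi\Delta_g\psi-\psi\Delta_g\phi)\,\d v_g=\int_{\pa\Omega}(\psi\pa_\nu\phi-\phi\pa_\nu\psi)\,\d\ell_g$, is the one valid for $\Delta_g=d^*d$ with outward $\nu$. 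You thus mix three conventions. Since the sign flip enters twice, the global sign of $\tfrac12\int_\Sigma\la d\rho,\omega\ra_g\,\d v_g$ is still recoverable, but the ``careful bookkeeping'' you explicitly defer is exactly where an inconsistent $\nu$ would flip a regularization term and break the cancellation; this should be nailed down rather than assumed. Second, in part (3) your computation correctly produces $I_{x_0'}^\deltabf(\omega)-I_{x_0}^\deltabf(\omega)=-\int_\gamma\omega$ and hence a factor $-(2\pi\chi(\Sigma)-\pi\bMD)\int_\gamma\omega$, which is \emph{opposite} to the stated formula; this is also what part (2) forces upon substituting $\omega=df$ and varying $x_0$. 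Rather than saying ``the sign will work out,'' you should flag that the stated formula appears to have a sign typo (or implicitly reverses the orientation of $\gamma$).
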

\begin{proof}
Since $\Sigma$ has $2\bMD$ corners, the Gauss--Bonnet formula on $\Sigma$ reads
\[\frac{1}{2}\int_\Sigma K_g\,\d v_g+\int_{\pa\Sigma}k_g\,\d\ell_g=2\pi\chi(\Sigma)-\pi\bMD.\]
(1-3) are clear. As for (4), recall that
\[K_{e^\rho g}=e^{-\rho}(K_g+\Delta_g\rho),\qquad
\d v_{e^\rho g}=e^\rho\;\!\d v_g,\qquad
k_{e^\rho g}=e^{-\frac{1}{2}\rho}(k_g-\tfrac{1}{2}\pa_\nu\rho),\qquad
\d\ell_{e^\rho g}=e^{\frac{1}{2}\rho}\;\!\d\ell_g.\]
Thus
\begin{align*}
\int_\Sigma K_{e^\rho g}I_{x_0}^\deltabf(\omega)\,\d v_{e^\rho g}&=\int_\Sigma(K_g+\Delta_g\rho)I_{x_0}^\deltabf(\omega)\,\d v_g,\\
\int_{\pa\Sigma}k_{e^\rho g}I_{x_0}^\deltabf(\omega)\,\d\ell_{e^\rho g}&=\int_{\pa\Sigma}(k_g-\tfrac{1}{2}\pa_\nu\rho)I_{x_0}^\deltabf(\omega)\,\d\ell_g.
\end{align*}
By integration by parts on $\Sigma\setminus\sigmabf$,\footnote{We are really doing integration by parts on $\Sigma_\deltabf$ from the proof of \cref{null-homologous3}.}
\[\int_{\Sigma\setminus\sigmabf}\Delta_g\rho\,I_{x_0}^\deltabf(\omega)\,\d v_g=\int_{\sigmabf^\circ}\pa_\nu\rho\,(I_{x_0}^\deltabf(\omega)^+-I_{x_0}^\deltabf(\omega)^-)\,\d\ell_g+\int_{\pa\Sigma}\pa_\nu\rho\,I_{x_0}^\deltabf(\omega)\,\d\ell_g+\int_{\Sigma\setminus\sigma}\la d\rho,\omega\ra_\rho\,\d v_g,\]
where $I_{x_0}^\deltabf(\omega)^\pm$ is the limit of $I_{x_0}^\deltabf(\omega)$ as one approaches from the right (resp.\ left). By definition, $I_{x_0}^\deltabf(\omega)^+-I_{x_0}^\deltabf(\omega)^-$ is equal to $\int_{b_i}\omega$ on $a_i$, $-\int_{a_i}\omega$ on $b_i$, $-\int_{e_i}\omega$ on $d_i$. Thus
\[\int_{\sigmabf^\circ}\pa_\nu\rho\,(I_{x_0}^\deltabf(\omega)^+-I_{x_0}^\deltabf(\omega)^-)\,\d\ell_g=-\sum_{i=1}^\g\left(\int_{a_i}\omega\int_{b_i}\pa_\nu\rho\,\d\ell_g-\int_{b_i}\omega\int_{a_i}\pa_\nu\rho\,\d\ell_g\right)
-\sum_{i=1}^{\btilde-1}\int_{e_i}\omega\int_{d_i}\pa_\nu\rho\,\d\ell_g.\]
Summing up, we get the desired formula.
\end{proof}

\begin{prop}[\textbf{Change under gluing}]\label{curv-gluing}
We resume the notations in \cref{gluing-top}.
\begin{itemize}
\item In the case of gluing two surfaces, let $g$ be a conformal metric on $\Sigma\#\Sigma'$ and $x_0$ a point on the glued (semi)circle. For a closed $1$-form $\omega$ on $\Sigma\#\Sigma'$, $K_{\Sigma\#\Sigma',g,x_0}^{\deltabf\#\deltabf'}(\omega)=K_{\Sigma,g|_\Sigma,x_0}^\deltabf(\omega)+K_{\Sigma',g|_{\Sigma'},x_0}^{\deltabf'}(\omega)$.
\item In the case of self-gluing a surface, let $g$ be a conformal metric on $\#\Sigma$ and $x_0$ a point \emph{not} on the glued (semi)circle. For a closed $1$-form $\omega$ on $\#\Sigma$, $K_{\#\Sigma,g,x_0}^{\#\deltabf}(\omega)=K_{\Sigma,g|_\Sigma,x_0}^\deltabf(\omega)$ except in case 1 of the case of gluing along a semicircle, where $K_{\#\Sigma,g,x_0}^{\#\deltabf}(\omega)-K_{\Sigma,g|_\Sigma,x_0}^\deltabf(\omega)\in(\int_{d_1^\#}\omega)\pi\Zb$.
\end{itemize}
\end{prop}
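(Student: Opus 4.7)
The plan is to verify each equality by decomposing the four constituent pieces of $K^\deltabf_{\Sigma,g,x_0}(\omega)$ (bulk curvature, boundary curvature, $(a_i,b_i)$-terms, and $d_i$-terms) and checking how each one behaves under the gluing operation. The guiding observation is that the geometric data are local, so the nontrivial content lies entirely in the combinatorial accounting of the separating family and the matching of $I^\deltabf_{x_0}(\omega)$ across the glued (semi)circle.

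For gluing two surfaces, since $x_0$ is chosen on the glued (semi)circle $\Cc$, a path from $x_0$ to $x\in\Sigma$ (resp.\ $x\in\Sigma'$) avoiding $\deltabf\#\deltabf'$ can be chosen entirely inside $\Sigma\setminus\deltabf$ (resp.\ $\Sigma'\setminus\deltabf'$), so $I^{\deltabf\#\deltabf'}_{x_0}(\omega)$ restricts to $I^\deltabf_{x_0}(\omega|_\Sigma)$ and $I^{\deltabf'}_{x_0}(\omega|_{\Sigma'})$ respectively. The bulk integral then splits additively, and the boundary integral does too once one notes that the contributions along $\Cc$ from each side cancel: $g$ being smooth across $\Cc$ on $\Sigma\#\Sigma'$, the signed geodesic curvatures $k_{g|_\Sigma}$ and $k_{g|_{\Sigma'}}$ (computed with the respective induced boundary orientations, which are opposite) sum to zero. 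The $(a_i,b_i)$-terms clearly split. For the $d_i$-terms, the curves $d_i$ with $i\geq 2$ are unchanged, while $d_1\cup d_1'$ merges into $d_1^\#$. A short combinatorial calculation using the tree description yields the crucial sign identity $\int_{e_1^\#}\omega=\int_{e_1}\omega=-\int_{e_1'}\omega$ (the second equality coming from $\sum_{c\in C_\Sigma}[c]=0$ and $\int_{c_1^\D}\omega=-\int_{c_1'^\D}\omega$, due to opposite boundary orientations of the glued circles). Combined with $\int_{d_1^\#}k_g\,\d\ell=\int_{d_1}k_{g|_\Sigma}\,\d\ell-\int_{d_1'}k_{g|_{\Sigma'}}\,\d\ell$, the product $\int_{e_1^\#}\omega\int_{d_1^\#}k_g\,\d\ell$ reproduces exactly the sum of the $d_1$-term from $\Sigma$ and the $d_1'$-term from $\Sigma'$, giving the claimed equality.

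For self-gluing, the same bookkeeping applies, but now $x_0$ lies off the glued (semi)circle, so we use a path from $x_0$ in $\Sigma\setminus\deltabf$ that also avoids $\#\deltabf$ (which is arranged by the choices of \cref{glue3,glue4}). In the three cases where $\#\deltabf$ is obtained from $\deltabf$ by adding an interior pair $(\Cc,d_1^\#)$ and by at most re-attaching some $d_i$ endpoints from the disappeared glued boundary circles to $\Cc$ (circle self-gluing and case 2 of semicircle self-gluing), the new cycle term $\int_{\Cc}\omega\int_{d_1^\#}k_g\,\d\ell-\int_{d_1^\#}\omega\int_{\Cc}k_g\,\d\ell$ is exactly what is needed to compensate for the loss of the $\int_{c_1^\D}k_g I\,\d\ell$ and $\int_{c_2^\D}k_g I\,\d\ell$ boundary contributions. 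In case 1 of semicircle self-gluing, however, the separating family $\#\deltabf$ requires replacing the glued mixed circle by a nearby perturbed curve $c$; this modification is of the same type as a ``change of separating family'' on the subsurface bounded by $c$ and the original circle, and applying the Gauss--Bonnet calculation from the proof of \cref{curv-inv} produces a defect equal to $\int_{d_1^\#}\omega$ times the sum of angles at the glued corners, which lies in $\pi\Zb$.

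The main obstacle is the combinatorial accounting of $e_1^\#$ in the glued tree, together with tracking orientation conventions consistently — in particular, the identity $\int_{e_1^\#}\omega=-\int_{e_1'}\omega$ hinges on the glued circles being opposite as oriented boundary components, which in turn depends on the outgoing/incoming convention fixed in \cref{extended-surface}. Once this sign is nailed down, the bulk, boundary, and cycle contributions line up term by term, and the anomaly in case 1 emerges cleanly as an angle-defect contribution scaled by a cycle of $\omega$, mirroring the $\pi\Zb$-anomaly of \cref{curv-inv}.
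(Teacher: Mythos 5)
Your proof takes essentially the same approach as the paper: both identify case 1 of the semicircle self-gluing (where the glued mixed circle must be perturbed into an interior curve $c$) as the only nontrivial case, and handle it via Gauss--Bonnet on the region $\Omega$ between $c$ and $c_2^\M$, producing the $\pi\Zb$-defect from the $2k_2$ right-angle corners. The paper invests its detail there, explicitly tracking how the bulk, boundary, and regularization contributions combine, while you instead elaborate the routine term-by-term bookkeeping for the other cases and only sketch this one; your minor slips (the ``three cases'' count, and describing case~2 of semicircle self-gluing as adding an interior pair $(\Cc,d_1^\#)$ when in fact it only adds $\Cc$ as a new $d$-curve) do not affect the conclusion.
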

\begin{proof}
The only nontrivial case is case 1 of the case of gluing along a semicircle (\cref{glue4-pic}), where we moved $\Cc$ to $c\not\subset\bigcup\deltabf$. We have $I_{x_0}^{\#\deltabf}(\omega)=I_{x_0}^\deltabf(\omega)$ except on the region $\Omega$ bounded by $c_2^\M$ and $c$, where $I_{x_0}^{\#\deltabf}(\omega)=I_{x_0}^\deltabf(\omega)-\int_{d_1^\#}\omega$. Thus
\begin{align*}
\int_{\#\Sigma}K_gI_{x_0}^{\#\deltabf}(\omega)\,\d v_g-\int_\Sigma K_gI_{x_0}^\deltabf(\omega)\,\d v_g&=-\int_{d_1^\#}\omega\int_\Omega K_g\,\d v_g,\\
\int_{\pa\#\Sigma}k_gI_{x_0}^{\#\deltabf}(\omega)\,\d\ell_g-\int_{\pa\Sigma}k_gI_{x_0}^\deltabf(\omega)\,\d\ell_g&=-\int_{d_1^\#}\omega\int_{c_2^\MN}k_g\,\d\ell_g-\left(\int_{c_2^\MD}k_gI_{x_0}^\deltabf(\omega)\,\d\ell_g-\int_{c_1^\MD}k_gI_{x_0}^\deltabf(\omega)\,\d\ell_g\right)\\
&=-\int_{d_1^\#}\omega\int_{c_2^\MN}k_g\,\d\ell_g-\int_{d_1^\#}\omega\int_{c_2^\MD}k_g\,\d\ell_g=-\int_{d_1^\#}\omega\int_{c_2^\M}k_g\,\d\ell_g.
\end{align*}
The difference between the regularization terms is
\[\int_c\omega\int_{d_1^\#}k_g\,\d\ell_g-\int_{d_1^\#}\omega\int_{c}k_g\,\d\ell_g-\int_{c_2^\M}\omega\int_{d_1}k_g\,\d\ell_g=-\int_{d_1^\#}\omega\int_ck_g\,\d\ell_g.\]
By the Gauss--Bonnet formula on $\Omega$,
\[\frac{1}{2}\int_\Omega K_g\,\d v_g+\int_ck_g\,\d\ell_g+\int_{c_2^\M}k_g\,\d\ell_g=-k_2\pi,\]
where $k_2$ is the number of Dirichlet (or Neumann) semicircles on $c_2^\M$, so that $c_2^\M$ has $2k_2$ corners. Summing up, we get the desired formula.
\end{proof}

Finally, we discuss the dependence on $\vbf$, which we have implicitly fixed up to this point. We view the tangent bundle of $\Sigma$ as a complex line bundle, so that $\lambda v$ is defined for $\lambda\in\Cb$ and $v$ a tangent vector on $\Sigma$. For $\thetabf=(\theta_1,\ldots,\theta_\s)\in\Rb^\s$, we write $e^{\i\thetabf}\vbf=(e^{\i\theta_1}v_1,\ldots,e^{\i\theta_\s}v_\s)$.

\begin{prop}[\textbf{Change of tangent vectors}]\label{curv-term-spin}
Let $\deltabf$ be a separating family with respect to $\vbf$. For $\thetabf\in[0,2\pi)^\s$, there exists a separating family $\deltabf'$ with respect to $e^{\i\thetabf}\vbf$ such that $K_{\Sigma,g,x_0}^{\deltabf'}(\omega)-K_{\Sigma,g,x_0}^\deltabf(\omega)=\sum_{i=1}^\s\theta_im_i$ for any $\omega$, where $m_i$ is the winding number of $\omega$ around $z_i$.
\end{prop}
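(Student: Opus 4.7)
The plan is built on two preliminary observations. First, the difference $K_{\Sigma,g,x_0}^{\deltabf'}(\omega) - K_{\Sigma,g,x_0}^\deltabf(\omega)$ does not depend on the conformal metric $g$: by (4) of \cref{curv-term-calculations}, a Weyl rescaling adds the same term $\frac12\int_\Sigma\la d\rho,\omega\ra_g\,\d v_g$ to each, which cancels in the difference. Second, rotations at distinct punctures can be realized by local modifications supported in disjoint small disks and then composed, so it suffices to treat the case where only a single $\theta_j$ is nonzero.

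For this one-puncture case, fix a holomorphic coordinate $w$ centered at $z_j$ with $v_j$ pointing at angle $0$, and pick $\varepsilon>0$ small enough that $D_\varepsilon(z_j)$ intersects $\deltabf$ only in the finitely many arcs $d_i\cap D_\varepsilon$ with $d_i\ni z_j$. Using the metric independence, choose $g$ flat on $D_\varepsilon(z_j)$ with each such $d_i\cap D_\varepsilon$ a straight radial segment. Define $\deltabf'$ by replacing each such arc with a smooth curve agreeing with $d_i$ near $\pa D_\varepsilon(z_j)$ and whose tangent at $z_j$ is rotated counterclockwise through angle $\theta_j$. The tree combinatorics of $\deltabf'$ is identical to that of $\deltabf$, so $[e_i']=[e_i]$ in $H_1(\Sigma\setminus\zbf)$ and hence $\int_{e_i'}\omega=\int_{e_i}\omega$.

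Now I compare terms. The bulk integral $\frac12\int_\Sigma K_g I_{x_0}^\deltabf(\omega)\,\d v_g$ is unchanged: outside $D_\varepsilon(z_j)$ one has $I_{x_0}^{\deltabf'}=I_{x_0}^\deltabf$, and inside $K_g\equiv 0$. The boundary integral is unchanged because $D_\varepsilon(z_j)\cap\pa\Sigma=\varnothing$. Among the regularization terms, only those attached to the modified $d_i$ change: with $g$ flat and $d_i$ straight, $\int_{d_i}k_g\,\d\ell_g=0$, while $\int_{d_i'}k_g\,\d\ell_g=\theta_j$ is the total signed turning of $d_i'$ oriented into $z_j$ (both $\int_{e_i}\omega$ and $\int_{d_i}k_g\,\d\ell_g$ flip sign under reversal of $d_i$, so their product is orientation independent). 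Thus
\[K_{\Sigma,g,x_0}^{\deltabf'}(\omega)-K_{\Sigma,g,x_0}^\deltabf(\omega)=\theta_j\sum_{d_i\ni z_j}\int_{e_i}\omega.\]

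Finally, I would verify $\sum_{d_i\ni z_j}[e_i]=[\pc_j]$ in $H_1(\Sigma\setminus\zbf)$. Writing $[e_i]=\sum_{c\in C_s}[c]$ with $C_s$ the component of the separating tree not containing $z_j$ after removing $d_i$, every $c\in C\setminus\{z_j\}$ lies in exactly one such $C_s$, namely the one corresponding to the unique tree edge adjacent to $z_j$ on the path from $z_j$ to $c$. Therefore $\sum_{d_i\ni z_j}[e_i]=\sum_{c\in C\setminus\{z_j\}}[c]=-[z_j]=[\pc_j]$, where we use the null-homologous relation $\sum_{c\in C}[c]=0$ in $H_1(\Sigma\setminus\zbf)$ (which expresses the fact that the oriented boundary of $\Sigma$ with tiny disks around the punctures removed is a boundary). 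Pairing with $\omega$ yields $\sum_{d_i\ni z_j}\int_{e_i}\omega=m_j$ and the desired formula. The main obstacle is consistent orientation bookkeeping, since the turning-angle sign, the orientation of each $e_i$ relative to $d_i$, and the sign in $[e_i]=\sum_{c\in C_s}[c]=-\sum_{c\in C_t}[c]$ must all be tracked coherently; the already-noted invariance of $\int_{e_i}\omega\cdot\int_{d_i}k_g\,\d\ell_g$ under reversal of $d_i$ is the key fact that allows us to pick any single consistent choice and conclude.
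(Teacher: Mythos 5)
Your proof is correct, but it takes a genuinely different route from the paper's. The paper's proof first builds a global diffeomorphism $\Phi$ that rotates by $\theta_i$ near each $z_i$ and is the identity elsewhere, applies the Gauss--Bonnet formula to the lune bounded by $d_i$ and $\Phi_*d_i$ in the special case where each puncture meets exactly one $d_i$, and then reduces the general case to this one by observing that $\Phi$ preserves the turning angles appearing in the anomaly of \cref{curv-inv}. You instead exploit the metric independence of the \emph{difference} $K^{\deltabf'}_{\Sigma,g,x_0}(\omega)-K^{\deltabf}_{\Sigma,g,x_0}(\omega)$ (from (4) of \cref{curv-term-calculations}) to pass to a flat metric with radial $d_i$ near $z_j$, then compute the change of each $\int_{d_i}k_g\,\d\ell_g$ directly as the signed turning angle $\theta_j$, while the bulk and boundary curvature terms are manifestly unchanged because $K_g\equiv0$ near $z_j$. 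This avoids Gauss--Bonnet and the conjugation argument entirely, and it handles an arbitrary number of $d_i$ incident to $z_j$ in one pass, provided you supply the combinatorial identity $\sum_{d_i\ni z_j}[e_i]=[\pc_j]$ — which you prove explicitly via the tree structure of $\dbf$ and the null-homologous relation $\sum_{c\in C}[c]=0$, a point left implicit in the paper. Overall your argument is arguably tighter.

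One small gap to mind: when several $d_i$ share the endpoint $z_j$, you cannot rotate each arc independently, since the modified arcs could intersect in $D_\varepsilon(z_j)$; a separating family requires the $d_i$ to be disjoint except at endpoints. The cleanest fix is to define $\deltabf'=\Phi_*\deltabf$ with $\Phi$ the diffeomorphism from the paper's proof (a $\chi$-tapered rotation in each local chart), which pushes forward all the arcs simultaneously and therefore preserves disjointness; then your direct flat-metric computation of the turning of $\Phi_*d_i$ applies verbatim. With that replacement, the proof is complete.
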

\begin{proof}
Let $\chi\in C^\infty(\Rb_+,\Rb)$ with $\chi|_{[0,1/3]}=1$, $\chi|_{[2/3,\infty)}=0$. For $\alpha\in\Rb$, $R_\alpha:re^{\i\theta}\mapsto re^{\i(\theta+\alpha\chi(r))}$ is a diffeomorphism of $\Db$ with $R_\alpha=\id$ near $\pa\Db$, $R_\alpha=e^{\i\alpha}\,\id$ near $0$. For each $z_i$, let $\Psi_i:D_i\to\Db$ be a chart centered at $z_i$ such that the $D_i$ are disjoint and $x_0\not\in\bigsqcup_{i=1}^\s D_i$. Consider the diffeomorphism $\Phi$ of $\Sigma$ defined to be $\Psi_i^{-1}R_{\theta_i}\Psi_i$ on $D_i$ and $\id$ on $\Sigma\setminus\bigsqcup_{i=1}^\s D_i$. Then $\deltabf$ is a separating family with respect to $\vbf$ if and only if $\Phi_*\deltabf$ is a separating family with respect to $e^{\i\thetabf}\vbf$. We claim that $K_{\Sigma,g,x_0}^{\Phi_*\deltabf}(\omega)-K_{\Sigma,g,x_0}^\deltabf(\omega)=\sum_{i=1}^\s\theta_i\int_{\pc_i}\omega$. If $\deltabf$ is such that $z_i\cap d_j\neq\varnothing$ if and only if $i=j$, then this follows from a simple application of the Gauss--Bonnet formula on the region in each $D_i$ bounded by $d_i$ and $\Phi_*d_i$. Let $\deltabf'$ be another separating family with respect to $\vbf$. Since $\Phi$ preserves angles near $\pa\Sigma\cup\zbf$, it does not change the turning angles in the Gauss--Bonnet formulas in the proof of \cref{curv-inv}, so $K_{\Sigma,g,x_0}^{\deltabf'}(\omega)-K_{\Sigma,g,x_0}^\deltabf(\omega)=K_{\Sigma,g,x_0}^{\Phi_*\deltabf'}(\omega)-K_{\Sigma,g,x_0}^{\Phi_*\deltabf}(\omega)$, i.e., $K_{\Sigma,g,x_0}^{\Phi_*\deltabf'}(\omega)-K_{\Sigma,g,x_0}^{\deltabf'}(\omega)=K_{\Sigma,g,x_0}^{\Phi_*\deltabf}(\omega)-K_{\Sigma,g,x_0}^\deltabf(\omega)$. Thus the claim holds for any separating family $\deltabf$.
\end{proof}

\section{Imaginary Gaussian Multiplicative Chaos}\label{GMC}
In this section we focus on the GFF on $\Sigma$ with at least one Neumann boundary, i.e. $\paN\Sigma\not=\varnothing$. As the metric is assumed to be Neumann extendible, we have the doubled surface $\Sigma^{\#2}$ with a doubled metric $g^{\#2}$. Moreover, recall $\tau$ is the involution on $\Sigma^{\#2}$, we have the following relation between the green function $G_g^{\Sigma}$ and $G_{g^{\#2}}^{\Sigma^{\#2}}$
\begin{equation*}
    G^{\Sigma}_g(x,y)=G^{\Sigma^{\#2}}_{g^{\#2}}(x,y)+G^{\Sigma^{\#2}}_{g^{\#2}}(x,\tau(y)).
\end{equation*}
Similar to \cite[Section 5]{CILT}, based on certain regularized sequences $(X_{g,\epsilon})_{\epsilon>0}$, we can define the imaginary bulk (resp. boundary) Gaussian multiplicative chaos (GMC for short) $M_g(\d v_g)$ (resp. $L_g(\d \ell_g)$) as the following $L^2(\Omega)$ limit (see \cite[Section 5.4]{CILT} e.g.)
\begin{equation*}
    M_g(X_g,\d v_g):=\lim_{\epsilon\to0}\epsilon^{-\beta^2/2}e^{\i\beta X_{g,\epsilon}(x)} \d v_g(x) ,\ \ \  L_g(X_g,\d \ell_g):=\lim_{\epsilon\to0}\epsilon^{-\beta^2/4}e^{\i\beta /2X_{g,\epsilon}(x)} \d \ell_g(x).
\end{equation*}
For some $\rho\in C^{\infty}(\Sigma)$, we consider the metric $g':=e^\rho g$ conformal to $g$. By comparing the covariances under two metrics, we see that 
\[X_{g'}\overset{d}{=}X_{g}-m_{g'}(X_g),\]
where $m_g(\cdot)=\frac{1}{v_{g}(\Sigma)}\int\cdot\d v_{g}$. Moreover, we have the following rules under change of metric.
\begin{prop}
    For $g':=e^\rho g$, we have the following change of metric formula for imaginary GMC:
    \begin{equation*}
        M_{g'}(X_{g'},\d v_{g'})\overset{d}{=}e^{\i\beta(\i Q\rho/2-m_{g'}(X_g))}M_{g}(X_{g},\d v_{g}),
    \end{equation*}
        and
        \begin{equation*}
        L_{g'}(X_{g'},\d \ell_{g'})\overset{d}{=}e^{\i\beta/2(\i Q\rho/2-m_{g'}(X_g))}L_{g}(X_{g},\d \ell_{g}).
    \end{equation*}
\end{prop}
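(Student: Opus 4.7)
The strategy is to work directly with the $\varepsilon$-regularized approximations and track how each piece transforms under $g\mapsto g'=e^\rho g$. First, I would use the conformal covariance of the GFF already recorded before the proposition, $X_{g'}\law X_g-m_{g'}(X_g)$, to replace $X_{g',\varepsilon}$ by the $g'$-regularization of the shifted field $X_g-m_{g'}(X_g)$. Since $m_{g'}(X_g)$ is a scalar, it factors out of the exponential as $e^{-\i\beta\,m_{g'}(X_g)}$ in the bulk and $e^{-\i\beta\,m_{g'}(X_g)/2}$ on the boundary. It then suffices to compare $\varepsilon^{-\beta^2/2}e^{\i\beta X_{g',\varepsilon}(x)}\,\d v_{g'}(x)$ with $\varepsilon^{-\beta^2/2}e^{\i\beta X_{g,\varepsilon}(x)}\,\d v_g(x)$ in the $\varepsilon\to0$ limit (and similarly for the boundary).

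The key geometric input is the scale-shift identity: a $g'$-geodesic disk (or, on the boundary, a $g'$-arc) of radius $\varepsilon$ centered at $x$ agrees to leading order with a $g$-geodesic disk of radius $\varepsilon e^{-\rho(x)/2}$ centered at $x$. Hence for any reasonable regularization,
\[X_{g',\varepsilon}(x)=X_{g,\,\varepsilon e^{-\rho(x)/2}}(x)+o(1),\]
with an error that is negligible when taking the $L^2$-limit defining the imaginary GMC. Combined with the volume and length rescalings $\d v_{g'}=e^\rho\,\d v_g$, $\d\ell_{g'}=e^{\rho/2}\,\d\ell_g$ and the decomposition $\varepsilon^{-\beta^2/2}=(\varepsilon e^{-\rho/2})^{-\beta^2/2}\,e^{-\rho\beta^2/4}$, the bulk regularized expression becomes
\[e^{\rho(x)(1-\beta^2/4)}\,e^{-\i\beta\,m_{g'}(X_g)}\cdot(\varepsilon e^{-\rho(x)/2})^{-\beta^2/2}e^{\i\beta X_{g,\,\varepsilon e^{-\rho(x)/2}}(x)}\,\d v_g(x)+o(1),\]
whose $L^2$-limit is $e^{\rho(1-\beta^2/4)}\,e^{-\i\beta\,m_{g'}(X_g)}\,M_g(X_g,\d v_g)$ by the convergence result in \cite{lacoin2015complex}. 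The claim then follows from the algebraic identity $1-\beta^2/4=-\beta Q/2$, using $Q=\beta/2-2/\beta$; the boundary formula is identical with $\beta$ replaced by $\beta/2$, volume by length, and the exponent identity $1/2-\beta^2/8=-\beta Q/4$.

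The main technical obstacle is making the scale-shift approximation rigorous, because the imaginary exponential is highly oscillatory and pointwise control of the regularized field is insufficient. Two routes are available. One can verify the approximation directly for a specific regularization (for instance, circle averages in $g$-normal coordinates), quantifying the error in $L^2$ and showing it does not affect the GMC limit. Alternatively, one exploits the fact that the imaginary GMC is independent of the regularization (as proved in \cite{lacoin2015complex}) and picks a mollifier that is manifestly conformally compatible -- for example, truncating the spectral expansion of $X_g$ using the doubled Laplacian $\Delta_{g^{\#2}}$ on $\Sigma^{\#2}$. With the latter, the Weyl anomaly on $\Sigma$ is reduced to the Weyl anomaly for the imaginary GMC on the closed surface $\Sigma^{\#2}$ already established in \cite[Section 5]{CILT}, restricted to even points (interior case) and to the fixed locus $\paN\Sigma$ (boundary case), where the doubling relation $G^{\Sigma}_g(x,y)=G^{\Sigma^{\#2}}_{g^{\#2}}(x,y)+G^{\Sigma^{\#2}}_{g^{\#2}}(x,\tau(y))$ ensures the covariances of the boundary regularization carry the expected factor of $2$, giving exactly the half-exponents in the $L_g$ formula.
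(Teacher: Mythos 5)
The paper states this proposition without proof, treating it as a standard fact analogous to the Weyl covariance of imaginary GMC on closed surfaces in \cite[Section~5]{CILT}. Your algebra is correct: the factorization $\varepsilon^{-\beta^2/2}=(\varepsilon e^{-\rho/2})^{-\beta^2/2}e^{-\rho\beta^2/4}$ combined with $\d v_{g'}=e^\rho\,\d v_g$ produces the exponent $1-\beta^2/4$, which equals $-\beta Q/2$ because $\beta Q=\beta^2/2-2$; the boundary version $\frac12-\frac{\beta^2}{8}=-\frac{\beta Q}{4}$ follows in the same way from $\d\ell_{g'}=e^{\rho/2}\d\ell_g$ and $\varepsilon^{-\beta^2/4}=(\varepsilon e^{-\rho/2})^{-\beta^2/4}e^{-\rho\beta^2/8}$. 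You also correctly isolate where the real work lies: not in the bookkeeping but in rigorizing the scale shift $X_{g',\varepsilon}(x)\approx X_{g,\varepsilon e^{-\rho(x)/2}}(x)$ at the level of the GMC limit.

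Two caveats on your suggested implementations. First, your alternative ``conformally compatible'' mollifier via truncation of the spectral expansion in $\Delta_{g^{\#2}}$ is not actually well adapted to Weyl scaling: the eigenbasis of $\Delta_{g'^{\#2}}=e^{-\rho^{\#2}}\Delta_{g^{\#2}}$ is not a reshuffling of the eigenbasis of $\Delta_{g^{\#2}}$, so the truncation does not commute with the conformal change and does not by itself reduce the problem to the closed-surface anomaly. The safe route is your first one: fix a heat-kernel or circle-average regularization (the paper itself uses the white-noise/heat-kernel cutoff later in the section), and control the difference of the regularized measures at the two scales in $L^2(\Omega)$. Second, the $o(1)$ in the scale shift cannot be treated pointwise; for the oscillatory imaginary chaos what one needs is
\[
\Eb\Bigl[\bigl|\varepsilon^{-\beta^2/2}\textstyle\int f\,e^{\i\beta X_{g',\varepsilon}}\,\d v_{g'}-\varepsilon^{-\beta^2/2}\int f\,e^{\rho(1-\beta^2/4)}e^{\i\beta X_{g,\varepsilon e^{-\rho/2}}}\,\d v_g\bigr|^2\Bigr]\to 0,
\]
which is proved by expanding the second moment as a double integral of Green-function covariances and using $\beta^2<2$ (and, for the boundary, the doubled covariance and $\beta^2/2<2$). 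This is routine but must be said explicitly; your proposal names the obstacle without discharging it, which is acceptable given the paper also omits the proof.
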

Assume first that $\Sigma$ has at least one Dirichlet boundary (including mixed boundaries). Suppose $p^{\Sigma^{\#2}}_t(x,y)$ is the transition kernel (depending on $g^{\#2}$) of a speed 2 Brownian motion on $\Sigma^{\#2}$ killed upon hitting the boundary. Then the covariance of $X_g$ (or Green function) is written as
\begin{equation*}
    2\pi\int_0^\infty p^{\Sigma^{\#2}}_{t}(x,x')+p^{\Sigma^{\#2}}_{t}(x,\tau(x'))\d t, x,x'\in \Sigma.
\end{equation*}
Furthermore, we have $X_{g,\epsilon}$ the white noise cut off 
\begin{equation*}
    X_{g,\epsilon}(x):=\sqrt{2\pi}\int_{\epsilon^2}^\infty\int_{\Sigma}p^{\Sigma^{\#2}}_{t/2}(x,y)+p^{\Sigma^{\#2}}_{t/2}(x,\tau(y))W(\d t,\d y),
\end{equation*}
whose covariance is
\begin{equation}\label{eq_variance_neuman_GFF_white_noise}
    \mathbb E [X_{g,\epsilon}(x)X_{g,\epsilon'}(x')]=2\pi\int_{\epsilon^2\vee\epsilon'^2}^\infty p^{\Sigma^{\#2}}_{t}(x,x')+p^{\Sigma^{\#2}}_{t}(x,\tau(x'))\d t.
\end{equation}

\begin{prop}
\label{prop_expo_momen}
Consider a measurable function $f:{\Sigma}\cup\partial \Sigma\to \mathbb C$ and an arbitrary real-valued random variable $Z$, for some constant $C$ depending on $\beta$, we have
    \begin{equation*}
    \begin{split}
        &\mathbb E\left[\exp\left(\left|\int_{\Sigma} f(x)M_g(X_g+Z,\mathrm{ d}v_g(x))\right|+\left|\int_{\paN\Sigma} f_g(x)L(X_g+Z,\d \ell_g(x))\right|\right)\right]\\
        \leq& e^{C V(f)}\sqrt{\left(1+C\sqrt{U_1(\Sigma,f)}e^{CU_1(\Sigma,f)}\right)\left(1+C\sqrt{U_2(\paN\Sigma,f)}e^{CU_2(\paN\Sigma,f)}\right)},
    \end{split}
    \end{equation*}
    where
    \begin{equation}\label{eq_defn_U1U2V}
    \begin{split}
        &U_1(\Sigma,f):=\iint_{\Sigma^2}|f(x)f(y)|(d(x,y)^{-\beta^2}+d(x,\tau(y))^{-\beta^2})\d v_g(x)\d v_g(y),\\
        &U_2(\paN\Sigma,f):=\iint_{\paN\Sigma^2}|f(x)f(y)|d(x,y)^{-\beta^2/2}\d \ell_g(x)\d \ell_g(y),\\
        &V(f):=\int_{\Sigma}|f(x)|d(x,\pa\Sigma)^{-\beta^2/2}\d v_g(x)+\int_{\paN\Sigma}|f(x)|d(x,\paD\Sigma)^{-\beta^2/4}\d \ell_g(x).
    \end{split}
    \end{equation}  
\end{prop}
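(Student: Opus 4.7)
The argument adapts the strategy of \cite[Section 5]{CILT} from the closed (Dirichlet-only) setting to the present Neumann setting via the doubling construction of \cref{doubling}. It proceeds in four steps.

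\emph{Step 1 (eliminate $Z$).} Since $Z$ is real, $|e^{\i\beta Z}|=|e^{\i\beta Z/2}|=1$, and by the shift rules $M_g(X_g+Z,\d v_g)=e^{\i\beta Z}M_g(X_g,\d v_g)$ and $L_g(X_g+Z,\d\ell_g)=e^{\i\beta Z/2}L_g(X_g,\d\ell_g)$, the moduli appearing inside the exponential are unchanged pathwise, so we may assume $Z=0$.

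\emph{Step 2 (separate the mean).} Set $A:=\int_\Sigma f\,M_g(X_g,\d v_g)$ and $B:=\int_{\paN\Sigma}f\,L_g(X_g,\d\ell_g)$, and write $A=\mathbb E[A]+\tilde A$, $B=\mathbb E[B]+\tilde B$. Applying the method of images on the doubled surface $\Sigma^{\#2}$, the variance of $X_{g,\varepsilon}(x)$ is reduced by $\log d(x,\paD\Sigma)$ near the Dirichlet boundary (while near the Neumann boundary the reduction is harmless), giving the uniform bounds $|\mathbb E[M_{g,\varepsilon}(x)]|\lesssim d(x,\paD\Sigma)^{-\beta^2/2}\leq d(x,\pa\Sigma)^{-\beta^2/2}$ and $|\mathbb E[L_{g,\varepsilon}(x)]|\lesssim d(x,\paD\Sigma)^{-\beta^2/4}$. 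Integrating against $|f|$ yields $|\mathbb E[A]|+|\mathbb E[B]|\leq CV(f)$, so by the triangle inequality $e^{CV(f)}$ factors out of the exponential, reducing the problem to $\mathbb E\bigl[e^{|\tilde A|+|\tilde B|}\bigr]$.

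\emph{Steps 3 and 4 (Cauchy--Schwarz and subgaussian moments).} Cauchy--Schwarz gives $\mathbb E[e^{|\tilde A|+|\tilde B|}]\leq\mathbb E[e^{2|\tilde A|}]^{1/2}\mathbb E[e^{2|\tilde B|}]^{1/2}$, which already matches the product form in the statement. For each factor, I would establish subgaussian moment bounds of the form
\[\mathbb E[|\tilde A|^n]\leq C^n\sqrt{n!}\,U_1(\Sigma,f)^{n/2},\qquad \mathbb E[|\tilde B|^n]\leq C^n\sqrt{n!}\,U_2(\paN\Sigma,f)^{n/2},\]
and then sum via the elementary estimate $\sum_{n\geq1}x^n/\sqrt{n!}\leq(1+x)e^{x^2}$ to obtain the announced form $1+C\sqrt{U_j}\,e^{CU_j}$. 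Odd moments reduce to even ones by Cauchy--Schwarz; even moments $\mathbb E[|\tilde A|^{2m}]$ are expanded as iterated integrals over $\Sigma^{2m}$ of products of covariance-type kernels, and the Neumann doubling identity $G_g(x,y)=G_{g^{\#2}}(x,y)+G_{g^{\#2}}(x,\tau(y))$ reduces the resulting singularities to $d(x,y)^{-\beta^2}$ and $d(x,\tau(y))^{-\beta^2}$, both of which are captured by $U_1(\Sigma,f)$; the Dirichlet-boundary singularity $d(\cdot,\paD\Sigma)^{-\beta^2/2}$ from the self-interactions has already been absorbed by $V(f)$ in Step 2.

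\emph{Main obstacle.} The bulk of the work lies in Step 4: carrying out the combinatorial moment bookkeeping of \cite[Section 5]{CILT} in the presence of \emph{two} off-diagonal singularities (the ordinary diagonal $x=y$ and the reflected diagonal $x=\tau(y)$ coming from the Neumann image charge), while keeping the constants uniform over configurations with many points close to $\paN\Sigma$. The crucial cancellations specific to the \emph{imaginary} exponent $\i\beta$, responsible for the $\sqrt{n!}$ factor rather than $n!$ in the moment bound, must be tracked through the doubling; once the covariance is rewritten via the double, the reflected kernel can be handled by symmetrization of the integrand and the scheme of \cite{CILT} extends essentially verbatim.
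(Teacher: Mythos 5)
Your plan is not the paper's. The paper runs the argument as a \emph{semimartingale in the regularization scale}: it sets $t=-\log\varepsilon$, Itô-decomposes $\mathcal S_t=A_t+\mathcal M_t$, bounds the bounded-variation part $|A_\infty-A_0|$ \emph{deterministically} by $CV(f)$ using the heat-kernel estimates of McKean and the upper bound on $p^{\Sigma^{\#2}}_{e^{-2u}}(x,\tau(x))$, bounds $\sup_t\langle\mathcal M\rangle_t$ \emph{deterministically} by $U_1+U_2+(\text{cross})$ after replacing the sines by $1$, and then invokes the Chernoff/exponential-martingale inequality to convert the bounded quadratic variation into a Gaussian tail, which is exactly what produces the form $1+C\sqrt{U}\,e^{CU}$ in the statement. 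Your Step~1 and the Cauchy--Schwarz split agree with the paper, and your final combinatorial ``moment method'' is an alternative that does exist in the literature, but it is not the route taken here.

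There is a concrete gap in your citation and in the claim that the rest ``extends essentially verbatim.'' The subgaussian moment bound $\mathbb E[|\tilde A|^{n}]\le C^{n}\sqrt{n!}\,U_1^{n/2}$ is not something one can take from \cite[Section~5]{CILT}: that reference, like this paper, uses the Itô/martingale method, not a moment expansion, so it furnishes no combinatorial bookkeeping to extend. The $\sqrt{n!}$ (rather than $n!$) growth for \emph{imaginary} chaos is the Onsager-type cancellation estimate proved in Lacoin--Rhodes--Vargas~\cite{lacoin2015complex}; adapting those cancellations to a bordered surface with the doubled covariance $G_g=G_{g^{\#2}}+G_{g^{\#2}}(\cdot,\tau(\cdot))$, and keeping the constants uniform as configuration points approach $\paN\Sigma$, is genuinely nontrivial and is precisely the work your sketch postpones. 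The martingale route avoids this entirely: the quadratic variation is bounded pointwise (not merely in expectation), so exponential integrability follows from a one-line Chernoff bound. There is also a smaller imprecision in your Step~2: the decomposition $A=\mathbb E[A]+\tilde A$ is not the paper's; the paper bounds the full finite-variation part $|A_\infty-A_0|$ \emph{deterministically}, which is strictly stronger than controlling the mean, and the quantity $d(x,\paN\Sigma)^{-\beta^2/2}$ absorbed into $V(f)$ arises from the heat-kernel bound on $p_{e^{-2u}}(x,\tau(x))$ in that drift term, not from the expectation of the chaos (which in fact \emph{vanishes} as $x\to\paN\Sigma$). If you insist on the moment route, cite Lacoin--Rhodes--Vargas for the Onsager step and carry out the doubling-adapted estimate explicitly; otherwise, the semimartingale argument is both what the paper does and the cleaner path.
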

\begin{proof}
    Similar to the proof in \cite[Proposition 5.3]{CILT}, we can throw away $Z$ and only need to consider functions s.t. $\mathrm{ Im}(f)=0$, i.e. we just need to prove the exponential moment for
    \begin{equation}\label{eq_prop_expo_momen_1}
        \int_{\Sigma}\epsilon^{-\beta^2/2}f(x)\cos(\beta X_{g,\epsilon}(x)) \d v_g(x)+ \int_{\paN\Sigma}\epsilon^{-\beta^2/4}f(x)\cos(\beta/2 X_{g,\epsilon}(x)) \d \ell_g(x).
    \end{equation}
    Replacing $\epsilon$ with $e^{-t}$, we have the following semi-martingale
    \begin{equation*}
        \mathcal{S}_t:= \int_{\Sigma}e^{\beta^2t/2}f(x)\cos(\beta X_{g,e^{-t}}(x)) \d v_g(x)+ \int_{\paN\Sigma}e^{\beta^2t/4}f(x)\cos(\beta/2 X_{g,e^{-t}}(x)) \d \ell_g(x),
    \end{equation*}
    whose limit when $t\to\infty$ is the limit of \eqref{eq_prop_expo_momen_1} as $\epsilon\to0$.

By the knowledge of It\^{o}'s calculus, we have $ \d\mathcal{S}_t=\d A_t+\d \mathcal M_t$, where
     \begin{equation*}
         \begin{split}
             \d \mathcal M_t:=&-\beta\int_{\Sigma} f(x)\sin(\beta X_{g,e^{-t}}(x))e^{\frac{\beta^2t}2}\d v_g(x)\mathrm{ d}X_{g,e^{-t}}(x)\\
         &-\frac\beta2 \int_{\paN\Sigma}f(x)\sin(\beta/2 X_{g,e^{-t}}(x))e^{\frac{\beta^2t}4}\d \ell_g(x)\mathrm{ d}X_{g,e^{-t}}(x),
         \end{split}
     \end{equation*}
     and
         \begin{equation*}
         \begin{split}
             \d A_t:=&\frac{\beta^2}{2}\int_{\Sigma} f(x)\cos(\beta X_{g,e^{-t}}(x))e^{\frac{\beta^2t}2}\d v_g(x)(\d t-\mathrm{ d}\langle X_{g,e^{-t}}(x)\rangle_t)\\
         +&\frac{\beta^2}4 \int_{\paN\Sigma}f(x)\cos(\beta/2 X_{g,e^{-t}}(x))e^{\frac{\beta^2t}4}\d \ell_g(x)(\d t-\frac12\mathrm{ d}\langle X_{g,e^{-t}}(x)\rangle_t).
         \end{split}
     \end{equation*}
      The quadratic variation of the local martingale part is
      \begin{equation*}
        \begin{split}
          \langle \mathcal{M}\rangle_t- \langle \mathcal{M}\rangle_0=\beta^2&\int_0^t\iint_{\Sigma^2}f(x)f(y)\sin(\beta X_{g,e^{-u}}(x))\sin(\beta X_{g,e^{-u}}(y))\\&e^{{\beta^2u}}\d v_g(x)\d v_g(y)\mathrm{ d}\langle X_{g,e^{-u}}(x), X_{g,e^{-u}}(y)\rangle_u\\
         +\beta^2/4 &\int_0^t\iint_{\paN\Sigma^2}f(x)f(y)\sin(\beta/2 X_{g,e^{-u}}(x))\sin(\beta/2 X_{g,e^{-u}}(y))\\
         &e^{\frac{\beta^2u}2}\d \ell_g(x)\d \ell_g(y)\mathrm{ d}\langle X_{g,e^{-u}}(x), X_{g,e^{-u}}(y)\rangle_u\\
         +\beta^2&\int_0^t\iint_{\paN\Sigma\times \Sigma}f(x)f(y)\sin(\beta X_{g,e^{-u}}(x))\sin(\beta/2 X_{g,e^{-u}}(y))\\
         &e^{\frac{3\beta^2u}4}\d v_g(x)\d \ell_g(y)\mathrm{ d}\langle X_{g,e^{-u}}(x), X_{g,e^{-u}}(y)\rangle_u.
    \end{split}
    \end{equation*}
    Recall that we have \eqref{eq_variance_neuman_GFF_white_noise}, and taking derivative w.r.t. $u$, we get
\begin{equation}\label{eq_qua_X}
    \mathrm{ d}\langle X_{g,e^{-u}}(x), X_{g,e^{-u}}(y)\rangle_u=4\pi e^{-2u}(p_{e^{-2u}}^{{\Sigma^{\#2}}}(x,y)+p^{{\Sigma^{\#2}}}_{e^{-2u}}(x,\tau(y)))\mathrm{ d}u.
\end{equation}
According to \cite{HKUpB}, we have the following estimate for the Dirichlet heat kernel $p^{\Sigma^{\#2}}_{e^{-2u}}(x,y)$
\begin{equation}\label{eq_heatK_ineq}
     e^{-2u}p_{e^{-2u}}^{\Sigma^{\#2}}(x,y)\leq C e^{-d(x,y)^2e^{2u}/C},
\end{equation}
where $C$ is a constant that may change from line to line. Combining everything and replacing all sine functions by 1, we see that
\begin{equation*}
\begin{split}
    \sup_{t\geq0}\langle \mathcal{M}\rangle_t- \langle \mathcal{M}\rangle_0&\leq C\int_0^\infty\iint_{\Sigma^2} |f(x)f(y)|e^{\beta^2t} (e^{-d(x,y)^2e^{2t}/C}+e^{-d(x,\tau(y))^2e^{2t}/C})\d v_g(x)\d v_g(y)\mathrm{ d}t\\
    &+C \int_0^\infty\iint_{\paN\Sigma^2}|f(x)f(y)|e^{\beta^2t/2}e^{-d(x,y)^2e^{2t}/C}\d \ell_g(x)\d \ell_g(y)\mathrm{ d}t\\
    &+C\int_0^\infty\iint_{\paN\Sigma\times \Sigma}|f(x)f(y)|e^{3\beta^2t/4}e^{-d(x,y)^2e^{2t}/C}\d v_g(x)\d \ell_g(y)\mathrm{ d}t.
\end{split}
\end{equation*}
Then we first integrate over time $t$ (and perform the change of variables $s=d(x,y)e^t$ or $s=d(x,\tau(y))e^t$), the above quantity becomes
\begin{equation}\label{eq_exp_momen_cross}
\begin{split}
    &C\iint_{\Sigma^2} |f(x)f(y)|(d(x,y)^{-\beta^2}+d(x,\tau(y))^{-\beta^2})\d v_g(x)\d v_g(y)\\
    &+C \iint_{\paN\Sigma^2}|f(x)f(y)|d(x,y)^{-\beta^2/2}\d \ell_g(x)\d \ell_g(y)\\
    &+C\iint_{\paN\Sigma\times \Sigma}|f(x)f(y)|d(x,y)^{-3\beta^2/4}\d v_g(x)\d \ell_g(y).
\end{split}
\end{equation}
Now we are back to $\mathcal M_t^0:=\mathcal M_t-\mathcal M_0$. By Fatou's lemma
\begin{equation*}
    \mathbb E[e^{\mu\mathcal M^0_{\infty}}]\leq e^{\frac{\mu^2}2\sup_{t\geq0} \langle \mathcal{M}^0\rangle_t},
\end{equation*}
because $\mathcal M^0$ is a martingale. And by Chernoff's bound, it is easy to see that
\begin{equation*}
    \mathbb P(|\mathcal M^0_{\infty}|\geq x)\leq2\exp\left(-\frac{x^2}{4C\sup_{t\geq0} \langle \mathcal{M}^0\rangle_t}\right).
\end{equation*}
Using Fubini's theorem, we have
\begin{equation*}
    \mathbb E[e^{\mathcal |\mathcal M^0_\infty|}]=1+\int_{0}^\infty \mathbb P(|\mathcal M^0_{\infty}|\geq x)e^x\mathrm{ d}x\leq1+C\sqrt{\sup_{t\geq0} \langle \mathcal{M}^0\rangle_t}\  e^{C\sup_{t\geq0} \langle \mathcal{M}^0\rangle_t}.
\end{equation*}
The $U_1(\Sigma,f),U_2(\paN\Sigma,f)$ in the Proposition \ref{prop_expo_momen} is the bulk and boundary part of $ \langle \mathcal{M}^0\rangle_t$ respectively. Note that we can avoid the cross term in the last line of \eqref{eq_exp_momen_cross} if we deal with the bulk and the boundary separately. Indeed, by Cauchy-Schwarz inequality,
\begin{equation*}
\begin{split}
    &\mathbb E\left[\exp\left(\left|\int_{\Sigma} f(x)M(X_g+Z,\mathrm{ d}v_g(x))\right|+\left|\int_{\paN\Sigma} f(x)L(X_g+Z,\d \ell_g(x))\right|\right)\right]\\
    \leq &\sqrt{\mathbb E\left[\exp\left(2\left|\int_{\Sigma} f(x)M(X_g+Z,\mathrm{ d}v_g(x))\right|\right)\right]\mathbb E\left[\exp\left(2\left|\int_{\paN\Sigma} f(x)L(X_g+Z,\d \ell_g(x))\right|\right)\right]}.
\end{split}
\end{equation*}
On the other hand, using the expression of \eqref{eq_qua_X}, we get that
\begin{equation}\label{eq_A_bound_1}
\begin{split}
    |A_{\infty}-A_0|\leq&\frac{\beta^2}{2}\int_0^\infty\int_{\Sigma} |f(x)|e^{\frac{\beta^2t}2}|1-4\pi e^{-2u}(p_{e^{-2u}}^{{\Sigma^{\#2}}}(x,x)+p^{{\Sigma^{\#2}}}_{e^{-2u}}(x,\tau(x)))|\d v_g(x)\d t\\
         +&\frac{\beta^2}4 \int_{\paN\Sigma}|f(x)|e^{\frac{\beta^2t}4}|1-4\pi e^{-2u}p_{e^{-2u}}^{{\Sigma^{\#2}}}(x,x)|\d \ell_g(x)\d t
\end{split}    
\end{equation}
The first line in the above equation can be further bounded by
\begin{equation}\label{eq_A_bound_2}
\begin{split}
        &\frac{\beta^2}{2}\int_0^\infty\int_{\Sigma} |f(x)|e^{\frac{\beta^2t}2}|1-4\pi e^{-2u}p_{e^{-2u}}^{{\Sigma^{\#2}}}(x,x)|\d v_g(x)\d t\\
        +&\frac{\beta^2}{2}\int_0^\infty\int_{\Sigma} |f(x)|e^{\frac{\beta^2t}2}|4\pi e^{-2u}p^{{\Sigma^{\#2}}}_{e^{-2u}}(x,\tau(x))|\d v_g(x)\d t,
\end{split}
\end{equation}
where, according to \eqref{eq_heatK_ineq}, the second term is readily controlled by $\int_{\Sigma}|f(x)|d(x,\tau(x))^{-\beta^2/2}\d v_g(x)$. The remaining parts of $|A_{\infty}-A_0|$ require the following estimate of the heat kernel by \cite{McKean:1967xf}
\begin{equation*}
    |4\pi e^{-2u} p_{e^{-2u}}^{\Sigma^{\#2} }(x,x)-1|\leq C(e^{-2u}+e^{-d(x,\paD\Sigma)^2e^{2u}/C}),
\end{equation*}
which is true for some $C>0$ and for all $u\in(0,\infty)$. Thus, plugging the estimate into \eqref{eq_A_bound_1} and \eqref{eq_A_bound_2}, we have
\begin{equation*}
    |A_{\infty}-A_0|\leq C\int_{\Sigma}|f(x)|d(x,\pa\Sigma)^{-\beta^2/2}\d v_g(x)+C\int_{\paN\Sigma}|f(x)|d(x,\paD\Sigma)^{-\beta^2/4}\d \ell_g(x).
\end{equation*}
Finally, we only need to control $|\mathcal S_0|=|\mathcal M_0+A_0|$, which is just
 \begin{equation*}
         |\mathcal{S}_0|\leq C(\int_{\Sigma} |f(x)|\d v_g(x)+ \int_{\paN\Sigma}|f(x)|\d \ell_g(x)).
    \end{equation*}
    This part is bounded by the estimate for $|A_{\infty}-A_0|$, which corresponds to $V(f)$.
\end{proof}

The renormalization of the bulk GMC produces singularities on the Dirichlet boundaries of $\Sigma$ (but still has an exponential moment), as is stressed in \cite[Proposition 5.2]{CILT}. Now, we would focus on surfaces with only Neumann boundaries, but we have to create a Dirichlet boundary (so that we have the white noise decomposition of GFF) and remove it later. 

Specifically, we cut the surface $\Sigma$ using three geodesic circles in its interior, which would be helpful later. Abusively, let $d(\cdot,\cdot)$ be the distance induced by the metric $g$ or $g^{\#2}$. Denote by $D_1$, $D_2$ and $D_3$ three geodesic closed balls s.t. 
\begin{itemize}
    \item $D_1\subset D_2\subset D_3$ and $d(D_1,\partial D_2)>0,d(D_2,\partial D_3)>0$, i.e. $D_i$ is strictly contained in $D_{i+1}$;
    \item $d(D_3,\partial \Sigma)>0$, i.e. $D_3$ stays away from the boundary of $\Sigma$.
\end{itemize}

For each of these balls, according to the domain Markov property of a GFF (see \cite[Section 5.3]{CILT} e.g.) on $\Sigma$, we can decompose $X_g$ into three independent parts
\begin{equation*}
    X^{\Sigma\backslash D_i}_g+X^{D_i}_g+P_i, i=1,2,3,
\end{equation*}
where
\begin{itemize}
    \item $ X^{\Sigma\backslash D_i}_g$ is a GFF with Neumann conditions on $\partial\Sigma$ and Dirichlet condition on $\partial D_i$;
    \item $X^{D_i}_g$ is a GFF with Dirichlet condition on $\partial D_i$;
    \item  $P$ is the sum of a harmonic extension of $X|_{\partial D_i}$ onto $\Sigma$ satisfying Neumann condition on $\partial \Sigma$ and a Gaussian random variable.
\end{itemize}
Let $A_i:={\Sigma\backslash D_i}$. Then using the above proposition and Cauchy-Schwarz inequality, we have
\begin{cor}
\label{coro_expo_momen}
    Consider a measurable function $f:\Sigma\cup\partial\Sigma\to \mathbb C$ and an arbitrary real-valued random variable $Z$, for some constant $C$ depending on $\beta$ and the way we choose $D_1,D_2$ and $D_3$, we have
    \begin{equation*}
    \begin{split}
        &\mathbb E\left[\exp\left(\left|\int_{\Sigma} f(x)M(X_g+Z,\mathrm{ d}v_g(x))\right|+\left|\int_{\paN\Sigma} f(x)L(X_g+Z,\d \ell_g(x))\right|\right)\right]\\
        \leq& e^{C V(f)}\sqrt[3]{\left(1+C\sqrt{U_1(D_2,f)}e^{CU_1(D_2,f)}\right)\left(1+C\sqrt{U_1(A_2,f)}e^{CU_1(A_2,f)}\right)}\\
        &\sqrt[3]{\left(1+C\sqrt{U_2(\paN\Sigma,f)}e^{CU_2(\paN\Sigma,f)}\right)},
    \end{split}
    \end{equation*}
    where $U_1,U_2,V$ are the same as the one in Proposition \ref{prop_expo_momen}.
\end{cor}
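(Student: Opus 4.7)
The plan is to split the exponent into three geometrically separated pieces, apply H\"older's inequality with three equal exponents to produce three cube roots matching the statement, and then bound each conditional expectation by a domain Markov decomposition of $X_g$ at a different one of the balls $\partial D_1,\partial D_2,\partial D_3$. The ball is chosen so that the integration domain for the corresponding piece sits at a positive distance from the artificially created Dirichlet boundary, reducing the problem to Proposition~\ref{prop_expo_momen} applied to a sub-surface.

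First, by the triangle inequality $|\int_\Sigma fM|\le|\int_{D_2}fM|+|\int_{A_2}fM|$ and then H\"older with exponents $(3,3,3)$,
\[
\mathbb{E}\!\left[\exp\!\bigl(|\textstyle\int_\Sigma fM|+|\int_{\paN\Sigma}fL|\bigr)\right]\;\le\;\prod_{k=1}^{3}\mathbb{E}\!\left[e^{3|I_k|}\right]^{1/3},
\]
with $I_1=\int_{D_2}fM$, $I_2=\int_{A_2}fM$, $I_3=\int_{\paN\Sigma}fL$. For each factor I would use a different Markov decomposition $X_g=X_g^{A_i}+X_g^{D_i}+P_i$:
\begin{itemize}
\item \emph{Piece $I_1$} (decompose at $\partial D_3$): on $D_3\supset D_2$, $X_g=X_g^{D_3}+P_3|_{D_3}$, and $P_3|_{D_2}$ is a.s.\ smooth by elliptic regularity together with $d(D_2,\partial D_3)>0$. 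The smooth-shift identity $M_g(X_g^{D_3}+P_3)=e^{\i\beta P_3}M_g(X_g^{D_3})$ reduces the bound, after conditioning on $P_3$, to Proposition~\ref{prop_expo_momen} on the purely Dirichlet surface $D_3$ with integrand $f\,e^{\i\beta P_3}\mathbf{1}_{D_2}$, producing the factor $1+C\sqrt{U_1(D_2,f)}\,e^{CU_1(D_2,f)}$.
\item \emph{Piece $I_2$} (decompose at $\partial D_2$): on $A_2$, $X_g=X_g^{A_2}+P_2|_{A_2}$. Applying Proposition~\ref{prop_expo_momen} to the extended surface $A_2$ (Dirichlet part $\partial D_2$, Neumann part $\paN\Sigma$) with integrand $f\mathbf{1}_{A_2}$ taken to vanish on $\paN\Sigma$ yields the factor $1+C\sqrt{U_1(A_2,f)}\,e^{CU_1(A_2,f)}$; the $U_2$-term is absent because $f\equiv 0$ on $\paN A_2=\paN\Sigma$.
\item \emph{Piece $I_3$} (decompose at $\partial D_1$): since $\paN\Sigma\subset A_1$ with $d(\paN\Sigma,\partial D_1)>0$, $P_1$ is a.s.\ smooth near $\paN\Sigma$, and Proposition~\ref{prop_expo_momen} on $A_1$ with integrand supported on $\paN\Sigma$ gives $1+C\sqrt{U_2(\paN\Sigma,f)}\,e^{CU_2(\paN\Sigma,f)}$.
\end{itemize}
Regrouping the first two cube roots into $\sqrt[3]{(\cdots U_1(D_2,f))(\cdots U_1(A_2,f))}$ reproduces the claimed form.

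The main obstacle is in piece $I_2$: unlike the other two pieces, the integration domain $A_2$ is not strictly separated from the artificial Dirichlet boundary $\partial D_2$, so the smooth-shift identity $M_g(X+P_2)|_{A_2}=e^{\i\beta P_2}M_g(X)|_{A_2}$ must be justified even though $P_2$ may fail to be smooth at $\partial D_2$. This should be handled by approximating $A_2$ by relatively compact sub-domains $A_2^{(\delta)}=\{x\in A_2:d(x,\partial D_2)>\delta\}$ on which $P_2$ is smooth, applying the identity there by simultaneous mollification as in \cite{lacoin2015complex}, and passing to the limit $\delta\to 0$ using the $L^2$-convergence of the imaginary GMC. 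A secondary bookkeeping issue is that each of the three applications of Proposition~\ref{prop_expo_momen} produces an $e^{CV(\cdot)}$ prefactor containing extra singular weights at the artificial Dirichlet boundaries; since $\beta^2<2$ these weights are locally integrable and, thanks to the positive separations $d(D_2,\partial D_3)$, $d(\paN\Sigma,\partial D_1)>0$, they are bounded by constants depending only on $\beta,D_1,D_2,D_3$ and can be combined into the single $e^{CV(f)}$ prefactor of the final bound.
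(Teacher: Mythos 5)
Your overall scaffolding — split the bulk into $D_2$ and $A_2$, triple Hölder to produce the cube roots, and bound each factor by a domain Markov decomposition followed by an application of \cref{prop_expo_momen} on a sub-surface with an artificial Dirichlet boundary — is the right one, and matches what the paper intends. The assignments for $I_1$ (decompose at $\partial D_3$, so $P_3$ is smooth on $D_2$) and $I_3$ (any ball whose boundary is far from $\paN\Sigma$) are correct.

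The gap is in piece $I_2$, and I don't think the approximation workaround saves it. Decomposing at $\partial D_2$ is the wrong choice, and not merely an inconvenient one: the domain of integration $A_2$ meets the artificial Dirichlet boundary $\partial D_2$, so two things go wrong. First, as you note, $P_2$ is rough at $\partial D_2$, so the smooth-shift identity is not available on all of $A_2$. Second, and more seriously, even if you could carry out the shift by truncating to $A_2^{(\delta)}$ and passing to the limit, the bound you would get from \cref{prop_expo_momen} applied to the extended surface $A_2$ carries a prefactor $e^{CV_{A_2}(f\mathbf 1_{A_2})}$ whose bulk part contains the weight $d(x,\partial D_2)^{-\beta^2/2}$. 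Since $f$ does not vanish near $\partial D_2$, this singular weight cannot be absorbed into the Corollary's $V(f)$, which only sees $d(x,\paN\Sigma)^{-\beta^2/2}$. So even modulo the analytic limit passage (which itself risks circularity, since the dominated-convergence step needs an exponential-moment bound near $\partial D_2$ of exactly the kind being proved), the resulting estimate has the wrong form. This is why the construction uses \emph{three} nested balls rather than one: the middle ball $D_2$ splits the bulk into the two pieces, and the inner and outer balls $D_1, D_3$ are there precisely so that each piece can be handled with a cut that is strictly separated from its integration domain.

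The fix is to decompose $I_2$ at $\partial D_1$. Since $d(D_1,\partial D_2)>0$, the domain $A_2=\Sigma\setminus D_2$ is at positive distance from $\partial D_1$, so the harmonic part $P_1$ is a.s.\ smooth on $A_2$ and the shift identity applies without mollification. Apply \cref{prop_expo_momen} to the extended surface $A_1=\Sigma\setminus D_1$ (Dirichlet boundary $\partial D_1$, Neumann boundary $\paN\Sigma$), with bulk integrand $3f\mathbf 1_{A_2}\,e^{\i\beta(Z+P_1)}$ and zero on the boundary. Since $f$ is supported in $A_2\subset A_1$ and $U_1$, $V$ only see $|f|$, this yields $U_1(A_1,3f\mathbf 1_{A_2})=U_1(A_2,3f)$, and the prefactor $V_{A_1}(3f\mathbf 1_{A_2})$ has its $d(x,\partial D_1)^{-\beta^2/2}$ weight bounded by the constant $d(A_2,\partial D_1)^{-\beta^2/2}$, so it is absorbed into $e^{CV(f)}$ with $C$ depending on the balls, exactly as claimed. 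The same treatment gives the $U_2(\paN\Sigma,f)$ factor for $I_3$ (there the ball used is irrelevant, so $\partial D_1$ or $\partial D_2$ both work); one small caveat, which affects the statement itself rather than your proof, is that the $V(f)$ of the Corollary should be read as $\int_\Sigma|f|\,d(x,\pa\Sigma)^{-\beta^2/2}\,\d v_g+\int_{\paN\Sigma}|f|\,\d\ell_g$, since with $\paD\Sigma=\varnothing$ the McKean remainder on the boundary contributes a bounded rather than vanishing weight, and this is what the $V$ term from the $I_3$ factor must be absorbed into.
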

Finally, we prove the Imaginary version of Cameron-Martin theorem for imaginary GMC on an arbitrary surface.
\begin{prop}\label{prop_im_CM}
    Consider a bounded functional $F$ of GFF $X_g$ s.t. $F(X_g+wY)$ has an analytic continuation in $w$ for the field $Y:=2\pi\int G_g(x,y)\d m(y)$, where $m$ is an arbitrary measure s.t. $Y$ is continuous on $\Sigma\cup\partial\Sigma$. $f(x)$ is a bounded measurable function on $\Sigma\cup\partial\Sigma$ and $Z$ is a random variable. Then we have for $\alpha\in\mathbb R$
    \begin{equation*}
    \begin{split}
        &\mathbb E\Bigg[F(X_g)\exp\Bigg(\i\alpha\int X\d m+\frac{\alpha^2}{2}\int 2\pi G_g(x,y)\d m(x)\d m(y)+\int_{\Sigma} f(x)M_g(Z+X_g,\mathrm{ d}v_g(x))\\
        &\quad\quad\quad\quad\quad\quad+\int_{\partial \Sigma} f(x)L_g(Z+X_g,\mathrm{ d}\ell_g(x))\Bigg)\Bigg]\\
        =&\mathbb E\left[F\left(X_g+\i\alpha Y\right)\exp\left(\int_{\Sigma} f(x)M_g\left(Z+X_g+\i\alpha Y,\mathrm{ d}v_g(x)\right)+\int_{\partial \Sigma} f(x)L_g\left(Z+X_g+\i\alpha Y,\mathrm{ d}\ell_g(x)\right)\right)\right].
    \end{split}
    \end{equation*}
\end{prop}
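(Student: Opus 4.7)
The strategy is analytic continuation from the real Cameron--Martin theorem into the complex domain, evaluating at $w = \i\alpha$ at the end. For $w \in \Cb$, set
\[
\Phi(w) := \Eb\!\left[F(X_g+wY)\exp\!\left(\int_\Sigma f\,e^{\i\beta wY} M_g(Z+X_g,\d v_g) + \int_{\pa\Sigma} f\,e^{\i\beta wY/2} L_g(Z+X_g,\d\ell_g)\right)\right],
\]
\[
\Psi(w) := \Eb\!\left[F(X_g)\,e^{w\int X\,\d m - \frac{w^2}{2}G_0}\exp\!\left(\int_\Sigma f M_g(Z+X_g,\d v_g) + \int_{\pa\Sigma} f L_g(Z+X_g,\d\ell_g)\right)\right],
\]
where $G_0 := \iint 2\pi G_g(x,y)\,\d m(x)\,\d m(y)$. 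Specializing $w = \i\alpha$ gives $\Phi(\i\alpha) = \mathrm{RHS}$ (via $M_g(X_g+\i\alpha Y) = e^{-\alpha\beta Y}M_g(X_g)$ and $L_g(X_g+\i\alpha Y) = e^{-\alpha\beta Y/2}L_g(X_g)$) and $\Psi(\i\alpha) = \mathrm{LHS}$ (since $-\tfrac{(\i\alpha)^2}{2} = \tfrac{\alpha^2}{2}$), so it suffices to prove $\Phi \equiv \Psi$ on all of $\Cb$.

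For real $w$ I would apply \cref{CM} with the translation $h = wY$. Since $Y$ is continuous, $\Delta_g Y = 2\pi m$ in the distributional sense (modulo the projection onto $\ker\Delta_g$, which has zero pairing against $Y$ by construction), so $\frac{1}{4\pi}\int_\Sigma |\d h|_g^2\,\d v_g = \frac{w^2}{2}G_0$ and $\frac{1}{2\pi}\int_\Sigma X_g\,\Delta_g h\,\d v_g = w\int X\,\d m$. Combined with the elementary identities $M_g(X+wY) = e^{\i\beta wY}M_g(X)$ and $L_g(X+wY) = e^{\i\beta wY/2}L_g(X)$, the Radon--Nikodym factor from \cref{CM} converts $\Phi(w)$ into $\Psi(w)$.

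Next I would show both functions extend to entire functions on $\Cb$. For $\Psi$, the integrand is entire in $w$ pointwise; \cref{coro_expo_momen} applied to the fixed function $f$, combined with Gaussian tail bounds for the centered Gaussian variable $\int X\,\d m$ (variance $G_0$) and boundedness of $F$, produces domination uniform on compacts of $\Cb$, so Morera's theorem yields analyticity. For $\Phi$, the analyticity of $w \mapsto F(X_g+wY)$ is part of the hypothesis, and the factors $e^{\i\beta wY}$, $e^{\i\beta wY/2}$ are entire in $w$ with moduli bounded by $e^{\beta\|Y\|_\infty|\mathrm{Im}(w)|}$; applying \cref{coro_expo_momen} to the bounded function $f\cdot e^{\i\beta wY}$ in place of $f$ (whose $U_1, U_2, V$ quantities from \eqref{eq_defn_U1U2V} grow by factors uniform on compacts of $w$) supplies the needed domination.

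The principal technical obstacle is justifying the identity $M_g(X_g+wY,\d v_g) = e^{\i\beta wY}M_g(X_g,\d v_g)$ (and its boundary analogue) for complex $w$: this is verified at the regularized level, where $Y$ is deterministic and continuous so that $\varepsilon^{-\beta^2/2}e^{\i\beta(X_{g,\varepsilon}+wY)}\,\d v_g = e^{\i\beta wY}\cdot\varepsilon^{-\beta^2/2}e^{\i\beta X_{g,\varepsilon}}\,\d v_g$ trivially, and then passed to the $\varepsilon\to 0$ limit using the $L^2$-convergence of the imaginary GMC together with the dominated convergence justified by \cref{coro_expo_momen}. Once analyticity on $\Cb$ and equality on $\Rb$ are in hand, the identity theorem forces $\Phi \equiv \Psi$ globally, and evaluation at $w=\i\alpha$ completes the proof.
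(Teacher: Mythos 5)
Your high-level strategy — define an analytic function of a complex parameter $w$, equate it with the Cameron--Martin--shifted expression on a real line, show both extend to entire functions, and invoke the identity theorem before evaluating at $w=\i\alpha$ — is exactly the paper's strategy. (The paper states its eq.~\eqref{eq_correlation_1} for $w\in\i\Rb$ with an extraneous factor of $\i$ in the shift, which appears to be a sign typo; your choice of applying real Cameron--Martin on $w\in\Rb$ is the clean version and is essentially what is meant.)

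There is, however, a genuine gap in the step where you assert that the integrand of $\Phi(w)$ is (a.s.) holomorphic so that Morera and domination apply. The difficulty is that $M_g(Z+X_g,\d v_g)$ is a.s.\ a random distribution of order~$2$ (Lemma~\ref{lemma_M_is_dist2}), \emph{not} a random measure, and $Y$ is only assumed continuous. So the pairing $\int_\Sigma f\,e^{\i\beta wY}\,M_g(\d v_g)$ is defined only as an $L^2(\Omega)$-limit for each fixed $w$; pointwise holomorphicity of this pairing as a function of $w$ is not "trivial" and cannot be read off from the entirety of $w\mapsto e^{\i\beta wY(x)}$. Your remark about passing the $\varepsilon$-regularization to the limit verifies the identity $M_g(X_g+wY)=e^{\i\beta wY}M_g(X_g)$ for each fixed $w$, but says nothing about the existence of a version that is a.s.\ holomorphic in $w$ (one would need, e.g., a.s.\ local-uniform-in-$w$ convergence as $\varepsilon\to 0$, or an a.s.\ convergent power-series expansion $\sum_n\frac{(\i\beta w)^n}{n!}\int fY^nM_g$ — neither of which you establish, and Corollary~\ref{coro_expo_momen} is stated only for the limiting GMC, not uniformly in $\varepsilon$). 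The paper's proof confronts precisely this point by mollifying $Y\rightsquigarrow Y*u_\delta$: for smooth $Y*u_\delta$ the test function $e^{\i\beta wY*u_\delta}$ is smooth, so Lemma~\ref{lemma_M_is_dist2} licenses differentiation under the distributional pairing and gives holomorphicity of the mollified expression \eqref{eq_correlation_delta}; the exponential-moment bound of Corollary~\ref{coro_expo_momen} then gives locally uniform (in $w$) convergence as $\delta\to0$, and holomorphicity of \eqref{eq_correlation_1} follows. You would need to add either the mollification argument or a power-series/local-uniform-convergence argument to close this gap.
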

\begin{proof}
   We introduce the function
    \begin{equation*}
    \begin{split}
        A(w):=\mathbb E\Bigg[F(X_g)\exp\Bigg(&w\int X\d m-\frac{w^2}{2}\int 2\pi G_g(x,y)\d m(x)\d m(y)\\
        &+\int_{\Sigma} f(x)M_g(Z+X_g,\mathrm{ d}v_g(x))
        +\int_{\partial \Sigma} f(x)L_g(Z+X_g,\mathrm{ d}\ell_g(x))\Bigg)\Bigg],
    \end{split}
    \end{equation*}
    which is holomorphic for $w\in\mathbb C$. When $w\in \i \mathbb R$, we can apply Cameron-Martin theorem to $A(w)$ and obtain
    \begin{equation}\label{eq_correlation_1}
        A(w)=\mathbb E\left[F\left(X_g+\i wY\right)\exp\left(\int_{\Sigma} f(x)M_g\left(Z+X_g+\i w Y,\d v_g(x)\right)+\int_{\partial \Sigma} f(x)L_g\left(Z+X_g+\i w Y,\d\ell_g(x)\right)\right)\right].
    \end{equation}
We want to show that \eqref{eq_correlation_1} is holomorphic as well, so as to apply analytic continuation to $A(w)$ before and after the shift by Cameron-Martin theorem. Hence we will need to approximate $\int f(x)M_g(Z+X_g+iwY,\d v_g(x))$ and $\int f(x)L_g(Z+X_g+iwY,\d\ell_g(x))$ by holomorphic functions. Note that the definition of $F$ is important to the proof. To start with, we claim they are Schwarz distributions with order 2 before the shift.
 \begin{lem}
    \label{lemma_M_is_dist2}
        $ f(x)M_g(Z+X_g,\d v_g(x))$ and $f(x)L_g(Z+X_g,\d\ell_g(x))$ are two random (Schwarz) distributions with order 2 on $\Sigma$. Moreover, there exists a $L^2$ random variable $D_{\Sigma}$ such that for a smooth function $h$ on $\bar\Sigma$,
        \begin{equation*}
            \left|\int_{\Sigma}h(x)f(x)M_g(Z+X_g,\d v_g(x))+\int_{\partial\Sigma}h(x)f(x)L_g(Z+X_g,\d\ell_g(x))\right|\leq D_{\Sigma}(\sup_{\bar\Sigma}|h|+\sup_{\Sigma}|\Delta_g h|+\sup_{\partial\Sigma}|\partial_\nu h|).
        \end{equation*}
    \end{lem}
    \begin{proof}
     By Green's identity, 
     \begin{equation*}
         \int_{\Sigma}G_g(x,y)\Delta_{g}h(y)\d v_g(y)-h(x)+\frac{1}{v_g(\Sigma)}\int_{\Sigma}h(y)\d v_g(y)=\int_{\partial\Sigma}G(x,y)\partial_\nu h(y)\d\ell_g(y).
     \end{equation*}
     Hence the smooth function $h$ can be written as \\$h(x)=\frac{1}{v_g(\Sigma)}\int_{\Sigma}h(y)\d v_g(y)+\int_{\Sigma}G_g(x,y)\Delta_gh(y)\d v_g(y)-\int_{\partial\Sigma}G_g(x,y)\partial_\nu h(y)\d\ell_g(y)$. Similar to \cite[Lemma 6.12]{CILT}, we have
        \begin{equation*}
        \begin{split}
&\left|\int_{\Sigma}h(x)f(x)M_g(Z+X_g,\d v_g(x))+\int_{\partial\Sigma}h(x)f(x)L_g(Z+X_g,\d\ell_g(x))\right|\leq(\sup_{\bar\Sigma}|h|+\sup_{\Sigma}|\Delta_g h|+\sup_{\partial\Sigma}|\partial_\nu h|)\\
&\bigg( \left|\int f(x)M_g(Z+X_g,\d v_g(x))\right| +\left|\int f(x)L_g(Z+X_g,\d\ell_g(x))\right|\\
+&\int\left|\int G_g(x,y)f(x)M_g(Z+X_g,\d v_g(x))\right|\d v_g(y)+\int\left|\int G_g(x,y)f(x)L_g(Z+X_g,\d\ell_g(x))\right|\d v_g(y)\\
&+\int\left|\int G_g(x,y)f(x)M_g(Z+X_g,\d v_g(x))\right|\d \ell_g(y)+\int\left|\int G_g(x,y)f(x)L_g(Z+X_g,\d\ell_g(x))\right|\d \ell_g(y)\bigg).
        \end{split}
        \end{equation*}
    We only need to check that 
    \begin{equation*}
        D_{\Sigma}:=|M|+|L|+\int|\int Gf\d M|\d v+\int|\int Gf\d L|\d v+\int|\int Gf\d M|\d \ell+\int|\int Gf\mathrm{d}L|\d \ell
    \end{equation*}
    has a finite second moment. And by Jensen's inequality, it is easy to see that we only need to control
    \begin{equation*}
        |\int Gf\mathrm{d}M|^2+|\int Gf\mathrm{d}L|^2,
    \end{equation*}
    which amounts to control locally
    \begin{equation*}
        \int_{\mathbb D}\int_{\mathbb D}\log\frac1{|x-y|}\log\frac1{|x'-y|}|x-x'|^{-\beta^2}\mathrm{d}x \mathrm{d}x'+\int_{-1}^1 \int_{-1}^1 \log\frac1{|x-y|}\log\frac1{|x'-y|}|x-x'|^{-\beta^2/2}\mathrm{d}x \mathrm{d}x'<\infty.
    \end{equation*}
    Since $\beta^2<2$, this integral over disk and interval is finite for sure.
    \end{proof}
Next, we mollify $Y$ with some smooth function $u_{\delta}$, i.e. by the convolution $Y*u_{\delta}(x):=\int_{\Sigma}u_{\delta}(x-y)Y(y)\d v_g(y)$. we require that $\int_{\Sigma}u_{\delta}(x-y)\d v_g(y)\equiv1$ for all $x\in\bar{\Sigma}$. Since $Y$ is continuous on $\bar{\Sigma}$, we can easily find a sequence of $(u_\delta)_{\delta>0}$ s.t. $\sup_{\bar{\Sigma}}|Y-Y*u_{\delta}|\to0$ as $\delta\to0$.

Then we replace the $M_g(X_g+wY,\d v_g(x))$ ($L$ resp.) in \eqref{eq_correlation_1} by $M_g(X_g+w Y*u_{\delta},\d v_g(x))$ ($L$ resp.), and the latter ones are holomorphic as a consequence of Lemma \ref{lemma_M_is_dist2} (which justifies the differentiations by Lebesgue convergence theorem). We see that
\begin{equation}\label{eq_correlation_delta}
   \mathbb E\left[F\left(Z+X_g+\i wY*u_{\delta}\right)\exp\left(\int_{\Sigma} f(x)M_g\left(Z+X_g+\i w Y*u_{\delta},\d v_g(x)\right)+\int_{\partial \Sigma} f(x)L_g\left(Z+X_g+\i w Y*u_{\delta},\d\ell_g(x)\right)\right)\right]
\end{equation}
is holomorphic for any fixed $\delta>0$, since according to Corollary \ref{coro_expo_momen}, the above quantity is uniformly bounded on any compact subset $K\subset \mathbb C$.
Now to prove \eqref{eq_correlation_1} is holomorphic, we only need to show that \eqref{eq_correlation_delta} converges locally uniformly w.r.t. $w$ to \eqref{eq_correlation_1} when $\delta\to0$. It is enough to prove that when $\delta\to0$,
\begin{equation}\label{eq_correlation_converge_d}
\begin{split}
    \mathbb E\bigg[\bigg|\exp\bigg(&\int_{\Sigma} f(x)M_g\left(Z+X_g+\i w Y,\d v_g(x)\right)-\int_{\Sigma} f(x)M_g\left(Z+X_g+\i w Y*u_{\delta},\d v_g(x)\right)\\
    +&\int_{\partial \Sigma} f(x)L_g\left(Z+X_g+\i w Y,\d\ell_g(x)\right)-\int_{\partial \Sigma} f(x)L_g\left(Z+X_g+\i w Y*u_{\delta},\d\ell_g(x)\right)\bigg)-1\bigg|^2\bigg]
\end{split}
\end{equation}
goes to zero as well. By Corollary \ref{coro_expo_momen},
\begin{equation*}\label{eq_expo_momen_dist}
\begin{split}
    \mathbb E\bigg[\exp\bigg(&\bigg|\int_{\Sigma} f(x)M_g\left(Z+X_g+\i w Y,\d v_g(x)\right)-\int_{\Sigma} f(x)M_g\left(Z+X_g+\i w Y*u_{\delta},\d v_g(x)\right)\bigg|\\
    +&\bigg|\int_{\partial \Sigma} f(x)L_g\left(Z+X_g+\i w Y,\d\ell_g(x)\right)-\int_{\partial \Sigma} f(x)L_g\left(Z+X_g+\i w Y*u_{\delta},\d\ell_g(x)\right)\bigg|\bigg)\bigg]\\
   \leq& e^{C V(\varrho_1)}\sqrt[3]{\left(1+C\sqrt{U_1(D,\varrho_2)}e^{CU_1(D,\varrho_2)}\right)\left(1+C\sqrt{U_1(\Sigma\setminus D,\varrho_2)}e^{CU_1(\Sigma\setminus D,\varrho_2)}\right)}\\
        &\sqrt[3]{\left(1+C\sqrt{U_2(\paN\Sigma,\varrho_3)}e^{CU_2(\paN\Sigma,\varrho_3)}\right)},
\end{split}
\end{equation*}
where $\varrho_2:=f(e^{-\beta wY}-e^{-\beta wY*u_\delta}),\varrho_3:=f(e^{-\beta wY/2}-e^{-\beta wY*u_\delta/2})$ and $\varrho_1:=\max(\varrho_2,\varrho_3)$ goes to 1 as $\delta\to0$ locally uniformly in $w$. Hence \eqref{eq_correlation_converge_d} converges to zero, and \eqref{eq_correlation_1} is holomorphic indeed.
\end{proof}
\begin{rem*}
    Proposition \ref{prop_im_CM} can be generalized to multiple fields $Y_1,Y_2,\cdots$ by induction. Indeed, consider $Y_1:=\int 2\pi G_g(x,y)\d m_1(y)$, then $|e^{-\beta\alpha_1 Y_1}|$ is bounded for fixed $\alpha_1$ and $F(X_g+i\alpha_1 Y_1)$ is again a desirable functional. The shift from $Y_1$ to $\int X\d m_2$ is $e^{\i\alpha_2\int \i\alpha_1Y_1\d m_2}=e^{-\mathbb E[\alpha_1\int X\d m_1\alpha_2\int X\d m_2]}$.
\end{rem*}

\section{BCILT path integrals}\label{sec_path_int}

In this section, we define the path integrals in BCILT and verify that they define a CFT satisfying Segal's axioms.

\subsection{Correlation functions}\label{bcilt-corr}

Let $\Sigma$ be an extended surface with $\paD\Sigma=\varnothing$ and $g$ a Neumann extendible conformal metric on $\Sigma$. Fix $\vbf$ as in \cref{sep-family-extended}. We define the following functionals on the Liouville field $\phi_g=[(X_g,c+\ol{I_{x_0}(2\pi R\omega)})]$ on $\Sigma$. More precisely, we define them on $\TR\times\Dc_0'(\Sigma,\Rb)\times H^1(\Sigma\setminus\zbf)$ and push forward to $\Dc'^\N(\Sigma\setminus\zbf,\TR)$ as in \cref{measure1}. Then well-definedness amounts to checking that they do not depend on the choices of $x_0,\omega$.
\begin{itemize}
    \item For the curvature terms
\[\frac{\i Q}{4\pi}\int_\Sigma K_g\phi_g\,\d v_g+\frac{\i Q}{2\pi}\int_{\pa\Sigma}k_g\phi_g\,\d\ell_g,\]
we define (recall \cref{curv-term-section})
\[K_{\Sigma,g}(\phi_g)=e^{\i Q\chi(\Sigma)c}\,e^{\i QRK_{\Sigma,g,x_0}^\deltabf(\omega)}\,e^{\frac{\i Q}{4\pi}\int_\Sigma K_gX_g\,\d v_g},\]
where $\deltabf$ is any separating family of $\Sigma$ with respect to $\vbf$ disjoint from $x_0$. If $QR\in2\Zb$, this is well-defined by \cref{curv-inv,curv-term-calculations}. (Here $\int_{\pa\Sigma}k_gX_g\,\d\ell_g=0$ since $\pa\Sigma$ is necessarily geodesic.)
\item For the potential terms
\[\mu\int_\Sigma e^{\i\beta\phi_g}\,\d v_g+\int_{\pa\Sigma}\mu_\pa\,e^{\i\frac{\beta}{2}\phi_g}\,\d\ell_g,\]
we define (recall \cref{GMC})
\begin{align*}
M_g^\beta(\phi_g,\d v_g)&=e^{\i\beta(c+I_{x_0}(2\pi R\omega))}M_g^\beta(X_g,\d v_g),\\
L_g^\beta(\phi_g,\d\ell_g)&=e^{\i\frac{\beta}{2}(c+I_{x_0}(2\pi R\omega))}L_g^\beta(X_g,\d\ell_g),
\end{align*}
and
\[M_{\Sigma,g}(\phi_g)=\mu\int_\Sigma M_g^\beta(\phi_g,\d v_g)+\int_{\pa\Sigma}\mu_\pa\,L_g^\beta(X_g,\d\ell_g).\]
If $\beta R\in2\Zb$ (or $\Zb$ if $\mu_\pa=0$), this is well-defined.
\item For the electric operators
\[\prod_{j=1}^\s e^{\i\alpha_j\phi_g(z_j)}\prod_{j=1}^\t e^{\i\frac{\eta_j}{2}\phi_g(x_j)},\]
it is necessary to regularize, since $\phi_g$ is not defined pointwise. For $\varepsilon>0$, we introduce the averaging operator
\[(T_\varepsilon f)(z)=\frac{1}{\ell_g(\pa D_g(z,\varepsilon))}\int_{\pa D_g(z,\varepsilon)}f\,\d\ell_g,\]
where $D_g(z,r)$ denotes the geodesic disk of radius $r$ centered at $z$ with respect to $g$. For the GFF $X_g$, we write $X_{g,\varepsilon}=T_\varepsilon X_g$. At each $z_j$, for $\alpha_j\in\Rb$, we define
\[V_{\alpha_j,g,\varepsilon}(v_j)=\varepsilon^{-\frac{\alpha_j^2}{2}}e^{\i\alpha_jc}\,e^{\i\alpha_jI_{x_0}(2\pi R\omega)(v_j)}\,e^{\i\alpha_jX_{g,\varepsilon}(z_j)},\]
where $I_{x_0}(2\pi R\omega)(v_j)=\int_{\gamma_j}2\pi R\omega$, $\gamma_j$ is a path on $\Sigma$ from $x_0$ to $z_j$ such that $\gamma_j^\circ\subset\Sigma\setminus\zbf$ and the derivative of $\gamma_j$ at $z_j$ is $v_j$. If $\alpha_jR\in\Zb$, this is well-defined up to $o(1)$ as $\varepsilon\to0$. Indeed, changing the representative $\omega$ to $\omega+\frac{1}{2\pi R}df$ amounts to translating $X_g$ by $h=f-\frac{1}{\vol_g\Sigma}\int_\Sigma f\,\d v_g$ (as in the proof of \cref{measure1}), which gives a multiplicative factor $e^{\i\alpha_j(h(z_j)-T_\varepsilon h(z_j))}=1+o(\varepsilon)$ as $\varepsilon\to0$. Thus it will be clear that the limit as $\varepsilon\to0$ is well-defined. Similarly, at each $x_j$, for $\eta_j\in\Rb$, we define
\[V_{\eta_j,g,\varepsilon}(x_j)=\varepsilon^{-\frac{\eta_j^2}{4}}e^{\i\frac{\eta_j}{2}c}\,e^{\i\frac{\eta_j}{2}I_{x_0}(2\pi R\omega)(x_j)}\,e^{\i\frac{\eta_j}{2}X_{g,\varepsilon}(x_j)}.\]
If $\eta_jR\in2\Zb$, this is well-defined up to $o(1)$ as $\varepsilon\to0$. To simplify the notation, we write
\[V_{g,\varepsilon}(\phi_g)=\prod_{j=1}^\s V_{\alpha_j,g,\varepsilon}(v_j)\prod_{j=1}^\t V_{\eta_j,g,\varepsilon}(x_j).\]
\end{itemize}

\begin{rem*}
It is possible to define $V_{g,\varepsilon}$ as an exactly well-defined (i.e., not just up to $o(1)$) function on $\Dc'^\N(\Sigma\setminus\zbf,\TR)$, so that the integral below is well-defined for a fixed $\varepsilon$. One way to do this is as follows. Fix a separating family $\deltabf$ of $\Sigma$ with respect to $\vbf$ that is disjoint from $x_0$ and $\xbf$. We define
\[V_{\alpha_j,g,\varepsilon}(v_j)=\varepsilon^{-\frac{\eta_j^2}{4}}e^{\i\frac{\eta_j}{2}c}\,e^{\i\frac{\eta_j}{2}T_\varepsilon I_{x_0}^\deltabf(2\pi R\omega)(x_j)}\,e^{\i\frac{\eta_j}{2}X_{g,\varepsilon}(x_j)},\]
where $I_{x_0}^\deltabf(2\pi R\omega):\Sigma\setminus\deltabf\to\Rb$ is as in \cref{curv-term-section}, and similarly for $V_{\eta_j,g,\varepsilon}(x_j)$. Then $V_{g,\varepsilon}$ is really invariant under changes of the representatives $\omega$. Moreover, this definition has the advantage that $V_{g,\varepsilon}$ is well-defined as long as $\sum\alpha_j+\sum\frac{\eta_j}{2}\in\Zb/R$ (rather than requiring that $\alpha_j,\frac{\eta_j}{2}\in\Zb/R$ for each $j$). Indeed, with this definition, the only term in $V_{g,\varepsilon}$ that is only defined up to $2\pi R\Zb$ is $e^{\i\big(\sum\alpha_j+\sum\frac{\eta_j}{2}\big)c}$, and this makes sense if $\sum\alpha_j+\sum\frac{\eta_j}{2}\in\Zb/R$. Intuitively, the reason is that the choice of a separating family allows us to relate the value of $I_{x_0}(\omega)$ at different points, so that we have a global ambiguity in $2\pi R\Zb$ instead of one for each $z_j$.

This definition of $V_{g,\varepsilon}$ depends on the choice of a separating family $\deltabf$. However, since $(T_\varepsilon I_{x_0}^\deltabf(2\pi R\omega))(z_j)\to I_{x_0}(2\pi R\omega)(v_j)$ (resp.\ $(T_\varepsilon I_{x_0}^\deltabf(2\pi R\omega))(x_j)\to I_{x_0}(2\pi R\omega)(x_j)$), the limit as $\varepsilon\to0$ is well-defined and coincides with the limit for the previous definition of $V_{g,\varepsilon}$. In practice, it does not matter which definition we use after taking the limit $\varepsilon\to0$. Note that here our convention in \cref{sep-family-extended} on the direction of the $d_j$ is important. In the convention of \cite{CILT}, it is not generally true that $(T_\varepsilon I_{x_0}^\deltabf(2\pi R\omega))(z_j)\to I_{x_0}(2\pi R\omega)(v_j)$.
\end{rem*}

\begin{defn}\label{def:correlation_function}
For a measurable function $F:\Dc_\mbf'^\N(\Sigma\setminus\zbf,\TR)\to\Cb$, the \textit{BCILT correlation function} of $F$ is
\[\la FV_{\alphabf,\etabf,\mbf}(\vbf,\xbf)\ra_{\Sigma,g}=
\lim_{\varepsilon\to0}\int_{\Dc'^\N_\mbf(\Sigma\setminus\zbf,\TR)}
F(\phi_g)\,V_{g,\varepsilon}(\phi_g)\,K_{\Sigma,g}(\phi_g)^{-1}\,e^{-M_{\Sigma,g}(\phi_g)}\,\d\phi_g.\]
\end{defn}

We define the following function class on $\Dc_\mbf'^\N(\Sigma\setminus\zbf,\TR)$. Let $\Ec(\Dc_\mbf'^\N(\Sigma\setminus\zbf,\TR))$ be the linear space spanned by functions $F:\Dc_\mbf'^\N(\Sigma\setminus\zbf,\TR)\to\Cb$ whose pullback via the homeomorphism in \cref{measure1} has the form
\begin{equation}\label{F1}
F([(X_g,c+\ol{I_{x_0}(2\pi R\omega)})])=e^{\i nc/R}\,P(\la X_g,h_1\ra,\ldots,\la X_g,h_m\ra)\,G([\omega]),
\end{equation}
where $n\in\Zb$, $m\in\Nb$, $P$ is a complex polynomial, $h_1,\ldots,h_m\in\Dc(\Sigma,\Rb)$, $G:H_\mbf^1(\Sigma\setminus\zbf)\to\Cb$ is a bounded function. This space is well-defined, i.e., it does not depend on the choices of $x_0,\omega$. Note that $1\in\Ec(\Dc_\mbf'^\N(\Sigma\setminus\zbf,\TR))$.

\begin{thm}\label{thm_exist_corr}
Suppose $\fr j$, $\alpha_j,\eta_j>Q$. For $F\in\Ec(\Dc_\mbf'^\N(\Sigma\setminus\zbf,\TR))$, $\la FV_{\alphabf,\etabf,\mbf}(\vbf,\xbf)\ra_{\Sigma,g}$ exists.
\end{thm}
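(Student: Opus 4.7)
The plan is to unfold the integral via the homeomorphism of \cref{measure1}, apply the imaginary Cameron--Martin theorem (\cref{prop_im_CM}) to absorb the regularized vertex operators into an imaginary shift of the GFF, and then show that the resulting shifted GMC expression has a finite limit as $\varepsilon\to 0$, with enough control to sum over $H_\mbf^1(\Sigma\setminus\zbf)$.

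Concretely, fix $x_0\in\Sigma\setminus(\zbf\cup\xbf)$ and an admissible representative $\omega$ in each cohomology class $[\omega]\in H_\mbf^1(\Sigma\setminus\zbf)$. Writing $\phi_g=[(X_g,c+\ol{I_{x_0}(2\pi R\omega)})]$, the integrand in \cref{def:correlation_function} splits as
\[F(\phi_g)\,V_{g,\varepsilon}(\phi_g)\,K_{\Sigma,g}(\phi_g)^{-1}\,e^{-M_{\Sigma,g}(\phi_g)}=\Psi_\varepsilon(c,X_g,[\omega]),\]
where $\Psi_\varepsilon$ is an explicit expression. Collecting $c$-dependences, the electric, curvature, and potential factors contribute $c$-phases $e^{\i(\sum_j\alpha_j+\sum_j\eta_j/2)c}$, $e^{-\i Q\chi(\Sigma)c}$, and $\mu\int_\Sigma e^{\i\beta c}(\cdots)+\int_{\pa\Sigma}\mu_\pa e^{\i\frac{\beta}{2}c}(\cdots)$, respectively. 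By the quantization conditions $\alpha_jR,QR,\beta R\in\Zb$ (with the stronger conditions specified), the entire integrand is a well-defined measurable function on $\Dc_\mbf'^\N(\Sigma\setminus\zbf,\TR)$, and the $c$-integral reduces to a Fourier expansion, yielding a discrete selection condition plus a well-defined bounded factor depending on $X_g$ and $[\omega]$.

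Next, for fixed $[\omega]$ and $\varepsilon>0$, absorb $V_{g,\varepsilon}(\phi_g)$ into the expectation against $X_g$ using \cref{prop_im_CM} applied to
\[Y_\varepsilon=\sum_{j=1}^\s\alpha_jX_{g,\varepsilon}(z_j)+\sum_{j=1}^\t\frac{\eta_j}{2}X_{g,\varepsilon}(x_j),\]
which is of the form $\int X_g\,\d m_\varepsilon$ for an appropriate measure $m_\varepsilon$ supported on the closed $\varepsilon$-circles around insertions. The Gaussian variance $\frac{1}{2}\Eb[Y_\varepsilon^2]$ produces precisely the counterterms $\varepsilon^{-\alpha_j^2/2}$, $\varepsilon^{-\eta_j^2/4}$ together with off-diagonal cross terms converging to $-2\pi\sum_{i<j}\alpha_i\alpha_jG_g(z_i,z_j)$ and its boundary analogues. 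The shift
\[h_{\varepsilon}(x)=\sum_j2\pi\alpha_jT_\varepsilon G_g(z_j,\cdot)(x)+\sum_j\pi\eta_jT_\varepsilon G_g(x_j,\cdot)(x)\]
converges to $h_0(x)=\sum_j2\pi\alpha_jG_g(z_j,x)+\sum_j\pi\eta_jG_g(x_j,x)$ smoothly away from $\zbf\cup\xbf$ and with the expected logarithmic blow-up at the insertions. After the shift, the bulk and boundary GMC measures $M_g^\beta(\phi_g,\d v_g)$ and $L_g^\beta(\phi_g,\d\ell_g)$ pick up the deterministic factor $e^{-\beta h_\varepsilon(x)}$, which near each $z_j$ behaves like $|x-z_j|^{-\alpha_j\beta}$ in the bulk and like $|x-z_j|^{-\alpha_j\beta/2}$ on nearby boundary, and analogously for $x_j$.

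The main obstacle is to bound this shifted exponential moment uniformly in $\varepsilon$ and then pass to the limit. I would apply \cref{coro_expo_momen} to the effective test function $f_\varepsilon(x)=\mu(x)e^{-\beta h_\varepsilon(x)}$ on the bulk and its analogue on the Neumann boundary, and check that the quantities $U_1,U_2,V$ in \eqref{eq_defn_U1U2V} remain finite in the $\varepsilon\to 0$ limit. This is exactly where the Seiberg-type condition $\alpha_j,\eta_j>Q$ enters: writing $Q=\beta/2-2/\beta$, a short local computation near each $z_j$ (say, by doubling when $z_j$ is near $\paN\Sigma$, so $G_g$ acquires its mirror contribution) shows that the most singular bulk term in $U_1$ scales like $\int|x-z_j|^{-\alpha_j\beta-\alpha_j\beta-\beta^2}$ which is integrable in two dimensions precisely when $2\alpha_j\beta+\beta^2<4$, i.e., $\alpha_j<2/\beta-\beta/2=-Q$; by symmetry under $\beta\mapsto 4/\beta-\beta$ one obtains the dual inequality and hence the stated bound $\alpha_j>Q$ (with the analogous one-dimensional computation for $\eta_j$). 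Combined with the regularity of $h_\varepsilon$ away from insertions and the standing bound $\beta\in(0,\sqrt 2)$, this gives a uniform bound for the exponential moment.

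Given this uniform bound, the polynomial part of $F\in\Ec$ is absorbed by Cameron--Martin or directly by the moment method since its factors $\la X_g,h_k\ra$ have all Gaussian moments after the Girsanov shift. The convergence as $\varepsilon\to 0$ then follows from $L^2$-convergence of the imaginary GMC and dominated convergence using the uniform exponential moment as envelope. This yields a finite value for each fixed $[\omega]\in H_\mbf^1(\Sigma\setminus\zbf)$. Finally, the sum over cohomology classes converges absolutely: writing $\omega=\omega_0+\omega^\h$ with $\omega_0$ fixed and $\omega^\h$ harmonic, the factor $e^{-\pi R^2\int_\Sigma^\reg|\omega|_g^2\,\d v_g}$ in the measure of \cref{measure1} provides Gaussian-type decay in $[\omega]$, while all other factors depending on $[\omega]$ (the boundedness of $G$ in $\Ec$, the bounded phase $e^{\i QRK_{\Sigma,g,x_0}^\deltabf(\omega)}$ which is linear in $\omega$, and the shifted GMC moment whose envelope depends on $\omega$ only through continuous functions of $(\int_\gamma\omega)$) grow at most polynomially. \cref{cohomology-summable} then gives absolute convergence and completes the proof.
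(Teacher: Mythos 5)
Your overall strategy matches the paper closely: unfold the definition via \cref{measure1} (after normalizing to $d^*\omega=0$ using the almost-harmonic representative from \cref{harmonic}), apply the imaginary Cameron--Martin theorem (\cref{prop_im_CM}) to absorb the regularized vertex operators and curvature terms into an imaginary shift $u_\varepsilon=\i h_\varepsilon$ of the GFF, prove a uniform-in-$\omega$ $L^p$ estimate for the shifted GMC exponential via \cref{coro_expo_momen}, control the polynomial part of $F$ by Hölder (you phrase it as dominated convergence with the exponential moment as envelope, which is the same idea), and finally sum over cohomology using \cref{cohomology-summable}. This is precisely the architecture of the paper's proof, including its use of Lemma 6.3.

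There is, however, a genuine sign error in your Seiberg-type computation that makes your derivation of the condition $\alpha_j>Q$ circular rather than deductive. You write that the GMC picks up the factor $e^{-\beta h_\varepsilon(x)}\sim|x-z_j|^{-\alpha_j\beta}$ and hence that $U_1$ contains a singularity $\sim\int|x-z_j|^{-2\alpha_j\beta-\beta^2}$, leading to $\alpha_j<-Q$; you then invoke a ``symmetry $\beta\mapsto 4/\beta-\beta$'' to flip this to $\alpha_j>Q$. That is not a valid argument. The point is that the Girsanov shift in the imaginary theory is \emph{purely imaginary}: with $u_\varepsilon=\i h_\varepsilon$ and $h_\varepsilon(x)\sim 2\pi\alpha_j G_g(x,z_j)\sim-\alpha_j\log|x-z_j|$, the factor the GMC acquires is
\[
e^{\i\beta u_\varepsilon(x)}=e^{\i\beta\cdot\i h_\varepsilon(x)}=e^{-\beta h_\varepsilon(x)}\sim|x-z_j|^{\,+\alpha_j\beta},
\]
a \emph{decay}, not a blow-up (compare this with real Liouville theory, where the shift is real and one obtains $|x-z_j|^{-\gamma\alpha}$ and the opposite Seiberg bound). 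Plugging the correct exponent into the double integral gives $\iint_{\Db^2}|x|^{\alpha_j\beta}|y|^{\alpha_j\beta}|x-y|^{-\beta^2}\,\d x\,\d y$, which (together with the mirror term $|x-\tau(y)|^{-\beta^2}$ near $\paN\Sigma$) converges precisely when $\alpha_j\beta-\beta^2/2>-2$, i.e.\ $\alpha_j>\beta/2-2/\beta=Q$, directly and without any duality trick. The one-dimensional boundary computation for $\eta_j$ and the cross term in $V$ go the same way, with exponent $+\eta_j\beta/2$. Fix the sign and the rest of your sketch is correct.

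Two smaller remarks: (i) for the sum over $H_\mbf^1(\Sigma\setminus\zbf)$ the key point is not polynomial growth but \emph{uniform boundedness}: the $\omega$-dependence inside the expectation enters only through unit-modulus phases (so the exponential moment bound is $\omega$-independent) plus the Gaussian factor $e^{-\pi R^2\int_\Sigma^\reg|\omega|_g^2\,\d v_g}$ outside, and it is this uniformity that \cref{lem_exp_conv_gmc} supplies and that lets \cref{cohomology-summable} close the argument; (ii) your ``dominated convergence'' step should really be a Hölder split as in the paper, since one must pass the limit $\varepsilon\to0$ simultaneously in $F(\phi_g+u_\varepsilon)$ and in the GMC exponential.
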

\begin{proof}
Unraveling the definitions, it is
\begin{equation}\label{eq_corr_def}
    \begin{split}
&\sqrt{\frac{\vol_g\Sigma}{\det'\Delta_{g}}}
\sum_{[\omega]\in H^1_\mbf(\Sigma\setminus\zbf)}
e^{-\pi R^2\int_\Sigma^\reg|\omega|_g^2\,\d v_g}\,
e^{\i(\sum\alpha_jI_{x_0}(2\pi R\omega)(v_j)+\sum\frac{\eta_j}{2}I_{x_0}(2\pi R\omega)(x_j)-QRK_{\Sigma,g,x_0}^\deltabf(\omega))}\\
&\quad\lim_{\varepsilon\to0}\varepsilon^{-\sum\frac{\alpha_j^2}2-\sum\frac{\eta_j^2}4}\int_0^{2\pi R}e^{\i(\sum\alpha_j+\sum\frac{\eta_j}{2}-Q\chi(\Sigma))c}\,
\Eb\bigg[
e^{-R\int_\Sigma X_g\,d^*\omega\,\d v_g}\,
F(\phi_g)\,e^{\i(\sum\alpha_jX_{g,\varepsilon}(z_j)+\sum\frac{\eta_j}{2}X_{g,\varepsilon}(x_j))}\\
&\hspace{11em} e^{-\frac{\i Q}{4\pi}\int_\Sigma K_gX_g\,\d v_g}\,
e^{-\mu\int_\Sigma e^{\i\beta(c+I_{x_0}(2\pi R\omega))}M_\beta^g(X_g,\d v_g)-\int_{\pa\Sigma}\mu_\pa\,e^{\i\frac{\beta}{2}(c+I_{x_0}(2\pi R\omega))}L_\beta^g(X_g,\d\ell_g)}\bigg]\d c,
\end{split}
\end{equation}
where the expectation $\Eb$ is over the GFF $X_g$. By (1) of \cref{harmonic}, we may assume $d^*\omega=0$.

According to Proposition \ref{prop_im_CM}, the expectation in \eqref{eq_corr_def} can be written as
\begin{equation}\label{eq_prop_corr_CM_1}
    \begin{split}
        &\mathbb E\left[F(\phi_g+u_\varepsilon)e^{-\mu M_g(\phi_g+u_\varepsilon,\Sigma)-\mu_\partial L_g(\phi_g+u_\varepsilon,\partial\Sigma)}\right]\\
        &\times e^{-\frac12\mathbb E[(\sum_j\alpha_jX_{g,\varepsilon}(z_j)+\sum_j\frac{\eta_j}{2}X_{g,\varepsilon}(x_j)-\frac{ Q}{4\pi}\int_\Sigma K_g X_g\,\d v_g-\frac{Q}{2\pi}\int_{\pa\Sigma}k_gX_g\,\d\ell_g)^2]},
    \end{split}
    \end{equation}
    where 
    \begin{equation*}
    \begin{split}
        u_\varepsilon(x):=&\i\sum_j\alpha_j\mathbb E[X_g(x)X_{g,\varepsilon}(z_j)]+\i\sum_j\frac{\eta_j}{2}\mathbb E[X_g(x)X_{g,\varepsilon}(x_j)]\\
        -&\frac{\i Q}{4\pi}\int_\Sigma K_g(y) E[X_g(x)X_{g}(y)]\,\d v_g(y)-\frac{\i Q}{2\pi}\int_{\pa\Sigma}k_gE[X_g(x)X_{g}(y)]\,\d\ell_g(y).
    \end{split}
    \end{equation*}
    Let $X_{g,0}:=X_g$, we have $u_\varepsilon(x)$ converge to $u_0(x)$ pointwisely on $\Sigma\setminus(\zbf\cup\xbf)$.
    We first study the $L^p$ limit of the exponential of GMC.
\begin{lem}\label{lem_exp_conv_gmc}
Suppose $0>\alpha_j>Q,0>\eta_j>Q$ and consider $\phi_g,u_\varepsilon$ defined as above. Then for all $p>1$,
    \begin{equation*}
        \lim_{\varepsilon\to0}\mathbb E[|e^{-\mu M_g(\phi_g+u_\varepsilon,\Sigma)-\mu_\partial L_g(\phi_g+u_\varepsilon,\partial\Sigma)}-e^{-\mu M_g(\phi_g+u_0,\Sigma)-\mu_\partial L_g(\phi_g+u_0,\partial\Sigma)}|^p]=0,
    \end{equation*}
    uniformly over all $\omega$.
\end{lem}
\begin{proof}
    First, by Cauchy-Schwarz inequality, the problem is reduced to
    \begin{equation*}
    \begin{split}
        &\mathbb E[|e^{-\mu M_g(\phi_g+u_\varepsilon,\Sigma)-\mu_\partial L_g(\phi_g+u_\varepsilon,\partial\Sigma)}-e^{-\mu M_g(\phi_g+u_0,\Sigma)-\mu_\partial L_g(\phi_g+u_0,\partial\Sigma)}|^p]\\
        \leq&\mathbb E[|e^{-\mu M_g(\phi_g+u_0,\Sigma)-\mu_\partial L_g(\phi_g+u_0,\partial\Sigma)}|^{2p}]^\frac12\mathbb E[|e^{-\mu (M_g(\phi_g+u_\varepsilon,\Sigma)-M_g(\phi_g+u_0,\Sigma))-\mu_\partial (L_g(\phi_g+u_\varepsilon,\partial\Sigma)-L_g(\phi_g+u_0,\partial\Sigma))}-1|^{2p}]^\frac12.
    \end{split} 
    \end{equation*}
By using $|e^z-1|\leq |e^{|z|}-1|$ and $(e^u-1)^p\leq C(e^{pu}-1),u\geq0,p>1$, it is enough to study for $\lambda>0$, 
\begin{equation}\label{eq_prop_exp_conv_1}
   \mathbb E[e^{\lambda |M_g(\phi_g+u_\varepsilon,\Sigma)-M_g(\phi_g+u_0,\Sigma)|+\lambda |L_g(\phi_g+u_\varepsilon,\partial\Sigma)-L_g(\phi_g+u_0,\partial\Sigma)|}]
\end{equation}
goes to zero. According to Corollary \ref{coro_expo_momen}, \eqref{eq_prop_exp_conv_1} is smaller than
\begin{equation*}
\begin{split}
    e^{C V(\varrho_1)}&\sqrt[3]{\left(1+C\sqrt{U_1(D,\varrho_2)}e^{CU_1(D,\varrho_2)}\right)\left(1+C\sqrt{U_1(\Sigma\setminus D,\varrho_2)}e^{CU_1(\Sigma\setminus D,\varrho_2)}\right)}\\
        &\sqrt[3]{\left(1+C\sqrt{U_2(\paN\Sigma,\varrho_3)}e^{CU_2(\paN\Sigma,\varrho_3)}\right)},
\end{split}
\end{equation*}
where $U_1,U_2,V$ are all defined in \ref{eq_defn_U1U2V} and $\varrho_2:=e^{\i\beta u_\epsilon}-e^{-\i\beta u_0},\varrho_3:=e^{\i\beta u_\epsilon/2}-e^{-\i\beta u_0/2}$ and $\varrho_1:=\max(\varrho_2,\varrho_3)$.

Now it is plain to see our bound has nothing to do with $\omega$. Thus we just need to prove $U_1,U_2$ and $V$ all converge to zero, and we will dominate all $\i u_\varepsilon$ as follows. We could forget about the curvature terms in $u_\varepsilon$ since they do not depend on $\varepsilon$.

For the green function we have the following inequality
\begin{equation*}
    G_g(x,y)\leq -\frac1{2\pi}\log|d_g(x,y)|-\frac1{2\pi}\log|d_g(x,\tau(y))|+C,
\end{equation*}
where $C$ is a constant independent of $x,y\in\bar\Sigma$. Then we have a bound for $\i u_\varepsilon$ uniformly over all $x\in\bar\Sigma$ and $\varepsilon$ small (where the $C$ might change)
\begin{equation*}
   \i u_\varepsilon(x)\leq \sum_j\alpha_j(\log|d_g(x,z_j)|+\log|d_g(x,\tau(z_j))|)+\sum_j\eta_j\log|d_g(x,x_j)|+C=:\i u'(x).
\end{equation*}
Therefore, to prove $U_1,U_2$ and $V$ converge to zero we only need to show that $U_1(\cdot,e^{\i\beta u'}),U_2(\cdot,e^{\i\beta u'/2}),\\V(\cdot,\max(e^{\i\beta u'},e^{\i\beta u'/2}))$ are finite and thus dominate $U_1(\cdot,\varrho_2),U_2(\cdot,\varrho_3),V(\cdot,\varrho_1)$.
The evaluations of $U_1,U_2$ are determined by the following basic integrals 
\begin{equation*}
    \iint_{\mathbb D^2}\frac{|x|^{\beta\alpha}|y|^{\beta\alpha}\mathrm{d}x\mathrm{d}y}{|x-y|^{\beta^2}},\iint_{(\mathbb D\cap\mathbb H)^2}\frac{|x|^{\beta\eta}|y|^{\beta\eta}}{|x-y|^{\beta^2}}+\frac{|x|^{\beta\eta}|y|^{\beta\eta}}{|x-\bar y|^{\beta^2}}\mathrm{d}x\mathrm{d}y,\iint_{[-1,1]^2}\frac{|x|^{\beta\eta/2}|y|^{\beta\eta/2}\mathrm{d}x\mathrm{d}y}{|x-y|^{\beta^2/2}}.
\end{equation*}
which are finite if $\beta\alpha-\beta^2/2>-2,\beta\eta-\beta^2/2>-2,$ and $\beta\eta/2-\beta^2/4>-1$, i.e. $\alpha>Q$ and $\eta>Q$. For $V$, we need
\begin{equation*}
    \int_{\mathbb D\cap\mathbb H}|x|^{\beta\eta}d(x,[-1,1])^{-\beta^2/2}\ \d x=\int_{0}^{\pi}\int_{0}^1r^{\beta\eta}|r\sin\theta|^{-\beta^2/2}r\ \d r\d \theta,
\end{equation*}
which is again $\eta>Q$.
\end{proof}
The $L^2$ convergence of $M_g(\phi_g+u_\varepsilon,\Sigma)$ ($L_g(\phi_g+u_\varepsilon,\partial\Sigma)$ resp.) can be proved similarly. Let $W(x):=\lim_{\varepsilon\to0}\mathbb E[X_{g,\varepsilon}^2(x)]+\log\varepsilon$. We point out that 
\begin{equation*}
    \mathbb E[|M_g(\phi_g+u_0,\Sigma)|^2]=\iint_{\Sigma^2} e^{\i\beta(u_{0}(x)-\bar u_{0}(y))}e^{\beta^2G(x,y)-\beta^2/2(W(x)+W(y))+\i\beta (I_{x_0}(2\pi R\omega)(x)-I_{x_0}(2\pi R\omega)(y))}\d v_g(x)\d v_g(y)
\end{equation*}
and
\begin{equation*}
    \mathbb E[|L_g(\phi_g+u_0,\partial\Sigma)|^2]=\iint_{\partial\Sigma^2} e^{\i\frac\beta2(u_{0}(x)-\bar u_{0}(y))}e^{\frac{\beta^2}4G(x,y)-\frac{\beta^2}8(W(x)+W(y))+\i\frac\beta2 (I_{x_0}(2\pi R\omega)(x)-I_{x_0}(2\pi R\omega)(y))}\d \ell_g(x)\d \ell_g(y).
\end{equation*}
are finite. The control of these second moments is equivalent to the control of the following integrals
\begin{equation*}
    \iint_{\mathbb D^2}\frac{|x|^{\beta\alpha}|y|^{\beta\alpha}\mathrm{d}x\mathrm{d}y}{|x-y|^{\beta^2}},\iint_{(\mathbb D\cap\mathbb H)^2}\frac{|x|^{\beta\eta}|y|^{\beta\eta}d(x,\bar x)^{\beta^2/2}d(y,\bar y)^{\beta^2/2}\mathrm{d}x\mathrm{d}y}{|x-y|^{\beta^2}|x-\bar y|^{\beta^2}},\iint_{[-1,1]^2}\frac{|x|^{\beta\eta/2}|y|^{\beta\eta/2}\mathrm{d}x\mathrm{d}y}{|x-y|^{\beta^2/2}},
\end{equation*}
which are finite if $\beta\alpha-\beta^2/2>-2,\beta\eta-\beta^2/2>-2,$ and $\beta\eta/2-\beta^2/4>-1$, i.e. $\alpha>Q$ and $\eta>Q$. Note that $\sqrt{d(x,\tau (x))d(y,\tau (y))}\leq d(x,\tau(y))$.
Now we sketch the proof of convergence of 
\begin{equation*}
    \begin{split}
&\sum_{[\omega]\in H^1_\mbf(\Sigma\setminus\zbf)}
e^{-\pi R^2\int_\Sigma^\reg|\omega|_g^2\,\d v_g}\,
e^{\i(\sum\alpha_jI_{x_0}(2\pi R\omega)(v_j)+\sum\frac{\eta_j}{2}I_{x_0}(2\pi R\omega)(x_j)-QRK_{\Sigma,g,x_0}^\deltabf(\omega))}\\
&\quad\lim_{\varepsilon\to0}\int_0^{2\pi R}e^{\i(\sum\alpha_j+\sum\frac{\eta_j}{2}-Q\chi(\Sigma))c}\,
\mathbb E\left[F(\phi_g+u_\varepsilon)e^{-\mu M_g(\phi_g+u_\varepsilon,\Sigma)-\mu_\partial L_g(\phi_g+u_\varepsilon,\partial\Sigma)}\right]\d c,
\end{split}
\end{equation*}
where $\varepsilon^{-\sum\alpha_j^2/2-\sum\eta_j^2/8}e^{-\frac12\mathbb E[(\sum_j\alpha_jX_{g,\varepsilon}(z_j)+\sum_j\frac{\eta_j}{2}X_{g,\varepsilon}(x_j)-\frac{ Q}{4\pi}\int_\Sigma K_g X_g\,\d v_g-\frac{Q}{2\pi}\int_{\pa\Sigma}k_gX_g\,\d\ell_g)^2]}$ converges to a constant so we leave it out.

Similar to \cite[Theorem 6.11]{CILT}, for $p>1$ and $1/p+1/q=1$, we need to control two parts:
\begin{equation*}
   \left( \int_0^{2\pi R}
\mathbb E\left[\left |F(\phi_g+u_\varepsilon)-F(\phi_g+u_0)\right|^p\right]\d c\right)^{1/p}\left(\int_0^{2\pi R}
\mathbb E\left[\left |e^{-\mu M_g(\phi_g+u_\varepsilon,\Sigma)-\mu_\partial L_g(\phi_g+u_\varepsilon,\partial\Sigma)}\right|^q\right]\d c\right)^{1/q},
\end{equation*}
and
\begin{equation*}
   \left( \int_0^{2\pi R}
\mathbb E\left[\left |F(\phi_g+u_0)\right|^p\right]\d c\right)^{1/p}\left(\int_0^{2\pi R}
\mathbb E\left[\left |e^{-\mu M_g(\phi_g+u_\varepsilon,\Sigma)-\mu_\partial L_g(\phi_g+u_\varepsilon,\partial\Sigma)}-e^{-\mu M_g(\phi_g+u_0,\Sigma)-\mu_\partial L_g(\phi_g+u_0,\partial\Sigma)}\right|^q\right]\d c\right)^{1/q}.
\end{equation*}
And their convergences are independent of $\omega$.

We only need to consider $F$ of the form \eqref{F1}, then $\left |F(\phi_g+u_\varepsilon)-F(\phi_g+u_0)\right|$ is determined up to some polynomial of $\la u_\varepsilon-u_0,f_{[\omega],1}\ra,\cdots$ . The second one is already proved in Lemma \ref{lem_exp_conv_gmc}.

Finally,
\begin{equation*}
\sum_{[\omega]\in H^1_\mbf(\Sigma\setminus\zbf)}
e^{-\pi R^2\int_\Sigma^\reg|\omega|_g^2\,\d v_g}\,
e^{\i(\sum\alpha_jI_{x_0}(2\pi R\omega)(v_j)+\sum\frac{\eta_j}{2}I_{x_0}(2\pi R\omega)(x_j)-QRK_{\Sigma,g,x_0}^\deltabf(\omega))}
\end{equation*}
which converges absolutely by \cref{cohomology-summable}.
\end{proof}

Suppose $F$ has \textit{pure degree} $n$ in the sense that $F(\phi+c)=e^{\i nc/R}F(\phi)$ for $c\in\TR$. Expanding the exponentials in the GMCs and collecting the terms involving the zero mode $c$, we see that the path integral has the form
\[\sum_{p,q\geq0}\,a_{p,q}\exp\left(\i\left(\frac{n}{R}+\sum_{j=1}^\s\alpha_j+\sum_{j=1}^\t\frac{\eta_j}{2}-Q\chi(\Sigma)+p\beta+q\frac{\beta}2\right)c\right),\]
where the coefficients $a_{p,q}$ do not depend on $c$. Integrating over $c$, the terms that do not vanish must satisfy the \textit{neutrality condition}
\begin{equation}\label{neutrality-condition}
\frac{n}{R}+\sum_{j=1}^\s\alpha_j+\sum_{j=1}^\t\frac{\eta_j}{2}-Q\chi(\Sigma)+p\beta+q\beta/2=0.
\end{equation}
There are finitely many such terms. In particular, $\la V_{\alphabf,\etabf,\mbf}(\vbf,\xbf)\ra_{\Sigma,g}=0$ unless
\[\sum_{j=1}^\s\alpha_j+\sum_{j=1}^\t\frac{\eta_j}{2}\leq Q\chi(\Sigma).\]
This is analogous to the second Seiberg bound for real LCFT.

\subsection{Segal's amplitudes}\label{bcilt-amplitude}

Let $\Sigma$ be an extended surface with $\paD\Sigma\neq\varnothing$ and $g$ a Neumann extendible conformal metric on $\Sigma$. Fix $\vbf$ as usual. Similarly to the previous section, we define the following functionals on the Liouville field $\phi_g=[(X_g,\ol{P_\Sigma\varphibf}+\ol{2\pi Rf}+I_{x_0}(2\pi R\omega)+c_0)]$ as in \cref{measure2}.
\begin{itemize}
\item For the curvature terms, we define
\[K_{\Sigma,g}(\phi_g)=e^{\i Q\chi(\Sigma)c_0}\,e^{\i QRK_{\Sigma,g,x_0}^\deltabf(\omega)}\exp\left(\frac{\i Q}{4\pi}\int_\Sigma K_g(X_g+P_\Sigma\varphibf+2\pi Rf)\,\d v_g+\frac{\i Q}{2\pi}\int_{\paD\Sigma}k_g(\varphibf+2\pi Rf)\,\d\ell_g\right)
,\]
where $\deltabf$ is any separating family of $\Sigma$ with respect to $\vbf$ disjoint from $x_0$. If $QR\in4\Zb$ (or $2\Zb$ if $\Sigma$ has no corners), this is well-defined.
\item For the potential terms, we define
\begin{align*}
M_g^\beta(\phi_g,\d v_g)&=e^{\i\beta(c_0+I_{x_0}(2\pi R\omega))}e^{\i\beta(P_\Sigma\varphibf+2\pi Rf)}M_g^\beta(X_g,\d v_g),\\
L_g^\beta(\phi_g,\d\ell_g)&=e^{\i\frac{\beta}{2}(c_0+I_{x_0}(2\pi R\omega))}e^{\i\frac{\beta}{2}(P_\Sigma\varphibf+2\pi Rf)}L_g^\beta(X_g,\d\ell_g),
\end{align*}
and
\[M_{\Sigma,g}(\phi_g)=\mu\int_\Sigma M_g^\beta(\phi_g,\d v_g)+\int_{\paN\Sigma}\mu_\pa\,L_g^\beta(\phi_g,\d\ell_g).\]
If $\beta R\in2\Zb$ (or $\Zb$ if $\mu_\pa=0$), this is well-defined.
\item For the electric operators, we define for $\varepsilon>0$
\begin{align*}
V_{\alpha_j,g,\varepsilon}(v_j)&=\varepsilon^{-\frac{\alpha_j^2}{2}}e^{\i\alpha_jP_\Sigma\varphibf(z_j)}\,e^{\i\alpha_jc_0}\,e^{\i\alpha_jI_{x_0}(2\pi R\omega)(v_j)}\,e^{\i\alpha_j2\pi Rf(z_j)}\,e^{\i\alpha_jX_{g,\varepsilon}(z_j)},\\
V_{\eta_j,g,\varepsilon}(x_j)&=\varepsilon^{-\frac{\eta_j^2}{4}}e^{\i\frac{\eta_j}{2}P_\Sigma\varphibf(x_j)}\,e^{\i\frac{\eta_j}{2}c_0}\,e^{\i\frac{\eta_j}{2}I_{x_0}(2\pi R\omega)(x_j)}\,e^{\i\frac{\eta_j}{2}2\pi Rf(x_j)}\,e^{\i\frac{\eta_j}{2}X_{g,\varepsilon}(x_j)},
\end{align*}
and
\[V_{g,\varepsilon}(\phi_g)=\prod_{j=1}^\s V_{\alpha_j,g,\varepsilon}(v_j)\prod_{j=1}^\t V_{\eta_j,g,\varepsilon}(x_j).\]
The same remark applies concerning the well-definedness of this functional.
\end{itemize}

\begin{defn}\label{def:amplitude}
For a measurable function $F:\Dc_\mbf'^\M(\Sigma\setminus\zbf,\TR)\to\Cb$, the \textit{BCILT amplitude} of $F$ is the function $\Ac_{\Sigma,g,\alphabf,\etabf,\vbf,\xbf,\mbf,\zetabf}(F):\Dc'(\paS,\TR)\to\Cb$ defined by
\[\Ac_{\Sigma,g,\alphabf,\etabf,\vbf,\xbf,\mbf,\zetabf}(F,\bt^\kbf)=\lim_{\varepsilon\to0}
\int_{\Dc'^{\M,\bt^\kbf}_\mbf(\Sigma\setminus\zbf,\TR)}
F(\phi_g)\,V_{g,\varepsilon}(\phi_g)\,
K_{\Sigma,g}(\phi_g)^{-1}\,e^{-M_{\Sigma,g}(\phi_g)}\,\d\phi_g.\]
\end{defn}

We define the following function class on $\Dc_\mbf'^\M(\Sigma\setminus\zbf,\TR)$. The idea is that it should be a polynomial in $\phi_g$. Recall that we have a continuous surjection
\[\setlength\arraycolsep{1pt}
\begin{array}[t]{ccccccc}
\Dc'(\Sigma,\Rb) & \times & \Dc_0'(\paS,\Rb) & \times & C_\mbf^{\M,\const}(\Sigma\setminus\zbf,\TR) & \to & \Dc_\mbf'^\M(\Sigma\setminus\zbf,\TR)\\
X_g&&\varphibf&&\psi&\mapsto&[(X_g,\ol{P_\Sigma\varphibf}+\psi)]
\end{array}\]
Fix a representative in each connected component of $C_\mbf^{\M,\const}(\Sigma\setminus\zbf,\TR)$. If $\psi_0$ is a representative, then any element $\psi$ in its connected component $[\psi_0]$ has the form $\psi_0+\ol{f}$ where $f\in C^\infty(\Sigma,\Rb)$ is Neumann extendible and locally constant near $\paD\Sigma$. Here $f$ is well-defined up to $2\pi R\Zb$, and $\ol{f}|_{\paD\Sigma}$ can be regarded as an element in $\TR^{\bD}$. Let $\Ec(\Dc_\mbf'^\M(\Sigma\setminus\zbf,\TR))$ be the linear space spanned by functions $F:\Dc_\mbf'^\M(\Sigma\setminus\zbf,\TR)\to\Cb$ whose pullback via the surjection above has the form
\begin{equation}\label{F2}
F([(X_g,\ol{P_\Sigma\varphibf}+(\psi_0+\ol{f}))])=e^{\i\frac{\nbf}{R}\cdot \ol{f}|_{\paD\Sigma}}\,P(\la X_g+P_\Sigma\varphibf+f,h_1\ra,\ldots,\la X_g+P_\Sigma\varphibf+f,h_m\ra)\,G([\psi_0]),
\end{equation}
where $\psi_0$ is a chosen representative, $\nbf\in\Zb^{\bD}$, $m\in\Nb$, $P$ is a complex polynomial, $h_1,\ldots,h_m\in\Dc(\Sigma,\Rb)$ with zero integral over $\Sigma$, $G:\pi_0(C_\mbf^{\M,\const}(\Sigma\setminus\zbf,\TR))\cong H^1(\Sigma\setminus\zbf)\to\Cb$ is bounded. Here $\frac{\nbf}{R}\cdot \ol{f}|_{\paD\Sigma}=\sum_{j=1}^{\bD}\ol{\frac{n_j}{R}f|_{c_j^\D}}$ is well-defined up to $2\pi\Zb$, $X_g+f$ is well-defined up to $2\pi R\Zb$ (recall that the equivalence relation on $\Dc_\mbf'^\M(\Sigma\setminus\zbf,\TR)$ allows only changing the representatives by a function vanishing on the Dirichlet boundary), so this is well-defined. It is easy to check that this function space does not depend on the choices of the representatives $\psi_0$. Note that $1\in\Ec(\Dc_\mbf'^\M(\Sigma\setminus\zbf,\TR))$.

\begin{thm}\label{thm_exist_amplitude}
Suppose $\fr j$, $\alpha_j,\eta_j>Q$. For $F\in\Ec(\Dc_\mbf'^\M(\Sigma\setminus\zbf,\TR))$, $\Ac_{\Sigma,g,\alphabf,\etabf,\vbf,\xbf,\mbf,\zetabf}(F)$ exists as a limit in $L^2(\Dc'(\paS,\TR))$, \blue{where $\Dc'(\paS,\TR)$ is equipped with the measure defined in \cref{section:measure-space1}}.
\end{thm}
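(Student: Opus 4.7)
The plan is to reduce $L^2$ convergence in the boundary value $\bt^\kbf$ to the pointwise convergence of BCILT correlation functions already established in \cref{thm_exist_corr}, by gluing $\Sigma$ to its mirror copy $\bar\Sigma$ along the Dirichlet boundary. Using the constructions of \cref{glue1,glue2,glue3,glue4} on each $c_i^\D$ (outgoing in $\Sigma$, incoming in $\bar\Sigma$) and each $c_i^\MD$, one obtains an extended surface $\Sigma^\D$ with $\paD\Sigma^\D=\varnothing$; the Neumann boundary doubles, the insertion data doubles to $\zbf\sqcup\bar\zbf$, $\xbf\sqcup\bar\xbf$ with charges $\alphabf\sqcup\alphabf$, $\etabf\sqcup\etabf$, $\mbf\sqcup(-\mbf)$, and the metric $g$ glues to a Neumann extendible metric $g^\D$.

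The first key step is the \emph{pre-limit gluing identity}
\begin{equation*}
\langle\Ac_\varepsilon(F),\Ac_{\varepsilon'}(F)\rangle_{L^2(\Dc'(\paS,\TR))}
=\la F\bar F\,V_\varepsilon^\Sigma V_{\varepsilon'}^{\bar\Sigma}\ra_{\Sigma^\D,g^\D}^{\mathrm{pre}}
\end{equation*}
where the right side is the pre-limit correlation function of \cref{def:correlation_function}, with insertions coming from $\Sigma$ regularized at scale $\varepsilon$ and those from $\bar\Sigma$ at scale $\varepsilon'$ (and cosmological constants $\bar\mu,\bar\mu_\pa$ on the $\bar\Sigma$ side). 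This identity combines the free-field gluing theorem \cref{gluing} with the locality of the other factors: the curvature functionals combine via \cref{curv-gluing}, the bulk and boundary GMCs are additive (they put no mass on the Dirichlet boundary), and the vertex operators, at distinct points, simply multiply. The cohomology combinatorics of \cref{glue1,glue2,glue3,glue4} match the $L^2$ integration over the discrete part $\kbf\in\Zb^{\bD}$ of $\bt^\kbf$: the factor $H_\D^1(\Sigma)$ together with one $\Zb$ per glued (semi)circle are absorbed into $H^1(\Sigma^\D\setminus(\zbf\sqcup\bar\zbf))$, which is absolutely summable by \cref{cohomology-summable}.

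Once this identity is established, \cref{thm_exist_corr} applies on $\Sigma^\D$ since the doubled charges still satisfy $\alpha_j,\eta_j>Q$, which is inherited from the hypothesis. A small extension of the argument there (the shift in \cref{prop_im_CM} is applied independently at each insertion, so different regularization scales cause no issue) gives a \emph{joint} limit as $\varepsilon,\varepsilon'\to 0$ of the right-hand side, independent of how the two scales vanish. Consequently $\la\Ac_\varepsilon,\Ac_\varepsilon\ra$, $\la\Ac_\varepsilon,\Ac_{\varepsilon'}\ra$ and $\la\Ac_{\varepsilon'},\Ac_{\varepsilon'}\ra$ converge to a common limit, yielding
\begin{equation*}
\|\Ac_\varepsilon(F)-\Ac_{\varepsilon'}(F)\|_{L^2}^2
=\la\Ac_\varepsilon,\Ac_\varepsilon\ra
-\la\Ac_\varepsilon,\Ac_{\varepsilon'}\ra
-\la\Ac_{\varepsilon'},\Ac_\varepsilon\ra
+\la\Ac_{\varepsilon'},\Ac_{\varepsilon'}\ra\;\longrightarrow\;0,
\end{equation*}
so that $(\Ac_\varepsilon(F))_{\varepsilon>0}$ is Cauchy in $L^2(\Dc'(\paS,\TR))$ and converges to the desired limit.

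The main obstacle will be a meticulous bookkeeping of prefactors in the pre-limit identity, so that the Feldman--H\'ajek $\det_\Fr$ from \cref{Markov-reformulation}, the regularized determinant $\det'\Delta$ via \cref{gluing-det}, the boundary corrections to the Polyakov anomaly, and the translation factors produced by changing admissible representatives of $[\omega]\in H^1_{\mbf,\kbf}(\Sigma\setminus\zbf)$ and $[df]\in H_\D^1(\Sigma)$, assemble exactly into the data of $\Sigma^\D$ appearing in \cref{def:correlation_function}. This is essentially the free-field bookkeeping of \cref{gluing-free} adapted to the interacting theory; the only genuinely new ingredients are the verification that $K_{\Sigma,g}$ and $K_{\bar\Sigma,\bar g}$ glue modulo $2\pi\Zb$ (so that the resulting imaginary exponential is single-valued, which is where the strengthened integrality $QR\in 4\Zb$ in the presence of corners is used), and the compatibility of the Hermitian conjugation on $\bar\Sigma$ (which conjugates $\mu,\mu_\pa$ and $F$) with the extended-surface structure of $\Sigma^\D$.
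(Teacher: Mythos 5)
Your strategy coincides with the paper's: glue $\Sigma$ to its Dirichlet-boundary mirror copy to express the $L^2$ inner product of regularized amplitudes as a correlation-type path integral on a boundary-free surface, then run a Cauchy argument in $\varepsilon,\varepsilon'$ using convergence of the resulting integral. The paper proves gluing for the pre-limit amplitudes $\Ac^\varepsilon$ directly from \cref{gluing} together with \cref{curv-gluing}, forms the Dirichlet double $\Sigma^{\D\#2}$, writes $\la\Ac^\varepsilon(F),\Ac^{\varepsilon'}(F)\ra_{L^2}$ as an integral over $\Dc'^\N_{\mbf\times(-\mbf)}(\Sigma^{\D\#2}\setminus\zbf\cup\tau(\zbf),\TR)$, and argues convergence ``similarly to the proof of \cref{thm_exist_corr}''.

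One concrete slip to fix: you list the doubled electric charges as $\alphabf\sqcup\alphabf$, $\etabf\sqcup\etabf$, but since the second factor in the $L^2$ inner product is complex conjugated, the mirror copy $\ol\Sigma$ carries charges $-\alphabf$, $-\etabf$ (together with $\ol\mu$, $\ol{\mu_\pa}$, $\ol F$, and an effective flip $\beta\mapsto-\beta$ inside the conjugated GMC), as the paper makes explicit with $V^{\alphabf\times(-\alphabf),\etabf\times(-\etabf)}_{g,\varepsilon}$; you did get $\mbf\sqcup(-\mbf)$ right. This also means your phrase ``\cref{thm_exist_corr} applies on $\Sigma^\D$'' overstates the situation: the glued object is not literally a BCILT correlation function, since half the surface carries conjugated potentials, so one re-runs the argument rather than cites the theorem. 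The saving observation, worth stating explicitly, is that the exponent $\alpha_j\beta$ controlling integrability near each insertion is invariant under the simultaneous flips $\alpha\mapsto-\alpha$, $\beta\mapsto-\beta$, so the GMC estimates of \cref{GMC} transfer unchanged and the Cauchy argument closes as you describe.
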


We postpone the proof to \cref{section:gluing-bcilt}.

One should think of $\Ac_{\Sigma,g,\alphabf,\etabf,\vbf,\xbf,\mbf,\zetabf}(F)$ as being antilinear (resp.\ linear) on the outgoing (resp.\ incoming) Dirichlet (semi)circles, so that it defines a bounded linear map $\Hc^{\otimes\bD^-}\otimes\Hc_+^{\otimes\bMD^-}\to\Hc^{\otimes\bD^+}\otimes\Hc_+^{\otimes\bMD^+}$, where $\bD^\pm$, $\bMD^\pm$ denote the number of outgoing or incoming circles or semicircles, respectively. See \cite[Section 7]{Segal} for a detailed discussion.

For an extended surface $\Sigma$ with $\paD\Sigma=\varnothing$, identifying functions $\Dc'(\paS,\TR)=\{*\}\to\Cb$ with points in $\Cb$, we define $\Ac_{\Sigma,g,\alphabf,\etabf,\vbf,\xbf,\mbf,*}(F)=\la FV_{\alphabf,\etabf,\mbf}(\vbf,\xbf)\ra_{\Sigma,g}$, in accordance with the convention in \cref{gluing-free} for unifying the cases $\paD\Sigma=\varnothing$ and $\paD\Sigma\neq\varnothing$. We view correlation functions as a special case of amplitudes in this sense. One should bear in mind, however, that the definitions are fundamentally different.

\subsection{Weyl anomaly}

\begin{thm}[\textbf{Weyl anomaly}]\label{Weyl-anomaly}
For $\rho\in C^\infty(\Sigma,\Rb)$ Neumann extendible with $\rho|_{\paD\Sigma}=0$ if $\paD\Sigma\neq\varnothing$, $F\in\Ec(\Dc_\mbf'^\M(\Sigma\setminus\zbf,\TR))$, we have
\begin{align*}
\Ac_{\Sigma,e^\rho g,\alphabf,\etabf,\vbf,\xbf,\mbf,\zetabf}(F)&=\Ac_{\Sigma,g,\alphabf,\etabf,\vbf,\xbf,\mbf,\zetabf}\big(F\big(\cdot-\tfrac{\i Q}{2}\rho\big)\big)\\
&\qquad\exp\left(\frac{c_\L}{96\pi}\Big(\int_\Sigma\big(|d\rho|_g^2+2K_g\rho\big)\,\d v_g\Big)-\sum_{j=1}^\s\Delta_{\alpha_j,m_j}\rho(z_j)-\frac{1}{2}\sum_{j=1}^\t\Delta_{\eta_j}\rho(x_j)\right),
\end{align*}
where $c_\L$, $\Delta_{\alpha_j,m_j}$, $\Delta_{\eta_j}$ were defined in \cref{main-theorem}.
\end{thm}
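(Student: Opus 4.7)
The plan is to reduce the Weyl anomaly to three separate contributions. First I apply the conformal covariance of the underlying measure (\cref{measure-conf1} or \cref{measure-conf2}, depending on whether $\paD\Sigma=\varnothing$) to obtain
\[d\phi_{e^\rho g}=\exp\Big(\tfrac{1}{96\pi}\textstyle\int_\Sigma(|d\rho|_g^2+2K_g\rho)\,\d v_g-\tfrac{R^2}{4}\sum_{j=1}^\s m_j^2\rho(z_j)\Big)\,d\phi_g.\]
This already produces the free-boson piece $\tfrac{1}{96\pi}\int_\Sigma(|d\rho|_g^2+2K_g\rho)\,\d v_g$ (the ``$1$'' in $c_\L=1-6Q^2$) and the magnetic part $\tfrac{m_j^2R^2}{4}$ of each dimension $\Delta_{\alpha_j,m_j}$. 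It remains to produce the contribution $-\tfrac{Q^2}{16\pi}\int_\Sigma(|d\rho|_g^2+2K_g\rho)\,\d v_g$ (which gives the $-6Q^2$ in $c_\L$) and the electric parts $\tfrac{\alpha_j}{2}(\tfrac{\alpha_j}{2}-Q)$ and $\tfrac{\eta_j}{2}(\tfrac{\eta_j}{2}-Q)$ of the conformal dimensions.

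Next I perform the imaginary shift $\phi\mapsto\phi-\tfrac{\i Q}{2}\rho$ via the imaginary Cameron--Martin computation of \cref{prop_im_CM}, which is legitimate because $\rho|_{\paD\Sigma}=0$, so the shift preserves the Dirichlet boundary value $\bt^\kbf$ and stays within the function class $\Ec(\Dc_\mbf'^\M(\Sigma\setminus\zbf,\TR))$. This directly produces $F(\phi-\tfrac{\i Q}{2}\rho)$ on the right-hand side; it remains to understand how it interacts with the other factors. For the curvature functional $K_{\Sigma,g}(\phi)^{-1}$, combining the identities $K_{e^\rho g}\,\d v_{e^\rho g}=(K_g+\Delta_g\rho)\,\d v_g$, $k_{e^\rho g}\,\d\ell_{e^\rho g}=(k_g-\tfrac{1}{2}\pa_\nu\rho)\,\d\ell_g$ with \cref{curv-term-calculations}(4) and integration by parts (using $\rho|_{\paD\Sigma}=0$ and $\pa_\nu\rho|_{\paN\Sigma}=0$) cancels the terms linear in $X_g$, $\varphibf$, $f$ and $\omega$, leaving a quadratic Polyakov term $-\tfrac{Q^2}{16\pi}\int_\Sigma(|d\rho|_g^2+2K_g\rho)\,\d v_g$, which supplies the missing $-6Q^2$ correction to the central charge. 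For the GMC factor $e^{-M_{\Sigma,g}(\phi)}$, the transformation rule proved in \cref{GMC} gives $M_{e^\rho g}(\phi,\d v_{e^\rho g})=M_g(\phi-\tfrac{\i Q}{2}\rho,\d v_g)$ and similarly for $L$, so this factor becomes $e^{-M_{\Sigma,g}(\phi-\tfrac{\i Q}{2}\rho)}$ on the right-hand side.

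For the vertex operators, combining the Weyl behavior of the regularization scale (via the change of $G_g(x,x)$ in the renormalized sense) with the electric contribution from the shift in $e^{\i\alpha_j\phi(v_j)}$ and $e^{\i\frac{\eta_j}{2}\phi(x_j)}$ yields $\exp(-\Delta_{\alpha_j,m_j}\rho(z_j))$ at each bulk insertion and $\exp(-\tfrac{1}{2}\Delta_{\eta_j}\rho(x_j))$ at each boundary insertion; the overall factor of $\tfrac{1}{2}$ on the boundary reflects the half-exponent $\varepsilon^{-\eta_j^2/4}$ in the definition of $V_{\eta_j,g,\varepsilon}$. The Weyl-uniform convergence of the $\varepsilon\to0$ limit is controlled by the GMC estimates of \cref{GMC}, exactly as in the proofs of \cref{thm_exist_corr,thm_exist_amplitude}.

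The main obstacle will be the careful bookkeeping of the boundary integrals generated by integration by parts at $\paD\Sigma$, particularly near the right-angle corners where $\paN\Sigma$ meets $\paD\Sigma$. The hypothesis $\rho|_{\paD\Sigma}=0$ together with Neumann extendibility of $\rho$ is precisely what kills the offending corner contributions and allows the Polyakov anomaly from the curvature terms to match the central-charge shift expected from the measure side. A secondary technical point is ensuring that the imaginary shift can be performed simultaneously against the vertex operators, the linear curvature functional, and the complex GMC; this is handled by the multi-field extension of \cref{prop_im_CM} noted in its concluding remark.
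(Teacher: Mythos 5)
Your proposal tracks the paper's proof quite closely: conformal covariance of the measure via \cref{measure-conf1,measure-conf2}, the Weyl transformation of the curvature functional via \cref{curv-term-calculations}(4), the imaginary Cameron--Martin shift to produce $F(\cdot-\tfrac{\i Q}{2}\rho)$, the GMC transformation rule, and the vertex-operator scaling, all combine to give the anomaly factor. The boundary-term bookkeeping you flag as the ``main obstacle'' is actually not where the paper spends effort; the hypotheses $\rho|_{\paD\Sigma}=0$ and Neumann extendibility dispose of those terms quickly, and the paper does not single out the corners.

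The genuine gap is that your argument, as written, handles only the case where the Cameron--Martin shift can be realized inside $\Dc_0'(\Sigma,\Rb)$. When $\paD\Sigma=\varnothing$ the GFF $X_g$ has no zero mode, and the shift $\Eb[YX_g]=-\tfrac{Q}{2}\rho$ can only realize a translation by the mean-zero part $\rho-m_g(\rho)$; the constant part $m_g(\rho)$ cannot be absorbed by Girsanov. The paper therefore splits the proof: first it establishes the formula under the auxiliary hypothesis $\int_\Sigma\rho\,\d v_g=0$ (or $\rho|_{\paD\Sigma}=0$ when $\paD\Sigma\neq\varnothing$), and then it treats the residual constant-$\rho$ case on a closed Neumann boundary surface via a completely different argument: expanding the correlation function in the GMC moments, picking out the terms surviving the $c$-integral, and invoking the neutrality condition
\[
\frac{n}{R}+\sum_{j=1}^\s\alpha_j+\sum_{j=1}^\t\frac{\eta_j}{2}-Q\chi(\Sigma)+p\beta+q\frac{\beta}{2}=0
\]
to trade $p\beta+q\beta/2$ for electric charges and the Euler characteristic, which is precisely what turns the naive scaling exponents $\tfrac{\alpha_j^2}{4}$, $\tfrac{\eta_j^2}{8}$ into the correct conformal dimensions $\Delta_{\alpha_j,m_j}$, $\tfrac{1}{2}\Delta_{\eta_j}$ and produces the $\chi(\Sigma)Q^2$ piece of the central charge. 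Without this case you have proved the theorem only for mean-zero Weyl factors (or when $\paD\Sigma\neq\varnothing$), which is strictly weaker than the stated result. You should add the constant-$\rho$ argument, or at least observe that composing the mean-zero case with the constant case recovers the general statement.

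A secondary, smaller imprecision: the quadratic term $-\tfrac{Q^2}{16\pi}\int_\Sigma|d\rho|_g^2\,\d v_g$ arises from the Cameron--Martin exponent $-\tfrac{1}{2}\Eb[Y^2]$, while the $-\tfrac{Q^2}{8\pi}\int_\Sigma K_g\rho\,\d v_g$ piece and the $+Q\tfrac{\alpha_j}{2}\rho(z_j)$, $+Q\tfrac{\eta_j}{4}\rho(x_j)$ pieces come from the \emph{shift} $X_g\mapsto X_g-\tfrac{\i Q}{2}\rho$ hitting the curvature functional and the vertex operators, respectively, not from integration by parts alone. Your phrase ``leaving a quadratic Polyakov term'' collapses these two distinct mechanisms; it would be clearer to separate them.
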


Here $F(\cdot-\tfrac{\i Q}{2}\rho)$ is interpreted similarly to \cref{imaginary-girsanov}. If $\paD\Sigma=\varnothing$ and $F$ is as in \cref{F1}, then $F(\phi_g-\tfrac{\i Q}{2}\rho)=e^{\i n(c-\frac{\i Q}{2}m_g(\rho))/R}\,P(\la X_g-\frac{\i Q}{2}(\rho-m_g(\rho)),h_1\ra,\ldots\la X_g-\frac{\i Q}{2}(\rho-m_g(\rho)),h_m\ra)\,G([\omega])$. If $\paD\Sigma\neq\varnothing$ and $F$ is as in \cref{F2}, then $F(\phi_g-\tfrac{\i Q}{2}\rho)=e^{\i\frac{\nbf}{R}\cdot \ol{f}|_{\paD\Sigma}}\,P(\la X_g+P_\Sigma\varphibf-\frac{\i Q}{2}\rho,h_1\ra,\ldots,\la X_g+P_\Sigma\varphibf-\frac{\i Q}{2}\rho,h_m\ra)\,G([\psi_0])$.

\begin{proof}
Here is how each term changes under a change of the conformal metric:
\begin{itemize}
\item The change of the measure on $\Dc'^{\M,\bt^\kbf}(\Sigma\setminus\zbf,\TR)$ was calculated in \cref{measure-conf1,measure-conf2}.
\item By (4) of \cref{curv-term-calculations},
\[K_{\Sigma,e^\rho g}(\phi_g)=K_{\Sigma,g}(\phi_g)\,\exp\left(\frac{\i Q}{4\pi}\int_\Sigma\la d\rho,d\phi_g\ra_g\,\d v_g\right).\]
\item We have $M_{e^\rho g}^\beta(\phi_g,\d v_{e^\rho g})=e^{-\frac{\beta}{2}Q\rho}M_g^\beta(\phi_g,\d v_g)$, $L_{e^\rho g}^\beta(\phi_g,\d \ell_{e^\rho g})=e^{-\frac{\beta}{2}Q\rho}L_g^\beta(\phi_g,\d\ell_g)$, so formally
\[M_{\Sigma,e^\rho g}(\phi_g)=M_{\Sigma,g}\big(\phi_g+\tfrac{\i Q}{2}\rho\big).\]
\item As $\varepsilon\to0$,
\begin{align*}
V_{\alpha_j,e^\rho g,\varepsilon}(z_j)&=V_{\alpha_j,g,\varepsilon}(z_j)\,e^{-\frac{\alpha_j^2}{4}\rho(z)+o(1)},\\
V_{\eta_j,e^\rho g,\varepsilon}(x_j)&=V_{\eta_j,g,\varepsilon}(x_j)\,e^{-\frac{\eta_j^2}{8}\rho(z)+o(1)}.
\end{align*}
\end{itemize}

First suppose either $\paD\Sigma=\varnothing$ and $\int_\Sigma\rho\,\d v_g=0$ or $\paD\Sigma\neq\varnothing$ and $\rho|_{\paD\Sigma}=0$. Omitting the integral sign $\int_{\Dc_\mbf'^{\M,\bt^\kbf}(\Sigma\setminus\zbf,\TR)}$, the integrand is
\begin{align*}
&F(\phi_{e^\rho g})\,V_{e^\rho g,\varepsilon}(\phi_{e^\rho g})\,K_{\Sigma,e^\rho g}(\phi_{e^\rho g})^{-1}\,e^{-M_{\Sigma,e^\rho g}(\phi_{e^\rho g})}\,\d\phi_{e^\rho g}=F(\phi_g)\,V_{g,\varepsilon}(\phi_g)\,K_{\Sigma,e^\rho g}(\phi_g)^{-1}\,e^{-M_{\Sigma,e^\rho g}(\phi_g)}\,\d\phi_g\\
&\hspace{10em}\exp\left(\frac{1}{96\pi}\int_\Sigma\big(|d\rho|_g^2+2K_g\rho\big)\,\d v_g-\sum_{j=1}^\s\bigg(\frac{\alpha_j^2}{4}+\frac{m_j^2R^2}{4}\bigg)\rho(z_j)-\sum_{j=1}^\t\frac{\eta_j^2}{8}\rho(x_j)+o(1)\right).
\end{align*}
Applying the imaginary Girsanov transform to the GFF $X_g$ with $Y=-\frac{Q}{4\pi}\int_\Sigma\la d\rho,dX_g\ra_g\,\d v_g$, $\Eb[YX_g]=-\frac{Q}{2}\rho$, $\Eb[Y^2]=\frac{Q^2}{8\pi}\int_\Sigma|d\rho|_g^2\,\d v_g$, we get
\begin{align*}
&F(\phi_g)\,V_{g,\varepsilon}(\phi_g)\,K_{\Sigma,e^\rho g}(\phi_g)^{-1}\,e^{-M_{\Sigma,e^\rho g}(\phi_g)}\,\d\phi_g=F\big(\phi_g-\tfrac{\i Q}{2}\rho\big)\,V_{g,\varepsilon}(\phi_g)\,K_{\Sigma,g}(\phi_g)^{-1}\,e^{-M_{\Sigma,g}(\phi_g)}\,\d\phi_g\\
&\hspace{15em}\exp\left(-\frac{Q^2}{16\pi}\int_\Sigma\big(|d\rho|_g^2+2K_g\rho\big)\,\d v_g+Q\sum_{j=1}^\s\frac{\alpha_j}{2}\rho(z_j)+Q\sum_{j=1}^\t\frac{\eta_j}{4}\rho(x_j)\right).
\end{align*}
Summing up, we get the desired formula.

It remains to treat the case $\paD\Sigma=\varnothing$ and $\rho$ constant. Suppose $F$ is as in \cref{F1}. As discussed at the end of \cref{bcilt-corr}, $\la FV_{\alphabf,\etabf,\mbf}(\vbf,\xbf)\ra_{\Sigma,g}=\sum_{p,q}a_{p,q}$ where the sum is over $(p,q)\in\Nb^2$ satisfying the neutrality condition \cref{neutrality-condition}. Under the change from $g$ to $e^\rho g$ for $\rho\in\Rb$ constant, the change of $a_{p,q}$ is
\[\exp\left(-\left(p\beta+q\frac{\beta}{2}\right)\frac{Q}{2}\rho-\sum_{j=1}^\s\left(\frac{\alpha_j^2}{4}+\frac{m_j^2R^2}{4}\right)\rho-\sum_{j=1}^\t\frac{\eta_j^2}{8}\rho\right).\]
By the neutrality condition, this is equal to
\[\exp\left(\frac{n}{R}\frac{Q}{2}\rho-\chi(\Sigma)\frac{Q^2}{2}\rho-\sum_{j=1}^\s\left(\frac{\alpha_j}{2}\left(\frac{\alpha_j}{2}-Q\right)+\frac{m_j^2R^2}{4}\right)\rho-\sum_{j=1}^\t\frac{\eta_j}{4}\left(\frac{\eta_j}{2}-Q\right)\rho\right),\]
as desired.
\end{proof}

\subsection{Spin}

\begin{thm}[\textbf{Spin}]\label{spin}
For $\thetabf\in\Rb^\s$, we have
\[\Ac_{\Sigma,g,\alphabf,\etabf,e^{\i\thetabf}\vbf,\xbf,\mbf,\zetabf}(F)=\Ac_{\Sigma,g,\alphabf,\etabf,\vbf,\xbf,\mbf,\zetabf}(F)\,e^{\i R\sum_{j=1}^\s(\alpha_j-Q)m_j\theta_j}.\]
\end{thm}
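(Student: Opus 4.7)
The plan is to identify the pieces of the integrand in \cref{def:correlation_function,def:amplitude} that actually depend on the tangent vectors $\vbf$, compute how each transforms under $\vbf\mapsto e^{\i\thetabf}\vbf$, and observe that the resulting overall phase is independent of the integration variables and hence factors out. Inspection of the construction shows that $\vbf$ enters in exactly two places: (i) the electric operators $V_{\alpha_j,g,\varepsilon}(v_j)$ through the single factor $e^{\i\alpha_j I_{x_0}(2\pi R\omega)(v_j)}$ (the regularized term $e^{\i\alpha_j X_{g,\varepsilon}(z_j)}$ is a circle average, independent of $v_j$), and (ii) the curvature functional $K_{\Sigma,g}(\phi_g)$ through the choice of separating family $\deltabf$, which by \cref{sep-family-extended} must be compatible with $\vbf$. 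The other factors ($F$, $M_{\Sigma,g}$, the Gaussian measure on $X_g$, and the remaining pieces of $K_{\Sigma,g}$) are manifestly $\vbf$-independent.

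First I would treat the electric part. Fix an admissible representative $\omega$ of a cohomology class in $H^1_\mbf(\Sigma\setminus\zbf)$. Near each $z_j$, admissibility gives $\omega=m_j\frac{1}{2\pi}d\theta_j+dh_j$ in polar coordinates centered at $z_j$, where $h_j$ is harmonic on a neighborhood of $z_j$. For a path $\gamma_j$ from $x_0$ to $z_j$ with terminal tangent $v_j$ and a second path $\gamma_j'$ with terminal tangent $e^{\i\theta_j}v_j$ agreeing with $\gamma_j$ outside a small disk around $z_j$, the loop $\gamma_j'\cdot\gamma_j^{-1}$ is localized at $z_j$ and its angular winding around $z_j$ equals $\theta_j/(2\pi)$. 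Using the local form of $\omega$ and cancellation of the $dh_j$ contribution, this yields
\[
I_{x_0}(2\pi R\omega)(e^{\i\theta_j}v_j)-I_{x_0}(2\pi R\omega)(v_j)=Rm_j\theta_j.
\]
Summing over $j$, the product $\prod_j V_{\alpha_j,g,\varepsilon}(v_j)$ picks up the multiplicative factor $e^{\i R\sum_j\alpha_jm_j\theta_j}$, which depends only on $(\alphabf,\mbf,\thetabf,R)$ and not on $\omega$ or any of the other integration variables.

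Next I would handle the curvature term using \cref{curv-term-spin}: given a separating family $\deltabf$ with respect to $\vbf$, one can choose a separating family $\deltabf'$ with respect to $e^{\i\thetabf}\vbf$ such that
\[
K_{\Sigma,g,x_0}^{\deltabf'}(\omega)-K_{\Sigma,g,x_0}^\deltabf(\omega)=\sum_{j=1}^{\s}\theta_jm_j
\]
for every admissible $\omega\in H^1_\mbf(\Sigma\setminus\zbf)$ (note that the $m_j$ are fixed by the subscript $_\mbf$). Consequently $e^{-\i QRK_{\Sigma,g,x_0}^{\deltabf'}(\omega)}=e^{-\i QR\sum_j m_j\theta_j}\,e^{-\i QRK_{\Sigma,g,x_0}^\deltabf(\omega)}$, giving a $\theta$-dependent prefactor $e^{-\i QR\sum_jm_j\theta_j}$ independent of $\omega$.

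Combining the two contributions, the full integrand for $\Ac_{\Sigma,g,\alphabf,\etabf,e^{\i\thetabf}\vbf,\xbf,\mbf,\zetabf}(F,\bt^\kbf)$ equals the integrand for $\Ac_{\Sigma,g,\alphabf,\etabf,\vbf,\xbf,\mbf,\zetabf}(F,\bt^\kbf)$ multiplied by the $\omega$- and $\phi_g$-independent constant $e^{\i R\sum_j(\alpha_j-Q)m_j\theta_j}$. Pulling this constant through the sum over $[\omega]$, the Gaussian expectation, the zero-mode integral, and (if $\paD\Sigma\neq\varnothing$) the $L^2$ limit in $\bt^\kbf$ yields the claimed identity. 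The only mildly delicate point is verifying that the prescription $\omega=m_j\frac{1}{2\pi}d\theta_j+dh_j$ near $z_j$ truly gives the angular contribution $Rm_j\theta_j$ for arbitrary approach paths; this is a local computation that uses only the admissibility of $\omega$ and matches the angular convention used in \cref{curv-term-spin}, ensuring consistency between the electric and curvature contributions. Periodicity in $\theta_j$ under $\theta_j\mapsto\theta_j+2\pi$ is automatic on the left-hand side and follows on the right-hand side from $R(\alpha_j-Q)m_j\in\Zb$ (using $\alpha_jR,QR\in\Zb$ and $m_j\in\Zb$).
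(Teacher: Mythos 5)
Your proposal is correct and follows essentially the same route as the paper's proof: identify the two $\vbf$-dependent pieces ($e^{\i\alpha_j I_{x_0}(2\pi R\omega)(v_j)}$ in the electric operator, and $K_{\Sigma,g,x_0}^\deltabf(\omega)$ through the $\vbf$-compatible separating family), invoke \cref{curv-term-spin} for the latter, compute $I_{x_0}(2\pi R\omega)(e^{\i\theta_j}v_j)-I_{x_0}(2\pi R\omega)(v_j)=Rm_j\theta_j$ from the local form $\omega=m_j\tfrac{1}{2\pi}d\theta_j+dh_j$ for the former, and combine to get $e^{\i R\sum_j(\alpha_j-Q)m_j\theta_j}$. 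The paper dispatches the electric contribution with ``as is easy to check''; you supply that local computation and also record the needed periodicity check $R(\alpha_j-Q)m_j\in\Zb$, but the underlying argument is the same.
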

\begin{proof}
The terms that depend on $\vbf$ are $K_{\Sigma,g,x_0}^\deltabf(\omega)$ in $K_{\Sigma,g}(\phi_g)$ and $e^{\i\alpha_jI_{x_0}(2\pi R\omega)(v_j)}$ in $V_{\alpha_j,g,\varepsilon}(v_j)$. The change of the former was calculated in \cref{curv-term-spin}. The latter changes by $e^{\i\alpha_jRm_j\theta_j}$, as is easy to check.
\end{proof}

\subsection{Gluing}\label{section:gluing-bcilt}

Using the conventions and notations in \cref{gluing-free}, the gluing theorem for BCILT can be simply stated as
\[\Ac_{\Sigma^\#,g,\alphabf,\etabf,\vbf,\xbf,\mbf,\zetabf_0}(F^\#,\bt_0^{\kbf_0})=C\int_{\Dc'(\zeta^*\Cc,\TR)}\Ac_{\Sigma,g|_\Sigma,\alphabf,\etabf,\vbf,\xbf,\mbf,\zetabf}(F,\wt{\varphi}^k\times\wt{\varphi}^k\times\bt_0^{\kbf_0})\,\d\wt{\varphi}^k,\]
with $C$ as in \cref{gluing}, where $F^\#(\phi)=F(\phi|_\Sigma)$. For the reader's convenience, we spell this out.

\begin{thm}[\textbf{Gluing}]\label{gluing-bcilt}
We resume the notations in \cref{gluing-top}.
\begin{itemize}
\item In the case of gluing two surfaces, we have
\begin{align*}
&\Ac_{\Sigma\#\Sigma',g,\alphabf\times\alphabf',\etabf\times\etabf',\vbf\times\vbf',\xbf\times\xbf',\mbf\times\mbf',\zetabf^\c\times\zetabf'^\c}(F\# F',(\bt^\kbf)^\c\times(\bt'^{\kbf'})^\c)\\
&\qquad=C\int_{\Dc'(\zeta^*\Cc,\TR)}\Ac_{\Sigma,g|_\Sigma,\alphabf,\etabf,\vbf,\xbf,\mbf,\zetabf}(F,\wt{\varphi}^k\times(\bt^\kbf)^\c)\,\Ac_{\Sigma',g|_{\Sigma'},\alphabf',\etabf',\vbf',\xbf',\mbf',\zetabf'}(F',\wt{\varphi}^k\times(\bt'^{\kbf'})^\c)\,\d\wt{\varphi}^k,
\end{align*}
where $(F\#F')(\phi)=F(\phi|_\Sigma)F'(\phi|_{\Sigma'})$, or more precisely, $(F\#F')([(Y,\psi)])=F([(Y|_\Sigma-P_\Sigma(\zeta^*Y|_\Cc\times\mathbf{0}),\psi|_\Sigma+\ol{P_\Sigma(\zeta^*Y|_\Cc\times\mathbf{0})})])\,F'([(Y|_{\Sigma'}-P_{\Sigma'}(\zeta^*Y|_\Cc\times\mathbf{0}),\psi|_{\Sigma'}+\ol{P_{\Sigma'}(\zeta^*Y|_\Cc\times\mathbf{0})})])$.
\item In the case of self-gluing a surface, we have
\[\Ac_{\#\Sigma,g,\alphabf,\etabf,\vbf,\xbf,\mbf,\zetabf^{\c\c}}(F^\#,(\bt^\kbf)^{\c\c})=C\int_{\Dc'(\zeta^*\Cc,\TR)}\Ac_{\Sigma,g|_\Sigma,\alphabf,\etabf,\vbf,\xbf,\mbf,\zetabf}(F,\wt{\varphi}^k\times\wt{\varphi}^k\times(\bt^\kbf)^{\c\c})\,\d\wt{\varphi}^k,\]
where $(\#F)(\phi)=F(\phi|_\Sigma)$, or more precisely, $(\#F)([(Y,\psi)])=F([(Y|_\Sigma-P_\Sigma(\zeta^*Y|_\Cc\times\zeta^*Y|_\Cc\times\mathbf{0}),\psi|_\Sigma+\ol{P_\Sigma(\zeta^*Y|_\Cc\times\zeta^*Y|_\Cc\times\mathbf{0})})])$.
\end{itemize}
Here the constant $C$ is as in \cref{gluing}.
\end{thm}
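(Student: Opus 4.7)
My plan is to reduce the gluing formula to three ingredients already established in the paper: the measure-theoretic gluing of \cref{gluing}, the factorization of the curvature term under gluing (\cref{curv-gluing}), and the exponential-moment estimates for imaginary GMC developed in \cref{GMC}. I would handle the four scenarios of \cref{gluing-top} in parallel using the conventions of \cref{gluing-free}, since the combinatorics differ only in the topology of the cohomology splittings, while the analytical content is identical.

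The first step is to pull back the defining integral for $\Ac_{\Sigma^\#,g,\cdots}(F^\#,\bt_0^{\kbf_0})$ through the measure-space isomorphism
\[\bigsqcup_{\wt{\varphi}^k\in\Dc'(\zeta^*\Cc,\TR)}\Dc_\mbf'^{\M,\wt{\varphi}^k\times\wt{\varphi}^k\times\bt_0^{\kbf_0}}(\Sigma\setminus\zbf,\TR)\;\xrightarrow{\sim}\;\Dc_\mbf'^{\M,\bt_0^{\kbf_0}}(\Sigma^\#\setminus\zbf,\TR)\]
of \cref{gluing}, which already produces the exact constant $C$ in the statement. Under this pullback the observable factors as $F^\#\mapsto F\cdot F'$ by definition; the potential $M_{\Sigma^\#,g}$ splits additively between $\Sigma$ and $\Sigma'$ by additivity of the bulk and boundary integrals together with the disjointness $\Cc\cap(\zbf\cup\xbf)=\varnothing$; the electric factor $V_{g,\varepsilon}^\#$ factors analogously for $\varepsilon$ small enough; and \cref{curv-gluing} yields the factorization of $K_{\Sigma^\#,g}$, with the single nontrivial anomaly in case 1 of \cref{glue4} lying in $(\int_{d_1^\#}\omega)\pi\Zb$ and being absorbed by $QR\in4\Zb$ together with $\int_{d_1^\#}\omega\in\Zb$. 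The topological sums are then reorganized via the bijections of \cref{gluing-top}: the extra $\Zb$ labeled by $\int_\Cc\omega$ combines with the $\TR$-valued zero mode of $\wt{\varphi}^k$ to recover the $\Rb$-valued zero mode appearing in the Markov decomposition of \cref{Markov-reformulation}, and the kernel of the $H_\D^1$-surjection is absorbed in exactly the same way.

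The analytic heart of the argument is to justify exchanging the $\varepsilon\to 0$ limit with integration over $\wt{\varphi}^k\in\Dc'(\zeta^*\Cc,\TR)$, and to verify absolute convergence of the resulting $\wt{\varphi}^k$-integral. After applying the imaginary Cameron--Martin theorem \cref{prop_im_CM} to absorb the regularized vertex operators into a deterministic singular shift $u_\varepsilon$ of the GFF, the Markov decomposition $X_g = X_{g|_\Sigma} + X_{g|_{\Sigma'}} + P_{\Sigma^\#,\Cc}\varphi$ makes the two sides conditionally independent given $\wt{\varphi}^k$, so the conditional expectation factors as a product of amplitudes on $\Sigma$ and $\Sigma'$ evaluated at the glued boundary value. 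The exponential-moment bound \cref{coro_expo_momen}, applied after the Girsanov shift, then controls the conditional expectation uniformly in $\varepsilon$ by functionals of $\wt{\varphi}^k$ that are pointwise bounded in terms of the smooth Poisson extension $P_{\Sigma^\#,\Cc}\wt{\varphi}$ restricted to $\Sigma\setminus\Cc$. The Gaussian damping $e^{-\frac{1}{4\pi}\langle\wt{\varphi},(\Dbf_{\Sigma^\#,\Cc}-2\Dbf)\wt{\varphi}\rangle}$ inherited from \cref{Markov-reformulation} finally makes the $\wt{\varphi}^k$-integral absolutely convergent, and a standard dominated-convergence argument concludes the exchange of $\varepsilon\to 0$ with $\int\d\wt{\varphi}^k$.

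The expected main obstacle is precisely this uniform domination. The shift $u_\varepsilon$ develops singularities at $\zbf,\xbf$ whose strength is controlled by $\alpha_j-Q$ and $\eta_j-Q$, and the constants appearing in \cref{coro_expo_momen} degenerate in the Seiberg-limit $\alpha_j,\eta_j\searrow Q$; moreover the Poisson lift $P_{\Sigma^\#,\Cc}\wt{\varphi}$ of a typical $\wt{\varphi}\in H^{-1/2}(\zeta^*\Cc,\Rb)$ is only in $L^2(\Sigma,\Rb)$, not $L^\infty$, so the singular behavior near $\Cc$ must be absorbed using the $H^{-1/2}\to H^0$ mapping property of $P_{\Sigma^\#,\Cc}$ together with the quadratic damping above. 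Producing a dominating function that is simultaneously uniform in $\varepsilon$, integrable against the Gaussian weight, and compatible with the $L^2$-statement of \cref{thm_exist_amplitude} is the delicate step, carried out in direct parallel to the bulk treatment of \cite[Section 7]{CILT}.
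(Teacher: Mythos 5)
Your treatment of the fixed-$\varepsilon$ factorization is correct and matches the paper: after applying the measure-space isomorphism of \cref{gluing}, the observable, potential, and regularized vertex factors split pointwise, and the curvature term factors by \cref{curv-gluing}, with the case-1 semicircle anomaly killed by $QR\in4\Zb$. That part of your proposal is essentially the paper's first step verbatim.

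Where you diverge is the analytic passage to the $\varepsilon\to0$ limit. You propose a direct dominated-convergence argument on the $\wt{\varphi}^k$-integral: shift by $u_\varepsilon$ via \cref{prop_im_CM}, condition on $\wt{\varphi}^k$ via the Markov decomposition, and then produce a dominating function uniform in $\varepsilon$ that is integrable against the Gaussian weight. You yourself flag this as ``the delicate step,'' and indeed the obstruction is not so much $P_{\Sigma^\#,\Cc}\wt{\varphi}$ appearing inside a unimodular exponential (that part contributes nothing to the GMC moment bounds of \cref{coro_expo_momen}) as it is the observable $F$, which is a polynomial in $\la X_g+P_\Sigma\varphibf,h_j\ra$ and hence has unbounded growth in $\wt{\varphi}^k$ that must be paired carefully against the Gaussian damping; and the conditional amplitudes themselves are a priori only $L^2$ in $\wt{\varphi}^k$, so ``pointwise bounded functionals'' is not available. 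The paper sidesteps all of this by proving \cref{thm_exist_amplitude} and \cref{gluing-bcilt} simultaneously through the \emph{Dirichlet double} $\Sigma^{\D\#2}$: the $L^2(\Dc'(\paS,\TR))$ inner product $\la\Ac^\varepsilon(F),\Ac^{\varepsilon'}(F)\ra$ is rewritten, via \cref{gluing} applied to the gluing $\Sigma\cup\ol{\Sigma}\to\Sigma^{\D\#2}$, as a modified correlation function on the closed-Dirichlet-boundary surface $\Sigma^{\D\#2}$, which is then controlled by the machinery already established in \cref{thm_exist_corr}. Once $\Ac^\varepsilon$ is known to be Cauchy in $L^2$, the passage to the limit in the gluing identity is immediate by Cauchy--Schwarz, with no dominating function needed. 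Your route, if it can be pushed through, would give a pointwise rather than $L^2$ statement, but the paper's doubling trick is substantially shorter and avoids exactly the uniform-domination difficulty you identify as the sticking point.
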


Here we assume that all functionals are in the function space $\Ec$, so that the amplitudes exist. It is straightforward to check that the function space $\Ec$ is invariant under gluing.

Viewing amplitudes as operators, the case of gluing two surfaces amounts to composing their amplitudes, and the case of self-gluing a surface amounts to taking the trace of its amplitude.

\begin{proof}[Proof of \cref{thm_exist_amplitude,gluing-bcilt}]
Let us denote by $\Ac_{\Sigma,g,\alphabf,\etabf,\vbf,\xbf,\mbf,\zetabf}^\varepsilon(F)$ the regularized amplitude before taking the limit $\varepsilon\to0$, whose existence can be proved using the \textcolor{black}{exponential moment of the Dirichlet GMC}. For a fixed $\varepsilon$, gluing for $\Ac^\varepsilon$ follows from \cref{gluing}. Indeed, it suffices to check gluing for the functionals $K_{\Sigma,g}$, $e^{M_{\Sigma,g}}$, $V_{g,\varepsilon}$. Gluing for $K_{\Sigma,g}$ follows from \cref{curv-gluing}. Gluing for the latter two is obvious.

We shall prove \cref{thm_exist_amplitude} by showing that $\Ac_{\Sigma,g,\alphabf,\etabf,\vbf,\xbf,\mbf,\zetabf}^\varepsilon(F)\in L^2(\Dc'(\paS,\TR))$ for $\varepsilon>0$ and it converges in $L^2$ as $\varepsilon\to0$. Then \cref{gluing-bcilt} follows from gluing for $\Ac^\varepsilon$ by taking the limit $\varepsilon\to0$.

\blue{The idea is to use gluing to write the $L^2$ inner product of amplitudes on $\Sigma$ as a modified correlation function on the \textit{Dirichlet double} $\Sigma^{\D\#2}$, defined as the extended surface obtained by gluing $\Sigma$ and its oppositely oriented copy $\ol{\Sigma}$ along their Dirichlet boundaries via the identity map $\paD\Sigma\xleftrightarrow{\sim}\paD\ol{\Sigma}$ on the Dirichlet boundary (compare \cref{doubling}). Clearly $\pa_\D\Sigma^{\D\#2}=\varnothing$. Again, we denote by $\tau$ the \textit{reflection} on $\Sigma^{\D\#2}$ with respect to $\paD\Sigma=\paD\ol{\Sigma}$, so that $\ol{\Sigma}=\tau(\Sigma)$. By \cref{Weyl-anomaly}, we may assume that the metric $g$ on $\Sigma$ is \textit{Dirichlet extendible} in the sense that it doubles to a metric on $\Sigma^{\D\#2}$, denoted in the same way. For $F:\Dc'^\M(\Sigma\setminus\zbf,\TR)\to\Cb$, we have
\begin{equation*}
\ol{\Ac^{\varepsilon}_{\Sigma,g,\alphabf,\etabf,\vbf,\xbf,\mbf,\zetabf}(F,\bt^\kbf)}=\int_{\Dc'^{\M,\bt^\kbf}_{-\mbf}(\ol{\Sigma}\setminus\tau(\zbf)),\TR)}
\ol{F(\phi_g)}\,V^{-\alphabf,-\etabf}_{g,\varepsilon}(\phi_g)\,
\ol{K_{\ol{\Sigma},g}^{-1}(\phi_g)}\,e^{-\ol{M_{\ol{\Sigma},g}(\phi_g)}}\,\d\phi_g,
\end{equation*}
where $\phi_g$ is now the Liouville field on $\ol{\Sigma}$ with respect to the metric $g|_{\ol{\Sigma}}=\tau_*(g|_\Sigma)$, and we have made explicit the dependence of $V_{g,\varepsilon}$ on the electric charges.
For $\varepsilon,\varepsilon'>0$, by \cref{gluing},
\begin{align*}
&\la\Ac^\varepsilon_{\Sigma,g,\alphabf,\etabf,\vbf,\xbf,\mbf,\zetabf}(F),\Ac^{\varepsilon'}_{\Sigma,g,\alphabf,\etabf,\vbf,\xbf,\mbf,\zetabf}(F)\ra_{L^2(\Dc'(\paS,\TR))}\\
&\hspace{1em}=C\int_{\Dc_{\mbf\times(-\mbf)}'^\N(\Sigma^{\D\#2}\setminus\zbf\cup\tau(\zbf),\TR)}F(\phi_g|_\Sigma)\ol{F(\phi_g|_{\ol{\Sigma}}\circ\tau)}\,V^{\alphabf\times(-\alphabf),\etabf\times(-\etabf)}_{g,\varepsilon}(\phi_g)\\
&\hspace{15em}K_{\Sigma,g}^{-1}(\phi_g|_\Sigma)\ol{K_{\Sigma,g}^{-1}(\phi_g|_\Sigma)}\,e^{-M_{\Sigma,g}(\phi_g|_\Sigma)-\ol{M_{\ol{\Sigma},g}(\phi_g|_{\ol{\Sigma}})}}\,\d\phi_g,
\end{align*}
where $C$ is an absolute constant. While this is not a correlation function per se, one can show that this integral exists and converges as $\varepsilon,\varepsilon'\to0$, similarly to the proof of \cref{thm_exist_corr}. Note that taking $\varepsilon'=\varepsilon$ shows that $\Ac_{\Sigma,g,\alphabf,\etabf,\vbf,\xbf,\mbf,\zetabf}^\varepsilon(F)$ is $L^2$. Thus $\|\Ac^\varepsilon_{\Sigma,g,\alphabf,\etabf,\vbf,\xbf,\mbf,\zetabf}(F)-\Ac^{\varepsilon'}_{\Sigma,g,\alphabf,\etabf,\vbf,\xbf,\mbf,\zetabf}(F)\|^2_{L^2(\Dc'(\paS,\TR))}\to0$ as $\varepsilon,\varepsilon'\to0$.}
\end{proof}

\section{Structure constants}\label{sec_open}

An important step in the understanding of BCILT is the computation of its structure constants.

Suppose now all magnetic charges are zero. Our computations show that electric charges $\alphabf$ and $\etabf$ must satisfy the neutrality condition
\[\sum_{i=1}^\s\alpha_i+\sum_{i=1}^\t\frac{\eta_i}2-Q\chi(\Sigma)+p\beta+q\frac\beta2=0,p,q\in\{0,1,2,\cdots\},\]
where $\chi(\Sigma)$ is the Euler characteristic of $\Sigma$. Hence, for a reasonable $\sum_{i=1}^\s\alpha_i$, we have a finite number of $(p,q)$ satisfying the neutrality condition. The correlation function can be written as the sum of several integrals
\begin{equation*}
\begin{split}
    &\la e^{\i\alphabf\cdot\phi(\zbf)}e^{\i\frac\etabf2\cdot\phi(\xbf)}\ra\\
    =&\sum_{p\geq0}\frac{(-1)^{p+q}\mu^p}{p!q!}I(p,q,\mu_\partial,\alphabf,\zbf,\etabf,\xbf).
\end{split}
\end{equation*}

Now we consider a flat disk $\mathbb D$ ($\chi(\mathbb D)=1$) and $\mu_\partial$ is a constant, then $I(p,q,\mu_\partial,\alphabf,\zbf,\etabf,\xbf)=\mu_{\partial}^qI(p,q,\alphabf,\zbf,\etabf,\xbf)$ is proportional to
\begin{equation*}
\begin{split}
   \int_{\mathbb D^p}\int_{\partial\mathbb D^q} \exp&\bigg(2\pi(\sum_{i<j} \alpha_i\alpha_jG(z_i,z_j)+\sum_i\sum_j \alpha_i\frac{\eta_j} 2G(z_i,x_j)+\sum_i\sum_j\alpha_i\beta G(z_i,w_j)+\sum_i\sum_j\alpha_i\frac\beta 2 G(z_i,y_j)\\
    &\sum_{i<j} \frac{\eta_i} 2\frac{\eta_j} 2G(x_i,x_j)+\sum_i\sum_j\frac{\eta_i} 2\beta G(x_i,w_j)+\sum_i\sum_j\frac{\eta_i} 2\frac\beta 2 G(x_i,y_j)\\
    &\sum_{i<j}\beta^2G(w_i,w_j)+\sum_i\sum_j\frac{\beta^2} 2 G(w_i,y_j)+\sum_{i<j}\frac{\beta^2}4G(y_i,y_j))\bigg)\prod_{i}(1-|w_i|^2)^{\beta^2/2}\d y\ \d w ,
\end{split}
\end{equation*}
where $G(z,z')=-\frac{1}{2\pi}\log|z-z'||1-z\bar z'|$ is the Neumann Green function on $\mathbb D$. (In fact, the above integral is the expectation of ``$p$-th moment of bulk GMC times $q$-th moment of boundary GMC'' with Neumann GFF and $\alphabf,\etabf$ insertions).

Suppose $\mu=0$, which forces $p=0$, and we only have one boundary insertion $\eta$ at $x=1$. Then the above integral is proportional to
\begin{equation*}
    \int_{\partial\mathbb D^q}\prod_{k=1}^q|1-y_k|^{\eta\beta/2}\prod_{k<k'}|y_k-y_{k'}|^{\beta^2/2}\d y.
\end{equation*}
According to \cite[page 7, 1.17]{SelbergInt}, it is equal to $M(\eta\beta/4,\eta\beta/4,\beta^2/4)$, which is
\begin{equation*}
    M(\eta\beta/4,\eta\beta/4,\beta^2/4)=\prod_{j=0}^{q-1}\frac{\Gamma(1+\eta\beta/2+j\beta^2/4)\Gamma(1+(j+1)\beta^2/4)}{\Gamma(1+\eta\beta/4+j\beta^2/4)^2\Gamma(1+\beta^2/4)}.
\end{equation*}
When $\eta=0$, it becomes the Fyodorov--Bouchaud formula
\begin{equation*}
    \frac{\Gamma(1+q\beta^2/4)}{\Gamma(1+\beta^2/4)^p}.
\end{equation*}

\paragraph{Question 1:}When $\mu\not=0$ is a constant, the easiest case is one $\alpha$ insertion at $z=0$ and one $\eta$ at $x=1$. Then we need to consider all non-negative integers $p,q$ s.t. $2p+q=2Q-2\alpha-\eta$, which is
\begin{equation}\label{eq_1B_1b}
\begin{split}
    \int_{\mathbb D^p}\int_{\partial\mathbb D^q}&\prod_{j=1}^p|w_j|^{\alpha\beta}|1-w_j|^{\eta\beta}(1-|w_j|^2)^{\beta^2/2}\prod_{j<j'}|w_j-w_{j'}|^{\beta^2}|1-w_j\bar w_{j'}|^{\beta^2}\\
     &\prod_{k=1}^q|1-y_k|^{\eta\beta/2}\prod_{k<k'}|y_k-y_{k'}|^{\beta^2/2}\prod_{j,k}|w_j-y_{k}|^{\beta^2}\d y \d w.
\end{split}     
\end{equation}
It seems like a combination of Dotsenko-Fateev and Selberg integrals, and to our knowledge, little literature studies this type of integral. There is a formula for the integral \cite[page 8, 5.6]{H3+}, but with only one insertion in the bulk (see our Question 2). Besides, \cite[Corollary 1.5]{FZZ} provides a formula for only a bulk integral with term $(1-|w_j|^2)^{\beta^2/2}$. Still, we wonder whether it would be possible to have a \textbf{formula} for \eqref{eq_1B_1b}.
\paragraph{Question 2:} Suppose we only have one insertion $\alpha$ at $z=0$. Because we need $\alpha>Q$ and $\alpha-Q+p\beta+q\beta/2=0$ at the same time, then there is no non-negative $p,q$ and the bulk one-point function is zero. Thus it seems unclear to us what kind of integral would correspond to a bulk one-point structure constant. The boundary two-point correlation has the same issue. A physics literature \cite{bdytimelike} provides these structure constants for boundary time-like Liouville theory, we wonder whether there would be a connection between our structure constants and theirs.

{
  \let\c\origcedilla      
  \bibliographystyle{plain} 
  \bibliography{new}
}

\end{document}